\def\calP{\mathcal{P}}
\def\calD{\mathcal{D}}
\def\calV{\mathcal{V}}
\def\calI{\mathcal{I}}
\def\vector{\overrightarrow}
\def\hats{\hat{s}}
\def\hatt{\hat{t}}
\def\bbR{\mathbb{R}}
\def\st{$s$-$t$}
\def\spmed{\calD_{spm}}
\def\st{$s$-$t$}
\def\piR{R_{\pi}}
\newcommand{\spm}{\mbox{$S\!P\!M$}}
\newenvironment{proof}{\noindent {\textbf{Proof:}}\rm}{\hfill $\Box$\rm}
\newtheorem{observation}{Observation}
\begin{document}
%\begin{textblock}{5}(1,0.5)
%\noindent\Large {\bf APPENDIX}
%\end{textblock}

\title{On the Geodesic Centers of Polygonal Domains\thanks{An extended-abstract of this paper will appear in the Proceedings of the 24th Annual European Symposium on Algorithms (ESA 2016).}}

\author{
%Sang Won Bae\inst{1}
%\and
%Matias Korman\inst{2}
%\and
%Joseph S.B. Mitchell\inst{3}
%\and
%Yoshio Okamoto\inst{4}
%\and
%Valentin Polishchuk\inst{5}
%\and
Haitao Wang
%\inst{6}
}

\institute{
%Department of Computer Science,
%Kyonggi University, Suwon, South Korea.
%\email{swbae@kgu.ac.kr}
%\and
%National Institute of Informatics, Tokyo, Japan.
%\email{korman@nii.ac.jp}
%\and
%%Department of Applied Mathematics and Statistics,
%Stony Brook University, Stony Brook,  NY 11794, USA.
%\email{\tt jsb@ams.stonybrook.edu}
%\and
%%Department of Communication Engineering and Informatics,
%The University of Electro-Communications, Tokyo, Japan.
%\email{okamotoy@uec.ac.jp}
%\and
%%Helsinki Institute for Information 	Technology, Department of Computer Science,
%Link\"{o}ping University, Link\"{o}ping, Sweden.
%\email{valentin.polishchuk@liu.se}
%\and
Department of Computer Science\\
Utah State University, Logan, UT 84322, USA\\
\email{haitao.wang@usu.edu}
}

\maketitle

\pagenumbering{arabic}
\setcounter{page}{1}

\vspace*{-0.2in}
\begin{abstract}
In this paper, we study the problem of computing Euclidean geodesic centers of a
polygonal domain $\calP$ with a total of $n$ vertices.
We discover many interesting observations.
We give a necessary condition for a point being a geodesic center.
We show that there is at most one geodesic center
among all points of $\calP$ that have topologically-equivalent shortest path maps.
This implies that the total number of geodesic centers is bounded by
the combinatorial size of the shortest path map equivalence decomposition of
$\calP$, which is known to be $O(n^{10})$.
%This improves the previously-known upper bound $O(n^{12+\epsilon})$ for any $\epsilon>0$.
One key observation is a {\em $\pi$-range property} on shortest path lengths when points are moving.
%We believe that these observations are of independent interest.
%With these observations, we propose an $O(n^{11}\log n)$ time
%algorithm that computes a set of $O(n^{11})$ candidate points that contains all geodesic centers. Then, by using an efficient
%pruning algorithm, we find all geodesic centers from the set
%in $O(n^{11}\log n)$ time.
%Previously, an algorithm of $O(n^{12+\epsilon})$ time was known for this problem,
%for any $\epsilon>0$.
With these observations, we propose an algorithm that computes all geodesic centers in $O(n^{11}\log n)$ time. Previously, an algorithm of $O(n^{12+\epsilon})$ time was known for this problem, for any $\epsilon>0$.
\end{abstract}

%\newpage

\section{Introduction}
\label{sec:intro}

Let $\calP$ be a polygonal domain with a total of $h$ holes and $n$ vertices,
i.e., $\calP$ is a multiply-connected region whose boundary is a union of $n$
line segments, forming $h+1$ closed polygonal cycles. A simple polygon
is a special case of a polygonal domain with $h=0$. For any two
points $s$ and $t$, a {\em shortest path} or {\em geodesic path} from $s$ to $t$ is a path in $\calP$ whose Euclidean length is minimum among all paths from $s$ to $t$ in
$\calP$; we let $d(s,t)$ denote the Euclidean
length of any shortest path from $s$ to $t$ and we also say that $d(s,t)$
is the {\em shortest path distance or geodesic distance} from $s$ to $t$.

A point $s$ is a {\em geodesic center} of $\calP$ if $s$ minimizes the value
$\max_{t\in \calP}d(s,t)$, i.e., the maximum
geodesic distance from $s$ to all points of $\calP$.
%, i.e., the value  $\max_{t\in \calP}d(s,t)$, is minimized.
In this paper, we study the problem of computing the geodesic centers of $\calP$.

The problem in simple polygons has been well studied. It is
known that for any point in a simple polygon, its farthest point must
be a vertex of the polygon \cite{ref:SuriCo89}. It has been shown
that the geodesic center in a simple polygon is unique and has at least two
farthest points \cite{ref:PollackCo89}. Due to
these helpful observations, efficient algorithms for finding geodesic
centers in simple polygons have been developed. Asano and
Toussaint \cite{ref:AsanoCo85} first gave an $O(n^4\log n)$ time
algorithm for the problem, and later Pollack, Sharir, and Rote
\cite{ref:PollackCo89} solved the problem in $O(n\log n)$ time.
It had been an open problem whether the problem is solvable in linear time until recently
Ahn et al. \cite{ref:AhnA15} presented a linear-time algorithm for it.

Finding a geodesic center in a polygonal domain $\calP$ is much more
difficult. This is partially due to that a farthest point of a
point in $\calP$ may be in the interior of $\calP$ \cite{ref:BaeTh13}.
Also, it is easy to construct an example where the geodesic center of
$\calP$ is not unique (e.g., see Fig.~\ref{fig:unique}). Bae,
Korman, and Okamoto \cite{ref:BaeCo14CCCG} gave the first-known algorithm
that can compute a geodesic center in $O(n^{12+\epsilon})$ time for
any $\epsilon>0$. They first showed that for any point its farthest points must
be vertices of its shortest path map in $\calP$. Then, they considered
the shortest path map equivalence decomposition (or SPM-equivalence
decomposition) \cite{ref:ChiangTw99}, denoted by $\spmed$; for each cell of
$\spmed$, they computed the upper envelope of $O(n)$ graphs in
three-dimensional space, which takes $O(n^{2+\epsilon})$ time
\cite{ref:HalperinNe94}, to search a
geodesic center in the cell. Since the size of $\spmed$ is $O(n^{10})$
\cite{ref:ChiangTw99}, their algorithm runs in $O(n^{12+\epsilon})$ time.

\begin{figure}[t]
\begin{minipage}[t]{0.28\linewidth}
\begin{center}
\includegraphics[totalheight=1.2in]{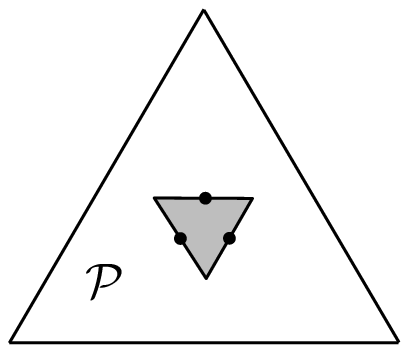}
\caption{\footnotesize The boundary of $\calP$ consists of an outer and an inner
equilateral triangles with their geometric centers co-located. Each of the three thick points
is a geodesic center of $\calP$.
}
\label{fig:unique}
\end{center}
\end{minipage}
\hspace{0.03in}
\begin{minipage}[t]{0.70\linewidth}
\begin{center}
\includegraphics[totalheight=1.7in]{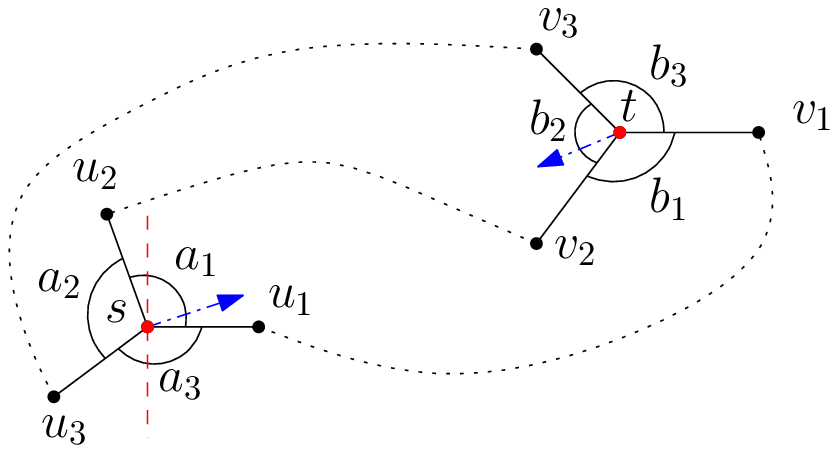}
\caption{\footnotesize Illustrating the $\pi$-range property.
Suppose there are three shortest \st\ paths through
vertices $u_i$ and $v_i$ with $i=1,2,3$, respectively. If $s$ and $t$ move along the blue
arrows simultaneously (possibly with different ``speeds''), then all three
shortest paths strictly decreases (it is difficult to tell whether this is true
from the figure, so these two blue directions here are only for
illustration purpose).  The special case happens when the six
angles $a_i$ and $b_i$ for $i=1,2,3$ satisfy $a_i=b_i$ for $i=1,2,3$.
}
\label{fig:rangeproperty}
\end{center}
\end{minipage}
\vspace*{-0.15in}
\end{figure}

%\subsection{Related Work}

A concept closely related to the geodesic center is the {\em geodesic diameter},
which is the maximum geodesic distance over all pairs of points in $\calP$,
i.e., $\max_{s,t,\in \calP}d(s,t)$. In simple polygons,
due to the property that there always exists a pair of vertices of $\calP$ whose geodesic
distance is equal to the geodesic diameter, efficient algorithms have been given
for computing the geodesic diameters. Chazelle
\cite{ref:ChazelleA82} gave the first algorithm that runs in $O(n^2)$ time.
Later, Suri \cite{ref:SuriCo89} presented an $O(n\log n)$-time algorithm.
Finally, the problem was solved in $O(n)$ time by Hershberger and Suri
\cite{ref:HershbergerMa97}.
Computing the geodesic diameter in a polygonal domain $\calP$ is much more
difficult. It was shown in \cite{ref:BaeTh13}
that the geodesic diameter can be realized by two points
in the interior of $\calP$, in which case there are at least five
distinct shortest paths between the two points. As for the geodesic center problem, this
makes it difficult to discretize the search space. By an exhaustive-search method,
Bae, Korman, and Okamoto
\cite{ref:BaeTh13} gave
the first-known algorithm for computing the diameter of $\calP$ and the
algorithm runs in $O(n^{7.73})$ or $O(n^7(\log n+h))$ time.

Refer to \cite{ref:BaeCo15,ref:DjidjevAn92,ref:KeAn89,ref:MitchellGe00,ref:NilssonCo91,ref:NilssonAn96,ref:SchuiererAn96} for
other variations of geodesic diameter and center problems (e.g., the $L_1$
metric and the link distance case).

\subsection{Our Contributions}

We conduct a ``comprehensive'' study on geodesic centers of $\calP$.
We discover many interesting observations, and some of them
are even surprising.  For example, we show
that even if a geodesic center is in the interior of $\calP$,
it may have only one farthest point, which
is somewhat counter-intuitive.  We give a necessary condition for
a point being a geodesic center. We also show that
there is at most one geodesic center
among all points of $\calP$ that have topologically-equivalent
shortest path maps in $\calP$.  This immediately implies that the interior of
each cell or each edge of the SPM-equivalence
decomposition $\spmed$ can contain at most one geodesic center, and thus, the
total number of geodesic centers of $\calP$ is bounded by the
combinatorial size of $\spmed$, which is known to be $O(n^{10})$
\cite{ref:ChiangTw99}.  Previously, the only known upper bound on the
total number of geodesic centers of $\calP$
is $O(n^{12+\epsilon})$, which is implied by the
algorithm in \cite{ref:BaeCo14CCCG}.
%In addition, we give a necessary condition for
%a point being a geodesic center.

These observations are all more or less due to an interesting
observation, which we call the {\em $\pi$-range property} and is one key
contribution of this paper. Here
we demonstrate an application of the $\pi$-range property (see the paper for the details). Let $s$ and $t$ be two points in the interior of $\calP$ such that
 $t$ is a farthest point of $s$ in $\calP$.
Refer to Fig.~\ref{fig:rangeproperty} for an example. Suppose there are
three shortest paths from $s$ to $t$ as shown in Fig.~\ref{fig:rangeproperty}.
%In fact, $t$ must be a vertex of the shortest path map of $s$ \cite{ref:BaeCo14CCCG}.
The $\pi$-range property says that unless a special case happens, there exists an
open range of exactly size $\pi$ (e.g., delimited by the right open half-plane
bounded by the vertical line through $s$ in Fig.~\ref{fig:rangeproperty})
such that if $s$ moves along any direction in the
range for an infinitesimal distance, we can always find a direction to move $t$
such that the lengths of {\em all three} shortest paths strictly decrease.
Further, if the special case does not happen,
we can explicitly determine the above range of size $\pi$. In fact, it is the
special case that makes it possible for a geodesic center having only one
farthest point.

%Note that $s$ has a range $2\pi$ of directions to move.
%Our key observation is that unless a {\em special case} happens,
%there is exactly an open range $\pi$ of directions such that
%when $s$ moves along any direction in the range for an infinitisemal distance to
%a new point $s'$, it holds that $d(s',t')<d(s,t)$, where $t'$ is defined as
%above. In fact, the special case mentioned above is the cause that a geodesic
%center may have only one farthest point. To
%illustrate our key observation, we consider a very straightforward case. Suppose
%$s$ is visible to $t$. In this case, it is easy to know that $t$ must be vertex
%of $\calP$. When $s$ moves to $s'$ for an infinitesimal distance along any
%direction, $t'=t$.  Let $h$ be the open half-plane bounded by the line
%through $s$ and perpendicular to the line segment $\overline{st}$ such that $h$
%contains $t$. Then, it is not difficulty to see that if $s$
%moves along any direction for an infinitesimal distance to a new point $s'$,
%then $d(s',t')<d(s,t)$ holds if and only if $s'$ is in $h$. Note that the open
%half-plane ``covers'' exactly an open range $\pi$ of directions for $s$, which
%is stated in our key observation.

%Most of our other observations are more or less due to the $\pi$-range property.
%With these observation, we provide a {\em necessary condition} for a point being a geodesic
%center of $\calP$.
With these observations, we propose an exhaustive-search algorithm to
compute a set $S$ of {\em candidate points} such that all geodesic
centers must be in $S$. For example, refer to
Fig.~\ref{fig:candidate} for a schematic diagram,
where a geodesic center $s$ has three farthest points
$t_1,t_2,t_3$ and all these four points are in the interior of $\calP$.
%(a similar figure is also used in \cite{ref:BaeCo14CCCG}).
The nine shortest paths
from $s$ to $t_1,t_2,t_3$ provide a system of eight equations, which give eight
(independent) constraints that can
determine the four points $s,t_1,t_2,t_3$ if we consider the coordinates of these points as eight variables. This suggests our exhaustive-search approach to compute
candidate points for such a geodesic center $s$ (similar exhaustive-search
approaches are also used before, e.g., \cite{ref:BaeTh13,ref:ChiangTw99}).
However, if a geodesic center $s$ has only one
farthest point (e.g., Fig.~\ref{fig:rangeproperty}), then we have only three
shortest paths, which give only two constraints. In order to determine $s$ and $t$,
which have four variables, we need two more constraints. It turns out the
$\pi$-range property (i.e., the special case) provides exactly two more
constraints (on the angles as shown in Fig.~\ref{fig:rangeproperty}).
In this way, we can still compute candidate points for such $s$.
Also, if a geodesic center has two farthest points, we will need one more
constraint, which is also provided by the $\pi$-range property (the non-special case).
Note that the previous exhaustive-search
approaches \cite{ref:BaeTh13,ref:ChiangTw99} do not need the $\pi$-range property.

\begin{figure}[t]
\begin{minipage}[t]{\linewidth}
\begin{center}
\includegraphics[totalheight=2.0in]{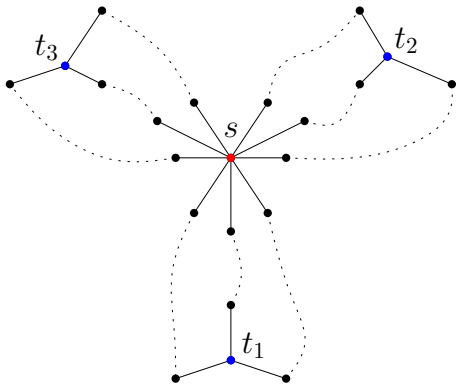}
\caption{\footnotesize Illustrating a geodesic center $s$ with three farthest
points $t_1,t_2, t_3$ such that all these four points are in the interior of
$\calP$. There are three shortest paths from $s$ to each of $t_1,t_2,t_3$.
}
\label{fig:candidate}
\end{center}
\end{minipage}
\vspace*{-0.15in}
\end{figure}

%With the help of the $\pi$-range property, our brute-force algorithm
%computes a set $S$ of $O(n^{11})$ candidate centers such that all geodesic centers
%must be in $S$.
The number of candidate points in $S$ is $O(n^{11})$.
To find all geodesic centers from $S$, a
straightforward solution is to compute the shortest path map for every
point of $S$, which takes $O(n^{12}\log n)$ time in total.
Again, with the help of the $\pi$-range property, we propose a {\em pruning
algorithm} to eliminate most points from $S$ in $O(n^{11}\log n)$ time
such that none of the eliminated points is a geodesic center and the number of
the remaining points of $S$ is only $O(n^{10})$. Consequently, we can find all
geodesic centers in additional $O(n^{11}\log n)$ time by computing
their shortest path maps.

Although we improve the previous $O(n^{12+\epsilon})$ time algorithm in
\cite{ref:BaeCo14CCCG} by a factor of roughly $n^{1+\epsilon}$, the running time is still huge. We feel that our observations (in particular, the $\pi$-range property) may be more interesting than the algorithm
itself. We suspect that these observations
%may lead to further improvements later and
may also find applications in other related problems.
The paper is lengthy and some discussions are quite tedious, which is
mainly due to a considerable number of cases depending on whether a geodesic center and
its farthest points are in the interior, on an edge, or at vertices of
$\calP$, although the essential idea is quite similar for all these
cases.

%Note that the $O(n^{12+\epsilon})$ time algorithm in \cite{ref:BaeCo14CCCG} can
%also compute all geodesic centers.

%\paragraph{An algorithm overview.} We first discover some necessary conditions
%for a point being a geodesic center. Based on these conditions, we
%compute a set $S$ of $O(n^{11})$ candidate points such that all geodesic
%centers must be in $S$. A straightforward
%algorithm can find all geodesic centers from $S$ in $O(n^{12}\log n)$ time by
%computing the shortest path map for each point of $S$. Based on similar
%observations as that the interior of each cell or edge of $\spmed$ can contain
%at most one geodesic center and by making use of some path information of the
%candidate points of $S$, we efficiently prune most of the points from $S$
%such that none of the pruned points is a geodesic center and the number of
%remaining points in $S$ is only $O(n^{10})$. Then we can find all geodesic
%centers in $O(n^{11}\log n)$ time from $S$ by using the above straightforward algorithm.

The rest of the paper is organized as follows. In Section \ref{sec:pre}, we
introduce notation and review some concepts. In Section \ref{sec:obser},
we give our observations.
In particular, we prove the $\pi$-range property in Section \ref{sec:range}.
Our algorithm for computing the candidate points is presented
in Section \ref{sec:candidates}. Finally, we find all
geodesic centers from the candidate points in Section \ref{sec:center}.

\section{Preliminaries}
\label{sec:pre}

Consider any point $s\in \calP$. Let $d_{\max}(s)$ denote the maximum
geodesic distance from $s$ to all points of $\calP$, i.e.,
$d_{\max}(s)=\max_{t\in \calP}d(s,t)$.
A point $t\in \calP$ is a {\em farthest point} of $s$ if $d(s,t)=d_{\max}(s)$.
We use $F(s)$ to denote the set of all farthest points of $s$ in $\calP$.
%If $s$ is center, we call $d_{\max}(s)$ the {\em geodsic radius} of $\calP$, and
%denote it by $\tau(\calP)$.
For any two points $p$ and $q$ in $\calP$, for convenience of
discussions, we say that $p$ is {\em visible} to $q$ if the line
segment $\overline{pq}$ is in $\calP$ and the interior of $\overline{pq}$ does
not contain any vertex of $\calP$.
We use $|pq|$ to denote the (Euclidean) length of any line segment $\overline{pq}$.
Note that two points $s$ and $t$ in $\calP$ may have more than one shortest path
between them, and if not specified, we use $\pi(s,t)$ to denote any such
shortest path.

For simplicity of discussion, we make a general position assumption that any two
vertices of $\calP$ have only one shortest path and no three vertices of $\calP$
are on the same line.
%Note that even with this
%assumption, two points of $\calP$ may still have more than one shortest path.

%Let $\calV$ denote the set of vertices of $\calP$.

Denote by $\calI$ the set of all interior points of $\calP$, $\calV$ the set
of all vertices of $\calP$, and $E$ the set of all relatively interior points on
the edges of $\calP$ (i.e., $E$ is the boundary of $\calP$ minus $\calV$).

\paragraph{Shortest path maps.}
Given a point $s\in \calP$, a {\em shortest path map} of $s$
\cite{ref:ChiangTw99}, denoted by
$\spm(s)$, is a decomposition of $\calP$ into regions (or cells) such that in each
cell $\sigma$, the combinatorial structures of shortest paths from $s$ to
all points $t$ in $\sigma$ are the same, and more specifically, the sequence of obstacle
vertices along $\pi(s,t)$ is fixed for all $t$ in $\sigma$.
%Consider any cell $\sigma$ of $\spm(t)$ and let $t$ be any interior point of $\sigma$.
Further, the {\em root} of $\sigma$, denoted by $r(\sigma)$,
 is the last vertex of $\calV\cup \{s\}$ in
the path $\pi(s,t)$  for any point $t\in \sigma$
(hence $\pi(s,t)=\pi(s,r(\sigma))\cup \overline{r(\sigma)t}$;
note that $r(\sigma)$ is $s$ if $s$ is visible to $t$).
%and we also say that $r(\sigma)$ {\em defines} $\sigma$.
%If $r(\sigma)$ is not $s$, then the {\em predecessor} of $r(\sigma)$ is the
%vertex of $\calV\cup \{s\}$ preceding $r(\sigma)$ in $\pi(s,t)$.
As in \cite{ref:ChiangTw99},
we classify each edge of $\sigma$ into three types: a portion of
an edge of $\calP$, {\em an extension segment}, which is a line segment extended
from $r(\sigma)$ along the opposite direction from  $r(\sigma)$ to the
vertex of $\pi(s,t)$ preceding $r(\sigma)$, and {\em a bisector
curve/edge} that is
a hyperbolic arc. For each point $t$ in
a bisector edge of $\spm(s)$, $t$ is on the common boundary of two cells and
there are two shortest paths from $s$ to $t$ through the roots of the
two cells, respectively (and neither path contains both roots). The
{\em vertices} of $\spm(s)$
include $\calV\cup \{s\}$ and all intersections of edges of $\spm(s)$.
If a vertex $t$ of $\spm(s)$ is an intersection of two or more
bisector edges, then there
are more than two shortest paths from $s$ to $t$.
The map
$\spm(s)$ has $O(n)$ vertices, edges, and cells, and can be computed
in $O(n\log n)$ time \cite{ref:HershbergerAn99}.
%Bae, Korman, and Okamoto \cite{ref:BaeCo14CCCG}
It was shown \cite{ref:BaeCo14CCCG}
that any farthest point of $s$ in $\calP$ must be a vertex of $\spm(s)$.

For differentiation, we will refer to the vertices of $\calV$ as {\em polygon
vertices} and refer to the edges of $E$ as {\em polygon edges}.

The SPM-equivalence decomposition $\spmed$ of $\calP$ \cite{ref:ChiangTw99}
is a subdivision of
$\calP$ into regions such that for all points $s$ in the interior of the same
region
or edge of $\spmed$, the shortest path maps $\spm(s)$ are topologically
equivalent. %(i.e., their underlying plane graphs are isomorphic).
Chiang and
Mitchell \cite{ref:ChiangTw99} showed that the combinatorial complexity of
$\spmed$ is bounded by $O(n^{10})$ and $\spmed$
can be computed in $O(n^{11})$ time.

\paragraph{Directions and ranges.}
%Consider any point $s$ in $\calP$.
In this paper, we will have intensive discussions on moving points
along certain directions.
For any direction $r$, we represent $r$ by the angle $\alpha(r)\in [0,2\pi)$
counterclockwise from the positive direction of the $x$-axis.
For convenience, whenever we are talking about
an angle $\alpha$, unless otherwise specified,
depending on the context we may refer to any angle $\alpha+2\pi\cdot
k$ for $k\in \mathbb{Z}$.
%Conversely, any value $\alpha$ corresponds to the direction
%whose angle is $\alpha\mod 2\pi$.
For any two angles $\alpha_1$ and $\alpha_2$ with
$\alpha_1\leq \alpha_2 < \alpha_1+2\pi$, the interval
$[\alpha_1,\alpha_2]$ represents a {\em direction range} that includes
all directions whose angles are in $[\alpha_1,\alpha_2]$, and
$\alpha_2-\alpha_1$ is called the {\em size} of the range.
Note that the range can be open (e.g., $(\alpha_1,\alpha_2)$) and the size of any
direction range is no more than $2\pi$.

Consider a half-plane $h$ whose bounding line is through a point $s$
in the plane. We say $h$
{\em delimits} a range of size $\pi$ of directions for $s$ that consists of all
directions along which $s$ will move towards inside $h$. If $h$ is an open
half-plane, then the range is open as well.

A direction $r$ for $s\in \calP$ is called a {\em free direction} of $s$ if we move $s$
along $r$ for an infinitesimal distance then $s$ is still in $\calP$.
We use $R_f(s)$ to denote the range of all free directions of $s$.
%, and call $R_f(s)$ the {\em free range} of $s$.
Clearly, if $s\in \calI$, $R_f(s)$ contains all directions;
if $s\in E$, $R_f(s)$ is a (closed) range of size $\pi$; if $s\in
V$, $R_f(s)$ is delimited by the two
incident polygon edges of $s$.

%For simplicity, when we say $s$ moves ``infinitesimally'' along
%some free direction $r$, we
%mean that $s$ moves along $r$ for an infinitesimal distance $\delta$ to a new
%point $s'\in \calP$, and unless otherwise stated, $\delta$ is so small that
%both $s$ and $s'$ are in the same cell of $\spmed$ (note that if $s$ is on the
%boundary of the cell, then $s'$ may be in the interior of the same cell).

\section{Observations}
\label{sec:obser}

Consider any point $s\in \calP$ and let $t$ be any farthest point of
$s$. Recall that $t$ is a vertex of $\spm(s)$ \cite{ref:BaeCo14CCCG}.
Suppose we move $s$ infinitesimally along a free direction $r$ to a new point
$s'$. Since $|ss'|$ is infinitesimal, we can assume that
$s$ and $s'$ are in the same cell $\sigma$ of $\spmed$.
Further, if $s$ is in the interior of $\sigma$, then $s'$ is also in the interior
of $\sigma$.

Regardless of whether $s$ is in the interior of $\sigma$ or not,
there is a vertex $t'\in \spm(s')$ {\em corresponding to}
the vertex $t$ of $\spm(s)$ in the following sense \cite{ref:ChiangTw99}:
If the line segment $\overline{s't'}$ is a shortest path from $s'$ to $t'$, then
$\overline{st}$ is a shortest path from $s$ to $t$; otherwise,
if $s',u_1,u_2,\ldots,u_k,t'$ is the sequence of the vertices of
$\calV\cup\{s',t'\}$ in a shortest path from $s'$ to $t'$, then
$s,u_1,u_2,\ldots,u_k,t$ is also the sequence of the vertices of
$\calV\cup\{s,t\}$ in a shortest path from $s$ to $t$.

In the case that $s$ is on the boundary of $\sigma$ while $s'$ is in the
interior of $\sigma$,
there might be more than one such vertex $t'\in \spmed$ corresponding to
$t$ (refer to \cite{ref:ChiangTw99} for the details)
and we use $M_t(s')$ to denote the set of all such vertices $t'$. We
should point out that although a vertex in $\spm(s)$ may correspond to more than one
vertex in $\spm(s')$, any vertex in $\spm(s')$ can correspond to one and only
one vertex in $\spm(s)$ (because $s'$ is always in the interior of $\sigma$).

We introduce the following definition which is crucial to the paper.

\begin{definition}
%{\em Admissible Directions.}
A free direction $r$ is called an {\em admissible direction} of $s$ with
respect to $t$ if as we move $s$ infinitesimally along $r$ to
a new point $s'$, $d(s',t')<d(s,t)$ holds for each $t'\in M_t(s')$.
\end{definition}

%Later we will prove a key observation that if $s\in \calI$, then unless special
%cases happen, there is a range of size $\pi$ of admissible directions for $s$
%w.r.t. $t$ for any $t\in F(s)$.

For any $t\in F(s)$, let $R(s,t)$ denote the set of all admissible directions of
$s$ with respect to $t$; let $R(s)=\bigcap_{t\in F(s)}R(s,t)$. The following Lemma
\ref{lem:10}, which gives a necessary condition for a point being a geodesic center of
$\calP$, explains why we consider admissible directions.

Before proving Lemma \ref{lem:10}, we introduce some notation and
Observation \ref{obser:10}.

Consider any two points $s$ and $t$ in $\calP$.
%Let $\pi(s,t)$ be any shortest path from $s$ to $t$.
Suppose the vertices of $\calV\cup\{s,t\}$ along a shortest \st\ path $\pi(s,t)$ are
$s=u_0,u_1,\ldots,u_k=t$. According to our definition on the
``visibility'', $s$ is visible to $t$ if and only if $k=1$.
If $s$ is not visible to $t$, then $k\neq 1$ and we call $u_1$
an {\em s-pivot} and $u_{k-1}$ a {\em t-pivot} of $\pi(s,t)$.
%If $k=1$ (i.e., $s$ is visible to $t$), then let $U_t(s)=\emptyset$ and $U$
%$s$ is considered as the $s$-pivot
%and $t$ is considered as the $t$-pivot.
%It is possible that $k=1$, in which case $s$ is
%visible to $t$. It is also possible that $u_1$ is on the line segment
%$\overline{u_0u_2}$. In any case, we call $u_1$
%an {\em s-pivotal} vertex of the pair $(s,t)$, and similarly we call $u_{k-1}$
%a {\em t-pivot} vertex of $(s,t)$.
It is possible that there are multiple shortest paths between $s$ and $t$, and
thus there might be multiple $s$-pivots and $t$-pivots for
$(s,t)$. We use $U_s(t)$ and $U_t(s)$ to denote the sets of all
$s$-pivots and $t$-pivots for $(s,t)$, respectively.
Note that according to our above definition, for any
$u\in U_s(t)$, the line segment $\overline{su}$ does not contain any
polygon vertex in its interior.
%If $s$ is visible to $t$, then $\overline{st}$ is the only shortest path from $s$ and $t$, and we define both $U_s(t)$ and $U_t(s)$ as $\emptyset$.

We have the following observation. Similar results have been given in
\cite{ref:BaeTh13}.
%so we omit the proofs (the idea is that if the conditions are
%not satisfied, then we can always find a point $t'$ further to $s$ than $t$,
%incurring contradiction).

\begin{observation}\label{obser:10}
Suppose $t$ is a farthest point of a point $s$.
\begin{enumerate}
\item
If $t$ is in $\calI$, then $|U_t(s)|\geq 3$ and $t$ must
be in the interior of the convex hull of the vertices of $U_t(s)$.
\item
If $t$ is in $E$, say, $t\in e$ for a polygon edge $e$ of $E$,
then $|U_t(s)|\geq 2$ and $U_t(s)$ has at least one vertex in the
open half-plane bounded by the supporting line of $e$ and containing the interior
of $\calP$ in the small neighborhood of $e$.  Further, $U_t(s)$ has at least one
vertex in each of the two open half-planes bounded by the line
through $t$ and perpendicular to $e$.
\end{enumerate}
\end{observation}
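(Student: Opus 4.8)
The plan is to use the fact that $t\in F(s)$ means the function $f:=d(s,\cdot)$ attains its maximum over $\calP$ at $t$, and to extract the stated structure by examining infinitesimal moves of $t$ along free directions of $t$. The first step is a normalization: $t$ cannot be visible to $s$. Otherwise the triangle inequality forces every shortest \st\ path to be the segment $\overline{st}$ --- if $s,u_1,\dots,u_{k-1},t$ were another shortest path with $k\ge 2$, then $|st|=d(s,u_{k-1})+|u_{k-1}t|\ge|su_{k-1}|+|u_{k-1}t|\ge|st|$, so $u_{k-1}$ lies in the relative interior of $\overline{st}$, contradicting visibility --- and then one can move $t$ to a nearby point of $\calP$ at strictly larger distance from $s$, contradicting $t\in F(s)$. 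Hence $U_t(s)\neq\emptyset$, and on a small neighborhood $N$ of $t$ in $\calP$ we have the identity $f(t')=\min_{u\in U_t(s)}g_u(t')$, where $g_u(t'):=d(s,u)+|ut'|$: each $g_u$ is the length of an actual path from $s$ to $t'$ (so $f\le\min_u g_u$ near $t$), while any polygon vertex $w\notin U_t(s)$ has $d(s,w)+|wt|>d(s,t)$, so by continuity it cannot realize $f$ on $N$ (so $f\ge\min_u g_u$ near $t$). Each $g_u$ is convex, $g_u(t)=d(s,t)$ for all $u\in U_t(s)$, and the derivative of $g_u$ at $t$ in a direction $v$ is $\langle v,\hat n_u\rangle$, where $\hat n_u:=(t-u)/|ut|$.

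For Part~1, $t\in\calI$, so every direction is free and $f(t+\epsilon v)\le f(t)$ for every direction $v$ and all small $\epsilon>0$. If the unit vectors $\{\hat n_u:u\in U_t(s)\}$ lay in an open half-plane $\{x:\langle x,v\rangle>0\}$, then each $g_u$ --- and hence $f$ --- would strictly increase along $v$, a contradiction; by an elementary planar fact (equivalently Gordan's theorem) this means $0\in\operatorname{conv}\{\hat n_u\}$. To upgrade to $0\in\operatorname{int}\operatorname{conv}\{\hat n_u\}$, suppose $0$ lies on the boundary of the hull: then some line through $0$ supports it, all $\hat n_u$ lie in the closed half-plane $\langle x,v\rangle\ge 0$, and $0$ is a convex combination of those $\hat n_u$ lying on the supporting line, which --- being unit vectors on a line through $0$ --- must include both $\hat e$ and $-\hat e$ for the unit direction $\hat e$ of that line. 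But moving $t$ along $v$ then makes \emph{every} $g_u$ strictly increase: those with $\langle v,\hat n_u\rangle>0$ to first order, and those with $\hat n_u=\pm\hat e\perp v$ to second order, since for such $u$ one has $g_u(t+\epsilon v)=d(s,u)+\sqrt{|ut|^2+\epsilon^2}>d(s,t)$. So $f(t+\epsilon v)>f(t)$, again contradicting maximality. Thus $0\in\operatorname{int}\operatorname{conv}\{\hat n_u\}$. Since $u-t=-|ut|\,\hat n_u$ and membership of $0$ in the interior of the convex hull of a finite planar set depends only on the directions of its points, this is equivalent to $t\in\operatorname{int}\operatorname{conv}(U_t(s))$; a planar convex hull with nonempty interior has at least three vertices, so $|U_t(s)|\ge 3$, which proves both assertions.

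For Part~2, let $t\in e$, let $\ell$ be the line supporting $e$, $H$ the open half-plane bounded by $\ell$ on the interior side of $\calP$ near $e$, and $\ell'$ the line through $t$ perpendicular to $e$; take local coordinates with $t$ at the origin, $e$ on the $x$-axis, and $H=\{y>0\}$. Every $u\in U_t(s)$ is visible to $t$ with $\overline{ut}\subseteq\calP$, forcing $u_y\ge 0$. Moving $t$ along the inward normal $(0,1)$ is free; if \emph{every} $u\in U_t(s)$ had $u_y=0$ (i.e.\ lay on $\ell$), each $\hat n_u$ would be horizontal, so each $g_u$ --- hence $f$ --- would strictly increase along $(0,1)$ to second order, a contradiction; therefore some $u\in U_t(s)$ has $u_y>0$, i.e.\ lies in the open half-plane $H$. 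Similarly, moving $t$ along $e$ in direction $(1,0)$ is free; if every $u\in U_t(s)$ had $u_x\le 0$, then each $g_u$ would strictly increase along $(1,0)$ --- to first order when $u_x<0$, to second order when $u_x=0$ --- a contradiction, so some $u\in U_t(s)$ has $u_x>0$; symmetrically some $u\in U_t(s)$ has $u_x<0$. The last two facts say $U_t(s)$ meets both open half-planes bounded by $\ell'$, and in particular $|U_t(s)|\ge 2$.

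The routine part of the argument is justifying the local identity $f(t')=\min_{u\in U_t(s)}g_u(t')$ on a genuine neighborhood of $t$ --- that no extra polygon vertex becomes relevant and that each segment $\overline{ut'}$ stays inside $\calP$ --- which follows from the general-position assumption together with continuity. The only genuinely delicate point I foresee is the second-order bookkeeping: in Part~1 it is what excludes $0$ lying on the \emph{boundary} of $\operatorname{conv}\{\hat n_u\}$ (thereby forcing $0$ into the interior of the hull), and in Part~2 it is what excludes all pivots lying on $\ell$ and what handles pivots lying exactly on $\ell'$; everything else reduces to first-order calculus together with elementary planar convexity.
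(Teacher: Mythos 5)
Your proof is correct and follows essentially the same approach the paper indicates (the paper only sketches it, deferring to Bae et al.): assume a condition fails and perturb $t$ along a free direction to find a point strictly farther from $s$, using first-order increase for pivots off the critical line and second-order increase for pivots on it. Your write-up via the local lower-envelope identity $f=\min_u g_u$ and the convex-separation step is a complete and careful instantiation of that one-line idea.
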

\begin{proof}
The proof is similar to those in \cite{ref:BaeTh13}. The main
idea is that if the conditions are
not satisfied, then we can always find a point $t'$ further to $s$ than $t$,
incurring contradiction.
%If $t\in \calI$, assume to the contrary that $t$ is not in the convex hull of
%the vertices of $U_t(s)$. Then, we
%can always move $t$ along a direction for an infinitesimal distance such that
%the vertices of $U_t(s)$ are still all $t$-pivotal vertices of $(t,s)$
%and the length $|\overline{tv}|$ for each $v\in U_t(s)$ becomes larger. This implies
%that the original $t$ is not a farthest point of $s$, incurring contradiction.
%
%The proof for the case where $t\in E$ follows the similar idea. Let
%$\overline{v_1v_2}$ be the polygon edge that contains $t$, and let $l_t$ denote
%the line through $t$ and perpendicular to $\overline{v_1v_2}$. Assume to the
%contrary all vertices of $U_t(s)$ are in the same closed half-plane bounded by $l_t$, and
%without loss of generality, we assume that closed half-plane contains $v_1$.
%Then, if we move $t$ towards $v_2$ for an infinitesimal distance, then the
%shortest path distance from $s$ to $t$ becomes larger,
%contradicting with that the original $t$ is a farthest point of $s$.
%Hence, each open
%half-plane bounded by $l_t$ must contain at least one vertex in $U(t,s)$.
\end{proof}

\begin{lemma}\label{lem:10}
If $s$ is a geodesic center of $\calP$, then $R(s)=\emptyset$.% must be empty.
\end{lemma}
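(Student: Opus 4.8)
The plan is to prove the contrapositive: if $R(s)\neq\emptyset$, then $s$ is not a geodesic center, because we can strictly decrease $d_{\max}$ by moving $s$ a little. Suppose $r\in R(s)$, i.e., $r$ is a free direction of $s$ that is admissible with respect to every farthest point $t\in F(s)$. Move $s$ an infinitesimal distance along $r$ to obtain $s'$; since $r$ is free, $s'\in\calP$, and since $|ss'|$ is infinitesimal we may assume $s,s'$ lie in the same cell $\sigma$ of $\spmed$ with $s'$ in the interior of $\sigma$. The goal is to show $d_{\max}(s')<d_{\max}(s)$.

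The first main step is the ``already-far'' points. For each $t\in F(s)$ and each corresponding vertex $t'\in M_t(s')$, admissibility of $r$ with respect to $t$ gives $d(s',t')<d(s,t)=d_{\max}(s)$. The key fact I would invoke here is the correspondence property quoted from \cite{ref:ChiangTw99} in the preceding discussion: every vertex of $\spm(s')$ corresponds to exactly one vertex of $\spm(s)$, and a farthest point of $s'$ must be a vertex of $\spm(s')$ \cite{ref:BaeCo14CCCG}. So any candidate farthest point $q'$ of $s'$ is a vertex of $\spm(s')$ corresponding to a unique vertex $q$ of $\spm(s)$. If $q\in F(s)$, then by the previous sentence $d(s',q')<d_{\max}(s)$.

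The second main step handles the vertices $q$ of $\spm(s)$ that are \emph{not} farthest points of $s$, i.e., $d(s,q)<d_{\max}(s)$. Here I would argue by a compactness/continuity estimate: $\spm(s)$ has only $O(n)$ vertices, and $F(s)$ is exactly the set of those achieving the maximum, so there is a gap $\delta>0$ with $d(s,q)\le d_{\max}(s)-\delta$ for every non-farthest vertex $q$ of $\spm(s)$; combined with the fact that for the corresponding $q'$ we have $|d(s',q')-d(s,q)|\le |ss'|$ (shortest-path distance changes by at most the amount $s$ moves, using the corresponding combinatorial path), choosing $|ss'|<\delta$ gives $d(s',q')<d_{\max}(s)$. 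I would also need to observe that no \emph{new} far vertices can appear: a vertex of $\spm(s')$ either corresponds to a polygon vertex (handled by the $\delta$-gap argument, as polygon vertices persist) or to $s'$ itself or to a bisector-intersection vertex, and in every case it corresponds to a unique vertex of $\spm(s)$ whose distance from $s$ is at most $d_{\max}(s)$, so the two cases above are exhaustive. Taking the minimum of $\delta$ over the finitely many non-farthest vertices and the (finitely many, by continuity) ``slack'' amounts $d_{\max}(s)-d(s',t')$ over farthest $t$, we get $d_{\max}(s')<d_{\max}(s)$, contradicting that $s$ is a geodesic center. Hence $R(s)=\emptyset$.

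The part I expect to be the genuine obstacle is making the ``no new far vertices'' claim airtight: a priori the combinatorial structure of $\spm(s')$ could differ from $\spm(s)$ in ways that create a vertex $t'$ whose preimage $t$ under the correspondence is a vertex of $\spm(s)$ that is itself very close to $d_{\max}(s)$ but whose distance could jump up — this is exactly why the one-to-one correspondence from \cite{ref:ChiangTw99} (valid because $s'$ is interior to $\sigma$) and the Lipschitz bound $|d(s',t')-d(s,t)|\le|ss'|$ are essential, and I would spell out that this Lipschitz bound holds because the combinatorial shortest path $s,u_1,\dots,u_k,t$ through the fixed obstacle sequence is a valid (not necessarily shortest) $s'$-$t'$ path of length within $|ss'|+|tt'|$, and here $t'=t$ or $|tt'|=0$ for the corresponding vertex, or more carefully one bounds both endpoints' movement. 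Handling the subcase where $s$ lies on the boundary of $\sigma$ (so $M_t(s')$ may have several elements) is routine given the definition of admissible direction, since that definition already quantifies over all $t'\in M_t(s')$.
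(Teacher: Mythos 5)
Your overall architecture matches the paper's: argue the contrapositive, move $s$ infinitesimally along $r\in R(s)$, dispose of the farthest points of $s'$ whose $\spm(s)$-preimage lies in $F(s)$ via admissibility, and handle the rest with a fixed gap $\delta=d_{\max}(s)-\max_{t\in V(s)}d(s,t)$ over the finitely many non-farthest vertices. The gap is exactly where you predicted it, and your proposed fix does not close it. You assert a Lipschitz bound $|d(s',q')-d(s,q)|\le|ss'|$ justified by re-using the combinatorial path through the fixed obstacle sequence, and you fall back on ``$t'=t$ or $|tt'|=0$ for the corresponding vertex.'' That fallback is false except when $q$ is a polygon vertex: the vertices of $\spm(s')$ that are bisector intersections or lie on polygon edges genuinely move as $s$ moves, so $q'\neq q$, and your combinatorial-path argument only yields $d(s',q')\le |ss'|+d(s,q)+|qq'|$ with $|qq'|$ still unbounded --- there is no obvious Lipschitz control of $|qq'|$ by $|ss'|$ (bisectors meeting at shallow angles can make their intersection move fast). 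So the step ``choosing $|ss'|<\delta$ gives $d(s',q')<d_{\max}(s)$'' does not follow as written.

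The paper closes this hole differently, and the difference is the substance of the lemma's proof. Instead of a metric continuity estimate, it proves the exact inequality $d(s,t')\le d(s,t^*)$ for a farthest point $t'$ of $s'$ with non-farthest preimage $t^*$, so that $d(s',t')\le|ss'|+d(s,t')\le|ss'|+d(s,t^*)<\delta+(d_{\max}(s)-\delta)$. The inequality $d(s,t')\le d(s,t^*)$ is obtained from Observation~\ref{obser:10} applied to $t'$ \emph{as a farthest point of $s'$}: because $t'$ lies in the interior of the convex hull of $U_{t'}(s')$ (or satisfies the half-plane conditions when $t'\in E$), and $U_{t'}(s')\subseteq U_{t^*}(s)$, there is always a pivot $u$ with $|ut'|\le|ut^*|$, whence $d(s,t')\le d(s,u)+|ut'|\le d(s,u)+|ut^*|=d(s,t^*)$. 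Your proposal never uses the farthest-point geometry of $t'$ at all, which is precisely the ingredient that makes the ``no new far vertices'' claim airtight without any quantitative bound on how far the SPM vertices drift. To repair your argument you would either need to import this pivot-based comparison, or replace the Lipschitz claim by a genuine continuity statement ($d(q,q')\to 0$ as $s'\to s$ within a cell of $\spmed$, uniformly over the finitely many vertices) and prove it.
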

\begin{proof}
Assume to the contrary that $R(s)\neq \emptyset$. Let $r$ be any direction in $R(s)$. Then, $r$ is in $R(s,t)$ for each $t\in F(s)$. In other words, $r$ is an
admissible direction of $s$ with respect to each $t\in F(s)$.
Suppose we move $s$ infinitesimally along $r$ to a new point $s'$. In the
following, we show that $d_{\max}(s')<d_{\max}(s)$, which contradicts with that
$s$ is a geodesic center.

Consider any $t'\in F(s')$, i.e., $t'$ is a farthest point of $s'$.
To prove $d_{\max}(s')<d_{\max}(s)$, it is sufficient to show that $d(s',t')<d_{\max}(s)$.

If $t'$ is in $M_t(s')$ for any $t\in F(s)$, then
since $r$ is an admissible direction of $s$ with respect to $t$, it holds that
$d(s',t')<d(s,t)=d_{\max}(s)$. In the following, we assume
$t'$ is not in $M_t(s')$ for any $t\in F(s)$.

Let $V(s)$ be the set of all vertices of $\spm(s)$ that are not in $F(s)$.
Let $\delta = d_{\max}(s) - \max_{t\in V(s)}d(s,t)$. Note that since $|V(s)|$ is
finite, the value $\delta$ is well-defined. Also note that the value $\delta$ is
fixed and does not depend on $s'$. Since $s$ moves infinitesimally to $s'$, we can
assume $|ss'|<\delta$.

Since $t'$ is a farthest point of $s'$, $t'$ must be a vertex of $\spm(s')$. Let
$t^*$ be the vertex of $\spm(s)$ corresponding to $t'$. Note that although a
vertex of $\spm(s)$ may correspond to multiple vertices of $\spm(s')$, a vertex
of $\spm(s')$ corresponds to one and only one vertex of $\spm(s)$. Hence, the
vertex $t^*$ is unique, and further,
$U_{t'}(s')\subseteq U_{t^*}(s)$. Since $t'$ is not in $M_t(s')$ for any $t\in F(s)$, we
know that $t^*$ is not in $F(s)$ but in $V(s)$. Therefore, $d(s,t^*)\leq \max_{t\in
V(s)}d(s,t)=d_{\max}(s)-\delta$.

Since $\overline{ss'}\in \calP$, it holds that $d(s',t')\leq |s's|+d(s,t')$.
Therefore, if we can prove $d(s,t')\leq d(s,t^*)$, since
$|ss'|<\delta$ and $d(s,t^*)\leq d_{\max}(s)-\delta$, we can obtain $d(s',t')\leq
|s's|+d(s,t')<\delta+d_{\max}(s)-\delta=d_{\max}(s)$.
In the sequel, we prove $d(s,t')\leq d(s,t^*)$.

If $t'\in \calV$, then $t^*$ must be $t'$ \cite{ref:ChiangTw99}.
Hence, $d(s,t')= d(s,t^*)$ and thus  $d(s,t')\leq d(s,t^*)$ trivially
follows. In the following, we assume $t'\not\in \calV$. Thus, $t'$ is either in $E$ or
$\calI$.

Recall that $U_{t'}(s')\subseteq U_{t^*}(s)$. To prove
$d(s,t')\leq d(s,t^*)$, it is sufficient to find a vertex $u\in U_{t'}(s')$ such
that $|{ut'}|\leq |{ut^*}|$
because $d(s,t')=d(s,u)+|ut'|$ and $d(s,t^*)=d(s,u)+|ut^*|$.
To this end, we will make use of Observation \ref{obser:10}.
%Depending on whether $t'$ is in $E$ or $\calI$, there are two cases.

\begin{enumerate}
\item
If $t'\in E$, let $e$ be the polygon edge of $E$ that contains $t'$.
Since $t'$ is a farthest point of $s'$,
by Observation \ref{obser:10}, there must be a vertex of $U_{t'}(s')$ on either
open half-plane bounded by $l_{t'}$, where $l_{t'}$ is the line through $t'$ and
perpendicular to $e$.  Since $t'\in E$,
$t^*$ is either on $e$ or an endpoint of $e$ \cite{ref:ChiangTw99}. In either
case, one open
half-plane bounded by $l_{t'}$ contains $t^*$ and the other does not (e.g., see
Fig.~\ref{fig:necessary}).
Let $u$ be the
vertex of $U_{t'}(s')$ in the open half-plane bounded by $l_{t'}$ that does not
contain $t^*$. Clearly, $|{ut'}|\leq |{ut^*}|$ holds.

\begin{figure}[t]
\begin{minipage}[t]{0.49\linewidth}
\begin{center}
\includegraphics[totalheight=0.8in]{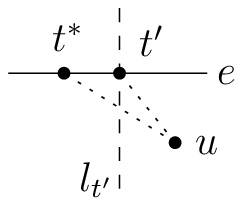}
\caption{\footnotesize Illustrating the proof of Lemma \ref{lem:10} for the case
$t'\in E$:
$|ut'|<|ut^*|$.
}
\label{fig:necessary}
\end{center}
\end{minipage}
\hspace*{0.02in}
\begin{minipage}[t]{0.49\linewidth}
\begin{center}
\includegraphics[totalheight=0.8in]{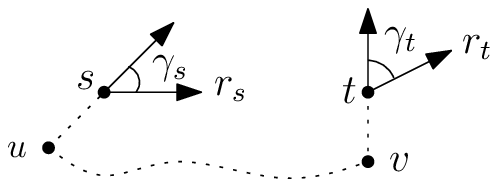}
\caption{\footnotesize Illustrating the definitions of the angles $\gamma_s$ and
$\gamma_t$.
}
\label{fig:angles}
\end{center}
\end{minipage}
\vspace*{-0.15in}
\end{figure}

\item
If $t'\in \calI$, then since $t'$ is a farthest point of $s'$,
by Observation \ref{obser:10}, $t'$ must be in the interior of the convex hull
of all vertices of $U_{t'}(s')$. %As shown in \cite{ref:ChiangTw99}
%As $s$ moves to $s'$, $t^*$ continuous moves to $t'$. Since $|ss'|$ is infinitesimal, the moving distance from $t^*$ to $t'$ is also infinitesimal. Hence, $t^*$ is also in the interior
Then, regardless of whatever position of $t^*$ is,
there must be a vertex $u\in U_{t'}(s')$ on the convex hull such that $|{ut'}|\leq |{ut^*}|$ holds.
\end{enumerate}

The lemma is thus proved.
\end{proof}

%We will compute all points of $\calP$ that satisfy the necessary condition in Lemma \ref{lem:10} as {\em candidate centers}.
%However, this is not easy. For example, even if we are given a point $s\in\calP$, it is clueless how to determine whether $R_s$ is empty. So it is even more difficult to find all these candidate centers.
As explained in Section \ref{sec:intro}, we will compute candidate points for
geodesic centers. As a necessary condition, Lemma \ref{lem:10} will be
helpful for computing those candidate points.

%In the sequel, we give a sufficient condition for a point $s$ being with
%$R(s)=\emptyset$. To this end,

Consider any point $s\in \calP$. Let $t$ be a farthest point of $s$.
We need to find a way to determine the admissible range $R(s,t)$. To this end, we will
give a sufficient condition for a direction being in $R(s,t)$.
We first assume that $s$ is not visible to $t$, and as will be seen later,
the other case is trivial.

Let $u$ and $v$ respectively be the $s$-pivot and the $t$-pivot of
$(s,t)$ in a shortest \st\ path $\pi(s,t)$. Clearly,
$d(s,t)=|{su}|+d(u,v)+|{vt}|$. We define
$d_{u,v}(s,t)=|{su}|+d(u,v)+|{vt}|$ as a function of $s\in \bbR^2$ and $t\in \bbR^2$.
Suppose we move $s$ along a free direction $r_s$ with
the unit speed and move $t$ along a free direction $r_t$ with a speed $\tau\geq 0$.
Let $\gamma_s$ denote the smaller angle between the following two rays
originated from $s$ (e.g., see Fig.~\ref{fig:angles}): one with
direction $r_s$ and one with direction from $u$ to
$s$. Similarly, let $\gamma_t$ denote the smaller angle between the following two rays
originated from $t$: one with direction $r_t$ and one with direction from $v$ to
$t$. In fact, as discussed in \cite{ref:BaeTh13},
if we consider $d(s,t)$ as a four-variate function,
the triple $(r_s,r_t,\tau)$ corresponds to a vector $\rho$ in
$\bbR^4$, and
the directional derivative of $d_{u,v}(s,t)$ at $(s,t)\in \bbR^4$ along
$\rho$, denoted by $d_{u,v}'(s,t)$, and the second directional
derivative of $d_{u,v}(s,t)$ at $(s,t)$ along $\rho$, denoted by $d_{u,v}''(s,t)$, are
\begin{equation}\label{equ:10}
d_{u,v}'(s,t)=\cos\gamma_s+\tau\cos\gamma_t,\ \ \ d_{u,v}''(s,t)=
\frac{\sin^2\gamma_s}{|{su}|}+\tau\cdot
\frac{\sin^2\gamma_t}{|{tv}|}.
\end{equation}
Since $\tau\geq 0$, $d_{u,v}''(s,t)\geq 0$ always holds.
Further, if $\tau\neq 0$, then
$d_{u,v}''(s,t)=0$ if and only if $\sin^2\gamma_s = \sin^2\gamma_t = 0$,
i.e., each of $\gamma_s$ and $\gamma_t$ is either $0$ or $\pi$.
In the following, in order to make the discussions more intuitive, we choose to
use the parameters $r_s$, $r_t$, and $\tau$ instead of the vectors of $\bbR^4$.

For each vertex $u\in U_s(t)$, there must be a vertex $v\in U_t(s)$ such that
the concatenation of $\overline{su}$, $\pi(u,v)$, and $\overline{vt}$ is a
shortest path from $s$ to $t$, and we call such a vertex $v$ a {\em coupled
$t$-pivot} of $u$ (if $u$ has more than one such vertex, then all of them
are coupled $t$-pivots of $u$). Similarly, for each vertex $v\in
U_t(s)$, we also define its coupled $s$-pivots in $U_s(t)$.

The following lemma provides a sufficient condition for a direction being an
admissible direction for $s$ with respect to $t$.

\begin{lemma}\label{lem:20}
Suppose $t$ is a farthest point of $s$ and $s$ is not visible to $t$.
\begin{enumerate}
\item
For $t\in \calI$,
a free direction $r_s$ is in $R(s,t)$ %an admissible direction for $s$ with respect to $t$
if there is a free direction $r_t$ for $t$ with a speed $\tau\geq 0$ such that when we
move $s$ along $r_s$ with the unit speed and move $t$ along $r_t$ with speed
$\tau$, each vertex $v\in U_t(s)$ has a coupled $s$-pivot $u$ with either
$d_{u,v}'(s,t)<0$, or $d_{u,v}'(s,t)=0$ and $d_{u,v}''(s,t)=0$.
\item
For $t\in E$, a free direction $r_s$ is in $R(s,t)$
%an admissible direction for $s$ with respect to $t$
if there is a free direction $r_t$ for $t$ that is parallel to the polygon edge of
$E$ containing $t$ with a speed $\tau\geq 0$ such that when we
move $s$ along $r_s$ with the unit speed and move $t$ along $r_t$ with speed
$\tau$, each vertex $v\in U_t(s)$ has a coupled $s$-pivot $u$ with either
$d_{u,v}'(s,t)<0$, or $d_{u,v}'(s,t)=0$ and $d_{u,v}''(s,t)=0$.
\item
For $t\in \calV$, a free direction $r_s$ is in $R(s,t)$
%an admissible direction for $s$ with respect to $t$
if we move $s$ along $r_s$ with the unit speed,
each vertex $v\in U_t(s)$ has a coupled $s$-pivot $u$ with either
$d_{u,v}'(s,t)<0$, or $d_{u,v}'(s,t)=0$ and $d_{u,v}''(s,t)=0$.
\end{enumerate}
\end{lemma}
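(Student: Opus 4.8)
The plan is to prove Lemma~\ref{lem:20} by reducing the three cases to a common core argument: showing that the hypothesis forces $d(s',t')<d(s,t)$ for every $t'\in M_t(s')$, where $s'$ is the infinitesimally displaced point. First I would fix $t'\in M_t(s')$ and recall from the discussion preceding Lemma~\ref{lem:10} that the combinatorial structure of shortest paths is preserved under the correspondence $t\mapsto t'$, so that $U_{t'}(s')$ corresponds to a subset of $U_t(s)$; in particular every $t$-pivot $v\in U_t(s)$ remains a $t$-pivot for $(s',t')$ (up to the correspondence), and its coupled $s$-pivot $u$ is still a genuine pivot with $\overline{su}$ free of polygon vertices. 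The key point is then that $d(s',t')=\min_{v}\,d_{u(v),v}(s',t')$ where the minimum is over $t$-pivots $v$ and $u(v)$ is a coupled $s$-pivot; so it suffices to exhibit, for \emph{each} such $v$, a coupled $s$-pivot $u$ with $d_{u,v}(s',t')<d_{u,v}(s,t)$ — because $d(s,t)=d_{u,v}(s,t)$ for all these pairs (they all realize the shortest path).

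Next I would carry out the infinitesimal analysis using~\eqref{equ:10}. Moving $s$ along $r_s$ at unit speed and $t$ along $r_t$ at speed $\tau$ corresponds to a direction $\rho\in\bbR^4$, and by Taylor expansion $d_{u,v}(s',t') = d_{u,v}(s,t) + \epsilon\, d_{u,v}'(s,t) + \tfrac{\epsilon^2}{2} d_{u,v}''(s,t) + o(\epsilon^2)$ for displacement parameter $\epsilon\to 0^+$. In the case $d_{u,v}'(s,t)<0$, the first-order term dominates and gives strict decrease. In the case $d_{u,v}'(s,t)=0$ and $d_{u,v}''(s,t)=0$, I need to argue that $d_{u,v}(s',t')<d_{u,v}(s,t)$ still holds — and here is where the geometry enters: $d_{u,v}''(s,t)=0$ with $\tau\ne 0$ forces each of $\gamma_s,\gamma_t\in\{0,\pi\}$, i.e.\ $s$ moves exactly along the line $\overline{us}$ and $t$ along $\overline{vt}$; combined with $d_{u,v}'(s,t)=\cos\gamma_s+\tau\cos\gamma_t=0$ this means one of $s,t$ moves strictly toward its pivot while the other moves away at a compensating rate, and a direct (exact, not infinitesimal) computation of $|s'u|+|v t'|$ along these collinear motions shows the sum strictly decreases — this is the one place a short explicit calculation is unavoidable. (If $\tau=0$ the same collinearity conclusion for $\gamma_s$ follows and the computation is even simpler.)

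For the three cases I would then observe that the only differences are in which directions $r_t$ and speeds $\tau$ are admissible for $t$, dictated by $R_f(t)$: for $t\in\calI$ any $r_t$ and any $\tau\ge 0$ are allowed; for $t\in E$ we must keep $t'$ a valid corresponding vertex, which is why $r_t$ is restricted to be parallel to the polygon edge containing $t$ (so that $t'$ stays on that edge and the correspondence/structure is maintained, and moreover $M_t(s')$ behaves as described in~\cite{ref:ChiangTw99}); for $t\in\calV$ the point $t$ cannot move at all, so effectively $\tau=0$ and only $r_s$ varies, which is why the hypothesis in item~3 mentions no $r_t$. In each case, the per-pivot estimate above yields $d_{u(v),v}(s',t')<d(s,t)$ for every $t$-pivot $v$, hence $d(s',t')\le\min_v d_{u(v),v}(s',t')<d(s,t)$, establishing that $r_s$ is admissible.

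I expect the main obstacle to be the boundary case $d_{u,v}'(s,t)=0,\ d_{u,v}''(s,t)=0$: here the second-order Taylor term vanishes, so the infinitesimal calculus alone cannot decide the sign of $d_{u,v}(s',t')-d_{u,v}(s,t)$, and one must instead use the \emph{exact} collinear geometry — namely that when $s$ slides along ray $\overline{us}$ and $t$ slides along ray $\overline{vt}$, the function $|s u|+|v t|$ (with $d(u,v)$ fixed) is piecewise-linear in the displacement and, under the constraint $\cos\gamma_s+\tau\cos\gamma_t=0$ with not both motions trivial, strictly decreasing on one side. A secondary subtlety is bookkeeping the correspondence $U_{t'}(s')\subseteq U_{t^*}(s)$ and the behavior of $M_t(s')$ when $s$ lies on the boundary of its $\spmed$-cell, for which I would lean on the cited structural facts from Chiang--Mitchell~\cite{ref:ChiangTw99} rather than reprove them.
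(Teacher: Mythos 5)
Your proposal has a genuine gap, and it sits exactly where you predicted the "main obstacle" would be: the boundary case $d_{u,v}'(s,t)=0$, $d_{u,v}''(s,t)=0$. In that case the hypothesis forces $\gamma_s,\gamma_t\in\{0,\pi\}$ with $\cos\gamma_s+\tau\cos\gamma_t=0$, i.e.\ $s$ slides along the line through $u$ and $s$ while $t$ slides along the line through $v$ and $t$ at the compensating speed. The exact computation you defer to does \emph{not} give strict decrease: if, say, $s$ moves toward $u$ by $\epsilon$ and $t$ moves away from $v$ by $\tau\epsilon$ with $\tau=1$, then $|s'u|+|vt^*|=|su|-\epsilon+|vt|+\epsilon$, so $d_{u,v}(s',t^*)=d_{u,v}(s,t)$ exactly. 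For a special pivot the path-length function is constant along the prescribed motion, and no per-pivot argument can yield strict decrease. The paper's proof gets strictness from two other sources instead: (i) at most two pivots of $U_t(s)$ can be special while $|U_t(s)|\geq 3$ for $t\in\calI$ (and an analogous count for $t\in E$; for $t\in\calV$ the case $d'=0,d''=0$ is outright impossible since $\tau=0$), so at least one pivot satisfies the strict inequality $d_{u,v}'(s,t)<0$; and (ii) a geometric convex-hull/triangulation argument showing that the actual corresponding vertex $t'$ lies in a triangle $\triangle v_iv_jt^*$ of the fan from $t^*$, hence is strictly closer to $v_i$ or $v_j$ than $t^*$ is, which converts "non-strict for all pivots, strict for one" into $d(s',t')<d(s,t)$.

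A second, related gap is that your Taylor expansion controls $d_{u,v}(s',t^*)$, where $t^*$ is the point $t$ reaches under the \emph{chosen} motion $(r_t,\tau)$, not $d_{u,v}(s',t')$ where $t'\in M_t(s')$ is the vertex of $\spm(s')$ determined by the map structure. These are different points, and the definition of admissibility quantifies over all $t'\in M_t(s')$, not over your auxiliary $t^*$. Bridging from $t^*$ to $t'$ is precisely the content of the paper's Claim (for $t\in\calI$) and of the sweeping-ray observation (for $t\in E$ with $t'\notin e$), and your outline omits this bridge entirely. Without it, even a correct bound on $d_{u,v}(s',t^*)$ says nothing about $d(s',t')$. (A minor further inaccuracy: $d(s',t')=\min_v d_{u(v),v}(s',t')$ should be an inequality $\leq$, since the shortest path from $s'$ to $v$ need not pass through $u$; this happens to be in the harmless direction.)
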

\begin{proof}
Suppose we move $s$ infinitesimally along $r_s$ to $s'$. The point
$t$, as a vertex of $\spm(s)$, corresponds to a set $M_t(s')$ of vertices
in $\spm(s')$. To prove that $r_s$ is an admissible direction for $s$ with
respect to $t$, we need to show that $d(s',t')<d(s,t)$ for any $t'\in M_t(s')$.
In the following, we discuss the three cases depending on whether $s$ is in
$\calI$, $E$, or $\calV$.

We remark that the proof would be much simpler if we only considered the
``non-degenerate'' case
whether $s$ is in the interior of a cell of the SPM-equivalence decomposition
$\spmed$ (because in that case $M_t(s')$ has only one vertex).
%{\em Remark: The proof would be simplified if $s$ is in the interior of a cell
%of }

\subsubsection*{The case $t\in \calI$.}
%\paragraph{The case $t\in \calI$.}
We first prove the case $t\in \calI$.
We begin with proving the following claim.

\noindent{\em Claim:  Suppose we move $s$ to $s'$ infinitesimally along a free
direction $r$; if there is a point $t^*$ in the interior of the convex hull of
the vertices of $U_t(s)$ such that $d(s',v)+|{vt^*}|\leq d(s,t)$ holds
for each $v\in U_t(s)$ and $d(s',v)+|{vt^*}|< d(s,t)$ holds
for at least one vertex $v\in U_t(s)$, then $r$ is in $R(s,t)$.
%an admissible direction for $s$ with respect to $t$.
}

We prove the claim as follows.
Consider any $t'\in M_t(s')$. To prove $r\in R(s,t)$, it is sufficient is to show that
$d(s',t')<d(s,t)$.

%As $s$ moves to $s'$ continuously, $t$ moves to $t'$ continuously
Since $s$ moves to $s'$ infinitesimally, the distance between $t$ and $t'$ is
also infinitesimal. Let $H$ be the convex hull of  the vertices of $U_t(s)$.
By Observation \ref{obser:10}, $t$ is in the interior of $H$. Hence, $t'$ is
also in the interior of $H$. Further, since $t'\in M_t(s')$ and $U_{t'}(s')\subseteq U_t(s)$,
it holds that $d(s',t')=\min_{v\in U_t(s)}(d(s',v)+|{vt'}|)$.

If $t'=t^*$, because there exists a vertex $v\in U_t(s)$ with
$d(s',v)+|{vt^*}|< d(s,t)$, we can obtain $d(s',t')=\min_{v\in
U_t(s)}(d(s',v)+|{vt'}|)<d(s,t)$, which proves the claim. Below we
assume $t'\neq t^*$.

We triangulate $H$ by adding a line segment from $t^*$ to each vertex
of $H$. Let $\triangle v_iv_jt^*$ be a triangle that contains $t'$, where $v_i$
and $v_j$ are two adjacent vertices of $H$. Since
$t'\neq t^*$, it is easy to see that at least one of
$|{t'v_i}|<|{t^*v_i}|$ and
$|{t'v_j}|<|{t^*v_j}|$ must hold.
Consequently, we can derive the following
\begin{equation*}
\begin{split}
d(s',t') & =\min_{v\in U_t(s)}(d(s',v)+|{vt'}|)
\leq \min\{d(s',v_i)+|{v_it'}|, d(s',v_j)+|{v_jt'}|\}\\
	 &< \max\{d(s',v_i)+|{v_it^*}|, d(s',v_j)+|{v_jt^*}|\}\leq d(s,t).
\end{split}
\end{equation*}
The last inequality is due to the condition in the claim that
$d(s',v)+|{vt^*}|\leq d(s,t)$ holds for each $v\in U_t(s)$.
The claim is thus proved.

Now we are back to prove the lemma for the case $t\in \calI$. Suppose we move
$s$ infinitesimally along $r_s$ with the unit speed to $s'$,
and move $t$ simultaneously along $r_t$ with speed $\tau$; let $t^*$ be the
point where $t$ is located when $s$ arrives at $s'$.
Since $|ss'|$ is infinitesimal, $|tt^*|$ is also infinitesimal.
In the following, we will show that $t^*$ satisfies the condition in the above
claim, which will lead to the lemma.

For each $v_i\in U_t(s)$, let $u_i$ be the coupled $s$-pivot of $v_i$
such that either $d'_{u_i,v_i}(s,t)<0$, or $d'_{u_i,v_i}(s,t)=0$ and
$d''_{u_i,v_i}(s,t)=0$. Note that $d_{u_i,v_i}(s',t^*)$ is the length of a path from $s'$ to
$t^*$ that is the concatenation of $\overline{s'u_i}$, $\pi(u_i,v_i)$, and
$\overline{v_it^*}$. Similarly, $d(s',v_i)+|v_it^*|$ is also the length of a path from $s'$ to
$t^*$ that is the concatenation of $\pi(s',v_i)$, and
$\overline{v_it^*}$.  Hence, the following holds
\begin{equation}\label{equ:20}
d(s',v_i)+|v_it^*|\leq d_{u_i,v_i}(s',t^*).
\end{equation}

%In fact, since $s$ is infinitesimally close to $s'$ (and thus $s$ and $s'$ are
%in the same cell of $\spmed$), $\overline{s'u_i}\cup \pi(u_i,v_i)$ is a shortest
%path from $s'$ to $v_i$, and thus, $d(s',v_i)+|v_it^*|= d_{u_i,v_i}(s',t^*)$.
%But Inequality \eqref{equ:20} is sufficient for our proof.

On the one hand, for any $v_i\in U_t(s)$, since either $d'_{u_i,v_i}(s,t)<0$ or
$d'_{u_i,v_i}(s,t)=0$, we obtain $d_{u_i,v_i}(s,t)\geq d_{u_i,v_i}(s',t^*)$.
Hence, with Inequality \eqref{equ:20},
we obtain $d(s',v_i)+|v_it^*|\leq d_{u_i,v_i}(s,t)$ for any $v_i\in
U_t(s)$.

\begin{figure}[t]
\begin{minipage}[t]{\linewidth}
\begin{center}
\includegraphics[totalheight=0.8in]{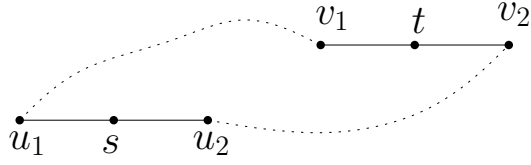}
\caption{\footnotesize Suppose $s$ and $t$ have two shortest paths: $\pi_{u_1,v_1}(s,t)$ and $\pi_{u_2,v_2}(s,t)$, with $s\in \overline{u_1u_2}$ and $t\in \overline{v_1v_2}$. If $s$ moves towards $u_1$ and $t$ moves towards $v_2$ (or $s$ moves towards $u_2$ and $t$ moves towards $v_1$) for the same speed, then $d'_{u_i,v_i}(s,t)=0$ and $d''_{u_i,v_i}(s,t)=0$ hold for both $i=1,2$.
}
\label{fig:specialpivots}
\end{center}
\end{minipage}
\vspace*{-0.15in}
\end{figure}

On the other hand, since $t\in \calI$ and $t$ is a vertex of $\spm(s)$, by
Observation \ref{obser:10}, $U_t(s)$
has at least three vertices. A vertex $v$ of $U_t(s)$ is called a {\em special
$t$-pivot} if it has a coupled $s$-pivot $u$ such that $d'_{u,v}(s,t)=0$
and $d''_{u,v}(s,t)=0$ (as shown in
\cite{ref:BaeTh13} this case happens only if $r_s$ is towards $u$
and $r_t$ is leaving $v$, or $r_s$ is leaving $u$ and $r_t$ is towards $v$; e.g., see Fig~\ref{fig:specialpivots}).
It was shown in \cite{ref:BaeTh13} that $U_t(s)$ has at most two special
$t$-pivots (e.g., see Fig~\ref{fig:specialpivots}). Hence, there is at least one vertex $v_i\in U_t(s)$ such that
$d'_{u_i,v_i}(s,t)<0$. This implies that $d(s,t)>d_{u_i,v_i}(s',t^*)$. With
Inequality \eqref{equ:20}, we have $d(s',v_i)+|v_it^*| < d(s,t)$.

The above proves that $t^*$ satisfies the condition in the claim. Hence, the
lemma follows for the case $t\in \calI$.

\subsubsection*{The case $t\in \calV$.}
We proceed on the third case $t\in \calV$. Recall the definitions of $s'$, $t'$, and
$M_t(s')$ in the beginning of the proof of the lemma. Our goal is to prove
$d(s,t)>d(s',t')$.
Note that since $t$ is a polygon vertex, $t'$ is either $t$ or
on one of the two polygon edges of $E$
incident to $t$ \cite{ref:ChiangTw99}.

Consider any vertex $v\in U_t(s)$. It has a coupled $s$-pivot $u$ such that
$d'_{u,v}(s,t)<0$, or $d_{u,v}'(s,t)=0$ and $d_{u,v}''(s,t)=0$. In fact, since
$t$ does not move (i.e., $\tau=0$), it is not possible that both $d_{u,v}'(s,t)=0$ and
$d_{u,v}''(s,t)=0$ hold. Indeed, since $\tau=0$, according to Equation \eqref{equ:10},
$d'_{u,v}(s,t)=\cos \gamma_s$ and $d''_{u,v}(s,t)=\frac{\sin^2 \gamma_s}{|su|}$.
Both $d_{u,v}'(s,t)=0$ and $d_{u,v}''(s,t)=0$ hold if and only if
$\cos\gamma_s=\sin\gamma_s=0$, which is not possible for any angle
$\gamma_s$.  Hence, we obtain $d'_{u,v}(s,t)<0$.
This implies that $d(s,t)=d_{u,v}(s,t)> |s'u|+d(u,v)+|vt|$. Further, since
$d(s,t)=|su|+d(u,v)+|vt|$, it holds that $|su|>|s'u|$.

Let $v$ be any vertex of $U_{t'}(s')$. By the definitions of $s'$ and $t'$,
$U_{t'}(s')\subseteq U_t(s)$.  Hence, $v$ is in $U_t(s)$.
The above shows that $v$ has a coupled $s$-pivot $u$ with
$d'_{u,v}(s,t)<0$, and $|su|>|s'u|$.
%Since $\overline{su}\cup d(u,v)\cup \overline{vt}$ is a shortest path from $s$ to $t$,
Since $v\in U_{t'}(s')$,
$\overline{s'u}\cup \pi(u,v)\cup \overline{vt'}$ is a
shortest path from $s'$ to $t'$.  Hence, $d(s',t')=|s'u|+d(u,v)+|vt'|$.

If $t'=t$, then since  $|su|>|s'u|$, it follows that
$d(s',t')=|s'u|+d(u,v)+|vt'|<|su|+d(u,v)+|vt|=d(s,t)$, which proves
the lemma.

If $t'\neq t$, as discussed above, $t'$ is on one of the two polygon edges
incident to $t$, which implies that $U_t(s)$ has more than one vertex
\cite{ref:ChiangTw99}.

We claim that there must exist a vertex $v\in U_t(s)$ such that $|t'v|\leq
|tv|$. Indeed, suppose to the contrary that $|t'v|> |tv|$ for every $v\in
U_t(s)$. Then, since $t'$ is in $M_t(s')$ and $|ss'|$ is infinitesimal, $|tt'|$
is also infinitesimal. Note that $\pi(s,t)$ cannot be a line segment since
otherwise $U_t(s)$ would have only one vertex, incurring contradiction.
Also, due to our general position
assumption that no three polygon vertices are on the same line, since $t$ is polygon
vertex, the last three vertices of any shortest path from $s$ to $t$ are not
in the same line. Since $|tt'|$ is infinitesimal, it holds that
$d(s,t')=\min_{v\in U_t(s)}(d(s,v)+|vt'|)$ (similar results were also proved in
\cite{ref:BaeTh13}). Since $|t'v|> |tv|$ for every $v\in U_t(s)$,
$d(s,t')>\min_{v\in U_t(s)}(d(s,v)+|vt|)=d(s,t)$, which contradicts with that
$t$ is a farthest point of $s$.

In light of the above claim, we assume $|t'v_i|\leq |tv_i|$ for a
vertex $v_i\in U_t(s)$.
%Since $t$ and $t'$ are on the same polygon edge and $t\neq t'$, $|v^*t'|= |tv^*|$  is not possible. Therefore, we obtain $|v^*t'|< |tv^*|$.
Let $u_i$ be the coupled $s$-pivot of $v_i$
such that either $d'_{u_i,v_i}(s,t)<0$ or $d'_{u_i,v_i}(s,t)=0$.
%Let $u^*$ be the coupled $s$-pivot of $v^*$.
We have shown above that $|su_i|>|s'u_i|$
Note that since $t'\in M_t(s')$, it also holds that $d(s',t')=\min_{v\in U_t(s)}(d(s',v)+|vt'|)$.
Based on the above discussion, we can derive the following,
\begin{equation*}
\begin{split}
d(s',t') & =\min_{v\in U_t(s)}(d(s',v)+|{vt'}|)\leq d(s',v_i)+|v_it'|  \\
	 &\leq |s'u_i| + d(u_i,v_i) + |v_it'| < |su_i| + d(u_i,v_i) + |v_it|=d(s,t).
\end{split}
\end{equation*}

This proves the lemma for the case $t\in \calV$.

\subsubsection*{The case $t\in E$.}
Let $e$ be the polygon edge that contains $t$.
Suppose we move $s$ infinitesimally along $r_s$ with the unit speed to
$s'$ and move $t$ simultaneously along $r_t$ with speed $\tau$ for
the same time to a point $t^*$ (i.e., $t^*$ is the location of $t$ when $s$
arrives at $s'$). Since $|ss'|$ is infinitesimal,
$|tt^*|$ is also infinitesimal. Since $r_t$ is parallel to $e$ and
$t$ is in the interior of $e$, $t^*$ is also in the interior of $e$.
Recall that $t$ corresponds to a set $M_t(s')$ of vertices in $\spm(s')$. Our
goal is to prove $d(s',t')<d(s,t)$ for any $t'\in M_t(s')$. Consider any $t'\in
M_t(s')$. Below, we prove that $d(s',t')<d(s,t)$.

As shown in \cite{ref:ChiangTw99}, $t'$ may be on $e$ or not. In either case,
$t'$ is infinitesimally close to $t$ as $|ss'|$ is infinitesimal.

Consider any $v_i\in U_t(s)$. Let $u_i$ be the coupled $s$-pivot of $v$
such that either $d'_{u_i,v_i}(s,t)<0$, or $d'_{u_i,v_i}(s,t)=0$ and
$d''_{u_i,v_i}(s,t)=0$. Hence, $d_{u_i,v_i}(s,t)\geq d_{u_i,v_i}(s',t^*)$.
By Observation \ref{obser:10}, at least one vertex of $U_t(s)$ must be in the
open half-plane bounded by the supporting line of $e$ and containing the interior
of $\calP$ in the small neighborhood of $e$; let $v_j$ be such a vertex. Since
$t$ moves along the direction $r_t$, which is parallel to $e$, according to
Equation \eqref{equ:10}, it is not possible that $d''_{u_j,v_j}(s,t)=0$. This
implies that $d'_{u_j,v_j}(s,t)<0$. Consequently, we obtain
$d_{u_j,v_j}(s,t)>d_{u_j,v_j}(s',t^*)$.

Next we prove the following {\em claim}: For any point $q$ on $e$ that is
infinitesimally close to $t$, it holds that $\min_{v_i\in
U_t(s)}(|s'u_i|+d(u_i,v_i)+|v_iq|)<d(s,t)$.

Indeed, if $q=t^*$, then we have $\min_{v_i\in
U_t(s)}(|s'u_i|+d(u_i,v_i)+|v_iq|)\leq
|s'u_j|+d(u_j,v_j)+|v_jq|=d_{u_j,v_j}(s',t^*) < d_{u_j,v_j}(s,t)=d(s,t)$.

\begin{figure}[t]
\begin{minipage}[t]{0.49\linewidth}
\begin{center}
\includegraphics[totalheight=1.0in]{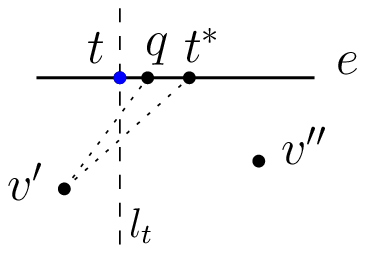}
\caption{\footnotesize Since $v'$ is to the left of both $q$ and $t^*$ and $q$
is to the left $t^*$, it holds that $|v'q|<|v't^*|$.
}
\label{fig:caseE10}
\end{center}
\end{minipage}
\hspace*{0.02in}
\begin{minipage}[t]{0.49\linewidth}
\begin{center}
\includegraphics[totalheight=1.0in]{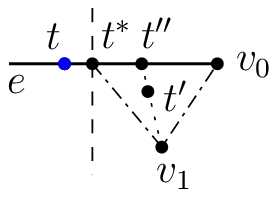}
\caption{\footnotesize Illustrating the proof: $|v_1t^*|>|v_1t''|>|v_1t'|$.
}
\label{fig:caseE20}
\end{center}
\end{minipage}
\vspace*{-0.15in}
\end{figure}

Next we assume $q\neq t^*$. By Observation \ref{obser:10}, $U_t(s)$
has at least one vertex in each of the two open half-planes bounded by $l_t$,
where $l_t$ is the line through $s$ and perpendicular to $e$. Without loss of
generality, we assume $e$ is horizontal (e.g., see Fig.~\ref{fig:caseE10}).
Hence, there is a vertex $v'\in U_t(s)$
strictly to the left of $l_t$ and a vertex $v''\in U_t(s)$ strictly to
the right of $l_t$. Since both $q$ and $t^*$ are infinitesimally close
to $t$, $v'$ (resp., $v''$) is strictly to the left (resp., right) of both $q$
and $t^*$. Without loss of generality, we assume $q$ is to the left of $t^*$.
Then, we have $|qv'|<|t^*v'|$. Let $v'$ be $v_k$ for some index $k$.
Consequently,
$\min_{v_i\in U_t(s)}(|s'u_i|+d(u_i,v_i)+|v_iq|)\leq |s'u_k|+d(u_k,v_k)+|v_kq|<
|s'u_k|+d(u_k,v_k)+|v_kt^*|=d_{u_k,v_k}(s',t^*)\leq d_{u_k,v_k}(s,t)=d(s,t)$.

Therefore, the above claim is proved.

Now we are back to our original problem for proving $d(s',t')<d(s,t)$. Depending
on whether $t'$ is on $e$ or not, there are two cases.

If $t'$ is on $e$, then since $t'$ is infinitesimally close to $t$, then by the
above claim, it holds that $\min_{v_i\in U_t(s)}(|s'u_i|+d(u_i,v_i)+|v_it'|)<d(s,t)$.
Note that since $t'$ is in $M_t(s')$, $d(s',t')=\min_{v_i\in
U_t(s)}(|s'u_i|+d(u_i,v_i)+|v_it'|)$. Hence, we obtain $d(s',t')<d(s,t)$.

If $t'$ is not on $e$, the proof is somewhat similar in spirit to the case $t\in \calI$.

We begin with proving an {\em observation} that there must be a vertex $v\in U_t(s)$ such that
$|vt'|<|vt^*|$. Without loss of generality, we assume $e$ is horizontal. All
vertices of $U_t(s)$ are in one of the closed half-planes bounded by $l_e$,
where $l_e$ is the horizontal line containing $e$.
Without loss of generality, we assume all vertices of $U_t(s)$ are below or on
the line $l_e$, i.e., they are in the closed half-plane bounded by $l_e$ from
above (let $h$ denote the half-plane).
Since $t'$ is not on $e$ and $t'$ is infinitesimally close to $t$,
$t'$ is strictly below $l_e$. Next, we do a ``triangulation'' around the point
$t^*$. Imagine that we rotate a rightwards ray originated from $t^*$ clockwise to sweep the
half-plane $h$, and let $v_1,v_2,\ldots v_m$ be the vertices of
$U_t(s)$ hit by our sweeping ray in order. Note that $v_1$ and $v_m$ may be on $e$. If
$v_1$ is not on $e$, let $v_0$ be the right endpoint of $e$. If
$v_m$ is not on $e$, let $v_{m+1}$ be the left endpoint of $e$.
%Then, the triangles $\triangle t^*v_iv_{i+1}$ for $0\leq i\leq m$ partition the local
%area of $t^*$ in $h$.
Since $t'$ is infinitesimally close to $t$ and $t'$ is in
$h$, $t'$ must be in one of the triangles $\triangle t^*v_iv_{i+1}$ for $0\leq
i\leq m$.

Suppose $t'$ is in $\triangle
t^*v_iv_{i+1}$ for some $1\leq i\leq m-1$. Then, $v_i$ and $v_{i+1}$ are both from
$U_t(s)$. Since $t'$ is in $\triangle
t^*v_iv_{i+1}$, one of $|t'v_i|<|t^*v_i|$ and
$|t'v_j|<|t^*v_j|$  must hold, and this proves the above observation.

If $t'$ is in $\triangle t^*v_iv_{i+1}$ for $i=0$ or $m$, then one of $v_i$  and
$v_{i+1}$ is not in $U_t(s)$. So we cannot use the same argument as above. In
the following, we only prove the case for $i=0$, and the other case is similar.
When $t'$ is in $\triangle t^*v_0v_1$, $v_0$ is the right endpoint of
$e$ and $v_1$ is not
on $e$ (e.g., see Fig.~\ref{fig:caseE20}). In the following, we show that $|v_1t'|<|v_1t^*|$, which will prove the
observation.

By Observation \ref{obser:10}, $U_t(s)$ has at least one vertex strictly
to right of the vertical line through $t$. Since $|tt^*|$ is infinitesimal,
$U_t(s)$ has at least one vertex strictly to the right of
the vertical line $l_{t^*}$ through $t^*$ as
well. By the definition of $v_1$, $v_1$ must be strictly to the right of $l_{t^*}$.
Hence, the slope of the line through $v_1$ and $t^*$ is strictly negative.
Recall that $t'$ is in $\triangle t^*v_0v_1$. If $t'$ is on $\overline{v_1t^*}$,
then since $t'\neq t^*$ (due to $t'\not\in e$), we have $|v_1t^*|>|v_1t'|$.
Otherwise, we extend $\overline{v_1t'}$ until it hits $\overline{v_0t^*}$ at a
point $t''$, which is strictly to the right of $t^*$.
Since both $t^*$  and $t'$ are infinitesimally close to $t$,
$|t^*t'|$ is also infinitesimal and thus $|t^*t''|$ is infinitesimal as well.
Since the slope of $\overline{v_1t^*}$ is strictly negative and $t''$ is
strictly to the right of $t^*$, we obtain $|v_1t^*|>|v_1t''|>|v_1t'|$.

The above proves the observation.

In light of the above observation, we assume $|t'v_i|< |t^*v_i|$ for a
vertex $v_i\in U_t(s)$.
Let $u_i$ be the coupled $s$-pivot of $v_i$
such that either $d'_{u_i,v_i}(s,t)<0$, or $d'_{u_i,v_i}(s,t)=0$.
%In light of the observation, let $v^*$ be the
%vertex of $U_t(s)$ such that $|v^*t'|<|v^*t^*|$. Let $u^*$ be the coupled
%$s$-pivot of $v^*$ such that either
%$d'_{u^*,v^*}(s,t)<0$, or $d'_{u^*,v^*}(s,t)=0$ and $d''_{u^*,v^*}(s,t)=0$.
Note that $d_{u_i,v_i}(s,t)\geq d_{u_i,v_i}(s',t^*)$.
Since $t'$ is in $M_t(s')$, $d(s',t')=\min_{v\in U_t(s)}(d(s',v)+|vt'|)$.
Based on the above discussion, we derive the following
\begin{equation*}
\begin{split}
d(s',t') & =\min_{v\in U_t(s)}(d(s',v)+|\overline{vt'}|)\leq d(s',v_i)+|v_it'|
< d(s',v_i)+|v_it^*| \\
	 &\leq |s'u_i| + d(u_i,v_i) + |v_it^*| = d_{u_i,v_i}(s',t^*) \leq d_{u_i,v_i}(s,t)=d(s,t).
\end{split}
\end{equation*}

This proves that $d(s',t')<d(s,t)$. This finishes the proof for
the case $t\in E$.

The lemma is thus proved for all three cases.
\end{proof}

Lemma \ref{lem:20} is on the case where $s$ is not visible to $t$. If $s$ is
visible to $t$, the result is trivial, as shown in Observation
\ref{obser:20}.

\begin{observation}\label{obser:20}
Suppose $t$ is a farthest point of $s$ and $s$ is visible to $t$. Then
$t$ must be a polygon vertex of $\calV$. Further, a free direction $r_s$ of
$s$ is in $R(s,t)$ if and only if $r_s$ is towards the interior of $h_s(t)$,
where $h_s(t)$ is the open half-plane containing $t$ and bounded by the line
through $s$ and perpendicular to $\overline{st}$ (e.g., see
Fig.~\ref{fig:halfplane}).
\end{observation}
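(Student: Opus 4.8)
The plan is to prove the three assertions in turn, leaning on Observation \ref{obser:10} for the first one and on a direct derivative computation for the characterization of $R(s,t)$. First, suppose $t$ is a farthest point of $s$ and $s$ is visible to $t$, i.e. $\overline{st}$ lies in $\calP$ with no polygon vertex in its relative interior. I would argue $t\in\calV$ by contradiction: if $t\in\calI$ then by Observation \ref{obser:10}(1) we need $|U_t(s)|\geq 3$, but visibility means $s$ itself plays the role of the (only) $t$-pivot in the unique segment path $\overline{st}$, contradicting $|U_t(s)|\geq 3$; similarly if $t\in E$ then Observation \ref{obser:10}(2) forces $|U_t(s)|\geq 2$, again impossible since $\overline{st}$ is the shortest path and $s$ is the sole $t$-pivot. (One must note here that $d(s,t)=d_{\max}(s)>0$, so $s\neq t$ and the segment is nondegenerate; this uses that $\calP$ is not a single point.) Hence $t\in\calV$.

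Next, for the characterization of $R(s,t)$, since $t$ is a polygon vertex it does not move, so we are in the regime $\tau=0$ and the relevant quantity is simply $d(s',t)=|s't|$ versus $d(s,t)=|st|$ once we account for the corresponding-vertex subtlety. I would first handle the combinatorial point: because $s$ is visible to $t$ and $s$ moves only infinitesimally to $s'$, $s'$ is still visible to $t$, so $M_t(s')=\{t\}$ (the single vertex $t$ of $\spm(s')$), and $r_s\in R(s,t)$ iff $|s't|<|st|$. Now $|s't|<|st|$ for an infinitesimal move of $s$ in direction $r_s$ exactly when the directional derivative of $|st|$ with respect to $s$ along $r_s$ is negative, i.e. when $r_s$ has a positive component along the direction from $s$ toward $t$; equivalently $r_s$ points strictly into the open half-plane $h_s(t)$ bounded by the line through $s$ perpendicular to $\overline{st}$ and containing $t$. (When $r_s$ is along the boundary line itself the first derivative vanishes and $|s't|>|st|$ to second order, so such $r_s\notin R(s,t)$; when $r_s$ points away from $h_s(t)$ the distance strictly increases.) This gives the claimed "if and only if".

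The only genuine wrinkle — and the step I'd be most careful about — is the interaction between "free direction" and the half-plane $h_s(t)$: $R(s,t)$ by definition consists only of \emph{free} directions, so the honest statement is "$r_s\in R(s,t)$ iff $r_s$ is a free direction of $s$ pointing into $h_s(t)$." If $s\in\calI$ every direction is free and $R(s,t)$ is exactly the open half-plane's worth of directions; if $s\in E$ or $s\in\calV$ we intersect with $R_f(s)$. I would phrase the observation accordingly (the excerpt's statement is to be read with this understood, matching the convention that $R(s,t)\subseteq R_f(s)$ throughout). A secondary point to verify carefully is that the perpendicularity is with respect to $\overline{st}$ and not with respect to the last edge of some path — but since $s$ is visible to $t$, the path \emph{is} the segment $\overline{st}$, so there is no ambiguity: $\gamma_s$ in Equation \eqref{equ:10} is precisely the angle between $r_s$ and the ray from $t$ through $s$ reversed, and $d'_{s,t}(s,t)=\cos\gamma_s<0$ is exactly the condition that $r_s$ enters $h_s(t)$. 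This matches the sufficient-condition framework of Lemma \ref{lem:20} (with $u=s$, $v=t$, $\tau=0$), so the observation can also be seen as the degenerate "$s$ visible to $t$" instance of that lemma, now sharpened to an equivalence because $M_t(s')$ is a singleton.
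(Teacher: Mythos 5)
Your proposal is correct and follows essentially the same route as the paper: visibility forces $\overline{st}$ to be the unique shortest $s$-$t$ path, which is incompatible with the multiplicity of shortest paths (equivalently, of $t$-pivots from Observation \ref{obser:10}) required of a farthest point in $\calI$ or $E$, so $t\in\calV$; and since an infinitesimal move preserves visibility, $M_t(s')=\{t\}$ and membership in $R(s,t)$ reduces to $|s't|<|st|$, i.e.\ to $r_s$ entering $h_s(t)$. Your extra care about the boundary direction (vanishing first derivative, second-order increase) and about intersecting with $R_f(s)$ are fine refinements of details the paper treats as immediate.
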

\begin{proof}
Since $s$ is visible to $t$, $\overline{st}$ is the only shortest path from $s$
to $t$. As $t$ is a vertex of $\spm(s)$, $t$ cannot be in $\calI$
or $E$, since otherwise there would be more than one shortest \st\
path.  Thus, $t\in \calV$.

\begin{figure}[t]
\begin{minipage}[t]{\linewidth}
\begin{center}
\includegraphics[totalheight=0.8in]{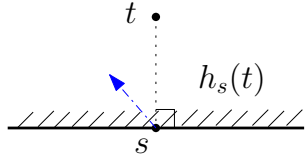}
\caption{\footnotesize Illustrating Observation \ref{obser:20}: The blue
direction is towards the interior of $h_s(t)$ and thus is in $R(s,t)$.
}
\label{fig:halfplane}
\end{center}
\end{minipage}
\vspace*{-0.15in}
\end{figure}

Consider any free direction $r_s$ of $s$. Suppose we move $s$
infinitesimally along $r_s$ to $s'$. According to our definition of
``visibility'', $\overline{st}$ does not contain any polygon vertex in
its interior. Since $|ss'|$ is infinitesimal, $s'$ is also visible to
$t$. Therefore, the point $t$, as a vertex of $\spm(s)$, corresponds to a vertex
of $\spm(s')$ that is $t$ itself. Hence, $r_s$ is in $R(s,t)$ if and
only $|s't|<|st|$. Clearly, $|s't|<|st|$ if and only if
$r_s$ is towards the interior of the open half-plane $h_s(t)$.
\end{proof}

By Observation \ref{obser:20}, if $s$ is visible to $t$, then the
range
$R(s,t)$ is the intersection of the free direction range $R_f(s)$ and an open range of size $\pi$ delimited by the open half-plane $h_s(t)$.

The next lemma is proved by using Lemmas \ref{lem:10} and \ref{lem:20}
as well as Observation \ref{obser:20}.

\begin{lemma}\label{lem:30}
Among all points of $\calP$ that have topologically equivalent shortest path
maps in $\calP$, there is at most one geodesic center. This implies that
each cell or edge of $\spmed$ contains at most one geodesic center in its
interior, which further implies that the number of geodesic centers of $\calP$
is $O(|\spmed|)$, where $|\spmed|$ is the combinatorial complexity of
$\spmed$.
\end{lemma}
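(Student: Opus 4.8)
The plan is to show that if $s_1$ and $s_2$ are two distinct points of $\calP$ whose shortest path maps are topologically equivalent (in particular they lie in the interior of the same cell, or the same edge, of $\spmed$), then at most one of them can be a geodesic center; by Lemma~\ref{lem:10} it suffices to show that at least one of $s_1,s_2$ has $R(s_i)\neq\emptyset$. The natural candidate direction is the one pointing from one of the two points toward the other: set $r_1$ to be the direction from $s_1$ to $s_2$ and $r_2$ the opposite direction (from $s_2$ to $s_1$). Because $s_1$ and $s_2$ are in the (relative) interior of the same cell of $\spmed$, both $r_1\in R_f(s_1)$ and $r_2\in R_f(s_2)$, and moreover moving either point infinitesimally toward the other keeps it inside the same cell, so $M_t(s_i')$ is a singleton for each vertex $t$ of $\spm(s_i)$ and the correspondence between vertices of $\spm(s_1)$ and $\spm(s_2)$ is exactly the topological equivalence. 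The goal is then to prove that $r_1\in R(s_1)$ or $r_2\in R(s_2)$, i.e. that for one of the two points, moving it toward the other strictly decreases the distance to every one of its farthest points.

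The key step is a convexity argument applied along the segment $\overline{s_1s_2}$. Fix any farthest point $t$ of, say, $s_1$, with corresponding (equivalent) farthest-type vertex $\tilde t$ of $\spm(s_2)$. For each pivot pair $(u,v)$ realizing a shortest path, the function $d_{u,v}(\,\cdot\,,\cdot)$ is a convex function of the source (with the target held at $t$, it is $|su|+d(u,v)+|vt|$, convex in $s$); hence along the segment $s_1s_2$ the function $\phi(s)=d(s,t_s)$ — where $t_s$ is the farthest point of $s$ corresponding to $t$ under the topological equivalence, and the target endpoint is correspondingly moved — is a pointwise minimum of finitely many convex functions restricted to the segment, but more importantly one compares the directional derivatives at the two endpoints using Equation~\eqref{equ:10}. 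Concretely: if for \emph{every} $t\in F(s_1)$ the direction $r_1$ is admissible we are done; otherwise there is some $t\in F(s_1)$ for which $r_1\notin R(s_1,t)$, and by the sufficient condition of Lemma~\ref{lem:20} (its contrapositive, combined with Observation~\ref{obser:10}) this forces the derivative $d'_{u,v}(s_1,t)\ge 0$ along $r_1$ for the relevant pivots — which by the cosine formula means the segment $\overline{s_1s_2}$ leaves $s_1$ on the "far" side of the corresponding half-planes $h$. But $r_2$ is the reverse direction at $s_2$, so at $s_2$ the same pivot pairs have $d'_{u,v}(s_2,\tilde t)\le 0$ along $r_2$; one then checks, using the topological equivalence (same pivot sequences, same $U_t$ structure) and Observation~\ref{obser:10} together with Observation~\ref{obser:20} for the visible subcase, that $r_2$ satisfies the hypotheses of Lemma~\ref{lem:20} for \emph{every} farthest point of $s_2$, hence $r_2\in R(s_2)$, so $s_2$ is not a geodesic center. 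Either way at most one of $s_1,s_2$ survives.

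Once the "at most one per equivalence class" statement is established, the remaining assertions are immediate bookkeeping: since $\spmed$ decomposes $\calP$ into $O(|\spmed|)$ open cells, relatively-open edges, and vertices, and two points in the same relatively-open piece have topologically equivalent shortest path maps, each such piece contains at most one geodesic center; summing over all pieces gives that the number of geodesic centers is $O(|\spmed|)$, and by \cite{ref:ChiangTw99} this is $O(n^{10})$.

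The main obstacle I anticipate is the degenerate cases: when $s_1$ or $s_2$ lies on an edge of $\spmed$ (so $M_t(s')$ need not be a singleton when moving \emph{onto} the boundary, though it is when moving \emph{off} it toward the interior), and when a farthest point is visible from the source (handled separately via Observation~\ref{obser:20}, where the admissible range is literally the $\pi$-range delimited by $h_s(t)$), and above all the special $t$-pivot situation where $d'_{u,v}=d''_{u,v}=0$: there, moving toward $s_2$ does not strictly decrease that pivot's path length, so one must invoke the "at most two special $t$-pivots" bound from \cite{ref:BaeTh13} to guarantee that some pivot still has strictly negative derivative and that the convex-combination argument along the segment still yields a \emph{strict} decrease at one endpoint. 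Reconciling these cases uniformly — in particular ensuring strictness, and ensuring the \emph{same} direction $r_2$ works simultaneously for all of $F(s_2)$ — is where the real care is needed, and is presumably why the authors route everything through Lemmas~\ref{lem:10} and~\ref{lem:20} rather than arguing directly.
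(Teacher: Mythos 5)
Your overall strategy diverges from the paper's in a way that opens a genuine gap. You aim to prove the dichotomy ``$r_1\in R(s_1)$ or $r_2\in R(s_2)$'' and then invoke Lemma~\ref{lem:10}. The logic of that reduction is fine, but the dichotomy itself is neither proved by your sketch nor true in general. The step where you pass from ``some $t\in F(s_1)$ blocks $r_1$'' to ``$r_2$ satisfies the hypotheses of Lemma~\ref{lem:20} for \emph{every} farthest point of $s_2$'' does not follow: $F(s_2)$ is a potentially different set of vertices of the (topologically equivalent) map, the blocking of $r_1$ involves only one vertex $v\in U_t(s_1)$ (the contrapositive of a sufficient condition gives you one bad $v$ per choice of $(r_t,\tau)$, not control over all of them), and nothing prevents $s_2$ from having its own farthest point that blocks $r_2$. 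Indeed $R(s)=\emptyset$ is only a necessary, first-order condition; two points in one cell can both fail to have the specific directions $r_1,r_2$ admissible (each blocked by a farthest point lying roughly ``behind'' it along the segment) without contradiction, in which case your argument concludes nothing even when exactly one of them is a center.

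The paper avoids this by breaking the symmetry: it \emph{assumes} $s$ is a geodesic center, uses $R(s)=\emptyset$ to extract one farthest point $t$ with $r_s\notin R(s,t)$ and one pivot pair $(u,v)$ for which $d'_{u,v}(s,t)>0$, or $d'_{u,v}(s,t)=0$ with $d''_{u,v}\neq 0$; convexity ($d''_{u,v}\geq 0$ along the segment) then gives $d_{u,v}(s',t')>d_{u,v}(s,t)$, and topological equivalence guarantees this path is still a \emph{shortest} path at $s'$, so $d_{\max}(s')\geq d(s',t')>d(s,t)=d_{\max}(s)$. A single witness path suffices precisely because $d(s,t)$ equals $d_{\max}(s)$ by the assumption that $s$ is a center---this is the anchor your symmetric formulation discards. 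Your convexity/derivative-sign observations are correct as far as they go (including the need to rule out or handle the $d'=d''=0$ special-pivot case, which the paper treats per-case), but to close the proof you should restructure the argument around a distinguished assumed center rather than around admissibility of the two mutual directions.
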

\begin{proof}
Let $Q$ be any set of points of $\calP$ that have topologically equivalent
shortest path maps in $\calP$. We show that there is at most one geodesic center in $Q$,
which will prove the lemma. Note that any two
points of $Q$ must be visible to each other since otherwise their shortest path maps
would not be
topologically equivalent. Let $s$ be a geodesic center in $Q$. Let $s'$ be any other
point in $Q$. In the following, we prove that $d_{\max}(s')>d_{\max}(s)$, which
implies that $s'$ cannot be a geodesic center, and thus the lemma will be proved.

Consider the direction $r_s$ of moving $s$ towards $s'$. Since $s$ is
visible to $s'$, $r_s$ is a free direction.
Since $s$ is a geodesic center, by Lemma \ref{lem:10}, $R(s)$ is empty. Thus,
$s$ has a farthest point $t$ such that $r_s\not\in R(s,t)$. Because $t$ is
a farthest point of $s$, $d(s,t)=d_{\max}(s)$.

If $s$ is visible to $t$, by Observation \ref{obser:20}, $r_s$ is not
towards the interior of the open half-plane $h_s(t)$. Hence,
$|s't|>|st|$, and thus $d_{\max}(s')\geq d(s',t)\geq
|s't|>|st|=d(s,t)=d_{\max}(s)$.

In the following, we assume that $s$ is not visible to $t$.
Depending on whether $t$ is in $\calV$, $E$, or
$\calI$, there are three cases.

\paragraph{The case $t\in \calV$.}
Suppose $t\in \calV$.  Due to $r_s\not\in R(s,t)$,
by Lemma \ref{lem:20}(3), if we move $s$ along $r_s$
with unit speed, there exists a vertex $v\in U_t(s)$ such that either $d'_{u,v}(s,t)>0$,
or $d'_{u,v}(s,t)=0$ and $d''_{u,v}(s,v)\neq 0$, for any coupled $s$-pivot
$u$ of $v$. Let $u$ be any coupled $s$-pivot of $v$.
%In other words, when we move $s$ along $r_s$ with unit speed,
%for any $u\in U_s(t)$ and any $v\in U_t(s)$, either $d'_{u,v}(s,t)>0$,
%or $d'_{u,v}(s,t)=0$ and $d''_{u,v}(s,v)\neq 0$.

Note that $\overline{su}\cup \pi(u,v)\cup \overline{vt}$ is
a shortest path from $s$ to $t$, whose length is $d_{u,v}(s,t)$.
Since $\spm(s)$ and $\spm(s')$ are
topologically equivalent and $t$ is a polygon vertex, $\overline{s'u}\cup
\pi(u,v)\cup \overline{vt}$ is a shortest path from $s'$ to $t$, whose
length is $d_{u,v}(s',t)$. In the following, we show that
$d_{u,v}(s',t)>d_{u,v}(s,t)$.

Indeed, as $s$ moves towards $s'$ along $r_s$ with unit speed, recall
that the second derivative
$d''_{u,v}(s,t)\geq 0$ always holds. Hence, if $d'_{u,v}(s,t)>0$,
then during the movement of $s$, it always holds that $d'_{u,v}(s,t)> 0$. This
implies that $d_{u,v}(s',t)>d_{u,v}(s,t)$. Similarly, if
$d'_{u,v}(s,t)=0$ and $d''_{u,v}(s,t)\neq 0$, then since $d''_{u,v}(s,t)\geq 0$,
we obtain $d''_{u,v}(s,t)> 0$. Consequently, as $s$ moves towards $s'$
along $r_s$ with unit speed, except for the starting moment,
it holds that  $d'_{u,v}(s,t)> 0$ and $d''_{u,v}(s,t)\geq 0$. Thus,
$d_{u,v}(s',t)>d_{u,v}(s,t)$.

The above proves that $d(s',t)=d_{u,v}(s',t)>d_{u,v}(s,t)=d(s,t)$.  Therefore,
we obtain $d_{\max}(s')>d_{\max}(s)$ since $d_{\max}(s')\geq d(s',t)>
d(s,t)=d_{\max}(s)$.

\paragraph{The case $t\in E$.}
If $t\in E$, let $e$ denote the polygon edge of $E$ that contains $t$.
Due to $r_s\not\in R(s,t)$, by Lemma \ref{lem:20}(2), if we move $s$ along $r_s$
with unit speed and move $t$ along $e$ with any speed $\tau\geq 0$,
there exists a vertex $v\in U_t(s)$ such that either $d'_{u,v}(s,t)>0$,
or $d'_{u,v}(s,t)=0$ and $d''_{u,v}(s,v)\neq 0$, for any coupled $s$-pivot
$u$ of $v$. Let $u$ be any coupled $s$-pivot of $v$.

Note that $\overline{su}\cup \pi(u,v)\cup \overline{vt}$ is
a shortest path from $s$ to $t$, whose length is $d_{u,v}(s,t)$.
Since $\spm(s)$ and $\spm(s')$ are topologically equivalent and $t\in e$,
the point $t$, as a vertex of $\spm(s)$,
corresponds to one and only one vertex $t'$ in $\spm(s')$ that is also on $e$.
Then, $\overline{s'u}\cup d(u,v)\cup \overline{vt'}$ is a shortest path from
$s'$ to $t'$, whose length is $d_{u,v}(s',t')$.
In the following, we show that
$d_{u,v}(s',t')>d_{u,v}(s,t)$, which will lead to
$d_{\max}(s')>d_{\max}(s)$ as $d_{\max}(s')\geq d(s',t')=d_{u,v}(s',t')$ and
$d_{\max}(s)=d(s,t)=d_{u,v}(s,t)$.

Suppose we move $s$ along $r_s$ towards $s'$ with unit speed and
move $t$ along $e$ towards $t'$ with speed $\tau=|tt'|/|ss'|$.
Hence, when $s$ arrives at $s'$, $t$ arrives at $t'$ simultaneously.
By Lemma \ref{lem:20}(2), either $d'_{u,v}(s,t)>0$,
or $d'_{u,v}(s,t)=0$ and $d''_{u,v}(s,v)\neq 0$. In either case, by the same analysis
as in the above case $t\in \calV$, we can show that $d_{u,v}(s',t')>d_{u,v}(s,t)$
and we omit the details.

\paragraph{The case $t\in \calI$.}
The proof for this case is very similar.
Due to $r_s\not\in R(s,t)$, by Lemma \ref{lem:20}(1), if we move $s$ along $r_s$
with unit speed and move $t$ along any free direction
with any speed $\tau\geq 0$,
there exists a vertex $v\in U_t(s)$ such that either $d'_{u,v}(s,t)>0$,
or $d'_{u,v}(s,t)=0$ and $d''_{u,v}(s,v)\neq 0$, for any coupled $s$-pivot
$u$ of $v$.

Since $\spm(s)$ and $\spm(s')$ are topologically equivalent, $t$ corresponds to
one and only one vertex
$t'$ in $\spm(s')$ that is also in $\calI$.
The rest of the argument is exactly the same as that for the case $t\in E$. We
can prove that $d_{\max}(s')>d_{\max}(s)$ and we omit the details.
%Let $v$ be any vertex of $U_{}(s)$ and let $u$ be any coupled $s$-vertex of $v$.
%Hence, $\overline{su}\cup d(u,v)\cup \overline{vt}$ is a shortest path from
%$s$ to $t$, whose distance is $d_{u,v}(s,t)$. Since $\spm(s)$ and $\spm(s')$ is
%topologically equivalent, $t$ corresponds to a vertex $t'$ in
%$\spm(s')$, which is also in $\calI$, and further, $\overline{s'u}\cup
%d(u,v)\cup \overline{vt'}$ is a shortest path from $s'$ to $t$, whose
%length is $d_{u,v}(s',t)$. In the following, we show that
%$d_{u,v}(s',t)>d_{u,v}(s,t)$, which will lead to
%$d_{\max}(s')>d_{\max}(s)$.
%
%Indeed, we move $s$ along $r_s$ towards $s'$ with unit speed and
%move $t$ towards $t'$ with speed $\tau=|tt'|/|ss'|$.
%Hence, when $s$ arrives at $s'$, $t$ arrives at $t$ simultaneously.
%By Lemma \ref{lem:20} (3), either $d'_{u,v}(s,t)>0$,
%or $d'_{u,v}(s,t)=0$ and $d''_{u,v}(s,v)\neq 0$. In either case, by the same analysis
%as before, we can show that $d_{u,v}(s',t)>d_{u,v}(s,t)$.

This completes the proof for the lemma.
\end{proof}

{\em Remark:} By extending the proof of Lemma \ref{lem:30},
it is possible to obtain a slightly stronger result: Every cell (including its
boundary) of $\spmed$ contains at most one
geodesic center.

The following corollary can be proved by the same techniques as Lemma
\ref{lem:30}, and it implies that if $t$ is a farthest point of $s$, then slightly moving $s$ along a free direction that is not in $R(s,t)$ can never obtain a geodesic center.

\begin{corollary}\label{corr:10}
Suppose $t$ is a farthest point of $s$. If we move $s$ infinitesimally
along a free direction that is not in $R(s,t)$, then $d_{\max}(s)$
will become strictly larger.
\end{corollary}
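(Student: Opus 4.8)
The plan is to mimic almost verbatim the argument of Lemma~\ref{lem:30}, but without assuming that $s$ is a geodesic center and without producing a second point $s'$ inside a fixed topologically-equivalent region; instead I fix the particular farthest point $t$ with $r\notin R(s,t)$ from the hypothesis and chase the same three cases $t\in\calV$, $t\in E$, $t\in\calI$, plus the trivial visible case. First I would let $r$ be the given free direction with $r\notin R(s,t)$ and let $s'$ be the point obtained by moving $s$ an infinitesimal distance along $r$. If $s$ is visible to $t$, then by Observation~\ref{obser:20} $t\in\calV$ and $r$ is not towards the interior of $h_s(t)$, so $|s't|\ge|st|$; a strict inequality follows because $r\ne$ the two directions bounding $h_s(t)$ is exactly the content of ``not in $R(s,t)$'' being an open condition's complement, and in the boundary case one argues $|s't|=|st|$ still leaves $d_{\max}(s')\ge d(s',t)=d(s,t)=d_{\max}(s)$, which is not strict; here I would note (as the authors implicitly do) that the interesting assertion is about moving \emph{into} the complement of the open range, so the half-plane bounding directions need a separate remark, or one simply restricts to directions strictly outside $R(s,t)$.

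The main work is the non-visible case. By the contrapositive of the appropriate part of Lemma~\ref{lem:20}, since $r\notin R(s,t)$, for \emph{every} choice of free direction $r_t$ for $t$ (with $r_t$ parallel to the containing edge when $t\in E$, and $r_t$ arbitrary when $t\in\calI$, and $\tau=0$ forced when $t\in\calV$) and every speed $\tau\ge0$, there is a vertex $v\in U_t(s)$ all of whose coupled $s$-pivots $u$ satisfy either $d'_{u,v}(s,t)>0$, or $d'_{u,v}(s,t)=0$ together with $d''_{u,v}(s,t)\neq 0$. I would then pick one coupled $s$-pivot $u$ of such a $v$ and use the convexity fact $d''_{u,v}(s,t)\ge0$ (from Equation~\eqref{equ:10}, valid since $\tau\ge0$) exactly as in Lemma~\ref{lem:30}: if $d'_{u,v}>0$ at the start, it stays positive along the motion, so $d_{u,v}(s',t^*)>d_{u,v}(s,t)$; if instead $d'_{u,v}=0$ and $d''_{u,v}>0$ at the start, then $d'_{u,v}$ becomes strictly positive immediately after the start and $d''_{u,v}\ge0$ thereafter, again giving $d_{u,v}(s',t^*)>d_{u,v}(s,t)$. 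Here $t^*$ is the location of $t$ after the simultaneous motion (for $t\in\calV$ we take $\tau=0$, so $t^*=t$).

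To finish, I would convert this increase of $d_{u,v}$ into an increase of $d_{\max}$. The point is that $\overline{s'u}\cup\pi(u,v)\cup\overline{v t^*}$ is a genuine path from $s'$ to $t^*$, hence $d(s',t^*)\le d_{u,v}(s',t^*)$ is the wrong direction; so instead, as in Lemma~\ref{lem:30}, I observe that $\overline{su}\cup\pi(u,v)\cup\overline{vt}$ is a \emph{shortest} $s$-$t$ path of length $d_{u,v}(s,t)=d(s,t)=d_{\max}(s)$, and I choose the motion of $t$ so that $t^*$ is a farthest point of $s'$ — concretely, one takes $t^*$ to be (a point corresponding to) $t$ in $\spm(s')$, using $M_t(s')$ and Observation~\ref{obser:10} exactly as in the proof of Lemma~\ref{lem:20}, so that $d(s',t^*)=d_{u,v}(s',t^*)$ for the relevant pivot, or at worst $d(s',t^*)\ge$ some quantity that still exceeds $d(s,t)$ by the triangulation argument of Lemma~\ref{lem:20}. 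Then $d_{\max}(s')\ge d(s',t^*)\ge d_{u,v}(s',t^*)>d_{u,v}(s,t)=d_{\max}(s)$. The chief obstacle, and the reason the statement is phrased as a corollary rather than an immediate consequence, is precisely this last conversion: one must reuse the delicate ``$M_t(s')$ and convex-hull / half-plane triangulation'' machinery from the proof of Lemma~\ref{lem:20} (the roles of $s$ and $s'$ there are now swapped relative to Lemma~\ref{lem:30}), ensuring that the vertex $v$ selected by the contrapositive of Lemma~\ref{lem:20} can simultaneously be made to witness a farthest point of $s'$; handling the sub-case where the corresponding vertex $t'$ leaves the edge $e$ (for $t\in E$) or moves within the convex hull (for $t\in\calI$) is the tedious part, but it is structurally identical to what was already done, so I would state it is omitted with a pointer to the proof of Lemma~\ref{lem:20}.
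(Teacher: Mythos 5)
Your proposal follows essentially the same route as the paper: the paper's proof of Corollary~\ref{corr:10} simply reruns the argument of Lemma~\ref{lem:30} --- negate the sufficient condition of Lemma~\ref{lem:20}, use $d''_{u,v}(s,t)\ge 0$ to get $d_{u,v}(s',t')>d_{u,v}(s,t)$, and pass to a vertex $t'$ of $\spm(s')$ corresponding to $t$ for which the $(u,v)$-path is still a shortest path, so that $d_{\max}(s')\ge d(s',t')>d(s,t)=d_{\max}(s)$ --- and you correctly identify the one genuinely delicate step (converting the increase of a single path-length function into an increase of the shortest-path distance via the $M_t(s')$ machinery). One small correction to your visible case: if $r_s$ lies on the bounding line of $h_s(t)$ you do not get $|s't|=|st|$; since $s'$ then lies in the closed half-plane complementary to $h_s(t)$ and $s$ is the foot of the perpendicular from $t$ to its boundary, $|s't|=\sqrt{|st|^2+|ss'|^2}>|st|$, so the strict inequality holds for every direction outside the open range and no restriction of the statement is needed.
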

\begin{proof}
Let $r_s$ be any free direction that is not in $R(s,t)$.
Suppose we move $s$ infinitesimally along $r_s$ to $s'$.
By using exactly the same argument as in the proof of Lemma \ref{lem:30},
we can show that $\spm(t)$ has a vertex $t'$ corresponding to $t$ in $\spm(s)$ such that $d(s,t)<d(s',t')$. Therefore, we obtain $d_{max}(s)=d(s,t)<d(s',t')\leq d_{\max}(s')$, which proves the corollary.
%Let $r_s$ be any free direction that is not in $R(s,t)$. If $s$ is visible to $t$, by Observation \ref{obser:20}, the corollary is obviously true. Below we assume $s$ is not visible to $t$.
%
%Suppose we move $s$ infinitesimally along $r_s$ to $s'$. Since
%$t$ is non-degenerate, $t$ corresponds to only one vertex $t'$ in
%$\spm(s')$. Imagine that we move $t$ to $t'$ with speed $|tt'|/|ss'|$ while $s$ is moving. Hence, when $s$ arrives at $s'$, $t$ arrives at $t'$ simultaneously.
%
%Consider any vertex $v\in U_t(s)$ and let $u$ be a coupled $s$-pivot of $v$.
%Hence, $d(s,t)=d_{u,v}(s,t)$. Since $t'$ is the only vertex of $\spm(s')$ that corresponds to $t$, $d(s',t')=d_{u,v}(s',t')$ holds.
%Since $r_s$ is not in $R(s,t)$, by Lemma \ref{lem:20},
%for the above movements of $s$ and $t$, $d_{u,v}'(s,t)>0$, or $d'_{u,v}(s,t)=0$ and $d''_{u,v}(s,t)\neq 0$ (and thus $d''_{u,v}(s,t)> 0$). In either case, as the analysis in the proof of Lemma \ref{lem:40}, we can obtain
%$d_{u,v}(s,t)<d_{u,v}(s',t')$. Therefore,
%$$d_{max}(s) =d(s,t) = d_{u,v}(s,t)<d_{u,v}(s',t')= d(s',t')\leq d_{\max}(s').$$
%Hence, we obtain
%$d_{\max}(s)<d_{\max}(s')$. The lemma thus follows.
\end{proof}

So far we have shown that the total number of geodesic centers is bounded by the
combinatorial size of $\spmed$. This result, although it is interesting in its own
right, is not quite helpful for computing the geodesic centers. In order to
compute candidate points for geodesic centers, we need to find a way to
determine the range $R(s,t)$. It turns out that it is sufficient to determine
$R(s,t)$ when $t$ is in a non-degenerate position with respect to $s$ in the
following sense: Suppose $t$ is a farthest point of $s$; we say that $t$ is
{\em non-degenerate} with respect to $s$ if there are
exactly three, two, and one shortest \st\ paths for $t$ in $\calI$, $E$, and $\calV$,
respectively (by Observation \ref{obser:10}, this implies that $|U_t(s)|$ is
$3$, $2$, and $1$, respectively for the three cases).

%if $t$ is not on any bisector edge of $\spm(u)$
%for any polygon vertex $u$, and further,
%for $t\in \calI$, $|U_t(s)|=3$ and there are
%exactly three shortest \st\ path, for $t\in E$,
%$|U_t(s)|=2$ and there are exactly two shortest \st\ paths, and for $t\in
%\calV$, $|U_t(s)|=1$ and there is exactly one shortet \st\ path.
%If $t$ is in general position, then since
%$t$ is not on any bisector edge of $\spm(u)$ for any polygon vertex $u$, any two
%shortest paths from $s$ to $v$ for any $v\in U_t(s)$
%must not intersect other than $s$ and $t$ (since otherwise the
%intersection must be a polygon vertex $u$ and $t$ is on a bisector edge of
%$\spm(u)$).

Lemma \ref{lem:20} gives a sufficient condition for a direction in $R(s,t)$.
The following lemma gives both a sufficient and a necessary
condition for a direction in $R(s,t)$ when $t$ is non-degenerate, and the lemma will be used to explicitly compute the range $R(s,t)$ in Section \ref{sec:range}.
Note that Observation \ref{obser:20} already gives a way to determine $R(s,t)$
when $s$ is visible to $t$.

\begin{lemma}\label{lem:40}
Suppose $t$ is a non-degenerate farthest point of $s$ and $s$ is not visible to
$t$. Then,
a free direction $r_s$ is in $R(s,t)$ if and only if
\begin{enumerate}
\item
for $t\in \calI$,
there is a free direction $r_t$ for $t$ with a speed $\tau\geq 0$ such that when we
move $s$ along $r_s$ with unit speed and move $t$ along $r_t$ with speed
$\tau$, each vertex $v\in U_t(s)$ has a coupled $s$-pivot $u$ with
$d_{u,v}'(s,t)<0$.
\item
for $t\in E$,
there is a free direction $r_t$ for $t$ that is parallel to the polygon edge
containing $t$ with a speed $\tau\geq 0$ such that when we
move $s$ along $r_s$ with unit speed and move $t$ along $r_t$ with speed
$\tau$, each vertex $v\in U_t(s)$ has a coupled $s$-pivot $u$ with
$d_{u,v}'(s,t)<0$.
\item
for $t\in \calV$,
%a free direction $r_s$ is an admissible direction for $s$ with respect to $t$ if
when we move $s$ along $r_s$ with unit speed,
each vertex $v\in U_t(s)$ has a coupled $s$-pivot $u$ with
$d_{u,v}'(s,t)<0$.
\end{enumerate}
\end{lemma}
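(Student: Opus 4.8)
The plan is to prove the two implications separately; the ``if'' direction is essentially free and the ``only if'' direction carries all the weight. For ``if'', observe that the hypothesis ``$d'_{u,v}(s,t)<0$'' in each of the three items is a (strictly stronger) instance of the hypothesis ``$d'_{u,v}(s,t)<0$, or $d'_{u,v}(s,t)=0$ and $d''_{u,v}(s,t)=0$'' in the corresponding item of Lemma~\ref{lem:20}, so Lemma~\ref{lem:20} already yields $r_s\in R(s,t)$. For ``only if'' I would argue the contrapositive: assuming there is \emph{no} choice of an allowed free direction $r_t$ and speed $\tau\ge 0$ for which every $v\in U_t(s)$ has a coupled $s$-pivot $u$ with $d'_{u,v}(s,t)<0$, I will produce, after moving $s$ infinitesimally along $r_s$ to $s'$, a vertex $t'\in M_t(s')$ with $d(s',t')\ge d(s,t)$, which is exactly the statement $r_s\notin R(s,t)$.

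The first step is to turn the negated hypothesis into the infeasibility of a small linear system. By non-degeneracy, $t\in\calI,E,\calV$ admits exactly $3,2,1$ shortest \st\ paths, so $U_t(s)=\{v_1,\dots,v_m\}$ with $m=3,2,1$; moreover those $m$ paths must have pairwise distinct $t$-pivots (otherwise $|U_t(s)|<m$, contradicting Observation~\ref{obser:10}), so each $v_i$ has a \emph{unique} coupled $s$-pivot $u_i$, and the same counting shows that $\overline{su_i}\cup\pi(u_i,v_i)$ is the unique shortest path from $s$ to the polygon vertex $v_i$ (any other would combine with $\overline{v_it}$ into an $(m+1)$-st shortest \st\ path). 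By Equation~\eqref{equ:10}, $d'_{u_i,v_i}(s,t)=a_i+\langle\vec p,w_i\rangle$, where $a_i=\cos\gamma_s$ is the known rate of change of $|su_i|$ as $s$ moves along $r_s$ at unit speed, $w_i$ is the unit vector from $v_i$ to $t$, and $\vec p=\tau\,\widehat{r_t}$ is the velocity vector of $t$, which ranges over $\bbR^2$ when $t\in\calI$, over the line parallel to the polygon edge of $t$ when $t\in E$, and equals $\mathbf{0}$ when $t\in\calV$. Thus the negated hypothesis is precisely that $\{a_i+\langle\vec p,w_i\rangle<0:1\le i\le m\}$ has no solution $\vec p$ in the relevant linear space, and a transposition theorem for systems of strict linear inequalities (the strict form of Gordan's theorem) produces multipliers $\lambda_1,\dots,\lambda_m\ge 0$, not all zero, with $\sum_i\lambda_i a_i\ge 0$ and $\sum_i\lambda_i w_i$ orthogonal to that linear space (so $\sum_i\lambda_iw_i=\mathbf{0}$ if $t\in\calI$, and $\sum_i\lambda_iw_i$ perpendicular to the edge of $t$ if $t\in E$).

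The crux is that non-degeneracy converts these multipliers into strict local convexity of the weighted distance $g(q)=\sum_i\lambda_i|v_iq|$ at $t$. For $t\in\calI$, since $v_1,v_2,v_3$ form a non-degenerate triangle with $t$ strictly inside (Observation~\ref{obser:10}), $\nabla g(t)=\sum_i\lambda_iw_i=\mathbf{0}$ forces all $\lambda_i>0$ and no two of $w_1,w_2,w_3$ parallel (any such collinearity would place $t$ on the boundary of the triangle), so the Hessian of $g$ at $t$, whose value on a unit direction $\widehat r$ is $\sum_i\lambda_i\sin^2\!\angle(\widehat r,w_i)/|v_it|$, is positive definite and $g(q)>g(t)$ for all $q\ne t$ near $t$. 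For $t\in E$, using that $v_1,v_2$ lie on opposite sides of the perpendicular to $e$ at $t$ and at least one of them lies strictly off the supporting line of $e$ (Observation~\ref{obser:10}), $\sum_i\lambda_iw_i\perp e$ forces the second derivative of $g$ along $e$ to be strictly positive, so $g(q)>g(t)$ for $q\ne t$ on $e$ near $t$. For $t\in\calV$ the multiplier relation is just $\cos\gamma_s\ge 0$, and since $d''_{u,v}(s,t)=\sin^2\gamma_s/|su|$ cannot vanish simultaneously, $d_{u,v}(\cdot,t)$ strictly increases along $r_s$. Now move $s$ to $s'=s+\epsilon\widehat{r_s}$; because the shortest $s$-to-$v_i$ path is unique and stable under the perturbation, $d(s',v_i)=d(s,v_i)+\epsilon a_i+O(\epsilon^2)$, hence $\sum_i\lambda_id(s',v_i)\ge\sum_i\lambda_id(s,v_i)$ up to second order. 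Since $d(s,t)=d(s,v_i)+|v_it|$ for every $i$, at the vertex $t'$ of $\spm(s')$ corresponding to $t$ --- the vertex where the cells rooted at the $v_i$ meet, so $d(s',t')$ equals each $d(s',v_i)+|v_it'|$ (resp., for $t\in E$, both values at the point where the bisector of $v_1,v_2$ crosses $e$) --- we get $d(s',t')=(\sum_i\lambda_i d(s',v_i)+g(t'))/\sum_i\lambda_i\ge(\sum_i\lambda_i d(s,v_i)+g(t))/\sum_i\lambda_i=d(s,t)$, as desired; for $t\in\calV$ the conclusion is immediate from the strict increase of $d_{u,v}(\cdot,t)$.

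The main obstacle, exactly as in the proof of Lemma~\ref{lem:20}, is the case where $s$ lies on the boundary of a cell of $\spmed$, so that $t$ corresponds to a whole cluster $M_t(s')$ of vertices of $\spm(s')$ rather than to a single one; there the clean evaluation above is no longer available at one vertex, and one must show that \emph{some} vertex of that cluster still lies at distance $\ge d(s,t)$ from $s'$, which requires the finer vertex-correspondence of Chiang--Mitchell together with a triangulation-around-$t$ argument mirroring the one used for Lemma~\ref{lem:20} (and, when $t\in E$ and the relevant vertex is not on $e$, the same preliminary step of locating a $v_i\in U_t(s)$ with $|v_it'|<|v_it^*|$). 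This bookkeeping is lengthy, but the conceptual content lies entirely in the transposition-theorem step, which is precisely what upgrades the weak condition of Lemma~\ref{lem:20} to the strict condition in the non-degenerate case --- and is the algebraic core of the $\pi$-range property established in Section~\ref{sec:range}.
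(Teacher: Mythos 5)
Your proof is correct in substance, but it takes a genuinely different route from the paper's, and it is considerably heavier than what the non-degeneracy hypothesis actually requires. The paper proves the ``only if'' direction directly rather than by contraposition: because $t$ is non-degenerate, $M_t(s')$ consists of a \emph{single} vertex $t'$ (this is precisely what the hypothesis buys, and it also forces $t'\in e$ when $t\in E$), so one simply takes $r_t$ to be the direction from $t$ to $t'$ with speed $\tau=|tt'|/|ss'|$; then $d_{u,v}(s',t')=d(s',t')<d(s,t)=d_{u,v}(s,t)$ for every pivot pair, and since $d''_{u,v}\geq 0$ throughout the motion, the initial derivative $d'_{u,v}(s,t)$ must be negative. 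That is the whole argument. Your contrapositive via Gordan's theorem plus strict local convexity of $g(q)=\sum_i\lambda_i|v_iq|$ is a valid and rather illuminating LP-duality reading of the same phenomenon (it qualitatively anticipates the explicit trigonometric computation of the $\pi$-range in Section \ref{sec:range}), and your uniqueness/stability claims about the $s$-to-$v_i$ paths and the positive-definiteness of the Hessian all check out; note also that your inequality $\sum_i\lambda_id(s',v_i)\geq\sum_i\lambda_id(s,v_i)$ holds exactly, not merely ``up to second order,'' since convexity of $|{\cdot}\,u_i|$ makes the first-order Taylor expression a true lower bound. Two cautions, both concerning your closing paragraph: the ``main obstacle'' you defer (a cluster $M_t(s')$, or $t'$ leaving $e$) simply does not occur here, because those phenomena are what \emph{degeneracy} of $t$ means; and the fallback you propose for them --- reusing the triangulation step of Lemma \ref{lem:20} to find $v_i$ with $|v_it'|<|v_it^*|$ --- would be useless in your contrapositive even if the cases arose, since that machinery yields \emph{upper} bounds on $d(s',t')$ whereas you need the lower bound $d(s',t')\geq d(s,t)$ (indeed, for $t'$ strictly inside $\calP$ off $e$, your gradient $\sum_i\lambda_iw_i$ points along the outward normal and $g(t')$ would drop below $g(t)$ to first order). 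So the proof stands exactly because non-degeneracy eliminates those cases, not because your sketch handles them.
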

\begin{proof}
First of all, in any of these three cases, if the condition in the lemma
statement holds, by Lemma \ref{lem:20}, $r_s$ is in $R(s,t)$.
In the following, we prove the other direction of the lemma.

Let $r_s$ be in $R(s,t)$.
Suppose we move $s$ along $r_s$ infinitesimally to $s'$.
Since $t$ is non-degenerate with respect to $s$,
the point $t$, as a vertex of $\spm(s)$,
corresponds to one and only one vertex $t'$ in $\spm(s')$ (i.e.,
$M_t(s')=\{t'\}$) \cite{ref:ChiangTw99}.
Due to $r_s\in R(s,t)$, $d(s,t)>d(s',t')$.
In the following, we prove the three cases: $t\in \calI$, $t\in E$, and $t\in
\calV$.

\paragraph{The case $t\in \calI$.}
Suppose we move $s$ towards $s'$ with unit speed and move $t$ towards $t'$ with
speed $|tt'|/|ss'|$. Then, when $s$ arrives at $s'$, $t$ arrives at $t'$
simultaneously.

Consider any vertex $v\in U_t(s)$. To prove the lemma for this case, our goal is to show that
there exists a coupled $s$-pivot $u$ of $v$ with $d'_{u,v}(s,t)<0$.

Let $u$ be any coupled $s$-pivot of $v$. Hence, $d(s,t)=d_{u,v}(s,t)$.
Since $M_t(s')=\{t'\}$,
$\overline{s'u}\cup \pi(u,v)\cup \overline{vt'}$ is a shortest path from $s'$ to
$t'$, and thus $d(s',t')=d_{u,v}(s',t')$.
%$d_{u,v}(s,t)=d(s,t)$ and.
We claim that $d'_{u,v}(s,t)<0$.
Suppose to the contrary that $d'_{u,v}(s,t)\geq 0$. If
$d'_{u,v}(s,t)> 0$, then we would obtain $d_{u,v}(s,t)<d_{u,v}(s',t')$, which
contradicts with $d(s,t)>d(s',t')$. Similarly, if $d'_{u,v}(s,t)=0$, since
$d''_{u,v}(s,t)\geq 0$, we would obtain $d_{u,v}(s,t)\leq d_{u,v}(s',t')$, which
contradicts with $d(s,t)>d(s',t')$. Hence, $d'_{u,v}(s,t)<0$ is proved.

This proves the lemma for the case $t\in \calI$.

\paragraph{The case $t\in E$.}
Let $e$ be the polygon edge containing $t$. Then, $t'$ is also on $e$. Suppose we
move $s$ towards $s'$ with unit speed and move $t$ on $e$ towards $t'$ with
speed $|tt'|/|ss'|$. Hence, when $s$ arrives at $s'$, $t$ arrives at $t'$
simultaneously.
%The rest of the proof is very similar to the case $t\in \calI$.

Consider any vertex $v\in U_t(s)$. Let $u$ be any coupled $s$-pivot of
$v$. As in the above case, $d_{u,v}(s,t)=d(s,t)$ and $d_{u,v}(s',t')=d(s',t')$.
Using the same analysis as in the above case we can also prove that
$d'_{u,v}(s,t)<0$, which leads to the lemma.

\paragraph{The case $t\in \calV$.}
In this case, since $t$ is a polygon vertex, $t'=t$.
Suppose we move $s$ towards $s'$ with unit speed.
%The rest of the proof is very similar to the case $t\in \calI$.
Consider any vertex $v\in U_t(s)$. Let $u$ be any coupled $s$-pivot of
$v$. Again, $d_{u,v}(s,t)=d(s,t)$ and $d_{u,v}(s',t')=d(s',t')$.
Using the same analysis as in the first case we can also prove that
$d'_{u,v}(s,t)<0$, which leads to the lemma.
\end{proof}

Lemma \ref{lem:40} will be used to determine the range $R(s,t)$ for a
non-degenerate farthest point of $s$. The details are deferred in
Section \ref{sec:range}, where we will show that $R(s,t)$ is the
intersection of the free direction range $R_f(s)$ and an open range of
size $\pi$ (i.e., the previously mentioned $\pi$-range property).

In addition, we present Lemma \ref{lem:50}, which will be useful for computing the candidate points in Section \ref{sec:candidates}. If a farthest point $t$ of $s$ is not non-degenerate, then we say that $t$ is {\em degenerate} (note that $s$ cannot be visible to $t$ in the degenerate case).  Lemma \ref{lem:50} provides a sufficient condition for a direction in $R(s,t)$ particularly for a degenerate farthest point $t$ of $s$.

\begin{lemma}\label{lem:50}
Suppose $t$ is a degenerate farthest point of $s$.
\begin{enumerate}
\item
For $t\in \calI$, a free direction $r_s$ is in $R(s,t)$ if the following conditions are satisfied: (1)
there exist three vertices $v_1,v_2,v_3\in U_t(s)$ such that
$t$ is in the interior of the triangle $\triangle v_1v_2v_3$ (i.e., $\{v_1,v_2,v_3\}$ satisfies the same condition as $U_t(s)$ in Observation \ref{obser:10}(1));
(2) there exists a free direction $r_t$ for $t$ with a speed $\tau\geq 0$ such that when we
move $s$ along $r_s$ with unit speed and move $t$ along $r_t$ with speed
$\tau$, each vertex $v\in \{v_1,v_2,v_3\}$ has a coupled $s$-pivot $u$ with
$d_{u,v}'(s,t)<0$.
\item
For $t\in E$, suppose $e$ is the polygon edge containing $t$.
A free direction $r_s$ is in $R(s,t)$
if the following conditions are satisfied: (1)
there exist two vertices $v_1,v_2\in U_t(s)$ such that
$\{v_1,v_2\}$ has one vertex in each of the two open half-planes bounded by the line through $t$ and perpendicular to $e$,
and $\{v_1,v_2\}$ has at least one vertex in the
open half-plane bounded by the supporting line of $e$ and containing the interior
of $\calP$ in the small neighborhood of $e$ (i.e., $\{v_1,v_2\}$ satisfies the same condition as $U_t(s)$ in Observation \ref{obser:10}(2));
%not on the line containing $e$;
(2) there is a free direction $r_t$ for $t$ parallel to $e$
with a speed $\tau\geq 0$ such that when we
move $s$ along $r_s$ with unit speed and move $t$ along $r_t$ with speed
$\tau$, each vertex $v\in \{v_1,v_2\}$ has a coupled $s$-pivot $u$ with
$d_{u,v}'(s,t)<0$.
\end{enumerate}
\end{lemma}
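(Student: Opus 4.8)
The plan is to mimic the structure of the proof of Lemma~\ref{lem:20}, specializing it to the situation where we only keep track of the designated vertices $v_1,v_2,v_3$ (resp.\ $v_1,v_2$) rather than all of $U_t(s)$, and to exploit the fact that $t$ being degenerate means $|U_t(s)|$ could be larger than the minimum guaranteed by Observation~\ref{obser:10}, so we cannot directly invoke Lemma~\ref{lem:20}. First I would set $s'$ to be the point obtained by moving $s$ infinitesimally along $r_s$, and let $M_t(s')$ be the set of vertices of $\spm(s')$ corresponding to $t$; the goal for Part~(1) is to show $d(s',t')<d(s,t)$ for every $t'\in M_t(s')$, and similarly $d(s',t')<d(s,t)$ for Part~(2). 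The central tool is the observation that for any $t'\in M_t(s')$ we have $U_{t'}(s')\subseteq U_t(s)$, hence $d(s',t')=\min_{v\in U_t(s)}\bigl(d(s',v)+|vt'|\bigr)\le\min_{v\in\{v_1,v_2,v_3\}}\bigl(d(s',v)+|vt'|\bigr)$; so it suffices to bound the right-hand side, which only involves the three (resp.\ two) chosen vertices. This is exactly why condition~(1) of the lemma insists that $\{v_1,v_2,v_3\}$ (resp.\ $\{v_1,v_2\}$) already satisfies the geometric convex-position / half-plane conditions of Observation~\ref{obser:10}: it lets the ``triangulation around $t^*$'' argument from the proof of Lemma~\ref{lem:20} go through using only these vertices.

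Next I would essentially replay the relevant argument from Lemma~\ref{lem:20}. Move $s$ along $r_s$ with unit speed and $t$ along $r_t$ with speed $\tau$ simultaneously, and let $t^*$ be the location of $t$ when $s$ reaches $s'$; then $|tt^*|$ is infinitesimal. For Part~(1), by condition~(2) each $v_i$ has a coupled $s$-pivot $u_i$ with $d'_{u_i,v_i}(s,t)<0$, so (since $d''\ge0$ throughout the movement) $d_{u_i,v_i}(s,t)>d_{u_i,v_i}(s',t^*)\ge d(s',v_i)+|v_it^*|$, giving $d(s',v_i)+|v_it^*|<d(s,t)$ for all three $i$ — note that unlike in Lemma~\ref{lem:20} we get the strict inequality for \emph{every} $v_i$, not just one, because here all three are required to have negative directional derivative. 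Then I would invoke the ``Claim'' from the proof of Lemma~\ref{lem:20} (with $H$ replaced by $\triangle v_1v_2v_3$ and $t^*$ the candidate point in its interior): for any $t'\in M_t(s')$, since $t'$ lies in the interior of $\triangle v_1v_2v_3$ (as $t$ does, and $t'$ is infinitesimally close to $t$), either $t'=t^*$, in which case $d(s',t')\le d(s',v_i)+|v_it^*|<d(s,t)$ immediately, or we triangulate $\triangle v_1v_2v_3$ from $t^*$, pick the sub-triangle $\triangle v_iv_jt^*$ containing $t'$, note $|v_it'|<|v_it^*|$ or $|v_jt'|<|v_jt^*|$, and conclude $d(s',t')\le\min\{d(s',v_i)+|v_it'|,d(s',v_j)+|v_jt'|\}<\max\{d(s',v_i)+|v_it^*|,d(s',v_j)+|v_jt^*|\}\le d(s,t)$.

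For Part~(2), the argument parallels the $t\in E$ case of the proof of Lemma~\ref{lem:20}: take $r_t$ parallel to $e$ so that $t^*$ stays on $e$, and observe that because $\{v_1,v_2\}$ has a vertex strictly on either side of the line through $t$ perpendicular to $e$, we get $|v_kt'|<|v_kt^*|$ for a suitable $k\in\{1,2\}$ whenever $t'$ is on $e$ and (after the triangulation-around-$t^*$ argument restricted to the two vertices, possibly together with the two endpoints of $e$) also when $t'$ is off $e$; combined with $d_{u_k,v_k}(s',t^*)\le d_{u_k,v_k}(s,t)=d(s,t)$ and $d(s',t')=\min_{v\in U_t(s)}(d(s',v)+|vt'|)\le d(s',v_k)+|v_kt'|$, we again obtain $d(s',t')<d(s,t)$. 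The condition that $\{v_1,v_2\}$ has a vertex in the open half-plane bounded by the supporting line of $e$ on the $\calP$-side ensures, via Equation~\eqref{equ:10}, that the corresponding directional derivative cannot be zero with $d''=0$ simultaneously, so the strict inequality $d'<0$ from condition~(2) is consistent and actually yields strict decrease.

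The main obstacle I anticipate is handling, in Part~(2), the case where $t'\notin e$ and $t'$ falls into a boundary triangle $\triangle t^*v_1v_0$ or $\triangle t^*v_2v_{m+1}$ where $v_0,v_{m+1}$ are endpoints of $e$ rather than pivots in $\{v_1,v_2\}$ — exactly the subtle subcase treated with Fig.~\ref{fig:caseE20} in the proof of Lemma~\ref{lem:20}. There one has to argue, using the fact that some vertex of $U_t(s)$ (here one of $v_1,v_2$, by condition~(1)) lies strictly to the right (resp.\ left) of the vertical line through $t^*$, that the line through that vertex and $t^*$ has strictly negative (resp.\ positive) slope, and then extend $\overline{v_1t'}$ to hit $\overline{v_0t^*}$ to deduce $|v_1t^*|>|v_1t'|$. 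Since condition~(1) here was deliberately phrased to give $\{v_1,v_2\}$ exactly the properties of $U_t(s)$ used in that argument, I expect the reduction to work verbatim; the only care needed is to make sure the ``sweeping ray'' triangulation uses $\{v_1,v_2\}\cup\{\text{endpoints of }e\}$ rather than all of $U_t(s)$, which is legitimate precisely because we are free to upper-bound $d(s',t')$ by the minimum over any subset of $U_t(s)$ that contains the vertices used in the bound.
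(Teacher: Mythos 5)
Your proposal is correct and follows essentially the same route as the paper's own proof: both replay the argument of Lemma~\ref{lem:20}, replacing $U_t(s)$ by the designated subset $\{v_1,v_2,v_3\}$ (resp.\ $\{v_1,v_2\}$), using $d(s',t')=\min_{v\in U_t(s)}(d(s',v)+|vt'|)\le\min_{v\in\{v_1,v_2,v_3\}}(d(s',v)+|vt'|)$ together with the triangulation of $\triangle v_1v_2v_3$ around $t^*$ (and, for $t\in E$, the sweeping-ray argument with the endpoints of $e$). Your write-up is in fact somewhat more explicit than the paper's sketch, and correctly notes the simplification that here \emph{all} chosen pivots have strictly negative derivative.
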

\begin{proof}
The proof uses similar techniques as in the proof of Lemma \ref{lem:20}. Indeed,
the proof of Lemma \ref{lem:20} mainly relies on Observation \ref{obser:10}.
Here, for the case $t\in \calI$, $\{v_1,v_2,v_3\}$ satisfies Observation
\ref{obser:10}(1); for  the case $t\in E$, $\{v_1,v_2\}$ satisfies Observation
\ref{obser:10}(2). Therefore, similar techniques as in the proof of Lemma
\ref{lem:20} can be used here. We briefly discuss it below.

\paragraph{The case $t\in \calI$.}
We first consider the case $t\in \calI$. Let $v_1$, $v_2$, and $v_3$ be the polygon vertices
specified in the lemma statement. For each $i$ with $1\leq i\leq 3$, let $u_i$
be the coupled $s$-pivot of $v_i$ with $d'_{u_i,v_i}(s,t)<0$. Suppose we move
$s$ along $r_s$ infinitesimally with unit speed to $s'$
and simultaneously move $t$ along $r_t$ with speed $\tau$ to a point $t^*$
(i.e., $t^*$ is the location of $t$ when $s$
arrives at $s'$). Since $|ss'|$ is
infinitesimal, $|tt^*|$ is also infinitesimal. Since $t$ is in the interior of
$\triangle v_1v_2v_3$, $t^*$ is also in the interior of the triangle. Consider
any $t'$ in $M_t(s')$. To prove that $r_s$ is in $R(s,t)$, our goal
is to show that $d(s',t')<d(s,t)$.

Since $|ss'|$ is infinitesimal, $|tt'|$ is also infinitesimal and $t'$ is also
in the interior of $\triangle v_1v_2v_3$. For each $1\leq i\leq 3$, since
$d'_{u_i,v_i}(s,t)<0$, it holds that $d_{u_i,v_i}(s,t)>d_{u_i,v_i}(s',t^*)$.
The three segments $\overline{t^*v_i}$ for $i=1,2,3$ partition $\triangle
v_1v_2v_3$ into three smaller triangles and $t'$ must be one of them. Without
loss of generality, assume $t'$ is in $\triangle v_1v_2t^*$. Hence, at least one
of $|t'v_1|<|t^*v_1|$ and $|t'v_2|<|t^*v_2|$ holds. Without loss of generality,
we assume the former one holds. Therefore, we can derive
$d(s',t')\leq
|s'u_1|+d(u_1,v_1)+|v_1t'|<|s'u_1|+d(u_1,v_1)+|v_1t^*|=d_{u_1,v_1}(s',t^*)<d_{u_1,v_1}(s,t)=d(s,t)$.

\paragraph{The case $t\in E$.}
Let $v_1$ and $v_2$ be the two polygon vertices
specified in the lemma statement. For each $i$ with $1\leq i\leq 2$, let $u_i$
be the coupled $s$-pivot of $v_i$ with $d'_{u_i,v_i}(s,t)<0$. Suppose we move
$s$ along $r_s$ infinitesimally with unit speed to $s'$
and simultaneously move $t$ along $e$ with speed $\tau$ to a point $t^*$. Since $|ss'|$ is
infinitesimal, $|tt^*|$ is also infinitesimal. Consider
any $t'$ in $M_t(s')$. To prove that $r_s$ is in $R(s,t)$, our goal
is to show that $d(s',t')<d(s,t)$.

For each $1\leq i\leq 2$, since
$d'_{u_i,v_i}(s,t)<0$, it holds that $d_{u_i,v_i}(s,t)>d_{u_i,v_i}(s',t^*)$.
Since $|tt^*|$ is infinitesimal, $\{v_1,v_2\}$ has one vertex in each of the open half-planes bounded by the line through $t^*$ and perpendicular to $e$.
Also, because at least one vertex of $v_1$ and $v_2$ is in the
open half-plane bounded by the supporting line of $e$ and containing the interior
of $\calP$ in the small neighborhood of $e$, regardless
of whether $t'$ is on $e$ or not, we can use
the similar approach as in the proof of Lemma \ref{lem:20} (for the case $t\in
E$) to show that either $|t'v_1|<|t^*v_1|$ or $|t'v_2|<|t^*v_2|$ holds.
Consequently, by the similar argument as the above case for $t\in \calI$, we can
obtain $d(s',t')<d(s,t)$.
\end{proof}

\section{Determining the Admissible Direction Range $R(s,t)$ and the
$\pi$-Range Property}
\label{sec:range}
%We say a point $s$ is in a general position if every vertex of $\spm(s)$ is in a
%general position with respect to $s$. Note that $s$ is in a general position if
%and only if $s$ is in the interior of a cell of $\spmed$.
In this section, we determine the admissible direction range $R(s,t)$ for any
point $s$ and any of its non-degenerate farthest point $t$. In particular, we
will prove the $\pi$-range property mentioned in Section \ref{sec:intro}.

Depending on whether $t$ is in $\calV$, $E$, and $\calI$, there are three cases. Recall that $R_f(s)$ is the range of
all free directions of $s$.
In each case, we will show that $R(s,t)$ is the intersection of $R_f(s)$ and an
open range $\piR(s,t)$ of size $\pi$.  We call  $\piR(s,t)$ the {\em
$\pi$-range}.
As will be seen later, the $\pi$-range $\piR(s,t)$  can
be explicitly determined based on the positions of
$s$, $t$, and the vertices of $U_s(t)$ and $U_t(s)$.

In fact, for each case, we will give more general results
that are on shortest path distance functions. These
more general results will also be useful for computing the candidate points
later in Section \ref{sec:candidates}.

\subsection{The Case $t\in\calV$}
\label{sec:caseV}
We first discuss the case $t\in \calV$. The result is relatively straightforward
in this case. If $s$ is visible to $t$, the $\pi$-range $\piR(s,t)$ is defined to be the
open range of directions delimited by the open half-plane $h_s(t)$
as defined in Observation \ref{obser:20}; by
Observation \ref{obser:20}, $R(s,t)=R_f(s)\cap \piR(s,t)$.

In the following, we assume $s$ is not visible to $t$.
We first present a more general result on a shortest path function.
Let $s$ and $t$ be any two points in
$\calP$ such that $t$ is in $\calV$ and $s$ is not visible to $t$.
Let $\pi(s,t)$ be any shortest \st\ path in $\calP$.
Let $u$ and $v$ be the $s$-pivot and $t$-pivot in $\pi(s,t)$, respectively.
Thus, $d_{u,v}(s,t)=|su|+d(u,v)+|vt|$. Now we consider $d_{u,v}(s,t)$ as a function of
$s$ and $t$ in the entire plane $\bbR^2$ (not only in $\calP$; namely, when we
move $s$ and $t$, they are allowed to move outside $\calP$, but the function
$d_{u,v}(s,t)$ is always defined as $|su|+d(u,v)+|vt|$, where $d(u,v)$ is a
fixed value).

%We define the {\em $\pi$-range} $\piR(s,t)$ as the set of all directions $r_s$ for $s$
%with respect to $t$ and the path $\pi(s,t)$ such that

The {\em $\pi$-range} $\piR(s,t)$ is defined with respect to $t$ and the path $\pi(s,t)$ as follows: a direction $r_s$ for $s$ is in $\piR(s,t)$ if $d'_{u,v}(s,t)<0$ when
we move $s$ along $r_s$ with unit speed.  The following lemma is quite
straightforward.

\begin{lemma}\label{lem:60}
The $\pi$-range $\piR(s,t)$ is exactly the open range of size $\pi$ delimited by
$h_s(u)$, where $h_s(u)$ is the open half-plane containing $u$ and
bounded by the line through $s$ and perpendicular to $\overline{su}$.
\end{lemma}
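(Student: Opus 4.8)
The plan is to reduce the statement to an elementary fact about the directional derivative of the single-source distance function $s\mapsto|su|$, exploiting the fact that in the definition of $\piR(s,t)$ the point $t$ (and with it the path $\pi(s,t)$, hence the pivots $u$ and $v$) is held fixed and only $s$ moves. First I would observe that $d_{u,v}(s,t)=|su|+d(u,v)+|vt|$, where $d(u,v)$ and $|vt|$ are constants under the motion of $s$; therefore $d'_{u,v}(s,t)$ is just the derivative of $|su|$ as $s$ moves with unit speed along $r_s$. Equivalently, this is the special case $\tau=0$ of Equation~\eqref{equ:10}, which gives $d'_{u,v}(s,t)=\cos\gamma_s$, where $\gamma_s\in[0,\pi]$ is the angle between $r_s$ and the ray from $u$ to $s$.

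Next I would translate the condition $d'_{u,v}(s,t)<0$ into a geometric statement about directions. Since $\gamma_s\in[0,\pi]$, we have $\cos\gamma_s<0$ if and only if $\gamma_s>\pi/2$, which holds if and only if the unit vector along $r_s$ has strictly positive inner product with $u-s$; that is, if and only if $r_s$ points strictly into the open half-plane bounded by the line through $s$ perpendicular to $\overline{su}$ and containing $u$ — exactly the half-plane $h_s(u)$ of the lemma statement. By the definition of ``delimits'' in Section~\ref{sec:pre}, the set of all directions along which $s$ moves towards the interior of the open half-plane $h_s(u)$ is precisely the open direction range of size $\pi$ delimited by $h_s(u)$. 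Chaining these equivalences shows $r_s\in\piR(s,t)$ if and only if $r_s$ lies in that open range, which is the claim.

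I do not expect a genuine obstacle; the only points requiring a little care are (i) noting that $\piR(s,t)$ is defined with $t$ frozen, so that the second-order term in \eqref{equ:10} plays no role and the range really is a plain open half-plane of directions rather than something more delicate, and (ii) observing that the strict inequality in the definition of $\piR(s,t)$ is exactly what makes the range open, excluding the boundary line perpendicular to $\overline{su}$ through $s$ (the directions with $\cos\gamma_s=0$). Finally I would remark that the case where $s$ is visible to $t$ has already been handled via Observation~\ref{obser:20}, so only the non-visible case needs the above argument.
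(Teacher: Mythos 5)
Your proposal is correct and follows essentially the same route as the paper: the paper's proof is a one-line reduction to the observation that with $t$ (hence $\tau=0$) frozen, $d'_{u,v}(s,t)=\cos\gamma_s$, which is negative exactly when $r_s$ points into $h_s(u)$. Your write-up merely spells out the sign bookkeeping more explicitly; no gap.
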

\begin{proof}
The analysis is similar to Observation \ref{obser:20}.
Suppose we move $s$ along a direction $r_s$ with unit speed. Then, $d'_{u,v}(s,t)<0$
if and only if $r_s$ is towards the interior of $h_s(u)$.
\end{proof}

Now we are back to our original problem to determine $R(s,t)$ for a
non-degenerate farthest point $t$ of $s$ with $t\in \calV$. Since $t$ is non-degenerate and $t$ is in $\calV$, there is only one
shortest path $\pi(s,t)$ from $s$ to $t$. We define $\piR(s,t)$ as above.
Based on Observation \ref{obser:20} and Lemmas \ref{lem:40}(3), we have Lemma \ref{lem:70}, and thus  $R(s,t)$ can be determined by Observation \ref{obser:20} and Lemma \ref{lem:60}.

%Based on Observation \ref{obser:20} and Lemmas \ref{lem:40} and \ref{lem:60}, we
%have the following lemma.

\begin{lemma}\label{lem:70}
$R(s,t)=R_f(s)\cap \piR(s,t)$.
\end{lemma}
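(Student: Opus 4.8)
The plan is to reduce everything to results already in hand: Observation \ref{obser:20} for the case where $s$ is visible to $t$, and Lemma \ref{lem:40}(3) together with Lemma \ref{lem:60} for the case where $s$ is not visible to $t$. First I would split the proof according to whether $s$ is visible to $t$, and observe at the outset that in all cases $R(s,t)\subseteq R_f(s)$ holds trivially, since by definition an admissible direction is in particular a free direction; hence it suffices to show that a \emph{free} direction $r_s$ lies in $R(s,t)$ if and only if it lies in $\piR(s,t)$.

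For the visible case, recall that $\piR(s,t)$ was \emph{defined} (at the start of Section \ref{sec:caseV}) to be the open range of size $\pi$ delimited by the open half-plane $h_s(t)$. Observation \ref{obser:20} says precisely that a free direction $r_s$ is in $R(s,t)$ if and only if $r_s$ points into the interior of $h_s(t)$, i.e.\ if and only if $r_s\in\piR(s,t)$; combined with $R(s,t)\subseteq R_f(s)$ this gives $R(s,t)=R_f(s)\cap\piR(s,t)$. For the non-visible case, the key point is that since $t$ is non-degenerate and $t\in\calV$, there is a unique shortest \st\ path $\pi(s,t)$, so $U_s(t)=\{u\}$ and $U_t(s)=\{v\}$ are singletons and $u$ is the only coupled $s$-pivot of $v$. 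The quantified condition of Lemma \ref{lem:40}(3) ---``each vertex $v\in U_t(s)$ has a coupled $s$-pivot $u$ with $d'_{u,v}(s,t)<0$ when $s$ moves along $r_s$ at unit speed''--- therefore collapses to the single statement $d'_{u,v}(s,t)<0$ for the unique pair $(u,v)$, which is exactly the defining condition for $r_s\in\piR(s,t)$ (and, by Lemma \ref{lem:60}, this $\pi$-range is the open range of size $\pi$ delimited by $h_s(u)$). So Lemma \ref{lem:40}(3) reads: a free direction $r_s$ is in $R(s,t)$ iff $r_s\in\piR(s,t)$, and again intersecting with $R_f(s)$ on the right is harmless, giving $R(s,t)=R_f(s)\cap\piR(s,t)$.

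The proof is really a bookkeeping argument with no new geometry, so the ``hard part'' is only making sure the reductions are clean: one must explicitly note that non-degeneracy forces the pivot sets to be singletons (so the ``for each $v\in U_t(s)$'' quantifier and the choice of coupled $s$-pivot both become trivial, and $\piR(s,t)$ is well-defined independently of any choice), and that the inclusion $R(s,t)\subseteq R_f(s)$ is built into the definition of admissible direction. Once these two points are in place, the equality falls out of Observation \ref{obser:20} in the visible case and Lemma \ref{lem:40}(3) (with Lemma \ref{lem:60} supplying the explicit half-plane description) in the non-visible case.
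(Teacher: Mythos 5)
Your proof is correct and follows essentially the same route as the paper's: split on visibility, handle the visible case via Observation \ref{obser:20} and the definition of $\piR(s,t)$, and in the non-visible case use the singleton pivot sets (forced by non-degeneracy) to collapse Lemma \ref{lem:40}(3) to the defining condition $d'_{u,v}(s,t)<0$ of $\piR(s,t)$. The explicit remark that $R(s,t)\subseteq R_f(s)$ is built into the definition of admissible direction is a clean way to package the two inclusions the paper proves separately.
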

\begin{proof}
If $s$ is visible to $t$,
we have already shown that the lemma is true. Below we assume $s$ is
not visible to $t$.

Let $u$ and $v$ be the $s$-pivot and $t$-pivot in $\pi(s,t)$, respectively. Note
that $v$ is the only vertex in $U_t(s)$ and $u$ is the only vertex in $U_s(t)$.

\begin{enumerate}
\item
Consider any direction $r_s\in R(s,t)$, i.e., $r_s$ is an admissible direction
for $s$ with respect to $t$.
According to Lemma \ref{lem:40}(3), when we move $s$ along $r_s$ with unit speed,
it holds that $d'_{u,v}(s,t)<0$. This implies that $r_s$ is in $\piR(s,t)$.
Since $r_s$ is in $R(s,t)$, $r_s$ is in $R_f(s)$.
Therefore, $r_s$  is in $R_f(s)\cap \piR(s,t)$.
\item
Consider any direction $r_s$ in $R_f(s)\cap \piR(s,t)$.
First of all, $r_s$ is a free direction. Since $r_s$ is
in $\piR(s,t)$, when $s$ moves along $r_s$ with unit speed, $d'_{u,v}(s,t)<0$.
Since $v$ is the only vertex in $U_t(s)$, by Lemma
\ref{lem:40}(3), $r_s$ is in $R(s,t)$.
\end{enumerate}
The lemma is thus proved.
\end{proof}

\subsection{The Case $t\in E$}
\label{sec:rangeE}

The analysis for this case is substantially more complicated than the previous
case, although the next case for $t\in \calI$ is even more challenging. One may
consider the analysis for this case as a ``warm-up'' for the most
general case $t\in \calI$.

As in the previous case, we first present a more general result that
is on two
shortest path distance functions. Let $s$ and $t$ be any two points in $\calP$ such that
$t$ is in $E$ and there are two shortest \st\ paths
$\pi_1(s,t)$ and $\pi_2(s,t)$ (this implies that $s$ is not visible
to $t$). Let $e$ be the polygon edge containing
$t$ and let $l(e)$ denote the line containing $e$.  For each $i=1,2$, let
$\pi_i(s,t)=\pi_{u_i,v_i}(s,t)$, i.e., $u_i$ and $v_i$ are the
$s$-pivot and $t$-pivot of $\pi_i(s,t)$, respectively. We further
require the set $\{v_1,v_2\}$ satisfy the same condition as $U_t(s)$ in Observation
\ref{obser:10}(2), i.e., $\{v_1,v_2\}$ has at least one vertex in the
open half-plane bounded by $l(e)$ and containing the interior of
$\calP$ in the small neighborhood of $e$, and it has at least one
vertex in each of the two open half-planes bounded by the line through
$t$ and perpendicular to $e$.
%If the above condition is satisfied,
We say that the two shortest paths $\pi_1(s,t)$ and $\pi_2(s,t)$ are {\em
canonical} with respect to $s$ and $t$ if $\{v_1,v_2\}$ satisfies the above condition. In the following, we assume $\pi_1(s,t)$ and $\pi_2(s,t)$ are canonical.
Note that the condition implies that $v_1\neq v_2$. However, $u_1=u_2$ is possible.

For each $i=1,2$, we consider $d_{u_i,v_i}(s,t)=|su_i|+d(u_i,v_i)+|v_it|$ as a
function of $s\in \bbR^2$ (instead of $s\in \calP$ only) and $t\in e$.

In this case, the {\em $\pi$-range} $\piR(s,t)$ of $s$ is defined with respect to $t$
and the two paths $\pi_1(s,t)$ and $\pi_2(s,t)$
as follows: a direction $r_s$ for $s$ is in $\piR(s,t)$ if there exists a
direction $r_t$ parallel to $e$ for $t$ with a speed $\tau\geq 0$ such that when
we move $s$ along $r_s$ with unit speed and move $t$ along $r_t$ with
speed $\tau\geq 0$, $d'_{u_i,v_i}(s,t)<0$ holds for $i=1,2$.

In Section \ref{sec:caseV}, we showed that the $\pi$-range for the case $s\in \calV$
is an open range of size $\pi$. Here we will show a similar
result in Lemma \ref{lem:80} unless a special case happens.
Although the analysis in Section \ref{sec:caseV} is quite straightforward, the result here for two functions $d_{u_i,v_i}(s,t)$ with $i=1,2$ is somewhat surprising and the proof is
substantially more difficult (one could imagine that a similar result
for three functions, as shown in Section \ref{sec:three}, is even more
surprising). Before presenting Lemma \ref{lem:80}, we introduce some notation.

For any two points $p$ and $q$ in the plane, define $\vector{pq}$ as the
direction from $p$ to $q$.

Recall that the angle of any direction $r$ is defined to be the
 angle in $[0,2\pi)$ counterclockwise from the positive direction of the $x$-axis.
Let $\alpha_1$ denote the angle of the direction $\vector{su_1}$, and
let $\alpha_2$ denote the angle of the direction $\vector{su_2}$ (e.g., see
Fig.~\ref{fig:twoangles}).
%Without loss of generality, we assume $\alpha_2\geq \alpha_1$. %Let $\alpha=\alpha_2-\alpha_1$.
%Also, for the simplicity of discussion, we assume each of $u_1$ and
%$u_2$ is above or on the $x$-axis (rotate the coordinate system if this is not
%the case). Therefore, both $\alpha_1$ and $\alpha_2$ are in $[0,\pi]$.
Note that by our way of defining pivot
vertices, $\alpha_1=\alpha_2$ if and only if $u_1=u_2$.

\begin{figure}[t]
\begin{minipage}[t]{\linewidth}
\begin{center}
\includegraphics[totalheight=1.0in]{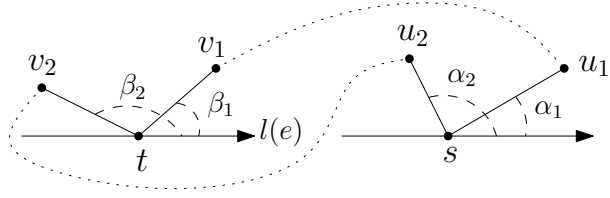}
\caption{\footnotesize Illustrating the definitions of the angles.
}
\label{fig:twoangles}
\end{center}
\end{minipage}
\vspace*{-0.15in}
\end{figure}

Note that $v_1$ and $v_2$ are in a closed half-plane bounded by the line $l(e)$.
%and containing the interior of $\calP$ in the small neighborhood of $e$.
We assign a direction to
$l(e)$ such that each of $v_1$ and $v_2$ are to the left or on $l(e)$.
Define $\beta_i$ as the smallest angle to rotate $l(e)$ counterclockwise such
that the direction of $l(e)$ becomes the same as the direction $\vector{tv_i}$,
for each $i=1,2$ (e.g., see Fig.~\ref{fig:twoangles}). Hence, both $\beta_1$ and
$\beta_2$ are in $[0,\pi]$.
Without of loss of generality, we assume $\beta_1\leq \beta_2$ (otherwise the analysis is symmetric). Since $\{v_1,v_2\}$ contains at least one
vertex in each of the open half-planes bounded by the line through $t$ and
perpendicular to $e$, we have $\beta_1\in [0,\pi/2)$ and $\beta_2\in
(\pi/2,\pi]$. Further, since at least one of $v_1$ and $v_2$ is not on $l(e)$, it is not possible that both $\beta=0$ and $\beta=\pi$ hold.

%As a summary, we have the following restrictions on $\beta_1$ and $\beta_2$: $\beta_1\in [0,\pi/2)$ and $\beta_2\in (\pi/2,\pi]$, and further, $\beta_1=0$ and $\beta_2=\pi$ cannot both hold.

Let $\alpha = \alpha_2-\alpha_1$. We refer to the case where $\beta_1+\beta_2=\pi$ and $\alpha=\pm\pi$ (i.e., $\alpha$ is $\pi$ or $-\pi$) as the {\em special case}. In the special case, $s$ is on $\overline{u_1u_2}$ and the vertical line through $t$ and perpendicular to $l(e)$ bisects the angle $\angle v_1tv_2$.

%For two real values $a$ and $b$, we use the following {\em convention}:
%$\frac{a}{b}=-\infty$ if $b=0$ and $a<0$; $\frac{a}{b}=+\infty$ if $b=0$ and $a>0$.
%Also, we let $\arctan(-\infty)=-\frac{\pi}{2}$ and $\arctan(+\infty)=\frac{\pi}{2}$.
%Lemma \ref{lem:80} gives the $\pi$-range $\piR(s,t)$, and its proof
%is deferred in Section \ref{sec:lem80proof}.

\begin{lemma}\label{lem:80}
The $\pi$-range $\piR(s,t)$ is determined as follows (e.g., see Fig.~\ref{fig:edgecase}).
\begin{equation*}
\piR(s,t) = \begin{cases}
(\alpha_1-\arctan(\frac{\lambda}{\sin(\alpha)}), \alpha_1-\arctan(\frac{\lambda}{\sin(\alpha)})+\pi)
& \text{if $\sin(\alpha)>0$},\\
(\alpha_1-\arctan(\frac{\lambda}{\sin(\alpha)})-\pi, \alpha_1-\arctan(\frac{\lambda}{\sin(\alpha)})) & \text{if $\sin(\alpha)<0$},\\
(\alpha_1-\pi/2,\alpha_1+\pi/2) & \text{if $\sin(\alpha)=0$ and $\lambda>0$},\\
(\alpha_1-3\pi/2,\alpha_1-\pi/2) & \text{if $\sin(\alpha)=0$ and $\lambda<0$}, \\
\emptyset & \text{if $\sin(\alpha)=0$ and $\lambda=0$},\\
\end{cases}
%(2 \leq i \leq n).
\end{equation*}
where $\lambda=\cos\alpha-\frac{\cos\beta_2}{\cos\beta_1}$. Further, $\alpha=\pm\pi$ and $\beta_1+\beta_2=\pi$ (i.e., the special case) if and only if $\sin(\alpha)=0$ and $\lambda=0$.
\end{lemma}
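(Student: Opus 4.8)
The plan is to compute the directional derivative $d'_{u_i,v_i}(s,t)$ explicitly in terms of the motion parameters and then determine, for each pair of paths simultaneously, the set of valid directions $r_s$.

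\medskip

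\noindent\textbf{Setting up the derivatives.}
First I would parametrize the motion of $s$ by a unit direction making angle $\theta$ with the $x$-axis, and the motion of $t$ along $e$ (i.e.\ along $l(e)$ or its reverse) by a signed speed. Using Equation~\eqref{equ:10}, the derivative $d'_{u_i,v_i}(s,t)$ splits into an $s$-contribution $\cos\gamma_{s,i}$ and a $t$-contribution $\tau\cos\gamma_{t,i}$, where $\gamma_{s,i}$ is the angle between $r_s$ and $\vector{u_is}$, and $\gamma_{t,i}$ is the angle between $r_t$ and $\vector{v_it}$. The $s$-term is $\cos\gamma_{s,i} = \cos(\theta - (\alpha_i+\pi)) = -\cos(\theta-\alpha_i)$, and the $t$-term, since $r_t$ is along $l(e)$ (say angle $0$ or $\pi$) and $\vector{v_it}$ makes angle $\beta_i+\pi$ (or $-\beta_i$) with $l(e)$, contributes a multiple of $\cos\beta_i$ times a signed speed. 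The key structural fact is that the $t$-contribution, as a function of $\tau$ and the sign of $r_t$, sweeps out a half-line $\{c\cos\beta_i : c\le 0\}$ or $\{c\cos\beta_i : c\ge 0\}$ of the same sign for both $i=1,2$ (because, after fixing the orientation of $l(e)$ so that $v_1,v_2$ lie to its left, $\cos\beta_1$ and $\cos\beta_2$ have the signs dictated by $\beta_1\in[0,\pi/2)$, $\beta_2\in(\pi/2,\pi]$).

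\medskip

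\noindent\textbf{Eliminating the free parameter $\tau$ and the sign of $r_t$.}
Requiring $d'_{u_1,v_1}<0$ and $d'_{u_2,v_2}<0$ simultaneously gives two linear inequalities in the single free scalar $c$ (the signed contribution from the $t$-motion, which is $-\tau$ or $+\tau$ depending on the direction of $r_t$ along $e$). Explicitly, writing $A_i = -\cos(\theta-\alpha_i)$, we need $A_1 + c\cos\beta_1 < 0$ and $A_2 + c\cos\beta_2 < 0$ to have a solution in $c$ (with $c$ ranging over all of $\mathbb{R}$, since $\tau\ge 0$ and $r_t$ can go either way along $e$). Since $\cos\beta_1>0$ and $\cos\beta_2<0$, this is a feasibility question: the first inequality forces $c < -A_1/\cos\beta_1$ and the second forces $c > -A_2/\cos\beta_2$, so a solution exists iff $-A_2/\cos\beta_2 < -A_1/\cos\beta_1$, i.e.\ iff $A_1\cos\beta_2 < A_2\cos\beta_1$ (the inequality flips appropriately because $\cos\beta_1\cos\beta_2<0$). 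Substituting $A_i=-\cos(\theta-\alpha_i)$ and expanding $\cos(\theta-\alpha_i)$ via the addition formula, with $\alpha_1$ fixed and $\alpha_2=\alpha_1+\alpha$, reduces the condition to a single linear inequality in $\cos\theta$ and $\sin\theta$, namely something of the form $P\cos(\theta-\alpha_1) + Q\sin(\theta-\alpha_1) > 0$, where $Q$ and $P$ come out to be (up to a common positive factor $|\cos\beta_1|$) essentially $\sin\alpha$ and $\lambda=\cos\alpha - \cos\beta_2/\cos\beta_1$. A half-plane inequality of this shape defines exactly an open direction range of size $\pi$; reading off its delimiting half-plane and converting the coefficients to an $\arctan$ gives the five cases in the statement, with the degenerate $\emptyset$ case being precisely $P=Q=0$, i.e.\ $\sin\alpha=0$ and $\lambda=0$.

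\medskip

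\noindent\textbf{The special-case equivalence.}
Finally I would verify the last sentence: $\sin\alpha=0$ and $\lambda=0$ iff the special case ($\alpha=\pm\pi$, $\beta_1+\beta_2=\pi$). The forward direction: $\sin\alpha=0$ means $\alpha\in\{0,\pm\pi\}$; if $\alpha=0$ then $u_1=u_2$ and $\lambda = 1 - \cos\beta_2/\cos\beta_1 = 0$ would force $\cos\beta_1=\cos\beta_2$, impossible since $\beta_1\in[0,\pi/2)$ and $\beta_2\in(\pi/2,\pi]$ are on opposite sides of $\pi/2$ (and not both $0$ and $\pi$); hence $\alpha=\pm\pi$, and then $\lambda = \cos(\pm\pi) - \cos\beta_2/\cos\beta_1 = -1 - \cos\beta_2/\cos\beta_1 = 0$ gives $\cos\beta_2 = -\cos\beta_1 = \cos(\pi-\beta_1)$, so $\beta_2 = \pi - \beta_1$. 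The converse is the same computation run backwards. I would also include the geometric interpretation already stated — that $\alpha=\pm\pi$ means $s\in\overline{u_1u_2}$ and $\beta_1+\beta_2=\pi$ means the perpendicular to $l(e)$ at $t$ bisects $\angle v_1tv_2$ — as a one-line remark.

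\medskip

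\noindent\textbf{Main obstacle.}
The routine but delicate part is bookkeeping the signs: orienting $l(e)$, keeping track of which way $\vector{u_is}$ versus $\vector{su_i}$ points, and ensuring the inequality directions survive multiplication by the negative quantity $\cos\beta_1\cos\beta_2$. The genuinely non-obvious step — the ``surprising'' part the authors flag — is recognizing that after eliminating $\tau$ and the orientation of $r_t$, the two inequalities collapse to a \emph{single} half-plane condition on $r_s$ (rather than an intersection of two half-planes, which would generically give a range smaller than $\pi$); this happens exactly because $\cos\beta_1$ and $\cos\beta_2$ have opposite signs, so the feasible $c$-interval is a genuine interval whose non-emptiness is one linear condition. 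Getting the $\arctan(\lambda/\sin\alpha)$ normalization and the correct branch (the $+\pi$ vs $-\pi$ offset governed by $\mathrm{sign}(\sin\alpha)$) is then just careful trigonometry.
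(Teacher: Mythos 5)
Your proposal is correct in substance and reaches the same pivotal inequality as the paper, but it gets there by a genuinely different elimination step. The paper first proves Lemma~\ref{lem:100}, a perturbation argument showing that the open condition ($d'_{u_1,v_1}<0$ and $d'_{u_2,v_2}<0$ for some admissible $(\theta_t,\tau)$) is equivalent to the boundary-normalized condition $d'_{u_1,v_1}=0$, $d'_{u_2,v_2}<0$; it then solves $w_1=0$ for $\tau$ and substitutes into $w_2$. You instead observe that the $t$-contribution is $c\cos\beta_i$ for a single signed scalar $c$ sweeping all of $\bbR$ (the two orientations of $r_t$ along $e$ together with $\tau\ge 0$), so the two strict inequalities become an upper and a lower bound on $c$ --- precisely because $\cos\beta_1>0$ and $\cos\beta_2<0$ --- and feasibility is one linear condition. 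This is cleaner: it dispenses with Lemma~\ref{lem:100} entirely and makes structurally transparent why the answer is a single half-plane of directions rather than an intersection of two. Carried out with correct signs, your feasibility condition $-A_2/\cos\beta_2<-A_1/\cos\beta_1$ is equivalent to $A_1\cos\beta_2>A_2\cos\beta_1$ (note your sketch writes the reversed inequality $A_1\cos\beta_2<A_2\cos\beta_1$; multiplying through by the negative quantity $\cos\beta_1\cos\beta_2$ flips it once more than you recorded), and substituting $A_i=-\cos(\theta-\alpha_i)$ with $x=\alpha_1-\theta$ yields exactly the paper's Inequality~\eqref{equ:50}, $\cos(x)\cdot\lambda>\sin(x)\cdot\sin(\alpha)$. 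From there your observation that this is a half-plane condition $R\cos(x+\phi)>0$, degenerate exactly when $\lambda=\sin\alpha=0$, subsumes the paper's three-way case analysis on the sign of $\cos(x)$; the remaining $\arctan$ bookkeeping and the special-case equivalence are handled identically in both arguments.
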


\begin{figure}[t]
\begin{minipage}[t]{\linewidth}
\begin{center}
\includegraphics[totalheight=1.6in]{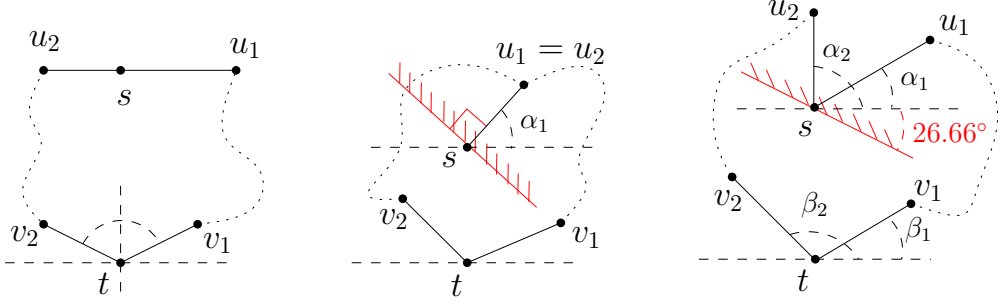}
\caption{\footnotesize Illustrating several concrete examples for Lemma~\ref{lem:80}. Left: the special case (the vertical line through $t$ bisects $\angle v_1tv_2$); in this case, $\piR(s,t)=\emptyset$. Middle: the case where $\alpha_1=\alpha_2$, and thus $\alpha=0$, $\sin(\alpha)=0$, and $u_1=u_2$; in this case, $\piR(s,t)=(\alpha_1-\pi/2,\alpha_1+\pi/2)$, which is delimited by the open half-plane (marked with red color in the figure) bounded by the line through $s$ and perpendicular to $\overline{su_1}$. Right: the most general case where $\alpha_1=30^{\circ}$, $\alpha_2=90^{\circ}$, $\beta_1=30^{\circ}$, $\beta_2=135^{\circ}$; by calculation, $\lambda\approx1.3165$, $\arctan(\frac{\gamma}{\sin(\alpha)})\approx56.66^{\circ}$, and thus, $\piR(s,t)\approx(\alpha_1-56.66^{\circ},\alpha_1+56.66^{\circ})=(-26.66^{\circ},153.34^{\circ})$; the open half-plane that delimits $\piR(s,t)$ is marked with red color in the figure. }
\label{fig:edgecase}
\end{center}
\end{minipage}
\vspace*{-0.15in}
\end{figure}

\paragraph{Remark:}
Unlike Lemma \ref{lem:60} whose geometric intuition is very straightforward, it is not clear to us how to interpret Lemma~\ref{lem:80} intuitively. %Indeed, it is hard to believe the lemma is true.

We defer the proof of Lemma \ref{lem:80} to Section \ref{sec:lem80proof}.
According to Lemma \ref{lem:80}, if the special case happens, $\piR(s,t)$ is
empty; otherwise, it is an open range of size exactly $\pi$.
%Note that $\alpha=\pm\pi$ infers that $s$ is on the line segment $\overline{u_1u_2}$ and
Since $\alpha=0$ if and only if $u_1=u_2$, the case $u_1=u_2$ is also covered by the lemma.

%We remark that by our convention,
%if $\alpha_2-\alpha_1=\pi$, Equation \eqref{equ:30} is equivalent to
%$\piR(s,t)=(\alpha_1-\frac{\pi}{2},\alpha_1+\frac{\pi}{2})$ for $\lambda>0$ and
%$\piR(s,t)=(\alpha_1+\frac{\pi}{2},\alpha_1+\frac{3\pi}{2})$ for $\lambda<0$. Since
%at least of one of $\lambda$ and $\sin\alpha$ is nonzero, Equation
%\eqref{equ:30} is well-defined by following our above convention. Also note that
%the case where $u_1=u_2$ (i.e., $\alpha_2 -\alpha_1 = 0$) is also covered by the lemma.

Now we are back to our original problem to determine the range
$R(s,t)$ for a non-degenerate farthest point $t\in E$ of $s$.
By Observation \ref{obser:20}, $s$ is not visible to $t$. Further, $s$
and $t$ have exactly two shortest paths $\pi_1(s,t)$ and $\pi_2(s,t)$.
Clearly, by Observation \ref{obser:10}(2), the two paths are canonical.
%and the $t$-pivot set $U_t(s)$ satisfies the condition of .
Therefore, the $\pi$-range $\piR(s,t)$ of $s$ with respect to $t$ and the two shortest paths $\pi_1(s,t)$ and $\pi_2(s,t)$ can be
determined by Lemma \ref{lem:80}.

By using Lemma \ref{lem:40}(2), we have the following lemma.

\begin{lemma}\label{lem:90}
$R(s,t) = \piR(s,t) \cap R_f(s)$.
\end{lemma}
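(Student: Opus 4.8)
The goal is to show $R(s,t) = \piR(s,t) \cap R_f(s)$ for a non-degenerate farthest point $t \in E$ of $s$, mirroring exactly the structure of the proof of Lemma~\ref{lem:70} for the case $t \in \calV$. The plan is to prove the two inclusions separately, using Lemma~\ref{lem:40}(2) as the engine in both directions and the definition of $\piR(s,t)$ (given just before Lemma~\ref{lem:80} in terms of the two canonical shortest paths $\pi_1(s,t)=\pi_{u_1,v_1}(s,t)$ and $\pi_2(s,t)=\pi_{u_2,v_2}(s,t)$) to translate between the $\pi$-range condition and the pivot-derivative condition. Throughout I would first invoke Observation~\ref{obser:20} to note that, since $t \in E$ is a farthest point of $s$, $s$ cannot be visible to $t$; hence there are (exactly) two shortest \st\ paths because $t$ is non-degenerate, and by Observation~\ref{obser:10}(2) these two paths are canonical, so Lemma~\ref{lem:80} applies and $\piR(s,t)$ is well defined.

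\emph{($R(s,t) \subseteq \piR(s,t) \cap R_f(s)$.)} Take any $r_s \in R(s,t)$. By definition of admissible direction, $r_s$ is a free direction, so $r_s \in R_f(s)$; this disposes of one coordinate. For the other, apply Lemma~\ref{lem:40}(2): since $t$ is non-degenerate with $t \in E$ and $r_s \in R(s,t)$, there is a free direction $r_t$ for $t$ parallel to $e$ and a speed $\tau \geq 0$ such that, moving $s$ along $r_s$ with unit speed and $t$ along $r_t$ with speed $\tau$, every $v \in U_t(s)$ has a coupled $s$-pivot $u$ with $d'_{u,v}(s,t) < 0$. Since $t$ is non-degenerate we have $U_t(s) = \{v_1,v_2\}$ and the only coupled $s$-pivots are $u_1$ and $u_2$ (from the two shortest paths $\pi_1,\pi_2$), so this says precisely that $d'_{u_i,v_i}(s,t) < 0$ for $i=1,2$ under this motion. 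That is exactly the defining condition for $r_s \in \piR(s,t)$. Hence $r_s \in \piR(s,t) \cap R_f(s)$.

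\emph{($\piR(s,t) \cap R_f(s) \subseteq R(s,t)$.)} Take any $r_s \in \piR(s,t) \cap R_f(s)$. Then $r_s$ is a free direction of $s$, and by the definition of $\piR(s,t)$ there is a direction $r_t$ parallel to $e$ and a speed $\tau \geq 0$ so that moving $s$ along $r_s$ (unit speed) and $t$ along $r_t$ (speed $\tau$) yields $d'_{u_i,v_i}(s,t) < 0$ for $i=1,2$. One should check that $r_t$ is a \emph{free} direction for $t$: since $t$ lies in the relative interior of the polygon edge $e$ and $r_t$ is parallel to $e$, moving $t$ infinitesimally along $r_t$ keeps it on $e$, hence in $\calP$, so $r_t$ is free. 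Now for each $v \in U_t(s) = \{v_1,v_2\}$, the coupled $s$-pivot $u_i$ satisfies $d'_{u_i,v_i}(s,t) < 0$, which is the hypothesis of Lemma~\ref{lem:20}(2); applying that lemma gives $r_s \in R(s,t)$. Combining the two inclusions proves $R(s,t) = \piR(s,t) \cap R_f(s)$.

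\emph{Main obstacle.} There is no real obstacle here — the lemma is a bookkeeping consequence of Lemmas~\ref{lem:20}(2) and~\ref{lem:40}(2) once non-degeneracy is used to identify $U_t(s)$ with $\{v_1,v_2\}$ and the coupled $s$-pivots with $\{u_1,u_2\}$, so that the "each $v \in U_t(s)$ has a coupled $s$-pivot $u$ with $d'_{u,v}(s,t)<0$" condition of those lemmas becomes verbatim the "$d'_{u_i,v_i}(s,t)<0$ for $i=1,2$" condition in the definition of $\piR(s,t)$. The only point meriting a sentence of care is confirming that the direction $r_t$ appearing in the definition of $\piR(s,t)$ (parallel to $e$, with $t$ in the relative interior of $e$) qualifies as a free direction for $t$ as required by Lemma~\ref{lem:20}(2), and that the non-degeneracy hypothesis genuinely forces $|U_t(s)| = 2$ with exactly these two coupled pivots — both of which follow from the setup recalled just before Lemma~\ref{lem:80} and from Observation~\ref{obser:10}(2). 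The argument is essentially identical to the proof of Lemma~\ref{lem:70}, with Lemma~\ref{lem:80} supplying the explicit description of $\piR(s,t)$ in place of the elementary half-plane description of Lemma~\ref{lem:60}.
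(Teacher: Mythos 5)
Your proof is correct and follows essentially the same route as the paper: both inclusions reduce to identifying $U_t(s)=\{v_1,v_2\}$ with unique coupled pivots $u_1,u_2$ via non-degeneracy, then translating between the definition of $\piR(s,t)$ and the pivot-derivative condition. The only cosmetic difference is that for the reverse inclusion you cite Lemma~\ref{lem:20}(2) directly while the paper cites the ``if'' direction of Lemma~\ref{lem:40}(2), which is itself just an application of Lemma~\ref{lem:20}, so the arguments coincide.
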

\begin{proof}
For each $i=1,2$, let $u_i$ and $v_i$ be the $s$-pivot and $t$-pivot
of the shortest path $\pi_i(s,t)$, respectively. Since $s$ and $t$ have only
two shortest paths $\pi_1(s,t)$ and $\pi_2(s,t)$,
$U_t(s)=\{v_1,v_2\}$ and $U_s(t)=\{u_1,u_2\}$. Further, for each
$i=1,2$, $v_i$ has only one $s$-pivot, which is $u_i$.

\begin{enumerate}
\item
Consider any direction $r_s\in R(s,t)$. Clearly, $r_s\in R_f(s)$.
By Lemma \ref{lem:40}(2), there exists a direction $r_t$ parallel to $e$ for
$t$ with a speed $\tau\geq 0$ such that if we move $s$ along $r_s$ for unit speed
and move $t$ along $l_t$ with speed $\tau$,
each vertex $v\in U_t(s)$ has a coupled $s$-pivot $u$ with $d'_{u,v}(s,t)<0$.
Since $U_t(s)=\{v_1,v_2\}$ and for each $i=1,2$, $v_i$ has only one $s$-pivot $u_i$, it holds that $d'_{u_i,v_i}(s,t)<0$. Hence, $r_s$
is in $\piR(s,t)$.

%This proves that $R(s,t) \subseteq R_f(s,t)\cap \piR(s,t)$.

\item
Consider any $r_s$ in $R_f(s,t)\cap \piR(s,t)$. First of all, $r_s$ is a free
direction. Since $r_s$ is in $\piR(s,t)$, there exists a direction $r_t$ parallel to $e$ for
$t$ with a speed $\tau\geq 0$ such that if we move $s$ along $r_s$ for unit speed
and move $t$ along $r_t$ with speed $\tau$, $d'_{u_i,v_i}(s,t)<0$ for each
$i=1,2$. Since $U_t(s)=\{v_1,v_2\}$, according to Lemma \ref{lem:40}(2), $r_s$
is in $R(s,t)$.

%This proves that $R_f(s,t)\cap \piR(s,t)\subseteq R(s,t)$.
\end{enumerate}
The lemma thus follows.
\end{proof}

Suppose $t$ is the only farthest point of $s$ and $t$ is non-degenerate.
According to Lemma \ref{lem:80}, if the special case happens,
$\piR(s,t)=\emptyset$, and thus $R(s,t)=\emptyset$ by Lemma \ref{lem:90}.
By Corollary \ref{corr:10},
whenever we move $s$ along any free direction
infinitesimally, the value $d_{\max}(s)$ will be strictly increasing. Therefore,
it is possible that the point $s$, which is in $\calI$ and
has only one farthest point, is a
geodesic center. It is not difficult to
construct such an example by following the left figure of
Fig.~\ref{fig:edgecase}; e.g., see Fig.~\ref{fig:interiorcenter}.
Hence, we have the following corollary.

\begin{corollary}\label{corr:20}
It is possible that a geodesic center is in $\calI$ and has only one farthest point.
\end{corollary}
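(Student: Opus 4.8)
The final statement to prove is Corollary~\ref{corr:20}: that a geodesic center can lie in $\calI$ and have exactly one farthest point.

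\medskip

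\noindent\textbf{Plan.} The proof is by explicit construction rather than abstract argument, so I would first assemble the machinery that makes such a construction legitimate. By Lemma~\ref{lem:80}, whenever $t$ is a non-degenerate farthest point of $s$ lying on a polygon edge $e$ and the \emph{special case} ($\alpha=\pm\pi$ and $\beta_1+\beta_2=\pi$) occurs, we have $\piR(s,t)=\emptyset$; then Lemma~\ref{lem:90} gives $R(s,t)=\piR(s,t)\cap R_f(s)=\emptyset$. If moreover $t$ is the \emph{only} farthest point of $s$, then $R(s)=R(s,t)=\emptyset$, so by Corollary~\ref{corr:10} every infinitesimal move of $s$ along any free direction strictly increases $d_{\max}$. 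Since $s\in\calI$ has all directions free, this means $s$ is a strict local minimum of $d_{\max}$, hence a geodesic center (here one should note $d_{\max}$ is continuous and $\calP$ is compact, so a strict local minimum that is the unique critical-type point in its SPM-equivalence cell is genuinely a center — or more carefully, invoke the remark after Lemma~\ref{lem:30} that a cell contains at most one center, combined with a global argument). So the whole corollary reduces to: \emph{exhibit a polygonal domain $\calP$, an interior point $s$, and an interior-of-edge point $t$ such that $t$ is the unique farthest point of $s$, $t$ is non-degenerate (exactly two shortest \st\ paths), and those two paths are in the special-case configuration of Lemma~\ref{lem:80}.}

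\medskip

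\noindent\textbf{The construction.} I would mimic the left picture of Figure~\ref{fig:edgecase} and embed it into the symmetric-triangle-style domain of Figure~\ref{fig:unique}. Concretely: build a domain with a single hole arranged so that from a point $s$ in $\calI$ there are exactly two shortest paths to a target point $t$ on an outer edge $e$, each path bending around one reflex vertex $v_1$, $v_2$ of the hole, with $s$ lying on segment $\overline{u_1u_2}$ (forcing $\alpha=\pm\pi$, i.e.\ $u_1,u_2$ collinear with $s$ on opposite sides — in the degenerate subcase one may even take $u_1=u_2$, but the special case wants $\alpha=\pm\pi$, so $u_1\ne u_2$) and with the line through $t$ perpendicular to $e$ bisecting $\angle v_1tv_2$ (forcing $\beta_1+\beta_2=\pi$). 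The mirror symmetry of the configuration about that perpendicular line makes $\beta_1+\beta_2=\pi$ automatic, and placing $s$ on the ``spine'' of symmetry makes $s,u_1,u_2$ collinear automatic. Then I would verify $d_{\max}(s)=d(s,t)$ with $t$ the \emph{unique} farthest point: this is the delicate part, and I would arrange the geometry (e.g.\ make the domain ``long and thin'' in the $e$-direction, or add dummy structure far away scaled small) so that no vertex of $\spm(s)$ other than $t$ achieves distance $d(s,t)$ — one can break ties by a small perturbation that preserves the special-case equalities, since those are only two equations ($\alpha=\pm\pi$, $\beta_1+\beta_2=\pi$) in the many degrees of freedom of the construction. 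Finally I would confirm non-degeneracy: exactly two shortest paths, which holds by the two-fold symmetry plus genericity of the rest of the domain.

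\medskip

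\noindent\textbf{Main obstacle.} The real work is not the special-case algebra (that is handed to us by Lemma~\ref{lem:80}) but verifying that $t$ is the \emph{unique} farthest point and that $s$ really is a \emph{global} center, not merely a local one. For uniqueness of the farthest point I would rely on the freedom in the construction: impose the two special-case constraints, then choose all remaining parameters generically so that the second-farthest vertex of $\spm(s)$ is strictly closer; because $d_{\max}$ and the involved distances vary continuously with the parameters, a generic choice works. For globality, since Corollary~\ref{corr:10} only gives a \emph{local} strict-increase statement, I would either (i) invoke the strengthened remark following Lemma~\ref{lem:30} — at most one geodesic center per closed cell of $\spmed$ — together with the fact that $d_{\max}$ attains its minimum over compact $\calP$ somewhere, and argue the minimizer's cell must be $s$'s; or more cleanly (ii) design $\calP$ small/symmetric enough (as in Fig.~\ref{fig:unique}) that one can directly check $d_{\max}(s)$ is the global minimum by the Corollary~\ref{corr:10} increase property propagated along any path from $s$. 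Option (i) is cleaner to write. This is exactly the point the proof should spend its care on; everything else is citation of Lemmas~\ref{lem:40},~\ref{lem:80},~\ref{lem:90} and Corollary~\ref{corr:10}.
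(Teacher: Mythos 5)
Your proposal is correct and follows essentially the same route as the paper: the paper likewise reduces the corollary to the special case of Lemma~\ref{lem:80} (via Lemma~\ref{lem:90} and Corollary~\ref{corr:10}) and then exhibits a concrete mirror-symmetric example --- the region between two close concentric squares plus a tiny triangle (Fig.~\ref{fig:interiorcenter}), with $s$ at the midpoint of the inner square's top edge and $t$ at the midpoint of the outer square's bottom edge --- which is exactly the kind of construction you sketch. Your extra care about uniqueness of the farthest point and about upgrading the local strict-increase of $d_{\max}$ to a global minimum addresses points the paper leaves to ``one can verify.''
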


\begin{figure}[t]
\begin{minipage}[t]{\linewidth}
\begin{center}
\includegraphics[totalheight=1.5in]{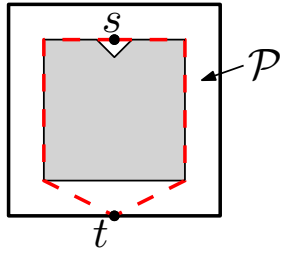}
\caption{Illustrating an example in which a geodesic center $s$ is in $\calI$ and has only one farthest point $t$. The polygonal domain $\calP$ is between two (very close) concentric squares plus an additional (very small) triangle so that $s$ is in $\calI$. The point $s$ is at the middle of the top edge of the inner square, and $t$ is at the middle of the bottom edge of the outer square. One can verify that $s$ is a geodesic center and $t$ is the only farthest point of $s$. The two shortest paths from $s$ to $t$ are shown with red dashed segments. Note that the middle point of every edge of the inner square is a geodesic center.}
\label{fig:interiorcenter}
\end{center}
\end{minipage}
\vspace*{-0.15in}
\end{figure}

\subsection{Proof of Lemma \ref{lem:80}}
\label{sec:lem80proof}

%We first prove the case for $u_1\neq u_2$. The case $u_1=u_2$ can be proved by
%similar (and simpler) techniques.
%We assume $u_1\neq u_2$. Hence, $\alpha\neq 0$.

Consider any direction $r_s$ for moving $s$. Let $\theta_s$ denote the angle of
$r_s$. Recall that the moving direction $r_t$ for $t$ is parallel to $e$ and
we have assigned a direction to the line $l(e)$.
Let $\theta_t$ denote the smallest angle to rotate $l(e)$ such that $l(e)$
becomes the same direction as $r_t$ (thus the definition of $\theta_t$ is
``consistent'' with the definitions of $\beta_1$ and $\beta_2$).
Since $r_t$ is parallel to $e$, $\theta_t$
is either $0$ or $\pi$.
%In this way, when $t$ moves rightwards,
%$\theta_t=0$, and when $t$ moves leftwards, $\theta_t=\pi$.

According to Equation \eqref{equ:10}, we can obtain the derivatives of the two
functions $d_{u_1,v_1}(s,t)$ and $d_{u_2,v_2}(s,t)$ as follows
\begin{equation}
\label{equ:40}
\begin{cases}
d'_{u_1,v_1}(s,t)= -\cos(\alpha_1-\theta_s) - \tau\cdot \cos(\beta_1-\theta_t),\\
d'_{u_2,v_2}(s,t)= -\cos(\alpha_2-\theta_s) - \tau\cdot \cos(\beta_2-\theta_t).
\end{cases}
%(2 \leq i \leq n).
\end{equation}

%To simplify notation, we use $d_1'$ and $d_2'$ to represent $d'_{u_1,v_1}(s,t)$
%and $d'_{u_1,v_1}(s,t)$, respectively.
Therefore, for each $i=1,2$, $d'_{u_i,v_i}(s,t)$ is a function of $\theta_s$,
$\theta_t$, and $\tau$. In order to simplify our proof for Lemma \ref{lem:80},
we first give the following lemma.

\begin{lemma}\label{lem:100}
A direction $r_s$ is in $\piR(s,t)$ if and only if there exist %$(\theta_t,\tau)$
$\theta_t\in \{0,\pi\}$ and $\tau\geq 0$ such that $d'_{u_1,v_1}(s,t)=0$
and $d'_{u_2,v_2}(s,t)<0$ (the same also holds if we exchange the indices $1$ and $2$).
\end{lemma}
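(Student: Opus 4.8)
The plan is to prove Lemma~\ref{lem:100} by a direct analysis of the system of two linear (in $\tau$) inequalities $d'_{u_1,v_1}(s,t)<0$ and $d'_{u_2,v_2}(s,t)<0$ coming from Equation~\eqref{equ:40}, treating $r_s$ (equivalently $\theta_s$) as fixed and $(\theta_t,\tau)$ as the moving parameters. Recall that by definition $r_s\in\piR(s,t)$ exactly when \emph{some} admissible choice of $\theta_t\in\{0,\pi\}$ and $\tau\geq 0$ makes both derivatives strictly negative; the claim is that this is equivalent to the boundary-type condition where one derivative is exactly $0$ and the other is strictly negative. The ``if'' direction is the easy half: if $\theta_t,\tau$ achieve $d'_{u_1,v_1}=0$ and $d'_{u_2,v_2}<0$, then perturbing $\tau$ slightly upward (when $\cos(\beta_1-\theta_t)>0$) or slightly downward (when $\cos(\beta_1-\theta_t)<0$), or perturbing $\theta_s$ itself, pushes $d'_{u_1,v_1}$ strictly negative while $d'_{u_2,v_2}$ stays negative by continuity; the only thing to check is that such a perturbation of $\tau$ is possible while staying in $\tau\geq 0$ and $\theta_t\in\{0,\pi\}$, and that the degenerate sub-case $\cos(\beta_1-\theta_t)=0$ does not obstruct this — there one argues directly that $d'_{u_1,v_1}$ was independent of $\tau$, so $d'_{u_1,v_1}=0$ already means $\theta_s$ lies on the boundary of a half-plane and a small rotation of $r_s$ does the job.

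For the ``only if'' direction, I would suppose $r_s\in\piR(s,t)$, so there are $\theta_t^0\in\{0,\pi\}$, $\tau^0\geq 0$ with both derivatives strictly negative, and I want to produce a (possibly different) choice realizing ``$=0$ for one, $<0$ for the other.'' The natural move is to fix $\theta_t=\theta_t^0$ and vary $\tau$ continuously: the two derivatives $d'_{u_i,v_i}(s,t)=-\cos(\alpha_i-\theta_s)-\tau\cos(\beta_i-\theta_t^0)$ are affine functions of $\tau$, and the feasible set $\{\tau\geq 0: d'_{u_1,v_1}<0,\ d'_{u_2,v_2}<0\}$ is an interval (intersection of at most two half-lines with $[0,\infty)$) that is nonempty by hypothesis. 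One then walks to an endpoint of this interval: at a finite endpoint one of the strict inequalities becomes an equality (giving exactly the desired configuration), and one must separately handle the case where the feasible interval is unbounded or where its relevant endpoint is $\tau=0$. When $\tau=0$ forces the endpoint, $d'_{u_i,v_i}=-\cos(\alpha_i-\theta_s)$, and I would argue that by a small rotation of $\theta_s$ (which is allowed since $r_s$ ranges over a range, not a single direction — but here $r_s$ is fixed, so instead I push $\tau$ in the \emph{other} variable $\theta_t$, or observe that the problem reduces to the case $\alpha=0$, $u_1=u_2$ already covered by Lemma~\ref{lem:60}). The case distinction according to the signs of $\cos(\beta_1-\theta_t^0)$ and $\cos(\beta_2-\theta_t^0)$, together with the constraints $\beta_1\in[0,\pi/2)$, $\beta_2\in(\pi/2,\pi]$, will organize this cleanly: since $\beta_1$ and $\beta_2$ are on opposite sides of $\pi/2$, for each choice $\theta_t\in\{0,\pi\}$ the two cosines $\cos(\beta_1-\theta_t)$ and $\cos(\beta_2-\theta_t)$ have opposite signs (or one vanishes at the boundary values), so the feasible $\tau$-interval really is of the form $(\tau_{\min},\tau_{\max})$ with one finite endpoint coming from each derivative.

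I expect the main obstacle to be bookkeeping rather than conceptual depth: correctly enumerating which derivative hits $0$ first as $\tau$ moves, for each of the two admissible values of $\theta_t$, and making sure the boundary sub-cases $\cos(\beta_1-\theta_t)=0$ (i.e., $\beta_1=\pi/2$, excluded) or $\cos(\beta_2-\theta_t)=0$ (i.e., $\beta_2=\pi/2$, excluded) and the vanishing $\cos(\alpha_i-\theta_s)$ cases are all consistently absorbed. The symmetry claim (``the same holds if we exchange indices $1$ and $2$'') is then immediate because the whole argument is symmetric in the two shortest paths once one forgets the normalization $\beta_1\leq\beta_2$; I would state this explicitly at the end. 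A secondary subtlety is ensuring the equality configuration I produce genuinely has $\theta_t\in\{0,\pi\}$ (not some intermediate angle), which is guaranteed because I never move $\theta_t$ off its two admissible values — I only slide $\tau$. Once Lemma~\ref{lem:100} is established, it reduces the proof of Lemma~\ref{lem:80} to solving the single equation $d'_{u_1,v_1}(s,t)=0$ for $\tau$ as a function of $\theta_s$ and substituting into the inequality $d'_{u_2,v_2}(s,t)<0$, which is the routine trigonometric computation producing the $\arctan(\lambda/\sin\alpha)$ formula.
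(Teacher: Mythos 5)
Your proposal is correct and follows essentially the same route as the paper's proof: fix $\theta_t$, view both derivatives as affine functions of $\tau$ with slopes $-\cos(\beta_i-\theta_t)$ of opposite signs (since $\beta_1<\pi/2<\beta_2$ and $\theta_t\in\{0,\pi\}$), slide $\tau$ monotonically to the endpoint where $d'_{u_1,v_1}$ vanishes while $d'_{u_2,v_2}$ only decreases, and flip $\theta_t$ to the other admissible value when the walk gets stuck at $\tau=0$ (legitimate because at $\tau=0$ the derivative values are independent of $\theta_t$). The paper executes exactly this case analysis, including the perturbation argument for the ``if'' direction, so the only work remaining in your plan is the bookkeeping you already identified.
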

\begin{proof}
Given any direction $r_s$ for $s$, the angle $\theta_s$ is fixed, and thus
for each $i=1,2$, $d'_{u_i,v_i}(s,t)$ is a function of $\theta_t$ and $\tau$.
In the following, we will use $d_i'(\theta_t,\tau)$ to represent
$d'_{u_i,v_i}(s,t)$ for each $i=1,2$.

We first prove one direction of the lemma.
Assume $d_1'(\theta_t,\tau)=0$ and $d_2'(\theta_t,\tau)<0$ for some $\theta_t\in
\{0,\pi\}$ and $\tau\geq 0$. Our goal is to show that $r_s$ is in $\piR(s,t)$. To
this end, it is sufficient to find another pair $(\theta_t',\tau')$ with
$\theta_t'\in \{0,\pi\}$ and $\tau'\geq 0$ such that $d_1'(\theta'_t,\tau')<0$ and
$d_2'(\theta'_t,\tau')<0$.

Recall that neither $\beta_1$ nor $\beta_2$ can be $\pi/2$. Since $\theta_t$ is
either $0$ or $\pi$, $\beta_1-\theta_t\neq \pm\pi/2$, and thus
$\cos(\beta_1-\theta_t)\neq 0$.
Since $d'_1(\theta_t,\tau)=0$ and $d_2'(\theta_t,\tau)<0$, if $\tau>0$, regardless of whether
$\cos(\beta_1-\theta_t)$ is positive or negative, we can always change $\tau$
infinitesimally to a new value $\tau'>0$ such that
$d_1'(\theta_t,\tau')<0$ and $d_2'(\theta_t,\tau')<0$.

If $\tau=0$, then by Equation \eqref{equ:40}, $-\cos(\alpha_1-\theta_s)=0$
and  $-\cos(\alpha_2-\theta_s)<0$.
We let $\tau'$ be an infinitesimally small positive value and let
$\theta_t'$ be $0$. Since $\beta_1\in [0,\pi/2)$, we have
$\cos(\beta_1-\theta_t')>0$, and thus $d_1'(\theta'_t,\tau')<0$ due to
$-\cos(\alpha_1-\theta_s)=0$. Further, since $\tau'$ is an infinitesimal small
positive value and $-\cos(\alpha_2-\theta_s)<0$, we have $d_2'(\theta'_t,\tau')<0$.

This proves that $r_s$ must be in $\piR(s,t)$.

We proceed to prove the other direction of the lemma. Assume $r_s$ is in $\piR(s,t)$.
By the definition of $\piR(s,t)$, there exist $\theta_t\in \{0,\pi\}$ and $\tau\geq 0$ such that
$d_1'(\theta_t,\tau)<0$ and $d_2'(\theta_t,\tau)<0$.
Our goal is to find another pair $(\theta'_t,\tau')$ with
$\theta'_t\in \{0,\pi\}$ and $\tau'\geq 0$ such that
$d_1'(\theta'_t,\tau')=0$ and $d_2'(\theta'_t,\tau')<0$.

Recall that $\cos(\beta_1-\theta_t)\neq 0$.
Depending on whether $\cos(\beta_1-\theta_t)$ is positive or negative, there
are two cases.

If $\cos(\beta_1-\theta_t)<0$, then $\cos(\beta_2-\theta_t)>0$ because
$\beta_1\in[0,\pi/2)$, $\beta_2\in (\pi/2,\pi]$, and $\theta_t\in
\{0,\pi\}$. Since
$\tau\geq 0$, if we
increase the value $\tau$, $d_1'(\theta_t,\tau)$ will strictly increase and
$d_1'(\theta_t,\tau)$ will strictly decrease. Hence, if we keep increasing
$\tau$, there must be a moment when
$d_1'(\theta_t,\tau)=0$ and $d_1'(\theta_t,\tau)<0$. We are done with the proof.

If $\cos(\beta_1-\theta_t)>0$, then $\cos(\beta_2-\theta_t)<0$. Depending on
whether $\tau=0$, there are two subcases.

\begin{itemize}
\item

If $\tau=0$,
let $\theta_t'=(\theta_t+\pi \mod 2\pi)$ (i.e., we reverse the moving
direction of $t$). Consequently, we obtain $\cos(\beta_1-\theta'_t)<0$ and
$\cos(\beta_2-\theta'_t)>0$. Then, we can use the same approach as
above, i.e., keep
increasing $\tau$ until $d_1'(\theta'_t,\tau)=0$ and
$d_1'(\theta'_t,\tau)<0$.

\item
If  $\tau>0$, then if we decrease $\tau$, $d_1'(\theta_t,\tau)$ increases
and $d_2'(\theta_t,\tau)$ decreases. We keep decreasing the value $\tau$ until
one of the two events happens: $\tau=0$ and $d_1'(\theta_t,\tau)=0$.

Whichever event happens first, it always holds that
$d_2'(\theta_t,\tau)<0$. If $d_1'(\theta_t,\tau)=0$ happens first (or
both events happen simultaneously), we
are done with the proof. Otherwise, we obtain
$\tau=0$ and both $d_1'(\theta_t,\tau)$ and $d_2'(\theta_t,\tau)$ are
negative. Then, we can use the same approach as the above case for
$\tau=0$ to prove.
\end{itemize}

This completes the proof for the lemma.
\end{proof}

To simplify the notation, let $w_1=-d'_{u_1,v_1}(s,t)$ and
$w_2=-d'_{u_2,v_2}(s,t)$. Once $r_s$ is fixed (and thus $\theta_s$ is
fixed), both $w_1$ and $w_2$
are implicitly considered as functions of $\theta_t\in\{0,\pi\}$ and $\tau\geq 0$.
By Lemma \ref{lem:100}, $\piR(s,t)$ consists of all directions $r_s$
for $s$ such that there exist $\theta_t\in\{0,\pi\}$ and $\tau\geq 0$
with $w_1=0$ and $w_2>0$.

Let $x=\alpha_1-\theta_s$. Recall that $\alpha=\alpha_2-\alpha_1$.
Then, $\alpha_2-\theta_s=x+\alpha$. Thus, we have
\begin{equation*}
%\label{equ:40}
\begin{cases}
w_1=\cos(x) +\tau\cdot \cos(\beta_1-\theta_t),\\
w_2=\cos(x+\alpha) +\tau\cdot \cos(\beta_2-\theta_t).
\end{cases}
%(2 \leq i \leq n).
\end{equation*}

First of all, let $w_1=0$.
As shown in the proof of Lemma \ref{lem:100}, since $\beta_1\in [0, \pi/2)$ and  $\theta_t\in \{0,\pi\}$, it holds that $\cos(\beta_1-\theta_t)\neq 0$. Consequently,
we have $$\tau=-\frac{\cos(x)}{\cos(\beta_1-\theta_t)}.$$
Since $\tau\geq 0$ and $\beta_1\in [0,\pi/2)$, we obtain the following:
if $\cos(x)>0$, then $\theta_t=\pi$; if
$\cos(x)<0$, then $\theta_t=0$; if $\cos(x)=0$, then $\theta_t$ can be either
$0$ or $\pi$.

By substituting $\tau$ with $-\frac{\cos(x)}{\cos(\beta_1-\theta_t)}$ in $w_2$ and using the angle sum identities of trigonometric functions, we have
\begin{equation*}
\begin{split}
w_2 & = \cos(x+\alpha) + \tau\cdot \cos(\beta_2-\theta_t) \\
    & = \cos(x)\cdot \cos(\alpha) - \sin(x)\cdot \sin(\alpha) -
    \frac{\cos(x)}{\cos(\beta_1-\theta_t)}\cdot \cos(\beta_2-\theta_t)\\
    & = \cos(x)\cdot \cos(\alpha) - \sin(x)\cdot \sin(\alpha) -
    \cos(x)\cdot \frac{\cos(\beta_2)}{\cos(\beta_1)}.
\end{split}
\end{equation*}
The last equality in the above equation is because regardless of whether
$\theta_t$ is $0$ or $\pi$, it always holds that
$\frac{\cos(\beta_2-\theta_t)}{\cos(\beta_1-\theta_t)}=\frac{\cos(\beta_2)}{\cos(\beta_1)}$.
Recall that $\lambda=\cos(\alpha)-\frac{\cos(\beta_2)}{\cos(\beta_1)}$ in the
statement of Lemma \ref{lem:80}. Hence, $w_2=\cos(x)\cdot \lambda - \sin(x)\cdot
\sin(\alpha)$. Therefore, $w_2>0$ is equivalent to $\cos(x)\cdot
\lambda>\sin(x)\cdot \sin(\alpha)$.

The above discussion shows the following observation.

\begin{observation}
$w_1=0$ and $w_2>0$ if and only if the following holds:
\begin{equation}\label{equ:50}
%\cos(x)\cdot \big[\cos(x) - \frac{\cos(\beta_2-\theta_t)}{\cos(\beta_1-\theta_t)}\big]
%> \sin(x)\cdot \sin(\alpha).
\cos(x)\cdot\lambda>\sin(x)\cdot \sin(\alpha).
\end{equation}
\end{observation}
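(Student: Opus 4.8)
The plan is to turn the equivalence into a short computation driven by the derivative formulas in Equation~\eqref{equ:40} together with the reformulation of $\piR(s,t)$ supplied by Lemma~\ref{lem:100}. Writing $w_1=\cos x+\tau\cos(\beta_1-\theta_t)$ and $w_2=\cos(x+\alpha)+\tau\cos(\beta_2-\theta_t)$ with $x=\alpha_1-\theta_s$, Lemma~\ref{lem:100} says that $r_s\in\piR(s,t)$ precisely when there exist $\theta_t\in\{0,\pi\}$ and $\tau\ge0$ with $w_1=0$ and $w_2>0$. So the task reduces to characterizing exactly when such a pair $(\theta_t,\tau)$ exists, and the target is that this happens iff Inequality~\eqref{equ:50} holds. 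Since $\alpha$ and $\lambda$ are fixed, \eqref{equ:50} is a condition on $x$ alone, i.e. on $r_s$, as required.

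For the forward implication, suppose $w_1=0$ for some admissible $(\theta_t,\tau)$. Because $\beta_1\in[0,\pi/2)$ and $\theta_t\in\{0,\pi\}$ we have $\cos(\beta_1-\theta_t)\ne0$, so $w_1=0$ forces $\tau=-\cos x/\cos(\beta_1-\theta_t)$. Substituting this into $w_2$ and expanding $\cos(x+\alpha)$ by the angle-sum identity yields $w_2=\lambda\cos x-\sin\alpha\,\sin x$; here the one computational point that needs a line of justification is that $\cos(\beta_2-\theta_t)/\cos(\beta_1-\theta_t)$ equals $\cos\beta_2/\cos\beta_1$ for \emph{both} $\theta_t=0$ and $\theta_t=\pi$, since replacing each $\beta_i$ by $\beta_i-\pi$ inserts a factor $-1$ into numerator and denominator that cancels. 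Thus $w_2>0$ is exactly Inequality~\eqref{equ:50}.

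For the converse, assume \eqref{equ:50} holds; I must produce a witness. Set $\tau=-\cos x/\cos(\beta_1-\theta_t)$, which makes $w_1=0$ automatically, and choose $\theta_t$ so that $\tau\ge0$: since $\beta_1\in[0,\pi/2)$ we have $\cos\beta_1>0$ and $\cos(\beta_1-\pi)<0$, so $\theta_t=\pi$ works when $\cos x>0$, $\theta_t=0$ works when $\cos x<0$, and either choice gives $\tau=0$ when $\cos x=0$. With this $(\theta_t,\tau)$, the computation above gives $w_2=\lambda\cos x-\sin\alpha\,\sin x>0$ by~\eqref{equ:50}, completing the equivalence.

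I do not anticipate a real obstacle in this statement — it is essentially a bookkeeping step — and the only points requiring mild care are (i) the sign analysis showing $\tau\ge0$ can always be arranged, which rests entirely on $\beta_1\in[0,\pi/2)$, and (ii) the cancellation making the reduced form of $w_2$ independent of $\theta_t$. The substantive work lies afterwards, in Lemma~\ref{lem:80}: converting the clean inequality~\eqref{equ:50} into the explicit $\pi$-range, including pinning down when it degenerates to $\emptyset$ (the special case $\sin\alpha=\lambda=0$). I would attack that by rewriting $\lambda\cos x-\sin\alpha\,\sin x$ as a single sinusoid $\sqrt{\lambda^2+\sin^2\alpha}\,\cos(x+\phi)$ and reading off the open half-plane of directions $\theta_s$ on which it is positive.
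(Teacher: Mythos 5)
Your proof is correct and follows essentially the same route as the paper: solve $w_1=0$ for $\tau$ (using $\cos(\beta_1-\theta_t)\neq 0$), note the sign of $\cos(x)$ dictates the choice of $\theta_t\in\{0,\pi\}$ that makes $\tau\geq 0$, and substitute into $w_2$ to get $w_2=\lambda\cos(x)-\sin(\alpha)\sin(x)$, which is independent of $\theta_t$ by the same cancellation you note. Your explicit two-direction structure with a witness construction for the converse is slightly more careful than the paper's summary-style derivation, but the content is identical.
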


%Recall that $\alpha=\alpha_2-\alpha_1\in [0,\pi]$. Hence, $\sin(\alpha)\geq 0$.
Depending on whether $\sin(\alpha)$ is positive, negative, or zero, there are three cases.

%\paragraph{The Case $\sin(\alpha)>0$.}
\begin{enumerate}
\item
If $\sin(\alpha)>0$, depending on whether $\cos(x)$ is positive, negative, or
zero, there are further three subcases.

\begin{enumerate}
\item
If $\cos(x)>0$, then $x\in (-\pi/2,\pi/2)$ and
Inequality \eqref{equ:50} is equivalent to $\tan(x)
< \frac{\lambda}{\sin(\alpha)}$.
Thus, $w_1=0$ and $w_2>0$ if and only if $x\in (-\pi/2,
\arctan(\frac{\lambda}{\sin(\alpha)}))$.
\item
If $\cos(x)<0$, then $x\in (-3\pi/2,-\pi/2)$ and
Inequality \eqref{equ:50} is equivalent to $\tan(x)
> \frac{\lambda}{\sin(x)}$.
Thus, $w_1=0$ and $w_2>0$ if and only if $x\in
(\arctan(\frac{\lambda}{\sin(\alpha)})-\pi,-\pi/2)$.

\item
If $\cos(x)=0$, then $x=\pm\frac{\pi}{2}$ and
Inequality \eqref{equ:50} is equivalent to
$\sin(x) < 0$, which further implies that $x$ must be $-\pi/2$.
Thus, $w_1=0$ and $w_2>0$ if and only
if $x=-\pi/2$.
\end{enumerate}

Combining the above discussions for the case $\sin(\alpha)>0$,
$w_1=0$ and $w_2>0$ if and only if $x\in
(\arctan(\frac{\lambda}{\sin(\alpha)})-\pi,\arctan(\frac{\lambda}{\sin(\alpha)}))$.

\item
If $\sin(\alpha)<0$, the analysis is symmetric.
Depending on whether $\cos(x)$ is positive, negative, or
zero, there are further three subcases.

\begin{enumerate}
\item
If $\cos(x)>0$, then $x\in (-\pi/2,\pi/2)$ and
Inequality \eqref{equ:50} is equivalent to $\tan(x)
> \frac{\lambda}{\sin(\alpha)}$.
Thus, $w_1=0$ and $w_2>0$ if and only if $x\in
(\arctan(\frac{\lambda}{\sin(\alpha)}), \pi/2)$.

\item
If $\cos(x)<0$, then $x\in (\pi/2,3\pi/2)$ and
Inequality \eqref{equ:50} is equivalent to $\tan(x)
< \frac{\lambda}{\sin(x)}$.
Thus, $w_1=0$ and $w_2>0$ if and only if $x\in
(\pi/2,\arctan(\frac{\lambda}{\sin(\alpha)})+\pi)$.

\item
If $\cos(x)=0$, then $x=\pm\frac{\pi}{2}$ and
Inequality \eqref{equ:50} is equivalent to
$\sin(x) > 0$, which further implies that $x$ must be $\pi/2$.
Thus, $w_1=0$ and $w_2>0$ if and only
if $x=\pi/2$.
\end{enumerate}

Combining the above discussions for the case $\sin(\alpha)>0$,
$w_1=0$ and $w_2>0$ if and only if $x\in
(\arctan(\frac{\lambda}{\sin(\alpha)}),\arctan(\frac{\lambda}{\sin(\alpha)})+\pi)$.

\item
If $\sin(\alpha)=0$, then
%$\alpha$ is either $0$ or $\pm\pi$.
%Note that $\alpha=\pm\pi$ infers that $s$ is on the line segment $\overline{u_1u_2}$ and
%$\alpha=0$ infers that $u_1=u_2$.
Inequality \eqref{equ:50} is equivalent to
$\cos(x) \cdot \lambda > 0$. Depending on whether $\lambda$ is positive, negative,
or zero, there are further three subcases.

\begin{enumerate}
\item
If $\lambda>0$, then Inequality \eqref{equ:50} is equivalent to
$\cos(x) > 0$, implying that $x\in (-\pi/2,\pi/2)$.

\item
If $\lambda<0$, then Inequality \eqref{equ:50} is equivalent to
$\cos(x) < 0$, implying that $x\in (\pi/2,3\pi/2)$.

\item
If $\lambda=0$, then Inequality \eqref{equ:50} is equivalent to
$0 < 0$, which is not possible for any $x$.

\end{enumerate}

\end{enumerate}

As a summary, $w_1=0$ and $w_2>0$ if and only if: for $\sin(\alpha)>0$, $x\in (\arctan(\frac{\lambda}{\sin(\alpha)})-\pi,\arctan(\frac{\lambda}{\sin(\alpha)}))$;
for $\sin(\alpha)>0$, $x\in (\arctan(\frac{\lambda}{\sin(\alpha)}),\arctan(\frac{\lambda}{\sin(\alpha)})+\pi)$;
for $\sin(\alpha)=0$ and $\lambda > 0$, $x\in (-\pi/2,\pi/2)$;
for $\sin(\alpha)=0$ and $\lambda < 0$, $x\in (\pi/2,3\pi/2)$;
for $\sin(\alpha)=0$ and $\lambda = 0$, $x\in \emptyset$.

Recall that $x=\alpha_1-\theta_s$. By Lemma \ref{lem:100}, we obtain the $\pi$-range $\piR(s,t)$ as follows.
\begin{equation*}
\piR(s,t) = \begin{cases}
(\alpha_1-\arctan(\frac{\lambda}{\sin(\alpha)}), \alpha_1-\arctan(\frac{\lambda}{\sin(\alpha)})+\pi)
& \text{if $\sin(\alpha)>0$},\\
(\alpha_1-\arctan(\frac{\lambda}{\sin(\alpha)})-\pi, \alpha_1-\arctan(\frac{\lambda}{\sin(\alpha)})) & \text{if $\sin(\alpha)<0$},\\
(\alpha_1-\pi/2,\alpha_1+\pi/2) & \text{if $\sin(\alpha)=0$ and $\lambda>0$},\\
(\alpha_1-3\pi/2,\alpha_1-\pi/2) & \text{if $\sin(\alpha)=0$ and $\lambda<0$}, \\
\emptyset & \text{if $\sin(\alpha)=0$ and $\lambda=0$}.\\
\end{cases}
%(2 \leq i \leq n).
\end{equation*}

We complete the proof of Lemma \ref{lem:80} by showing the following claim:
$\alpha=\pm\pi$ and $\beta_1+\beta_2=\pi$ if and only if $\sin(\alpha)=0$ and $\lambda =
0$. We prove the claim below.

Assume $\alpha=\pm\pi$ and $\beta_1+\beta_2=\pi$. Then,
$\sin(\alpha)=0$, and
$\lambda=\cos(\alpha)-\frac{\cos(\beta_2)}{\cos(\beta_1)}=-1-\frac{\cos(\beta_2)}{\cos(\beta_1)}$.
Since $\beta_1+\beta_2=\pi$, $\cos(\beta_1)=-\cos(\beta_2)$. Hence,
$\lambda=0$.

On the other hand, assume $\sin(\alpha)=0$ and $\lambda=0$. Then,
$\alpha$ is $0$ or $\pm \pi$. We claim that $\alpha$ cannot be zero. Indeed, suppose to the contrary that $\alpha =0$. Since $\beta_1\in [0,\pi/2)$ and $\beta_2\in (\pi/2,\pi]$, it always holds that $\frac{\cos(\beta_2)}{\cos(\beta_1)}<0$. Hence, $\lambda=\cos(\alpha)-\frac{\cos(\beta_2)}{\cos(\beta_1)}=1-\frac{\cos(\beta_2)}{\cos(\beta_1)}$, which is always positive, contradicting with $\lambda =0$.

The above proves that $\alpha = \pm\pi$. Then, we have $\lambda=-1-\frac{\cos(\beta_2)}{\cos(\beta_1)}$. Due to $\lambda = 0$, it holds that $\cos(\beta_2)=-\cos(\beta_1)$, which implies that $\beta_1+\beta_2=\pi$ since $\beta_1\in [0,\pi/2)$ and $\beta_2\in (\pi/2,\pi]$.

\subsection{The Case $t\in \calI$}
\label{sec:three}

The analysis for this case is substantially more difficult than the
case $t\in E$. As before, we first present a more general result that
is on
three shortest path distance functions.

Let $s$ and $t$ be any two points in $\calP$ such that
$t$ is in $\calI$ and there are three shortest \st\ paths $\pi_1(s,t)$,
$\pi_2(s,t)$, and $\pi_3(s,t)$ (this implies that $s$ is not visible
to $t$). For each $i=1,2,3$, let
$\pi_i(s,t)=\pi_{u_i,v_i}(s,t)$, i.e., $u_i$ and $v_i$ are the
$s$-pivot and $t$-pivot of $\pi_i(s,t)$, respectively.
We say that the three paths are {\em canonical} with respect to $s$ and $t$ if
they have the following two properties.
\begin{enumerate}
%We further require the above three paths satisfy the following conditions:
\item
$t$ is in the interior of the triangle $\triangle
v_1v_2v_3$.
\item Suppose we reorder the indices such that $v_1$, $v_2$, and $v_3$ are
{\em clockwise} around $t$, then $u_1$, $u_2$, and $u_3$ are {\em
counterclockwise} around $s$ (e.g., see Fig.~\ref{fig:rangeproperty}).
\end{enumerate}

The above first property implies that $v_1$, $v_2$, and $v_3$ are
distinct, but this may not be true for $u_1,u_2,u_3$.
In the following, we assume that the three shortest paths $\pi_i(s,t)$ with $1\leq
i\leq 3$ are canonical, and we reorder the indices such that
$v_1$, $v_2$, and $v_3$ are clockwise around $t$ and $u_1$, $u_2$, and $u_3$ are
counterclockwise around $s$.
For each $i=1,2,3$,  we consider
$d_{u_i,v_i}(s,t)=|su_i|+d(u_i,v_i)+|v_it|$ as a function of $s\in
\bbR^2$ and $t\in \bbR^2$.

In this case, the {\em $\pi$-range} $\piR(s,t)$ of $s$ is defined with respect to $t$
and the three paths $\pi_i(s,t)$ for $i=1,2,3$ as follows:
a direction $r_s$ for $s$ is in $\piR(s,t)$ if there exists a
direction $r_t$ for $t$
with a speed $\tau\geq 0$ such that when we move $s$ along $r_s$ with
unit speed and move $t$ along $r_t$ with speed $\tau$,
$d'_{u_i,v_i}<0$ holds for $i=1,2,3$.

As Lemma \ref{lem:80} in the previous cases,
we will have a similar lemma (Lemma \ref{lem:110}), which says that unless a
special case happens the range $\piR(s,t)$ is
an open range of size exactly $\pi$. The proof is much more
challenging.  Before presenting Lemma \ref{lem:110},
we introduce some notation.

%We reassign the indices to the three paths $\pi_i(s,t)$ for $i=1,2,3$
%such that $v_1,v_2,v_3$ are in clockwise order around $t$ (e.g., see
%Fig.~\ref{fig:rangeproperty}). Since any two of the three paths do not cross
%each other, $u_1,u_2,u_3$ are in counterclockwise order around $s$
%(e.g., see Fig.~\ref{fig:rangeproperty}).

Recall the definitions of the angles of directions.
For each $i=1,2,3$, let $\beta_i$ denote the angle of the direction
$\vector{tv_i}$ (i.e., the angle of $\vector{tv_i}$ counterclockwise from the positive
$x$-axis). Further, we define three angles $b_i$ for $i=1,2,3$
as follows (e.g., see Fig.~\ref{fig:rangeproperty}).
Define $b_1$ as the smallest angle we need to rotate the direction
$\vector{tv_1}$ {\em clockwise} to $\vector{tv_2}$;
define $b_2$ as the smallest angle we need to rotate the direction
$\vector{tv_2}$ clockwise to $\vector{tv_3}$;
define $b_3$ as the smallest angle we need to rotate the direction
$\vector{tv_3}$ clockwise to $\vector{tv_1}$.

For any two angles $\alpha'$ and $\alpha''$, we use $\alpha'\equiv\alpha''$ to
denote $\alpha'=\alpha''\mod 2\pi$.
%if we say that two angles are equal, then the equality is with respect
%to modular $2\pi$.

It is easy to see that $b_1\equiv\beta_1-\beta_2$, $b_2\equiv\beta_2-\beta_3$, and
$b_3\equiv\beta_3-\beta_1$.
Note that since $t$ is in the interior of $\triangle v_1v_2v_3$, it
holds that $b_i\in (0,\pi)$ for $i=1,2,3$. Note that $b_1+b_2+b_3=2\pi$.

For each $i=1,2,3$, let $\alpha_i$ denote the angle of the direction
$\vector{su_i}$. According to our definition of pivot vertices, $u_i=u_j$ if and only if
$\alpha_i=\alpha_j$ for any two $i,j\in \{1,2,3\}$.
We define three angles $a_i$ for $i=1,2,3$ as follows (e.g., see
Fig.~\ref{fig:rangeproperty}).
Define $a_1$ as the smallest angle we need to rotate the direction
$\vector{su_1}$ {\em counterclockwise} to $\vector{su_2}$;
define $a_2$ as the smallest angle we need to rotate the direction
$\vector{su_2}$ clockwise to $\vector{su_3}$;
define $a_3$ as the smallest angle we need to rotate the direction
$\vector{su_3}$ clockwise to $\vector{su_1}$.
Hence, $a_1\equiv\alpha_2-\alpha_1$, $a_2\equiv\alpha_3-\alpha_2$, and
$a_3\equiv\alpha_1-\alpha_3$.

We refer to the case where $a_i=b_i$ for each $i=1,2,3$ as the {\em special
case}.

\begin{lemma}\label{lem:110}
The $\pi$-range $\piR(s,t)$ is determined as follows (e.g., see Fig.~\ref{fig:innercase}).
\begin{equation*}
\piR(s,t) = \begin{cases}
%(-\arctan(\frac{\delta_1-\delta_2}{\delta}),
%-\arctan(\frac{\delta_1-\delta_2}{\delta})+\pi)
%& \text{if $\delta>0$},\\
%(-\arctan(\frac{\delta_1-\delta_2}{\delta}) - \pi,
%-\arctan(\frac{\delta_1-\delta_2}{\delta})) & \text{if $\delta<0$},\\
%(\pi/2,-\pi/2) & \text{if $\delta=0$ and $\delta_1>\delta_2$}, \\
%(-3\pi/2,-\pi/2) & \text{if $\delta=0$ and $\delta_1<\delta_2$}, \\
(\alpha_1-\arctan(\frac{\delta_1-\delta_2}{\delta}),
\alpha_1-\arctan(\frac{\delta_1-\delta_2}{\delta})+\pi)
& \text{if $\delta>0$},\\
(\alpha_1-\arctan(\frac{\delta_1-\delta_2}{\delta})-\pi,
\alpha_1-\arctan(\frac{\delta_1-\delta_2}{\delta})) & \text{if $\delta<0$},\\
(\alpha_1-\pi/2,\alpha_1+\pi/2) & \text{if $\delta=0$ and $\delta_1>\delta_2$}, \\
(\alpha_1-3\pi/2,\alpha_1-\pi/2) & \text{if $\delta=0$ and $\delta_1<\delta_2$}, \\
\emptyset & \text{if $\delta=0$ and $\delta_1=\delta_2$}, \\
\end{cases}
%(2 \leq i \leq n).
\end{equation*}
where $\delta=\frac{\sin(\alpha_3-\alpha_1)}{\sin(\beta_3-\beta_1)}-
\frac{\sin(\alpha_2-\alpha_1)}{\sin(\beta_2-\beta_1)}$,
$\delta_1=\frac{\cos(\beta_2-\beta_1)-\cos(\alpha_2-\alpha_1)}{\sin(\beta_2-\beta_1)}$,
and
$\delta_2=\frac{\cos(\beta_3-\beta_1)-\cos(\alpha_3-\alpha_1)}{\sin(\beta_3-\beta_1)}$.
Further, $a_i=b_i$ for each $i=1,2,3$ (i.e., the special case) if and only if $\delta  =0$ and $\delta_1=\delta_2$.
\end{lemma}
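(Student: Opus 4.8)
The plan is to follow the structure of the proof of Lemma~\ref{lem:80}, but the new feature here is that the moving direction $r_t$ of $t$ is completely unrestricted, so there is no discrete reduction as in Lemma~\ref{lem:100}; instead the description of $\piR(s,t)$ becomes a two-dimensional linear feasibility problem. First I would record the derivatives: writing $\theta_s$ and $\theta_t$ for the angles of $r_s$ and $r_t$, Equation~\eqref{equ:10} gives $d'_{u_i,v_i}(s,t)=-\cos(\alpha_i-\theta_s)-\tau\cos(\beta_i-\theta_t)$ for $i=1,2,3$. Setting $p=\tau\cos\theta_t$ and $q=\tau\sin\theta_t$, the pair $(p,q)$ ranges over all of $\bbR^2$ as $(\tau,\theta_t)$ ranges over $[0,\infty)\times[0,2\pi)$, and $\tau\cos(\beta_i-\theta_t)=p\cos\beta_i+q\sin\beta_i$. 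Hence $r_s\in\piR(s,t)$ if and only if the system of three strict linear inequalities $\cos(\alpha_i-\theta_s)+p\cos\beta_i+q\sin\beta_i>0$, $i=1,2,3$, is feasible in $(p,q)$.

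Next I would resolve this feasibility. Let $n_i=(\cos\beta_i,\sin\beta_i)$; since $t$ lies in the interior of $\triangle v_1v_2v_3$, the angular gaps $b_i$ all lie in $(0,\pi)$ and $n_1,n_2,n_3$ positively span $\bbR^2$, so their space of linear relations is spanned by a vector with all-positive coordinates, which by the trigonometric identity $\sin(\beta_2-\beta_3)\,n_1+\sin(\beta_3-\beta_1)\,n_2+\sin(\beta_1-\beta_2)\,n_3=0$ is $\mu=(\mu_1,\mu_2,\mu_3)$ with $\mu_1=\sin(\beta_2-\beta_3)=\sin b_2$, $\mu_2=\sin(\beta_3-\beta_1)=\sin b_3$, $\mu_3=\sin(\beta_1-\beta_2)=\sin b_1$ (all positive). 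Consequently the intersection of the three half-planes is bounded, and --- by the Gale/Farkas criterion, or concretely by computing the common point of the first two boundary lines and testing it against the third inequality --- it is non-empty exactly when $\sum_i\mu_i\cos(\alpha_i-\theta_s)>0$. Expanding this with $y=\alpha_1-\theta_s$ and collecting the coefficients of $\cos y$ and $\sin y$ in terms of $\delta,\delta_1,\delta_2$, I would show the condition is equivalent to $\sin(\beta_2-\beta_1)\sin(\beta_3-\beta_1)\big[(\delta_2-\delta_1)\cos y+\delta\sin y\big]>0$; since the orientation convention (the $v_i$ clockwise around $t$ with gaps in $(0,\pi)$) forces $\sin(\beta_2-\beta_1)<0<\sin(\beta_3-\beta_1)$, this is equivalent to $(\delta_1-\delta_2)\cos y>\delta\sin y$. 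This is exactly inequality~\eqref{equ:50} with $\lambda$ replaced by $\delta_1-\delta_2$ and $\sin\alpha$ replaced by $\delta$ (and $x$ by $y$), so the identical case analysis --- first on the sign of $\delta$, then on the sign of $\cos y$ --- yields the five-case formula for $\piR(s,t)$ verbatim.

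Finally, for the equivalence with the special case: if $a_i=b_i$ for all $i$ then $\alpha_i-\alpha_1=-(\beta_i-\beta_1)$, and substituting makes $\delta,\delta_1,\delta_2$ all vanish. Conversely, the computation above shows that $\delta=0$ and $\delta_1=\delta_2$ together say precisely that $\sum_i\mu_i m_i=(0,0)$ where $m_i=(\cos\alpha_i,\sin\alpha_i)$; thus $(m_1,m_2,m_3)$ and $(n_1,n_2,n_3)$ satisfy the same positive relation $\sum\mu_i(\cdot)=0$, hence form triangles with the same side lengths $\mu_1,\mu_2,\mu_3$, which determines them up to congruence. So $(m_i)$ is, up to a rotation, either $(n_i)$ or the mirror image of $(n_i)$; the former would make $u_1,u_2,u_3$ and $v_1,v_2,v_3$ have the same cyclic order, contradicting the canonical ordering ($u_i$ counterclockwise around $s$, $v_i$ clockwise around $t$), so the latter holds, giving $\alpha_i+\beta_i$ constant, i.e.\ $a_i=b_i$ for all $i$.

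The main obstacle I expect is twofold: getting the sign bookkeeping in the feasibility step exactly right (pinning down the positive combination $\mu$, the sign of $\sin(\beta_2-\beta_1)\sin(\beta_3-\beta_1)$, and the resulting flip that reconciles the feasibility condition with the stated formula), and, in the converse of the last claim, carefully excluding the "same-orientation" solution and the degenerate sub-cases where $\mu_1,\mu_2,\mu_3$ are not pairwise distinct (where one falls back on the general position assumption). Everything else is a mechanical transfer from the proof of Lemma~\ref{lem:80}.
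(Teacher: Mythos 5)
Your proposal is correct, and it takes a genuinely different route from the paper. The paper first proves a reduction lemma (Lemma \ref{lem:130}: one may assume $d'_{u_1,v_1}=0$ on the boundary), then eliminates $\tau$ by solving $w_1=0$, and works through three cases on the sign of $\cos x$ (each with sub-cases) before assembling the five-case formula; the special-case characterization is then proved via an indirect permutation-symmetry argument (Lemma \ref{lem:150}) plus angle bookkeeping (Lemma \ref{lem:160}). You instead linearize in $(p,q)=(\tau\cos\theta_t,\tau\sin\theta_t)$ and invoke the Farkas/Gale criterion for three open half-planes whose normals positively span $\bbR^2$, with the positive relation $\mu=(\sin b_2,\sin b_3,\sin b_1)$; I checked that your reduction to the single inequality $(\delta_1-\delta_2)\cos x>\delta\sin x$ is exactly the paper's final inequality (the analogue of \eqref{equ:50}), so the case analysis transfers verbatim. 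Your route is shorter and more conceptual: it makes the $\pi$-range property transparent (the admissible set is the positivity region of a linear functional on the unit circle, hence an open half-circle or empty), and it bypasses both the boundary-reduction lemma and the $\cos x=0$ special treatment. Your congruent-triangle argument for the converse of the special case is also valid and replaces Lemmas \ref{lem:150} and \ref{lem:160}: since the side lengths $\mu_i$ satisfy the strict triangle inequality (the $n_i$-triangle is non-degenerate because each $b_i\in(0,\pi)$), any closed chain $\sum\mu_i m_i=0$ is a non-degenerate triangle, so the worry you raise about repeated $\mu_i$ or coincident $u_i$ is moot, and the same-orientation branch is excluded exactly as you say (it would force $a_i=2\pi-b_i$ for all $i$, hence $a_1+a_2+a_3=4\pi$, contradicting $a_1+a_2+a_3\le 2\pi$). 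The only cost of your approach is that the sign bookkeeping ($\sin(\beta_2-\beta_1)<0<\sin(\beta_3-\beta_1)$, and the flip it induces) must be done carefully, but you have identified precisely that point.
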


\begin{figure}[t]
\begin{minipage}[t]{\linewidth}
\begin{center}
\includegraphics[totalheight=1.5in]{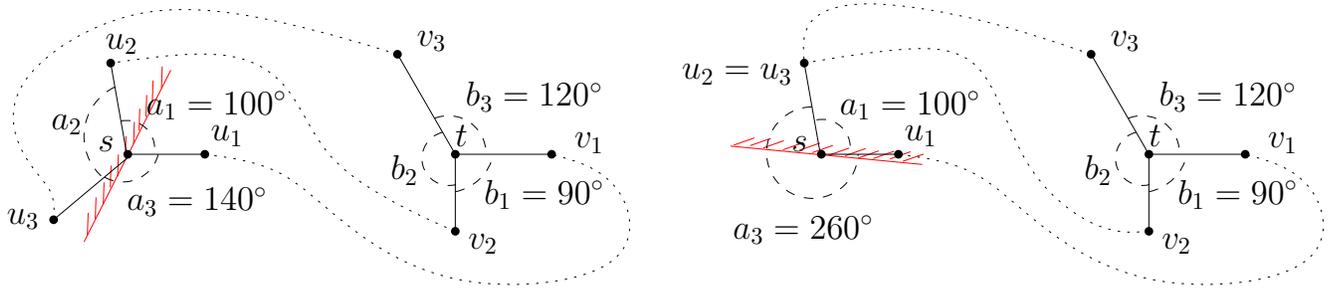}
\caption{\footnotesize Illustrating two concrete examples for Lemma~\ref{lem:110}. Left: The sizes of the angles of $a_i$ and $b_i$ for $1\leq i\leq 3$ are already shown in the figure with $\alpha_1=0$. By calculation, $\delta\approx 0.2426$, $\delta_1\approx-0.1736$, $\delta_2\approx 0.3072$, $\arctan(\frac{\delta_1-\delta_2}{\delta})\approx-63.23^{\circ}$, and thus $\piR(s,t)\approx(\alpha_1+63.23^{\circ},\alpha_1+63.23^{\circ}+180^{\circ})=(63.23^{\circ},243.23^{\circ})$. Right: a case where $\alpha_2=\alpha_3$ with $\alpha_1=0$. Thus, $a_2=0$ and $u_2=u_3$. The sizes of other angles are already shown in the figure. By calculation, $\delta\approx 2.1220$, $\delta_1\approx-0.1736$, $\delta_2\approx -0.3768$, $\arctan(\frac{\delta_1-\delta_2}{\delta})\approx 5.47^{\circ}$, and thus $\piR(s,t)\approx(\alpha_1-5.47^{\circ},\alpha_1-5.47^{\circ}+180^{\circ})=(-5.47^{\circ},174.53^{\circ})$.
The open half-planes that delimit $\piR(s,t)$ in both examples are marked with red color.
}
\label{fig:innercase}
\end{center}
\end{minipage}
\vspace*{-0.15in}
\end{figure}

We defer the proof of Lemma \ref{lem:110} to Section \ref{sec:lem110proof}. According to Lemma \ref{lem:110}, if the special case happens, then $\piR(s,t)$ is empty; otherwise, it is an open range of size exactly $\pi$.

Now we are back to our original problem to determine the range
$R(s,t)$ for a non-degenerate farthest point $t\in \calI$ of $s$. Since there
are exactly three shortest \st\ paths $\pi_{i}(s,t)$ for $i=1,2,3$, the three
paths must be canonical. To see this, by
Observation \ref{obser:10}, $t$ is in the interior of $\triangle
v_1v_2v_3$. Further, it is easy to see that no two of the three paths cross
each other %(although they may intersect)
since otherwise there would be more than three shortest
\st\ paths, this implies that the second property of the canonical paths holds.
Let $\piR(s,t)$ be the $\pi$-range of $s$ with respect
to $t$ and the above three shortest paths. We have the following result.

\begin{lemma}\label{lem:120}
$R(s,t)=\piR(s,t)\cap R_f(s)$.
\end{lemma}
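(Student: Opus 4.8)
The plan is to mirror the two-inclusion argument already used for Lemma~\ref{lem:90} (the case $t\in E$), now invoking the sharper ``if and only if'' characterization of Lemma~\ref{lem:40}(1) for $t\in\calI$; essentially all of the genuinely geometric content has been pushed into Lemma~\ref{lem:110} and Lemma~\ref{lem:40}, so what remains is bookkeeping. First I would set up the pivot correspondence. Since $t$ is a non-degenerate farthest point of $s$ with $t\in\calI$, there are exactly three shortest \st\ paths, which (as argued immediately before the lemma) are canonical; hence $U_t(s)=\{v_1,v_2,v_3\}$ with $v_1,v_2,v_3$ pairwise distinct, and, because there are only three shortest paths in total, each $v_i$ has exactly one coupled $s$-pivot, namely $u_i$ (even though it may happen that $u_i=u_j$ for some $i\ne j$). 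Consequently the condition ``each $v\in U_t(s)$ has a coupled $s$-pivot $u$ with $d'_{u,v}(s,t)<0$'' appearing in Lemma~\ref{lem:40}(1) is, for any fixed choice of $r_t$ and $\tau$, literally the same as ``$d'_{u_i,v_i}(s,t)<0$ for $i=1,2,3$'' appearing in the definition of $\piR(s,t)$. I would also note that since $t\in\calI$ every direction is a free direction for $t$, so Lemma~\ref{lem:40}(1) and the definition of $\piR(s,t)$ range over the same set of pairs $(r_t,\tau)$.

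Next I would carry out the two inclusions, exactly as in the proof of Lemma~\ref{lem:90}. For $R(s,t)\subseteq\piR(s,t)\cap R_f(s)$: given $r_s\in R(s,t)$, it is free by the definition of admissibility, so $r_s\in R_f(s)$; applying the ``only if'' direction of Lemma~\ref{lem:40}(1) yields a direction $r_t$ for $t$ and a speed $\tau\ge0$ under which every $v\in U_t(s)$ has a coupled $s$-pivot with negative directional derivative, which by the correspondence above is exactly $d'_{u_i,v_i}(s,t)<0$ for $i=1,2,3$, i.e.\ $r_s\in\piR(s,t)$. For $\piR(s,t)\cap R_f(s)\subseteq R(s,t)$: given such an $r_s$, it is free, and the definition of $\piR(s,t)$ supplies $r_t$ and $\tau\ge0$ realizing $d'_{u_i,v_i}(s,t)<0$ for $i=1,2,3$; since $U_t(s)=\{v_1,v_2,v_3\}$ and each $v_i$'s unique coupled $s$-pivot is $u_i$, the hypothesis of the ``if'' direction of Lemma~\ref{lem:40}(1) holds, so $r_s\in R(s,t)$. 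Intersecting the two inclusions proves the claim.

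The only step needing real care — and the sole place where the $t\in\calI$ argument differs in substance from the $t\in E$ one — is the first one: justifying that $U_t(s)$ is in bijection with $\{(u_i,v_i)\}_{i=1}^{3}$, so that the quantifier ``for each $v\in U_t(s)$ there is a coupled $s$-pivot $u$ with $d'_{u,v}(s,t)<0$'' collapses to three concrete inequalities. This uses non-degeneracy (exactly three shortest \st\ paths, whence the $t$-pivots are pairwise distinct and none of them is the $t$-pivot of two different shortest paths) together with the canonical reordering of the indices. Once this is in place, the lemma is a direct transcription of Lemma~\ref{lem:40}(1) and the definition of $\piR(s,t)$, with no geometry beyond what those already provide.
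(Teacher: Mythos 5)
Your proof is correct and follows essentially the same route as the paper's: both reduce the lemma to the two-inclusion bookkeeping of Lemma~\ref{lem:90}, using the fact that non-degeneracy gives $U_t(s)=\{v_1,v_2,v_3\}$ with each $v_i$ having the unique coupled $s$-pivot $u_i$, so that the quantified condition in Lemma~\ref{lem:40}(1) coincides with the three inequalities defining $\piR(s,t)$. The extra care you take in justifying the pivot bijection (and in noting that $R_f(t)$ is all directions for $t\in\calI$) is exactly the point the paper also relies on, stated there somewhat more tersely.
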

\begin{proof}
For each $i=1,2,3$, let $u_i$ and $v_i$ be the $s$-pivot and $t$-pivot
of the shortest path $\pi_i(s,t)$, respectively. Since $s$ and $t$
have only three shortest paths, $U_t(s)=\{v_1,v_2,v_3\}$ and
$U_s(t)=\{u_1,u_2,u_3\}$. Further, for each $i=1,2,3$, $v_i$ have only
one $s$-pivot, which is $u_i$.

\begin{enumerate}
\item
Consider any $r_s\in R(s,t)$. Clearly, $r_s\in R_f(s)$. By Lemma \ref{lem:40}(1), there is a
direction $r_t$ for $t$ with a speed $\tau\geq 0$ such that when $s$
moves along $r_s$ with unit speed and $t$ moves along $r_t$ with speed
$\tau$, each $v\in U_t(s)$ has a coupled $s$-pivot $u$ with
$d'_{u,v}(s,t)<0$. Since $U_t(s)=\{v_1,v_2,v_3\}$ and each $v_i$ has
only one $s$-pivot $u_i$ for each $1\leq i\leq 3$, we have $d'_{u_i,v_i}(s,t)<0$ for
$i=1,2,3$. Hence, $r_s$ is in $\piR(s,t)$.

\item
Consider any $r_s$ in $\piR(s,t)\cap R_f(s)$. First of all,
$r_s$ is a free direction. Since $r_s$ is in $\piR(s,t)$, there exists
a direction $r_r$ for $t$ with a speed $\tau\geq 0$ such that if $s$
moves along $r_s$ for unit speed and $t$ moves along $r_t$ with speed
$\tau$, then $d'_{u_i,v_i}(s,t)<0$ for $i=1,2,3$. Since
$U_t(s)=\{v_1,v_2,v_3\}$, according to Lemma \ref{lem:40}(1), $r_s$ is
in $R(s,t)$.
\end{enumerate}

The lemma thus follows.
\end{proof}

%As in the case $t\in E$, Lemmas \ref{lem:110} and \ref{lem:120} implies the following result.
%Corollary \ref{corr:20}, i.e., it is possible that a geometric center has only one
%farthest point (which is in $\calI$).

%Suppose $t$ is the only farthest point of $s$ and $t$ is non-degenerate.
%According to Lemma \ref{lem:110}, if the special case happens,
%$\piR(s,t)=\emptyset$ and thus $R(s,t)=\emptyset$ by Lemma \ref{lem:120}.
%By Corollary \ref{corr:10}, whenever we move $s$ along any free direction
%infinitesimally, the value $d_{\max}(s)$ will be strictly increasing.
%Hence, as in the case $t\in E$, we have the following result.
%
%\begin{corollary}\label{corr:30}
%It is possible that a geodesic center has only one farthest point, and the
%farthest point is in $\calI$.
%\end{corollary}

\subsection{Proof of Lemma \ref{lem:110}}
\label{sec:lem110proof}

Consider any direction $r_s$ for moving $s$ and $r_t$ for moving $t$.
Let $\theta_s$ denote the angle of the direction $r_s$. Let
$\theta_t$ denote the angle of $r_t$.
According to our analysis for Equation \eqref{equ:10}, we can obtain the derivatives of the three
functions $d'_{u_i,v_i}(s,t)$ for $i=1,2,3$, as follows.
\begin{equation}
\label{equ:60}
\begin{cases}
d'_{u_1,v_1}(s,t)= -\cos(\alpha_1-\theta_s) - \tau\cdot \cos(\beta_1-\theta_t),\\
d'_{u_2,v_2}(s,t)= -\cos(\alpha_2-\theta_s) - \tau\cdot \cos(\beta_2-\theta_t),\\
d'_{u_3,v_3}(s,t)= -\cos(\alpha_3-\theta_s) - \tau\cdot \cos(\beta_3-\theta_t).
\end{cases}
%(2 \leq i \leq n).
\end{equation}

Therefore, for each $i=1,2,3$, $d'_{u_i,v_i}(s,t)$ is a function of $\theta_s$,
$\theta_t$, and $\tau$.
In order to simplify our proof for Lemma \ref{lem:110}, we first give the following lemma.

\begin{lemma}\label{lem:130}
A direction $r_s$ is in $\piR(s,t)$ if and only if there exist
$\theta_t\in [0,2\pi)$ and $\tau\geq 0$ such that $d'_{u_1,v_1}(s,t)=0$ and
$d'_{u_i,v_i}(s,t)<0$ for $i=2,3$ (the same result holds if we switch
the index $1$ with $2$ or $3$).
\end{lemma}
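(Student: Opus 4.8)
The plan is to recast the statement in terms of the \emph{velocity vector} of $t$, generalizing the treatment of Lemma~\ref{lem:100}. Fix the moving direction $r_s$ of $s$, so $\theta_s$ is fixed, and for $i=1,2,3$ write $\hat{e}_i$ for the unit vector in the direction $\vector{tv_i}$ and $c_i=-\cos(\alpha_i-\theta_s)$, a constant. A choice of $\theta_t\in[0,2\pi)$ together with a speed $\tau\ge 0$ is the same thing as a velocity vector $\vec{w}=\tau(\cos\theta_t,\sin\theta_t)\in\mathbb{R}^2$ (the zero vector being reached by $\tau=0$ for every $\theta_t$), and by Equation~\eqref{equ:60} we have $d'_{u_i,v_i}(s,t)=c_i-\vec{w}\cdot\hat{e}_i$. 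Hence the condition $d'_{u_i,v_i}(s,t)<0$ describes the open half-plane $H_i=\{\vec{w}:\vec{w}\cdot\hat{e}_i>c_i\}$ in $\vec{w}$-space, while $d'_{u_1,v_1}(s,t)=0$ describes its bounding line $\ell_1$. In this language $r_s\in\piR(s,t)$ means $H_1\cap H_2\cap H_3\neq\emptyset$, and the lemma asserts that this holds if and only if $\ell_1\cap H_2\cap H_3\neq\emptyset$.

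For the ``if'' direction I would take a point $\vec{w}_0\in\ell_1\cap H_2\cap H_3$ and translate it to $\vec{w}_0+\epsilon\hat{e}_1$. For every $\epsilon>0$ this point lies in $H_1$, since its $\hat{e}_1$-component is $c_1+\epsilon>c_1$; and because $\vec{w}_0$ lies strictly inside the open sets $H_2$ and $H_3$, the perturbed point still lies in $H_2\cap H_3$ for all sufficiently small $\epsilon$. Reading off the direction $\theta_t$ and speed $\tau\ge 0$ of $\vec{w}_0+\epsilon\hat{e}_1$ then witnesses $r_s\in\piR(s,t)$.

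The substantive direction is ``only if''. Starting from a point $\vec{w}_1\in H_1\cap H_2\cap H_3$, I want to slide it to the line $\ell_1$ without leaving $H_2$ or $H_3$. The key observation is that, because the three paths are canonical, $t$ lies in the interior of $\triangle v_1v_2v_3$ (equivalently, Observation~\ref{obser:10}(1) applies), so there are $\mu_1,\mu_2,\mu_3>0$ with $\mu_1\hat{e}_1+\mu_2\hat{e}_2+\mu_3\hat{e}_3=0$; in particular $-\hat{e}_1$ is a strictly positive combination of $\hat{e}_2$ and $\hat{e}_3$. Choose any vector $\vec{d}$ with $\vec{d}\cdot\hat{e}_2>0$ and $\vec{d}\cdot\hat{e}_3>0$, which is possible because $\hat{e}_2$ and $\hat{e}_3$ are not antiparallel (no three polygon vertices are collinear and $t$ is interior to the triangle). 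Dotting the relation above with $\vec d$ gives $\mu_1(\vec{d}\cdot\hat{e}_1)=-\mu_2(\vec{d}\cdot\hat{e}_2)-\mu_3(\vec{d}\cdot\hat{e}_3)<0$, so $\vec{d}\cdot\hat{e}_1<0$ as well. Now move along the ray $\vec{w}(\sigma)=\vec{w}_1+\sigma\vec{d}$ for $\sigma\ge 0$: the component $\vec{w}(\sigma)\cdot\hat{e}_1$ strictly decreases from a value $>c_1$, so it equals $c_1$ at some $\sigma^{\ast}>0$, while $\vec{w}(\sigma)\cdot\hat{e}_2$ and $\vec{w}(\sigma)\cdot\hat{e}_3$ only increase and thus remain $>c_2$ and $>c_3$. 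Converting $\vec{w}(\sigma^{\ast})$ back to a direction $\theta_t$ and a speed $\tau\ge 0$ gives $d'_{u_1,v_1}(s,t)=0$ and $d'_{u_i,v_i}(s,t)<0$ for $i=2,3$. The versions with the roles of $1,2,3$ permuted follow by relabeling, since both the canonical condition and this argument are symmetric in the three indices.

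The main obstacle is precisely this last direction, and more concretely the choice of the direction $\vec{d}$ in which to push the velocity vector: it must decrease the $\hat{e}_1$-component without decreasing the $\hat{e}_2$- or $\hat{e}_3$-components, and the existence of such a $\vec{d}$ is exactly what the interiority of $t$ in $\triangle v_1v_2v_3$ buys us. One could instead try a one-parameter-at-a-time argument in the style of the proof of Lemma~\ref{lem:100} (fix $\theta_t$, adjust $\tau$ monotonically, case split on the signs of the $\cos(\beta_i-\theta_t)$), but this is messier: with three constraints one cannot in general arrange that raising or lowering $\tau$ helps both $d'_{u_2,v_2}$ and $d'_{u_3,v_3}$ at once, whereas the vectorial formulation sidesteps the issue entirely.
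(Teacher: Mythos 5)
Your proof is correct, and it takes a genuinely cleaner route than the paper's. The paper fixes the frame $y=\beta_1-\theta_t$ and views the pair $(\tau,y)$ as a point $q$ in polar coordinates; for the ``only if'' direction it slides $q$ vertically, which keeps $d'_{u_1,v_1}$ constant while monotonically driving $d'_{u_2,v_2}$ up to zero and $d'_{u_3,v_3}$ down (using $\sin(\beta_2-\beta_1)<0$ and $\sin(\beta_3-\beta_1)>0$, i.e.\ $b_1,b_3\in(0,\pi)$), and for the ``if'' direction it runs a case analysis on $\tau>0$ versus $\tau=0$ and on the sign of $\cos(\beta_2-\beta_1+y)$, perturbing one parameter at a time. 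Your reformulation --- identifying $(\theta_t,\tau)$ with the velocity vector $\vec w\in\bbR^2$, so that each constraint $d'_{u_i,v_i}<0$ is an open half-plane $H_i=\{\vec w:\vec w\cdot\hat e_i>c_i\}$ --- collapses both directions into uniform, coordinate-free arguments: the ``if'' direction is an $\epsilon$-push along $\hat e_1$ into the open set $H_2\cap H_3$, and the ``only if'' direction is a single linear sweep along a direction $\vec d$ in the (nonempty) dual cone of $\{\hat e_2,\hat e_3\}$, where the sign $\vec d\cdot\hat e_1<0$ falls out of the strictly positive dependence $\mu_1\hat e_1+\mu_2\hat e_2+\mu_3\hat e_3=0$ guaranteed by $t$ being interior to $\triangle v_1v_2v_3$. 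Both proofs ultimately rest on the same geometric fact (the $\hat e_i$ positively span the plane), but yours makes the symmetry in the three indices manifest --- so the ``permute the labels'' remark at the end of the lemma is immediate --- whereas the paper's coordinate choice privileges index $1$ and obliges it to prove the statement with index $2$ in the distinguished role and appeal to relabeling separately. The one thing worth stating explicitly (you gesture at it) is that every $\vec w\in\bbR^2$ is realizable as $\tau(\cos\theta_t,\sin\theta_t)$ with $\tau=|\vec w|\ge 0$, so no admissible $(\theta_t,\tau)$ is lost in the translation.
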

\begin{proof}
In order to make the notation consistent with the rest of Section \ref{sec:lem110proof}, instead of proving the statement of the lemma, we prove the following statement
(which is essentially the same as the lemma): $r_s$ is in $\piR(s,t)$
if and only if there exist $\theta_t$ and $\tau\geq 0$ such that $d'_{u_2,v_2}(s,t)=0$ and $d'_{u_i,v_i}(s,t)<0$ for $i=1,3$.

Let $x=\alpha_1-\theta_s$ and $y=\beta_1-\theta_t$. Hence, we have the following:
\begin{equation*}
\label{eq:40}
\begin{cases}
d'_{u_1,v_1}(s,t)= -\cos(x) - \tau\cdot \cos(y),\\
d'_{u_2,v_2}(s,t)= -\cos(\alpha_2-\alpha_1+x) - \tau\cdot \cos(\beta_2-\beta_1+y),\\
d'_{u_3,v_3}(s,t)= -\cos(\alpha_3-\alpha_1+x) - \tau\cdot \cos(\beta_3-\beta_1+y).
\end{cases}
%(2 \leq i \leq n).
\end{equation*}

Consider any fixed direction $r_s$. The angle $\theta_s$ is also fixed, and thus $x$ is fixed. Hence, for each $i=1,2,3$, $d'_{u_i,v_i}(s,t)$ is a function of $\theta_t$ and $\tau$, and thus also a function of $y$ and $\tau$.
In the following proof, we will use $d_i'(y,\tau)$ to represent
$d'_{u_i,v_i}(s,t)$ for each $i=1,2,3$.

We first prove one direction of the lemma.
Suppose $r_s$ is in $\piR(s,t)$. Then there exist $y$ and $\tau\geq 0$ such
that $d_i'(y,\tau)<0$ for each $i=1,2,3$. Our goal is to prove that
there exit $y'$ and $\tau'\geq 0$ such that $d_2'(y',\tau')=0$ and
$d_i'(y',\tau')<0$ for $i=1,3$.

The idea is to change $y$ and $\tau$ simultaneously in such a way that
$d_2'(y,\tau)$ is constant, but $d_1'(y,\tau)$ strictly increases
while $d_3'(y,\tau)$ strictly decreases, and we keep making the change until $d_1'(y,\tau)$ becomes
zero. To this end, we will find a way that when we change  $\tau$ and
$y$ simultaneously, $\tau\cdot \cos(y)$ is constant, $-\tau\cdot
\cos(\beta_2-\beta_1+y)$ increases, and $-\tau\cdot
\cos(\beta_3-\beta_1+y)$ decreases, as follows.

Since $\cos(\beta_2-\beta_1+y)=\cos(\beta_2-\beta_1)\cdot \cos(y)-\sin(\beta_2-\beta_1)\cdot \sin(y)$, we have
$$d_2'(y,\tau) = -\cos(\alpha_2-\alpha_1+x) - \tau \cdot \cos(y)\cdot \cos(\beta_2-\beta_1) + \tau\cdot \sin(y)\cdot \sin(\beta_2-\beta_1).$$

Similarly, we derive
$$d_3'(y,\tau) = -\cos(\alpha_3-\alpha_1+x) - \tau \cdot \cos(y)\cdot \cos(\beta_3-\beta_1) + \tau\cdot \sin(y)\cdot \sin(\beta_3-\beta_1).$$

Consider a point $q$ in a polar coordinate system with coordinate $(\tau,y)$, i.e.,
$\tau=|oq|$ and the polar angle of $q$ is $y$, where $o$ is the origin
(e.g., see Fig.~\ref{fig:polar}). Let $l_q$ be the vertical line through $q$.
Note that if $\tau=0$, then $q=o$ and $l_q$ is the vertical line through $o$.

\begin{figure}[t]
\begin{minipage}[t]{\linewidth}
\begin{center}
\includegraphics[totalheight=1.2in]{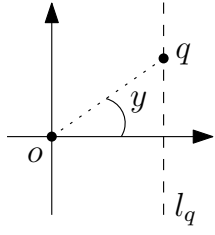}
\caption{\footnotesize Illustrating the polar coordinate of $q$ with $\tau=|oq|$.
}
\label{fig:polar}
\end{center}
\end{minipage}
\vspace*{-0.15in}
\end{figure}

For any point $p$ on $l_q$, let $\tau_p=|op|$ and let $y_p$ be the polar angle
of $p$. Suppose we move $p$  on $l_q$ from top to bottom, it
is easy to see that $\tau_p\cos(y_p)$ is constant and $\tau_p\sin(y_p)$ is
strictly decreasing. Based on this observation, we will find $y'$ and
$\tau'\geq 0$ such that $d_2'(y',\tau')=0$ and
$d_i'(y',\tau')<0$ for $i=1,3$, as follows.

Recall that
$\beta_1-\beta_2\equiv b_1$ and $b_1\in (0,\pi)$. Thus, $\sin(\beta_2-\beta_1)<0$.
Similarly, since $\beta_3-\beta_1\equiv b_3$ and $b_3\in (0,\pi)$,
$\sin(\beta_3-\beta_1)>0$. Hence, if we change the
values of $y$ and $\tau$ simultaneously such that $q$ moves along $l_q$
downwards, then $d_1'(y,\tau)$ does not change, but
$d_2'(y,\tau)$ strictly increases while
$d_3'(y,\tau)$ strictly decreases. We keep making the above change until at some moment
$d_2'(y,\tau)$ becomes zero, at which moment we have
$d_2'(y,\tau)=0$ and  $d_2'(y,\tau)<0$ for $i=1,3$.

The above proves one direction of the lemma. Next, we prove the other direction.

Suppose there exist $y$ and $\tau\geq 0$
such that $d_2'(y,\tau)=0$ and $d'_i(y,\tau)<0$ for $i=1,3$.
Our goal is to show that there exist $y'$ and $\tau'\geq 0$
such that $d'_i(y',\tau')<0$ for $i=1,2,3$.

\begin{enumerate}
\item
If $\tau>0$, depending on whether $\cos(\beta_2-\beta_1+y)=0$, there are two subcases.

\begin{enumerate}
\item
If $\cos(\beta_2-\beta_1+y)=0$, then it is always possible to change $y$ infinitesimally
such that $\cos(\beta_2-\beta_1+y)>0$.  Due to $\tau>0$,
$d_2'(y,\tau)<0$. Also, since the change of $y$ is
infinitesimal and both $\cos(y)$ and  $\cos(\beta_3-\beta_1+y)$ are
continuous functions, we still have $d'_i(y,\tau)<0$ for $i=2,3$.
We are done with the proof.
\item
If $\cos(\beta_2-\beta_1+y)\neq 0$, depending on whether $\cos(\beta_2-\beta_1+y)$
is positive or negative,
we can infinitesimally increase or decrease $\tau$, such that $\tau$ is still
positive and $d_i'(y,\tau)<0$ holds for each $i=1,2,3$.
We are done with the proof.
\end{enumerate}

\item

If $\tau=0$, then again depending on whether $\cos(\beta_2-\beta_1+y)$
is positive, negative, or zero, there are three subcases.

\begin{enumerate}
\item
If $\cos(\beta_2-\beta_1+y)> 0$,
then we can increase $\tau$ infinitesimally such that $\tau>0$
and $d_i'(y,\tau)<0$ for $i=1,2,3$.

\item
If $\cos(\beta_2-\beta_1+y)< 0$, then we first change $y$ to $y+\pi$
(i.e., the moving direction of $t$ is reversed). After this, since
$\tau=0$, we still have
$d'_2(y,\tau)=0$ and $d'_i(y,\tau)<0$ for $i=1,3$. However,
the difference is that now $\cos(\beta_2-\beta_1+y)>0$ for the new $y$.
Then, we can use the same analysis as the first subcase.

\item
If $\cos(\beta_2-\beta_1+y) = 0$, then we first slightly change $y$ such that
$\cos(\beta_2-\beta_1+y)>0$. Since $\tau=0$, we still have
$d'_2(y,\tau)=0$ and $d'_i(y,\tau)<0$ for $i=1,3$. However,
the difference is that now $\cos(\beta_2-\beta_1+y)>0$ for the new
$y$. Then, we can use the same analysis as the first
subcase.
\end{enumerate}
\end{enumerate}

This completes the proof of the lemma.
\end{proof}

%To simplify the notation, let $w_i=-d'_{u_i,v_i}(s,t)$ for each $i=1,2,3$.
%Once $r_s$ is fixed (and thus $\theta_s$ is fixed), both $w_1$ and $w_2$
%are implicitly considered as functions of $\theta_t$ and $\tau\geq 0$.
%By Lemma \ref{lem:130}, to determine $\piR(s,t)$, below we will find all
%directions $r_s$ for $s$ such that there exist
%$\theta_t$ (in $[0,2\pi)$) and $\tau\geq 0$ with $w_1=0$ and $w_i>0$ for $i=2,3$.

To simplify the notation, let $w_i=-d'_{u_i,v_i}(s,t)$ for each $i=1,2,3$.
Let $x=\alpha_1-\theta_s$ and $y=\beta_1-\theta_t$. Then, by Equation \eqref{equ:60},
we have the following.
\begin{equation}
\label{equ:70}
\begin{cases}
w_1 = \cos(x) + \tau\cdot \cos(y),\\
w_2 = \cos(\alpha_2-\alpha_1+x) + \tau\cdot \cos(\beta_2-\beta_1+y),\\
w_3 = \cos(\alpha_3-\alpha_1+x) + \tau\cdot \cos(\beta_3-\beta_1+y).
\end{cases}
%(2 \leq i \leq n).
\end{equation}

Once $r_s$ is fixed, both $\theta_s$ and $x$ are fixed. Also,
given any $\theta_t$, we can determine $y$, and vice versa. In the following,
for each $i=1,2,3$, we consider $w_i$ implicitly as a function of
$y\in [0,2\pi)$ and $\tau\geq 0$. By Lemma \ref{lem:130},
$\piR(s,t)$ consists of all
directions $r_s$ for $s$ such that there exist
$y$ and $\tau\geq 0$ with $w_1=0$ and $w_i>0$ for $i=2,3$.

Let $w_1=0$. Then $\cos(x)+\tau\cdot\cos(y)=0$. Depending on whether $\cos(x)$
is positive, negative, or zero, there are three cases.
%We consider the cyle $[-\pi/2,3\pi/2)$ for the angle $x$.

\subsubsection{The case $\cos(x)>0$.}

In this case, since $\tau\cdot \cos(y) = -\cos(x)$, we have
$\tau\cdot \cos(y)<0$ and $\tau\neq 0$. Since
$\tau\geq 0$, we obtain $\tau>0$ and $\cos(y)<0$, implying
that $y\in (\pi/2,3\pi/2)$. Further,
we have  $\tau=-\cos(x)/\cos(y)$. By replacing $\tau$
with $-\cos(x)/\cos(y)$ in Equation \eqref{equ:70} for $w_2$, we obtain
\begin{equation*}
\begin{split}
w_2 &= \cos(\alpha_2 - \alpha_1)\cdot \cos(x) - \sin(\alpha_2-\alpha_1)\cdot
\sin(x)\\
&\ \ \ \ - \frac{\cos(x)}{\cos(y)}\cdot \big[\cos(\beta_2-\beta_1)\cdot \cos(y)
-\sin(\beta_2-\beta_1)\cdot \sin(y) \big]\\
& = \cos(\alpha_2 - \alpha_1)\cdot \cos(x) - \sin(\alpha_2-\alpha_1)\cdot
\sin(x) \\
 &\ \ \ \ - \cos(\beta_2-\beta_1)\cdot \cos(x)   + \sin(\beta_2-\beta_1)\cdot  \tan(y)
 \cdot \cos(x).
\end{split}
\end{equation*}
Since $\cos(x)>0$, if we divide the right side of the above formula by
$\cos(x)$, we obtain that $w_2>0$ if and only if
$$\cos(\alpha_2 - \alpha_1) - \sin(\alpha_2-\alpha_1)\cdot
\tan(x) - \cos(\beta_2-\beta_1) + \sin(\beta_2-\beta_1)\cdot
\tan(y)>0.$$

Recall that $\beta_1-\beta_2\equiv b_1$ and $b_1\in (0,\pi)$,
and thus $\sin(\beta_2-\beta_1)<0$.
Hence, the above inequality is equivalent to
\begin{equation}
\label{equ:80}
\tan(y) < \frac{  \cos(\beta_2-\beta_1)
-\cos(\alpha_2 - \alpha_1) + \sin(\alpha_2-\alpha_1)\cdot
\tan(x)} {\sin(\beta_2-\beta_1)}.
\end{equation}

Therefore, we obtained that $w_2>0$ if and only if Inequality
\eqref{equ:80} holds.

Similarly, for $w_3$, we can obtain that $w_3>0$ if and only if the following
inequality holds:
\begin{equation}
\label{equ:90}
\tan(y) > \frac{  \cos(\beta_3-\beta_1)
-\cos(\alpha_3 - \alpha_1) + \sin(\alpha_3-\alpha_1)\cdot
\tan(x)} {\sin(\beta_3-\beta_1)}.
\end{equation}
Note that to obtain Inequality \eqref{equ:90},
we need to use the fact that $\sin(\beta_3-\beta_1)>0$, which is due
to $\beta_3-\beta_1 \equiv b_3$ and $b_3\in (0,\pi)$.

As a summary, the above shows that $w_1=0$ and $w_i>0$ for $i=2,3$ if and only if both
Inequalities \eqref{equ:80} and \eqref{equ:90} hold. Recall that since
$\cos(x)>0$, $y$ is in $(\pi/2,3\pi/2)$.
Further, note that we can find an angle $y\in (\pi/2,3\pi/2)$
such that both \eqref{equ:80} and \eqref{equ:90} hold if
and only if the right side of Inequality \eqref{equ:80} is strictly larger than the right
side of Inequality \eqref{equ:90}. Hence, we obtain the following observation.

\begin{observation}
For the case $\cos(x)>0$, there exist $y$ and $\tau\geq 0$ such that  $w_1=0$ and $w_i>0$
for $i=2,3$ if and only if the right side of Inequality \eqref{equ:80}
is strictly larger than the right
side of Inequality \eqref{equ:90}, which is equivalent to the following
inequality:
\begin{equation}
\label{equ:100}
\delta\cdot \tan(x) < \delta_1-\delta_2,
\end{equation}
where $\delta$, $\delta_1$, and $\delta_2$ are defined as in the
statement of Lemma \ref{lem:110}.
\end{observation}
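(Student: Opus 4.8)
The discussion preceding the statement already carries out most of the reduction, so the plan is to fill in the two remaining gaps. Recall that when $\cos(x)>0$ the equation $w_1=0$ forces $\tau=-\cos(x)/\cos(y)>0$ with $y\in(\pi/2,3\pi/2)$, and that for this $\tau$ one has $w_2>0$ exactly when Inequality \eqref{equ:80} holds and $w_3>0$ exactly when Inequality \eqref{equ:90} holds. Hence the statement ``there exist $y$ and $\tau\geq 0$ with $w_1=0$ and $w_i>0$ for $i=2,3$'' is equivalent to ``there exists $y\in(\pi/2,3\pi/2)$ satisfying both \eqref{equ:80} and \eqref{equ:90}''. Write $A$ and $B$ for the right-hand sides of \eqref{equ:80} and \eqref{equ:90}, respectively; once $r_s$ (hence $x$) is fixed these are constants, and the two inequalities read $\tan(y)<A$ and $\tan(y)>B$.

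First I would show that such a $y$ exists if and only if $A>B$. The function $y\mapsto\tan(y)$ is continuous and strictly increasing on $(\pi/2,3\pi/2)$ — its singularities are exactly at $\pi/2$ and $3\pi/2$, the endpoints of this interval, so $\tan$ is in particular continuous at $y=\pi$ — and it maps $(\pi/2,3\pi/2)$ bijectively onto $\bbR$. Consequently $\{\,y\in(\pi/2,3\pi/2):B<\tan(y)<A\,\}\neq\emptyset$ if and only if the interval $(B,A)$ is nonempty, i.e., if and only if $A>B$. This establishes the first claimed equivalence.

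It remains to verify the algebraic identity $A>B\iff\delta\tan(x)<\delta_1-\delta_2$. Since $\sin(\beta_2-\beta_1)<0$ and $\sin(\beta_3-\beta_1)>0$ are both nonzero (as $\beta_1-\beta_2\equiv b_1\in(0,\pi)$ and $\beta_3-\beta_1\equiv b_3\in(0,\pi)$), reading off the definitions of $\delta_1$ and $\delta_2$ from the statement of Lemma \ref{lem:110} gives $A=\delta_1+\frac{\sin(\alpha_2-\alpha_1)}{\sin(\beta_2-\beta_1)}\tan(x)$ and $B=\delta_2+\frac{\sin(\alpha_3-\alpha_1)}{\sin(\beta_3-\beta_1)}\tan(x)$. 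Subtracting, $A-B=(\delta_1-\delta_2)-\left(\frac{\sin(\alpha_3-\alpha_1)}{\sin(\beta_3-\beta_1)}-\frac{\sin(\alpha_2-\alpha_1)}{\sin(\beta_2-\beta_1)}\right)\tan(x)=(\delta_1-\delta_2)-\delta\tan(x)$, so $A>B$ holds if and only if $\delta\tan(x)<\delta_1-\delta_2$, which is Inequality \eqref{equ:100}. Chaining the two equivalences yields the observation.

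The only step requiring genuine care is the first one: one must be certain that $\tan(y)$ attains every real value as $y$ ranges over $(\pi/2,3\pi/2)$, and in particular does not jump at $y=\pi$, since this is precisely what makes the single scalar condition $A>B$ simultaneously necessary and sufficient for a common solution of the two one-sided inequalities on $\tan(y)$. The second step is a routine substitution once the definitions of $\delta$, $\delta_1$, $\delta_2$ are unwound.
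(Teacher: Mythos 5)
Your proof is correct and follows essentially the same route as the paper: the paper's own proof of the observation consists only of the algebraic substitution showing that "RHS of \eqref{equ:80} $>$ RHS of \eqref{equ:90}" is equivalent to \eqref{equ:100}, with the reduction to those two inequalities and the existence-of-$y$ claim handled in the surrounding text. Your only addition is to spell out why $\tan$ being a continuous increasing bijection from $(\pi/2,3\pi/2)$ onto $\bbR$ makes $A>B$ both necessary and sufficient — a detail the paper simply asserts — and your algebra matches the paper's.
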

\begin{proof}
Recall that
$$\delta=\frac{\sin(\alpha_3-\alpha_1)}{\sin(\beta_3-\beta_1)}-
\frac{\sin(\alpha_2-\alpha_1)}{\sin(\beta_2-\beta_1)},$$
$$\delta_1=\frac{\cos(\beta_2-\beta_1)-\cos(\alpha_2-\alpha_1)}{\sin(\beta_2-\beta_1)},
\text{\ and \ }
\delta_2=\frac{\cos(\beta_3-\beta_1)-\cos(\alpha_3-\alpha_1)}{\sin(\beta_3-\beta_1)}.$$

That the right side of Inequality \eqref{equ:80} is larger than the right
side of Inequality \eqref{equ:90}  is equivalent to the following
\begin{equation*}
\frac{\sin(\alpha_3-\alpha_1)\cdot \tan(x)}{\sin(\beta_3-\beta_1)}
- \frac{\sin(\alpha_2-\alpha_1)\cdot \tan(x)}{\sin(\beta_2-\beta_1)}<
\frac{  \cos(\beta_2-\beta_1) -\cos(\alpha_2 - \alpha_1) } {\sin(\beta_2-\beta_1)}
-\frac{  \cos(\beta_3-\beta_1) -\cos(\alpha_3 - \alpha_1) }
{\sin(\beta_3-\beta_1)},
\end{equation*}
which is further equivalent to Inequality \eqref{equ:100}.
%$\delta\cdot \tan(x) < \delta_1-\delta_2$.
\end{proof}

Recall that $\cos(x)>0$. Thus, $x\in (-\pi/2,\pi/2)$.

If $\delta >0$, Inequality \eqref{equ:100} is equivalent to
$\tan(x)<\frac{\delta_1-\delta_2}{\delta}$. Consequently, since $x\in
(-\pi/2,\pi/2)$, we obtain that $x\in
(-\pi/2,\arctan(\frac{\delta_1-\delta_2}{\delta}))$.

If $\delta <0$, Inequality \eqref{equ:100} is equivalent to
$\tan(x)>\frac{\delta_1-\delta_2}{\delta}$, implying that $x\in
(\arctan(\frac{\delta_1-\delta_2}{\delta}),\pi/2)$.

If $\delta=0$, Inequality \eqref{equ:100} is equivalent to
$0 < \delta_1-\delta_2$. Hence, if $\delta_1>\delta_2$,
then any $x$ can make the inequality hold, implying that $x\in
(-\pi/2,\pi/2)$; otherwise, no $x$ can make it hold.

In summary, for the case $\cos(x)>0$,
there exist $y$ and $\tau\geq 0$ such that  $w_1=0$ and $w_i>0$
for $i=2,3$ if and only if: $x\in
(-\pi/2,\arctan(\frac{\delta_1-\delta_2}{\delta}))$ when $\delta>0$;
$x\in (\arctan(\frac{\delta_1-\delta_2}{\delta}),\pi/2)$ when $\delta<0$;
$x\in (-\pi/2,\pi/2)$ when $\delta = 0$ and $\delta_1>\delta_2$.

\subsubsection{The case $\cos(x)<0$.}
In this case, $x\in (\pi/2,3\pi/2)$. The analysis is similar to the
above case. We omit the details and only give the result below.

There exist $y$ and $\tau\geq 0$ such that  $w_1=0$ and $w_i>0$
for $i=2,3$ if and only if: $x\in
(\arctan(\frac{\delta_1-\delta_2}{\delta})+\pi,3\pi/2)$ when $\delta>0$;
$x\in (\pi/2,\arctan(\frac{\delta_1-\delta_2}{\delta})+\pi)$ when $\delta<0$;
$x\in (\pi/2,3\pi/2)$ when $\delta = 0$ and $\delta_1<\delta_2$.

\subsubsection{The case $\cos(x)=0$.}
In this case $x=\pm \pi/2$. The analysis for this case is different
from the above two cases. We will show the following result:
There exist $y$ and $\tau\geq 0$ such that  $w_1=0$ and $w_i>0$
for $i=2,3$ if and only if: $x = -\pi/2$ when $\delta>0$;
$x = \pi/2$ when $\delta<0$ (there is no value for $x$ when $\delta=0$).

By Lemma \ref{lem:130}, there exist
$y$ and $\tau\geq 0$ such that  $w_1=0$ and $w_i>0$
for $i=2,3$ if and only there exist $y$ and $\tau\geq 0$
such that  $w_i>0$ for $i=1,2,3$.
For convenience, in this case, we will find all directions $r_s$ such that
there exist $y$ and $\tau\geq 0$ with $w_i>0$ for $i=1,2,3$.

Since $\cos(x)=0$, we have $w_1=\tau\cdot \cos(y)$. Since we require $w_1>0$,
$\tau$ cannot be $0$ and thus
$\tau>0$ must hold. Therefore, we have $\cos(y)>0$  and $y\in (-\pi/2,\pi/2)$.
In addition, $\tau = \frac{w_1}{\cos(y)}$. By replacing $\tau$ with $\frac{w_1}{\cos(y)}$
and setting $\cos(x)$ to $0$ in Equation \eqref{equ:70} for $w_2$, we obtain the following
\begin{equation*}
\begin{split}
w_2 &= \cos(\alpha_2 - \alpha_1)\cdot \cos(x) - \sin(\alpha_2-\alpha_1)\cdot
\sin(x)\\
&\ \ \ \ + \frac{w_1}{\cos(y)}\cdot \big[\cos(\beta_2-\beta_1)\cdot \cos(y)
- \sin(\beta_2-\beta_1)\cdot \sin(y) \big]\\
& =  - \sin(\alpha_2-\alpha_1)\cdot
\sin(x) + \cos(\beta_2-\beta_1)\cdot w_1   - \sin(\beta_2-\beta_1)\cdot  \tan(y)
 \cdot w_1.
\end{split}
\end{equation*}

Hence, $w_2>0$ if and only if the following holds
\begin{equation*}%\label{equ:110}
 \sin(\beta_2-\beta_1)\cdot  \tan(y)  \cdot w_1 < - \sin(\alpha_2-\alpha_1)\cdot
\sin(x) + \cos(\beta_2-\beta_1)\cdot w_1.
\end{equation*}

Recall that since
$\beta_1-\beta_2=b_1$ and $b_1\in (0,\pi)$, $\sin(\beta_2-\beta_1)<0$.
Since $w_1>0$, the above inequality is equivalent to the following
\begin{equation}\label{equ:110}
 \tan(y) > \frac{- \sin(\alpha_2-\alpha_1)\cdot \sin(x) +
 \cos(\beta_2-\beta_1)\cdot w_1}{\sin(\beta_2-\beta_1)\cdot w_1}
\end{equation}

For $w_3$, by the similar analysis (we also need to use the fact that
$\sin(\beta_3-\beta_1)>0$), $w_3>0$ if and only if the following holds
\begin{equation}\label{equ:120}
 \tan(y) < \frac{- \sin(\alpha_3-\alpha_1)\cdot \sin(x) +
 \cos(\beta_3-\beta_1)\cdot w_1}{\sin(\beta_3-\beta_1)\cdot w_1}
\end{equation}

Based on the above discussion, $w_i>0$ for $i=1,2,3$ if and only if we can find
$y\in (-\pi/2,\pi/2)$ such that both Inequalities \eqref{equ:110} and
\eqref{equ:120} hold. It is not difficult to see that we can find $y\in
(-\pi/2,\pi/2)$ such that both Inequalities \eqref{equ:110} and \eqref{equ:120}
hold if and only if the right side of Inequality \eqref{equ:110} is
strictly less than
the right side of Inequality \eqref{equ:120}, i.e.,
\begin{equation}\label{equ:130}
\frac{- \sin(\alpha_2-\alpha_1)\cdot \sin(x) +
\cos(\beta_2-\beta_1)\cdot w_1}{\sin(\beta_2-\beta_1)\cdot w_1} <
\frac{- \sin(\alpha_3-\alpha_1)\cdot \sin(x) +
\cos(\beta_3-\beta_1)\cdot w_1}{\sin(\beta_3-\beta_1)\cdot w_1}.
\end{equation}

Since $w_1>0$, we have the following observation.

\begin{observation}\label{obser:50}
$w_i>0$ for $i=1,2,3$ if and only if there exits $w_1>0$ such that
\begin{equation*}
\delta \cdot \sin(x) < w_1\cdot \Big[\cot(\beta_3-\beta_1)-\cot(\beta_2-\beta_1)\Big].
%\Big[\frac{\cos(\beta_3-\beta_1)}{\sin(\beta_3-\beta_1)}-\frac{\cos(\beta_2-\beta_1)}{\sin(\beta_2-\beta_1)}\Big].
\end{equation*}
\end{observation}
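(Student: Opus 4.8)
The plan is to obtain Observation~\ref{obser:50} as a purely algebraic restatement of Inequality~\eqref{equ:130}. The discussion preceding the observation has already reduced the requirement ``$w_i>0$ for $i=1,2,3$'' to the existence of an admissible $y\in(-\pi/2,\pi/2)$ and $\tau>0$ with $w_1=\tau\cos(y)>0$, and has further reduced that to the condition that the right-hand side of \eqref{equ:110} be strictly smaller than the right-hand side of \eqref{equ:120}, i.e.\ to Inequality~\eqref{equ:130}. So it suffices to transform \eqref{equ:130} into the displayed inequality of Observation~\ref{obser:50}.

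First I would split each of the two fractions in \eqref{equ:130}, using $\dfrac{\cos(\beta_j-\beta_1)\,w_1}{\sin(\beta_j-\beta_1)\,w_1}=\cot(\beta_j-\beta_1)$ for $j=2,3$, so that \eqref{equ:130} becomes
\begin{equation*}
-\frac{\sin(\alpha_2-\alpha_1)\sin(x)}{\sin(\beta_2-\beta_1)\,w_1}+\cot(\beta_2-\beta_1)\ <\ -\frac{\sin(\alpha_3-\alpha_1)\sin(x)}{\sin(\beta_3-\beta_1)\,w_1}+\cot(\beta_3-\beta_1).
\end{equation*}
Collecting the two $\sin(x)$-terms on the left and the two $\cot$-terms on the right gives
\begin{equation*}
\frac{\sin(x)}{w_1}\Big(\frac{\sin(\alpha_3-\alpha_1)}{\sin(\beta_3-\beta_1)}-\frac{\sin(\alpha_2-\alpha_1)}{\sin(\beta_2-\beta_1)}\Big)\ <\ \cot(\beta_3-\beta_1)-\cot(\beta_2-\beta_1),
\end{equation*}
and the quantity in the large parentheses is exactly $\delta$ as defined in the statement of Lemma~\ref{lem:110}. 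Finally, since $w_1>0$, multiplying both sides by $w_1$ preserves the direction of the inequality and yields $\delta\cdot\sin(x)<w_1\cdot\big[\cot(\beta_3-\beta_1)-\cot(\beta_2-\beta_1)\big]$. Because $w_1$ is still a free positive parameter (it equals $\tau\cos(y)$, and for any $y\in(-\pi/2,\pi/2)$ the value $\tau>0$, hence $w_1>0$, may be chosen arbitrarily), this is precisely the ``there exists $w_1>0$'' form asserted in Observation~\ref{obser:50}.

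The step deserving the most care is not a computation but the bookkeeping of the equivalences: I must make sure that for a fixed $w_1>0$ the map $y\mapsto\tan(y)$ is onto $\bbR$ over $y\in(-\pi/2,\pi/2)$, so that the existence of a $y$ satisfying both \eqref{equ:110} and \eqref{equ:120} really is equivalent to the right-hand side of \eqref{equ:110} being less than that of \eqref{equ:120}; and that the sign facts $\sin(\beta_2-\beta_1)<0$ and $\sin(\beta_3-\beta_1)>0$ (coming from $\beta_1-\beta_2\equiv b_1\in(0,\pi)$ and $\beta_3-\beta_1\equiv b_3\in(0,\pi)$) were applied with the correct orientation when clearing denominators to derive \eqref{equ:110}--\eqref{equ:120}; these were already invoked earlier, so no new difficulty arises here. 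The genuine remaining work lies ahead: the case analysis on the sign of $\cot(\beta_3-\beta_1)-\cot(\beta_2-\beta_1)$ that will extract the admissible range of $x$ (hence of $\theta_s$) from this inequality, mirroring the treatment already carried out for the cases $\cos(x)>0$ and $\cos(x)<0$.
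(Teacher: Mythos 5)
Your proposal is correct and follows essentially the same route as the paper: both split the two fractions in Inequality \eqref{equ:130} into a $\cot(\beta_j-\beta_1)$ term plus a $\sin(x)/w_1$ term, collect to recognize $\delta$, and multiply through by $w_1>0$. The sign bookkeeping ($\sin(\beta_2-\beta_1)<0$, $\sin(\beta_3-\beta_1)>0$, and surjectivity of $\tan$ on $(-\pi/2,\pi/2)$) is handled exactly as in the paper's preceding discussion.
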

\begin{proof}
The left side of Inequality \eqref{equ:130} is equal to
$\cot(\beta_2-\beta_1)-\frac{\sin(\alpha_2-\alpha_1)}{\sin(\beta_2-\beta_1)}\cdot
\frac{\sin(x)}{w_1}$ and the right side is equal to
$\cot(\beta_3-\beta_1)-\frac{\sin(\alpha_3-\alpha_1)}{\sin(\beta_3-\beta_1)}\cdot
\frac{\sin(x)}{w_1}$. Hence,
Inequality \eqref{equ:130} is equivalent to the following
\begin{equation*}
\frac{\sin(x)}{w_1}\cdot
\Big[\frac{\sin(\alpha_3-\alpha_1)}{\sin(\beta_3-\beta_1)}-\frac{\sin(\alpha_2-\alpha_1)}{\sin(\beta_2-\beta_1)}\Big]
< \cot(\beta_3-\beta_1) -\cot(\beta_2-\beta_1).
\end{equation*}

Recall that
$\delta=\frac{\sin(\alpha_3-\alpha_1)}{\sin(\beta_3-\beta_1)}-\frac{\sin(\alpha_2-\alpha_1)}{\sin(\beta_2-\beta_1)}$.
Since $w_1>0$, the observation is obtained.
\end{proof}

We also have the following observation.
\begin{lemma}\label{lem:140}
It holds that $\cot(\beta_3-\beta_1)-\cot(\beta_2-\beta_1) < 0$.
\end{lemma}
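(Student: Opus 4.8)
The plan is to reduce the claimed inequality to an elementary fact about the two angles $b_1$ and $b_3$, which are already known to lie in $(0,\pi)$ and to satisfy $b_1+b_2+b_3=2\pi$ with $b_2\in(0,\pi)$. First I would rewrite both cotangents in terms of $b_1$ and $b_3$. Recall that $b_3\equiv\beta_3-\beta_1$ and $b_1\equiv\beta_1-\beta_2$, so $\beta_2-\beta_1\equiv-b_1$. Since $\cot$ is $\pi$-periodic (hence in particular $2\pi$-periodic), these modular relations pass to genuine equalities $\cot(\beta_3-\beta_1)=\cot(b_3)$ and $\cot(\beta_2-\beta_1)=\cot(-b_1)=-\cot(b_1)$. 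Thus $\cot(\beta_3-\beta_1)-\cot(\beta_2-\beta_1)=\cot(b_1)+\cot(b_3)$, and it suffices to prove $\cot(b_1)+\cot(b_3)<0$.

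Next I would put the two terms over a common denominator, using the sine addition formula in the numerator:
\begin{equation*}
\cot(b_1)+\cot(b_3)=\frac{\cos(b_1)\sin(b_3)+\cos(b_3)\sin(b_1)}{\sin(b_1)\sin(b_3)}=\frac{\sin(b_1+b_3)}{\sin(b_1)\sin(b_3)}.
\end{equation*}
Because $b_1,b_3\in(0,\pi)$, both $\sin(b_1)$ and $\sin(b_3)$ are strictly positive, so the denominator is positive. For the numerator, note that $b_1+b_3=2\pi-b_2$ and, since $b_2\in(0,\pi)$, we have $b_1+b_3\in(\pi,2\pi)$, whence $\sin(b_1+b_3)<0$. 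Therefore $\cot(b_1)+\cot(b_3)<0$, which gives $\cot(\beta_3-\beta_1)-\cot(\beta_2-\beta_1)<0$ and proves the lemma.

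I do not expect a genuine obstacle here; the only place that calls for a word of care is the step that converts the modular identities among the $\beta_i$ into equalities of cotangents, which is legitimate precisely because each $b_i$ was defined as the angle lying in $(0,\pi)$, so that the sign of every $\sin(b_i)$ appearing above is unambiguous and $\cot$ may be evaluated on the canonical representative.
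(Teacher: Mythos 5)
Your proof is correct, but it reaches the conclusion by a different elementary manipulation than the paper. The paper rewrites $\beta_2-\beta_1\equiv b_2+b_3 \pmod{2\pi}$ (using $b_1+b_2+b_3=2\pi$), reduces the claim to $\cot(b_3)<\cot(b_2+b_3)$, and then invokes the $\pi$-periodicity of $\cot$ together with its strict monotonic decrease on $(\pi,2\pi)$: since $\cot(b_3)=\cot(b_3+\pi)$ and $b_2+b_3<b_3+\pi$ with both arguments in $(\pi,2\pi)$, the inequality follows. You instead use $\beta_2-\beta_1\equiv -b_1$ and the oddness of $\cot$ to reduce the claim to $\cot(b_1)+\cot(b_3)<0$, then close it with the sine addition formula and a sign check on $\sin(b_1+b_3)=\sin(2\pi-b_2)<0$ over the positive denominator $\sin(b_1)\sin(b_3)$. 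Both arguments rest on the same two facts ($b_i\in(0,\pi)$ and $b_1+b_2+b_3=2\pi$) and are equally short; yours has the minor advantage of exhibiting the quantity explicitly as $\sin(b_1+b_3)/(\sin(b_1)\sin(b_3))$, which makes the strictness of the inequality and the well-definedness of both cotangents immediate, while the paper's monotonicity argument avoids any algebraic identity. Your closing remark about passing from congruences mod $2\pi$ to equalities of cotangents is the right point to flag, and your justification is sound.
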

\begin{proof}
Note that $\beta_3-\beta_1\equiv b_3$ and $\beta_2-\beta_1\equiv b_3+b_2$.
Hence, to prove the lemma, it is
sufficient to prove $\cot(b_3)<\cot(b_3+b_2)$.

Recall that $b_i$ is in $(0,\pi)$ for each $i=1,2,3$.
Since $b_1+b_2+b_3=2\pi$, it holds that $b_3+b_2\in (\pi,2\pi)$.

%Refer to Fig.~\ref{fig:}.
Note that $\cot(b_3) = \cot(b_3 + \pi )$. Due to $b_3\in (0,\pi)$,
$b_3+\pi\in (\pi,2\pi)$.  Hence, both $b_3+b_2$ and $b_3 +\pi$ are in
$(\pi,2\pi)$. Since $b_2\in (0,\pi)$, $b_3+b_2<b_3+\pi$.
Because $\cot(\cdot)$ is a strictly decreasing function on
$(\pi,2\pi)$, we obtain $\cot(b_3+b_2) > \cot(b_3 + \pi) =
\cot(b_3)$, which proves the lemma.
\end{proof}

Recall that $x$ is either $\pi/2$  or $-\pi/2$.

In light of Lemma \ref{lem:140},
if $\delta>0$, then there exists $w_1>0$ such that the inequality in
Observation \ref{obser:50} is satisfied if and only if $\sin(x) < 0$, i.e.,
$x=-\pi/2$. Therefore, if $\delta>0$, by Observation \ref{obser:50}, $w_i>0$ for
$i=1,2,3$ if and only if $x=-\pi/2$.

Similarly, we can obtain that if $\delta<0$, then $w_i>0$ for
$i=1,2,3$ if and only if $x=\pi/2$.

If $\delta=0$, then for any $x$ and any $w_1>0$, the inequality in Observation
\ref{obser:50} can never be satisfied. Therefore, for any $x$, it is not possible
to have $w_i<0$ for all $i=1,2,3$.

\subsubsection{A summary of all three cases.}

Let $R(x)$ be the set of values for $x$ such that there exist $y$ and
$\tau\geq 0$ with $w_i>0$ for $i=1,2,3$. In other words, since $x=\alpha_1-\theta_t$, $R(s)=\{\alpha_1-\theta_t \ |\  \theta_t\in \piR(s,t)\}$. By our discussions for all
three cases $\cos(x)>0$, $\cos(x)<0$, and $\cos(x)=0$, we can obtain
the following:

If $\delta>0$, we have
\begin{equation*} %\label{equ:70}
R(x)= \begin{cases}
(-\pi/2, \arctan(\frac{\delta_1-\delta_2}{\delta})) & \text{for $\cos(x)>0$},\\
(\arctan(\frac{\delta_1-\delta_2}{\delta})+\pi, 3\pi/2)
=(\arctan(\frac{\delta_1-\delta_2}{\delta})-\pi, -\pi/2) &\text{for $\cos(x)<
0$},\\
\{-\pi/2\} & \text{for $\cos(x)=0$}.
\end{cases}
%(2 \leq i \leq n).
\end{equation*}

%$R=(-\pi/2, \arctan(\frac{\delta_1-\delta_2}{\delta}))$ for the
%case $\cos(x)>0$, $R=(\arctan(\frac{\delta_1-\delta_2}{\delta})+\pi, 3\pi/2)
%=(\arctan(\frac{\delta_1-\delta_2}{\delta})-\pi, -\pi/2)$, for the case
%$\cos(x)<0$, and $R=\{-\pi/2\}$ for the case $\cos(x) = 0$.

Therefore, if $\delta>0$, $R(x)$ is the union of the above three intervals,
which is $(\arctan(\frac{\delta_1-\delta_2}{\delta})-\pi,
\arctan(\frac{\delta_1-\delta_2}{\delta}))$.

Similarly, if $\delta<0$, we have
\begin{equation*} %\label{equ:70}
R(x) = \begin{cases}
(\arctan(\frac{\delta_1-\delta_2}{\delta}),\pi/2) & \text{for $\cos(x)>0$},\\
(\pi/2, \arctan(\frac{\delta_1-\delta_2}{\delta})+\pi) &\text{for $\cos(x)< 0$},\\
\{\pi/2\} & \text{for $\cos(x)=0$}.
\end{cases}
%(2 \leq i \leq n).
\end{equation*}

Therefore, if $\delta<0$, $R(x) = (\arctan(\frac{\delta_1-\delta_2}{\delta}),
\arctan(\frac{\delta_1-\delta_2}{\delta})+\pi)$.

Finally, if $\delta = 0$, depending on the values of $\delta_1$ and
$\delta_2$, there are three cases. If $\delta_1>\delta_2$,
$R(x)=(-\pi/2,\pi/2)$. If $\delta_1<\delta_2$,
$R(x)=(\pi/2,3\pi/2)$. If $\delta_1=\delta_2$, $R(x)=\emptyset$.

As a summary, we obtain the following
\begin{equation*}
R(x) = \begin{cases}
(\arctan(\frac{\delta_1-\delta_2}{\delta})-\pi,
\arctan(\frac{\delta_1-\delta_2}{\delta})) & \text{if $\delta>0$},\\
(\arctan(\frac{\delta_1-\delta_2}{\delta}),
\arctan(\frac{\delta_1-\delta_2}{\delta})+\pi) & \text{if $\delta<0$},\\
(-\pi/2,\pi/2) & \text{if $\delta=0$ and $\delta_1>\delta_2$}, \\
(\pi/2,3\pi/2) & \text{if $\delta=0$ and $\delta_1<\delta_2$}, \\
\emptyset & \text{if $\delta=0$ and $\delta_1=\delta_2$}, \\
\end{cases}
%(2 \leq i \leq n).
\end{equation*}

Since $R(s)=\{\alpha_1-\theta_t \ |\  \theta_t\in \piR(s,t)\}$, we obtain the range $\piR(s,t)$ for $\theta_s$ as follows.
\begin{equation*}
 \piR(s,t)= \begin{cases}
(\alpha_1-\arctan(\frac{\delta_1-\delta_2}{\delta}),\alpha_1
-\arctan(\frac{\delta_1-\delta_2}{\delta})+\pi)
& \text{if $\delta>0$},\\
(\alpha_1-\arctan(\frac{\delta_1-\delta_2}{\delta}) - \pi,\alpha_1
-\arctan(\frac{\delta_1-\delta_2}{\delta})) & \text{if $\delta<0$},\\
(\alpha_1-\pi/2,\alpha_1+\pi/2) & \text{if $\delta=0$ and $\delta_1>\delta_2$}, \\
(\alpha_1-3\pi/2,\alpha_1-\pi/2) & \text{if $\delta=0$ and $\delta_1<\delta_2$}, \\
\emptyset & \text{if $\delta=0$ and $\delta_1=\delta_2$}, \\
\end{cases}
%(2 \leq i \leq n).
\end{equation*}

To prove Lemma \ref{lem:110}, it remains to show that  $\delta=0$ and
$\delta_1=\delta_2$ if and only if $a_i=b_i$ for each $i=1,2,3$. To this end,
we first give the following lemma.

\begin{lemma}\label{lem:150}

If $\delta=0$ and $\delta_1=\delta_2$, then $\delta_1=\delta_2=0$.
\end{lemma}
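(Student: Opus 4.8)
The plan is to unwind the definitions, eliminate the ``$\alpha$-angles'' by means of the Pythagorean identity, and reduce the whole statement to a homogeneous $2\times 2$ linear system whose coefficient matrix I can show is nonsingular. Write $A_i=\alpha_i-\alpha_1$ and $B_i=\beta_i-\beta_1$ for $i=2,3$. The hypothesis $\delta=0$ says that the two ratios $\frac{\sin A_2}{\sin B_2}$ and $\frac{\sin A_3}{\sin B_3}$ are equal; call their common value $c$, so $\sin A_i=c\sin B_i$ for $i=2,3$. The hypothesis $\delta_1=\delta_2$ lets me name that common value $d$, so $\cos A_i=\cos B_i-d\sin B_i$ for \emph{both} $i=2,3$ with the \emph{same} constant $d$ --- this is precisely where $\delta_1=\delta_2$ enters, since otherwise the two relations would carry different constants and the next step would fail. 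Note $\delta_1=\delta_2=d$, so it suffices to prove $d=0$.

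For each $i\in\{2,3\}$ I substitute the two relations above into $\sin^2A_i+\cos^2A_i=1$. Expanding and using $\sin^2B_i+\cos^2B_i=1$, the constant terms cancel and I am left with $\sin B_i\bigl[(c^2+d^2-1)\sin B_i-2d\cos B_i\bigr]=0$. Since $\beta_2-\beta_1\equiv -b_1$ and $\beta_3-\beta_1\equiv b_3$ with $b_1,b_3\in(0,\pi)$ (recall $b_j\in(0,\pi)$ because $t$ lies in the interior of $\triangle v_1v_2v_3$ by Observation~\ref{obser:10}), both $\sin B_2$ and $\sin B_3$ are nonzero, so I may divide them out and obtain $(c^2+d^2-1)\sin B_i-2d\cos B_i=0$ for $i=2,3$. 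Viewing $(c^2+d^2-1,\,2d)$ as an unknown vector, these are two homogeneous linear equations whose determinant is $\cos B_2\sin B_3-\sin B_2\cos B_3=\sin(B_3-B_2)=\sin(\beta_3-\beta_2)$; and $\beta_3-\beta_2\equiv -b_2$ with $b_2\in(0,\pi)$, so this determinant is nonzero. Hence the only solution is $c^2+d^2-1=0$ and $2d=0$, giving $d=0$, i.e. $\delta_1=\delta_2=0$, as claimed.

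I do not expect a real obstacle here; the computation is short once it is set up correctly. The one point that needs care is the bookkeeping of the congruences $\beta_2-\beta_1\equiv -b_1$, $\beta_3-\beta_1\equiv b_3$, $\beta_3-\beta_2\equiv -b_2$ together with $b_1,b_2,b_3\in(0,\pi)$, since it is exactly these facts that make all three relevant sines nonzero and thereby legitimize both the divisions defining $\delta_1,\delta_2$ and the inversion of the $2\times 2$ system. It is also worth observing in passing that the system additionally forces $c^2=1$, i.e. $c=\pm1$, which is the seed of the eventual characterization of the special case $a_i=b_i$ in the remainder of the proof of Lemma~\ref{lem:110}.
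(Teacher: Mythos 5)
Your proof is correct, but it takes a genuinely different route from the paper's. The paper argues indirectly: it exploits the fact that the characterization ``$\piR(s,t)=\emptyset$ iff $\delta=0$ and $\delta_1=\delta_2$'' can be re-derived with the indices cyclically permuted to $(2,3,1)$ and $(3,1,2)$, and then chains the antisymmetry relations $\delta_2(1,2,3)=-\delta_1(3,1,2)$, $\delta_1(2,3,1)=-\delta_2(3,1,2)$, $\delta_1(1,2,3)=-\delta_2(2,3,1)$ to force $\delta_1(1,2,3)=\delta_2(2,3,1)=-\delta_2(2,3,1)$, hence everything is zero. Your argument is instead a short, self-contained algebraic elimination: writing $\sin A_i=c\sin B_i$ and $\cos A_i=\cos B_i-d\sin B_i$ and substituting into $\sin^2A_i+\cos^2A_i=1$ does yield $\sin B_i\bigl[(c^2+d^2-1)\sin B_i-2d\cos B_i\bigr]=0$, the sines $\sin B_2\equiv-\sin b_1$, $\sin B_3\equiv\sin b_3$, and the determinant $\sin(B_3-B_2)\equiv-\sin b_2$ are all nonzero because $b_1,b_2,b_3\in(0,\pi)$, and the nonsingular homogeneous system forces $d=0$. (One small attribution note: in the general canonical-paths setting of Section~\ref{sec:three}, the fact that $t$ is interior to $\triangle v_1v_2v_3$ is property (1) of the canonicity definition rather than Observation~\ref{obser:10} itself, but the fact is available either way.) What each approach buys: yours is shorter, does not depend on re-running the lengthy $\piR$ derivation for the two permuted orderings (and on verifying that the clockwise/counterclockwise conventions survive the permutation), and yields the extra conclusion $c^2=1$, which indeed anticipates the identity $\frac{\sin(\alpha_3-\alpha_1)}{\sin(\beta_3-\beta_1)}=\frac{\sin(\alpha_2-\alpha_1)}{\sin(\beta_2-\beta_1)}=\pm1$ used in Lemma~\ref{lem:160}; the paper's version avoids trigonometric manipulation entirely at the cost of an appeal to symmetry whose justification is only sketched.
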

\begin{proof}
Recall that
$$\delta=\frac{\sin(\alpha_3-\alpha_1)}{\sin(\beta_3-\beta_1)}-
\frac{\sin(\alpha_2-\alpha_1)}{\sin(\beta_2-\beta_1)},$$
$$\delta_1=\frac{\cos(\beta_2-\beta_1)-\cos(\alpha_2-\alpha_1)}{\sin(\beta_2-\beta_1)},
\text{\ and \ }
\delta_2=\frac{\cos(\beta_3-\beta_1)-\cos(\alpha_3-\alpha_1)}{\sin(\beta_3-\beta_1)}.$$

For the purpose of differentiation, we use $\delta(1,2,3)$,
$\delta_1(1,2,3)$, and $\delta_2(1,2,3)$ to represent the above $\delta$,
$\delta_1$, and $\delta_2$, respectively. We use $(1,2,3)$ because our previous
discussion considered $d'_{u_i,v_i}(s,t)$ in the order of $1,2,3$.

Suppose $\delta(1,2,3)=0$ and $\delta_1(1,2,3)=\delta_2(1,2,3)$. In the
following, we give an ``indirect'' approach to prove
$\delta_1(1,2,3)=\delta_2(1,2,3)=0$.

Our previous
discussion has proved that $\piR(s,t)=\emptyset$ if and only $\delta(1,2,3)=0$ and
$\delta_1(1,2,3)=\delta_2(1,2,3)$. In other words, there is no value
for $\theta_s$ such that there exist $\theta_t$ and $\tau\geq 0$ with $w_i>0$
for $i=1,2,3$ if and only if $\delta(1,2,3)=0$ and
$\delta_1(1,2,3)=\delta_2(1,2,3)$.

Our previous analysis considered the three functions $w_i$ in the order of
$1,2,3$. If we consider them in the order of $2,3,1$, by using the same
analysis, we can obtain that $\piR(s,t)=\emptyset$ if and only if  $\delta(2,3,1)=0$ and $\delta_1(2,3,1)=\delta_2(2,3,1)$, where
$$\delta(2,3,1)=\frac{\sin(\alpha_1-\alpha_2)}{\sin(\beta_1-\beta_2)}-
\frac{\sin(\alpha_3-\alpha_2)}{\sin(\beta_3-\beta_2)},$$
$$\delta_1(2,3,1)=\frac{\cos(\beta_3-\beta_2)-\cos(\alpha_3-\alpha_2)}{\sin(\beta_3-\beta_2)},
\text{\ and \ }
\delta_2(2,3,1)=\frac{\cos(\beta_1-\beta_2)-\cos(\alpha_1-\alpha_2)}{\sin(\beta_1-\beta_2)}.$$

Another way to think about this is that we can replace $1,2,3$ by $2,3,1$,
respectively, in $\delta(1,2,3)$,  $\delta_1(1,2,3)$, and $\delta_2(1,2,3)$.
The main reason why the same analysis as before still works is that if we follow
the order of $2,3,1$, the vertices of
$u_2,u_3,u_1$ are still in the counterclockwise order around $s$ and the vertices of
$v_2,v_3,v_1$ are still in the clockwise order around $t$.
We omit the detailed analysis.

Similarly, if we consider the three functions $w_i$ in the order of $3,1,2$,
by using the same analysis,
we can obtain that $\piR(s,t)=\emptyset$ if and only if  $\delta(3,1,2)=0$ and
$\delta_2(3,1,2)=\delta_3(3,1,2)$, where
$$\delta(3,1,2)=\frac{\sin(\alpha_2-\alpha_3)}{\sin(\beta_2-\beta_3)}-
\frac{\sin(\alpha_1-\alpha_3)}{\sin(\beta_1-\beta_3)},$$
$$\delta_1(3,1,2)=\frac{\cos(\beta_1-\beta_3)-\cos(\alpha_1-\alpha_3)}{\sin(\beta_1-\beta_3)},
\text{\ and \ }
\delta_2(3,1,2)=\frac{\cos(\beta_2-\beta_3)-\cos(\alpha_2-\alpha_3)}{\sin(\beta_2-\beta_3)}.$$

Recall that $\delta(1,2,3)=0$ and $\delta_1(1,2,3)=\delta_2(1,2,3)$. Thus,
$\piR(s,t)=\emptyset$. According to the above discussion, $\delta(2,3,1)=0$ and
$\delta_1(2,3,1)=\delta_2(2,3,1)$, and $\delta(3,1,2)=0$ and
$\delta_2(3,1,2)=\delta_3(3,1,2)$.
In the following, we show that
$\delta_1(1,2,3)=\delta_2(1,2,3)=\delta_1(2,3,1)=\delta_2(2,3,1)=\delta_2(3,1,2)
=\delta_3(3,1,2)=0$, which will prove the lemma.

First of all,  $\delta_2(1,2,3)=-\delta_1(3,1,2)$ holds
because $\cos(\beta_3-\beta_1)=\cos(\beta_1-\beta_3)$,
$\cos(\alpha_3-\alpha_1)=\cos(\alpha_1-\alpha_3)$, and
$\sin(\beta_3-\beta_1)=-\sin(\beta_1-\beta_3)$.

Similarly, we can obtain $\delta_1(2,3,1)=-\delta_2(3,1,2)$ and $\delta_1(1,2,3)=-\delta_2(2,3,1)$.

Due to $\delta_2(1,2,3)=-\delta_1(3,1,2)$ and
$\delta_1(2,3,1)=-\delta_2(3,1,2)$,
we derive
$\delta_1(1,2,3)=\delta_2(1,2,3)=-\delta_1(3,1,2)
=-\delta_2(3,1,2)=\delta_1(2,3,1)=\delta_2(2,3,1)$.
In particular, we have
$\delta_1(1,2,3)=\delta_2(2,3,1)$.

%Further, by their definitions, it holds that
%$\delta_1(1,2,3)=-\delta_2(2,3,1)$.

The above has obtained that both $\delta_1(1,2,3)=-\delta_2(2,3,1)$ and
$\delta_1(1,2,3)=\delta_2(2,3,1)$ hold, implying that
$\delta_1(1,2,3)=\delta_2(2,3,1)=0$. Consequently,
$\delta_1(1,2,3)=\delta_2(1,2,3)=\delta_1(2,3,1)=\delta_2(2,3,1)=\delta_2(3,1,2)
=\delta_3(3,1,2)=0$.

The lemma thus follows.
\end{proof}

\begin{lemma}\label{lem:160}
$\delta=0$ and $\delta_1=\delta_2$ if and only if $a_i=b_i$ for each $i=1,2,3$.
\end{lemma}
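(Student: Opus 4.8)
The plan is to rewrite $\delta$, $\delta_1$, $\delta_2$ entirely in terms of the angle sums $a_i$ and $b_i$, reduce the claim to a short trigonometric case analysis, and use Lemma~\ref{lem:150} to dispatch the degenerate branch. First I would record the identities that follow directly from the definitions of the $a_i$ and $b_i$: $\alpha_2-\alpha_1\equiv a_1$, $\alpha_3-\alpha_1\equiv -a_3$, $\beta_2-\beta_1\equiv -b_1$, and $\beta_3-\beta_1\equiv b_3$. Substituting these into the formulas from Lemma~\ref{lem:110} should give the clean forms $\delta=\frac{\sin a_1}{\sin b_1}-\frac{\sin a_3}{\sin b_3}$, $\delta_1=\frac{\cos a_1-\cos b_1}{\sin b_1}$, and $\delta_2=\frac{\cos a_3-\cos b_3}{\sin b_3}$, where every denominator is nonzero because $b_1,b_3\in(0,\pi)$.

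With these in hand, the ``if'' direction is immediate: if $a_i=b_i$ for all $i$, then plainly $\delta_1=\delta_2=0$ and $\delta=\frac{\sin b_1}{\sin b_1}-\frac{\sin b_3}{\sin b_3}=0$. So this direction needs nothing beyond the substitution above (and in fact only uses $a_1=b_1$ and $a_3=b_3$).

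For the ``only if'' direction I would argue as follows. Assuming $\delta=0$ and $\delta_1=\delta_2$, I first apply Lemma~\ref{lem:150} to upgrade this to $\delta_1=\delta_2=0$, i.e. $\cos a_1=\cos b_1$ and $\cos a_3=\cos b_3$. Hence $\sin^2 a_i=\sin^2 b_i$ for $i=1,3$, so each ratio $\sin a_i/\sin b_i$ equals $+1$ or $-1$, and $\delta=0$ forces both ratios to equal one common value $\epsilon\in\{+1,-1\}$. In the case $\epsilon=+1$ I would conclude $\sin a_i=\sin b_i$ together with $\cos a_i=\cos b_i$ for $i=1,3$, hence $a_1=b_1$ and $a_3=b_3$ (these being the unique angles in $[0,2\pi)$ with the prescribed sine and cosine), and then $a_2=2\pi-a_1-a_3=2\pi-b_1-b_3=b_2$ using $a_1+a_2+a_3=2\pi$ and $b_1+b_2+b_3=2\pi$; this is exactly the conclusion of the lemma.

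The step I expect to be the crux — though it remains short — is ruling out $\epsilon=-1$, since this is the only place the geometric range constraints on the angles are needed. In that case $\cos a_1=\cos b_1$ and $\sin a_1=-\sin b_1$ force $a_1\equiv-b_1$, and since $b_1\in(0,\pi)$ and $a_1\in[0,2\pi)$ this gives $a_1=2\pi-b_1$; likewise $a_3=2\pi-b_3$. Then $a_1+a_3=4\pi-b_1-b_3=2\pi+b_2$ by $b_1+b_2+b_3=2\pi$. But $u_1,u_2,u_3$ are counterclockwise around $s$, so $a_1+a_2+a_3=2\pi$ with $a_2\ge 0$, whence $a_1+a_3\le 2\pi$; combined with $a_1+a_3=2\pi+b_2$ this forces $b_2\le 0$, contradicting $b_2\in(0,\pi)$. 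Hence $\epsilon=-1$ is impossible, which completes the proof.
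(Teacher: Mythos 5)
Your proof is correct and follows essentially the same route as the paper's: both invoke Lemma~\ref{lem:150} to get $\delta_1=\delta_2=0$, deduce $\cos a_i=\cos b_i$ for $i=1,3$, and rule out the bad branch ($a_i=2\pi-b_i$) via the same angle-sum contradiction $a_1+a_3>2\pi$ versus $a_1+a_2+a_3\le 2\pi$, before concluding $a_2=b_2$ from the two angle sums. The only blemishes are cosmetic: your displayed $\delta_2$ has the numerator's sign flipped (harmless, since you only use $\delta_2=0$), and organizing the case split by the common sign $\epsilon$ of the sine ratios is just a repackaging of the paper's "assume both $a_1\neq b_1$ and $a_3\neq b_3$" contradiction.
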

\begin{proof}
Recall that $a_1\equiv \alpha_2-\alpha_1$, $a_2\equiv \alpha_3-\alpha_2$,
$a_3\equiv\alpha_1-\alpha_3$, $b_1\equiv\beta_1-\beta_2$, $b_2\equiv\beta_2-\beta_3$,
and $b_3\equiv\beta_3-\beta_1$.
%Note that the above equalities are with respect to modular $2\pi$.
We first prove one direction of the lemma. Suppose $a_i=b_i$ for each $i=1,2,3$. Then,
\begin{equation*}
\begin{split}
\delta=\frac{\sin(\alpha_3-\alpha_1)}{\sin(\beta_3-\beta_1)}-
\frac{\sin(\alpha_2-\alpha_1)}{\sin(\beta_2-\beta_1)} =
\frac{-\sin(a_3)}{\sin(b_3)}- \frac{\sin(a_1)}{-\sin(b_1)} = 0.
\end{split}
\end{equation*}

Also, we have $\cos(\beta_2-\beta_1)=\cos(b_1)$ and
$\cos(\alpha_2-\alpha_1)=\cos(a_1)$. Since $a_1=b_1$, we obtain $\delta_1=0$.
Similarly, we have $\cos(\beta_3-\beta_1)=\cos(b_3)$ and
$\cos(\alpha_3-\alpha_1)=\cos(a_3)$. Since $a_3=b_3$, we obtain $\delta_2=0$.
This proves that $\delta=0$ and $\delta_1=\delta_2$.

Next we prove the other direction of the lemma. Suppose $\delta=0$ and
$\delta_1=\delta_2$. In the following we prove $a_i=b_i$ for $i=1,2,3$.
By Lemma \ref{lem:150}, we know that $\delta_1=\delta_2=0$.

Since $\delta_1=0$, it holds that
$\cos(\beta_2-\beta_1)=\cos(\alpha_2-\alpha_1)$, i.e., $\cos(b_1)=\cos(a_1)$.

Similarly, since $\delta_2=0$, it holds that
$\cos(\beta_3-\beta_1)=\cos(\alpha_3-\alpha_1)$, i.e., $\cos(b_3)=\cos(a_3)$.

On the other hand, $\delta=0$ implies that $\frac{\sin(a_3)}{\sin(b_3)}=
\frac{\sin(a_1)}{\sin(b_1)}$.

Note that $b_1+b_2+b_3=2\pi$. Recall that two or even all three of
$u_1,u_2,u_3$ may be the same. According to our definitions of the
three angles $a_1,a_2,a_3$, if $u_1=u_2=u_3$, then $a_1=a_2=a_3=0$;
otherwise, $a_1+a_2+a_3=2\pi$ although one of the three angles may
still be zero. In either case, it holds that $a_1+a_2+a_3\leq 2\pi$.

In the sequel, we first prove $a_1=b_1$ and $a_3=b_3$.
%below, which will infer $a_3=b_3$ as well.

Note that $b_i\in (0,\pi)$ for each $i=1,2,3$, and $a_i\in [0,2\pi)$
for each $i=1,2,3$.

Assume to the contrary that $b_1\neq a_1$ and $b_3\neq a_3$.
Due to $\cos(b_1)=\cos(a_1)$, it must be that $a_1=2\pi - b_1$. Since
$b_1\in (0,\pi)$, we obtain $a_1\in (\pi,2\pi)$.
Similarly,
due to $\cos(b_3)=\cos(a_3)$, it must be that $a_3=2\pi - b_3$. Since
$b_3\in (0,\pi)$, we obtain $a_3\in (\pi,2\pi)$. This yields that
$a_1+a_3>2\pi$, contradicting with $a_1+a_2+a_3\leq 2\pi$.

Therefore, at least one of $b_1\neq a_1$ and $b_3\neq a_3$ must be true.
Without loss of generality, we assume $b_1=a_1$.

Since $\frac{\sin(a_3)}{\sin(b_3)}= \frac{\sin(a_1)}{\sin(b_1)}$ and
$b_1=a_1$, it is obviously true that $\sin(a_3)=\sin(b_3)$. Recall
that we already have $\cos(a_3)=\cos(b_3)$. Therefore, it must hold that
$a_3=b_3\mod 2\pi$. Since $a_i\in [0,2\pi)$ and $b_3\in (0,\pi)$, we
obtain that $a_3=b_3$.

This proves that $a_1=b_1$ and $a_3=b_3$. Since neither $b_1$ nor
$b_3$ is zero, we obtain $a_1\neq 0$ and $a_3\neq 0$, and thus
$a_1+a_2+a_3=2\pi$. This implies that $a_2=b_2$ since
$b_1+b_2+b_3=2\pi$. The lemma thus follows.
\end{proof}

\section{Computing the Candidate Points}
\label{sec:candidates}

In this section, with the help of the observations in Sections
\ref{sec:obser} and \ref{sec:range}, we compute a set $S$ of
candidate points such that all geodesic centers must be in $S$.
%We also call the candidate points the {\em candidate centers}.

Let $s$ be any geodesic center. Recall that $F(s)$ is the set of all farthest points
of $s$. Depending on whether $s$ is in $\calV$, $E$,
or $\calI$, the size $|F(s)|$, whether
some points of $F(s)$ are in $\calV$, $E$, or $\calI$, whether $s$ has a degenerate farthest point,
there are a significant (but still constant) number of cases.
For each case, our algorithm uses an exhaustive-search approach to compute a set of
candidate points such that $s$ must be in the set.
In particular, there are four cases, called {\em dominating cases}, for which the
number of candidate points is $O(n^{11})$. But
the total number of the candidate points for all other
cases is only $O(n^{10})$.
Therefore, the set $S$ has a total of $O(n^{11})$ candidate points.
We will show that $S$ can be computed in
$O(n^{11}\log n)$ time.

To find the geodesic centers in $S$, a
straightforward algorithm works as follows. For each point $\hats\in S$, we can compute
$d_{\max}(\hats)$ in $O(n\log n)$ time by first computing the shortest path
map $\spm(\hats)$ of $\hats$ in $O(n\log n)$ time \cite{ref:HershbergerAn99} and
then obtaining the maximum geodesic distance from $\hats$ to all vertices of
$\spm(\hats)$.
%Since each farthest point of $s$ is a vertex of $\spm(s)$
%\cite{ref:BaeCo14CCCG}, we can obtain $d_{max}(p)$ by
%checking the geodesic distance from $s$ to each vertex of
%$\spm(p)$, which can be done in additional $O(n)$ time.
%Therefore, we can compute $d_{\max}(s)$ in $O(n\log n)$ time.
Since all geodesic centers are in $S$, the points of $S$ with the smallest
$d_{\max}(\hats)$ are geodesic centers of $\calP$.

Since $|S|=O(n^{11})$, the above algorithm runs in $O(n^{12}\log n)$ time.
Let $S_d$ denote the set of the candidate points for the four dominating cases.
Clearly, the bottleneck is on finding the geodesic centers from $S_d$.
To improve the algorithm, when we compute the candidate points of
$S_d$, we will maintain the corresponding {\em path information}.
By using these path information and based on new observations,
we will present in Section \ref{sec:center}
%that can find all
%geodesic centers from $S$ in $O(n^{11}\log n)$ time. Based on more observations,
an $O(n^{11}\log n)$ time ``pruning algorithm'' that can eliminate most of the points from $S_d$
such that none of the eliminated points is a geodesic center and the number of
remaining points in $S_d$ is only $O(n^{10})$. Consequently, we can use the above
straightforward algorithm to find all geodesic centers in $O(n^{11}\log n)$ time.
%since now we only need to consider $O(n^{10})$ candidate points.

In the rest of this section, we focus on computing the set $S$.
Our algorithm for finding the geodesic centers from $S$ (in
particular, the pruning algorithm)
will be given in Section \ref{sec:center}.

In the following, we adopt the follow convention on notation: $s$
represents a true geodesic center and $\hats$ represents a corresponding candidate
point.

%Let $S_o$ be the set of canidate points for all ordinary cases. Hence, $S=S_d\cup S_o$.
First of all, for the case $s\in \calV$, we consider all polygon vertices of $\calV$ as
candidate points. In the following, we only consider the case $s\in E$ or $s\in
\calI$. If $s$ has a degenerate farthest point, we refer to it as the {\em
degenerate case}; otherwise it is a {\em non-degenerate or general case}.
We will compute candidate points for the general case and the
degenerate case in Sections \ref{sec:general} and \ref{sec:degenerate},
respectively. The four dominating cases are all general cases. We begin with
computing the candidate points for a special case in Section \ref{sec:special}.

\subsection{A Special Case}
\label{sec:special}

Consider a non-degenerate farthest point $t$ of $s$ such that $t$ is
in $E$ or $\calI$. If $t$ is in $E$, then there are exactly two
shortest \st\ paths; we say that $t$ is a {\em special} farthest
point of $s$ if the $\pi$-range $\piR(s,t)$ of $s$ with respect to $t$ and the two
shortest paths is $\emptyset$ (i.e., the special case in Lemma \ref{lem:80}).
Similarly, if $t$ is in $\calI$, then there are
exactly three shortest \st\ paths; we say that $t$ is a {\em
special} farthest point of $s$ if the
$\pi$-range $\piR(s,t)$ of $s$ with respect to $t$ and the three
shortest paths is $\emptyset$ (i.e., the special case in Lemma
\ref{lem:110}).

If $s$ has a special farthest point, we refer to it as the {\em special case}.
Let $s$ be any geodesic center in the special case.
In this subsection, we will compute a set $S_1$ of $O(n^5)$
candidate points in $O(n^5\log n)$ time for the special
case, such that $s$ must be in $S_1$.

Let $t$ be a special farthest point of $s$. By the above definition, $t$ is
in $E$ or $\calI$. We discuss the two cases below.

\subsubsection{The case $t\in E$.}

Let $e$ be the polygon edge that contains $t$. Let $\pi_1(s,t)$ and $\pi_2(s,t)$
be the two shortest \st\ paths. For each $i=1,2$, let $u_i$ and $v_i$ be the
$s$-pivot and $t$-pivot of $\pi_i(s,t)$, respectively (i.e.,
$\pi_i(s,t)=\pi_{u_i,v_i}(s,t)$). Then, we have
$|su_1|+d(u_1,v_1)+|v_1t|=|su_2|+d(u_2,v_2)+|v_2t|$.
We define the angles
$\alpha_1,\alpha_2,\beta_1,\beta_2$ in the same way as those for Lemma \ref{lem:80} in Section
\ref{sec:rangeE}. Since $\piR(s,t)=\emptyset$, by Lemma \ref{lem:80},
 $\beta_1+\beta_2=\pi$ and $\alpha_2-\alpha_1=\pm\pi$.

Note that $\alpha_2-\alpha_1=\pm\pi$ implies that $s$ is on the line segment
$\overline{u_1u_2}$. $\beta_1+\beta_2=\pi$ implies that $l_t$ bisects the angle
$\angle v_1tv_2$, where $l_t$ is the line through $t$ and perpendicular to $e$.

In summary, $s$ and $t$ satisfy the following ``constraints'': (1) $t\in e$; (2)
$|su_1|+d(u_1,v_1)+|v_1t|=|su_2|+d(u_2,v_2)+|v_2t|$; (3) $s\in
\overline{u_1u_2}$; (4) $l_t$ bisects the angle $\angle v_1tv_2$.

If we consider the coordinates of $s$ and $t$ as four variables, the above
four (independent) constraints can determine $s$ and
$t$. An easy observation is that $s$ must be in a bisector edge of
two cells of $\spm(t)$ whose roots are $u_1$ and $u_2$, respectively (in fact
$s$ is the intersection of the bisector edge and $\overline{u_1u_2}$).
Correspondingly, we compute the candidate points in an exhaustive way as
follows.

We enumerate all possible combinations of a polygon edge as $e$ and two
polygon vertices as $v_1$ and $v_2$. For each combination, we find a
point $t$ on $e$ such that the above condition (4) can be
satisfied. %If no such $t$ exists, then we discard this combination.
%Otherwise, we check whether $t$ is visible to $v_1$ and $v_2$, which
%can be done in $O(\log n)$ time by a ray-shooting data structure
%\cite{ref:PocchiolaGr90} with $O(n^2)$ time and space preprocessing on $\calP$.
%In the following, we assume such processing has been done. If
%$t$ is not visible to either $v_1$ or $v_2$, we discard this
%combination.
Next we compute the shortest path map $\spm(t)$ of $t$.
%We overlay the two shortest path maps of $v_1$
%and $v_2$, which can be done in $O(n^2\log n)$ time.
%The size of the overlay is $O(n^2)$. For each cell $C$ of the overlay, we obtain two
%roots of the shortest path maps and consider them as $u_1$ and $u_2$.
For each bisector edge of $\spm(t)$, we consider it as $e$ and let $u_1$
and $u_2$ be the roots of the two cells of $\spm(t)$ incident to $e$;
we report the intersection $e\cap \overline{u_1u_2}$ (if any) as a
candidate point $\hats$.
In this way, we can compute at most $O(n)$ candidate points on
$\spm(t)$ since the combinatorial complexity of $\spm(t)$ is $O(n)$. Thus,
for each combination of $e$, $v_1$, and $v_2$, we can compute $O(n)$ candidate
points.  Since there are $O(n^3)$ combinations, we can compute
$O(n^4)$ candidate points and we add these points to $S_1$.
By our above discussions, the geodesic center $s$ must be in $S_1$.

Since computing a shortest path map takes $O(n\log n)$ time
\cite{ref:HershbergerAn99}, the
running time of the above algorithm is bounded by $O(n^4\log n)$.

\subsubsection{The case $t\in \calI$.}
In this case, there are exactly three shortest \st\ paths:
$\pi_i(s,t)=\pi_{u_i,v_i}(s,t)$ with $i=1,2,3$. Then we have
$|su_1|+d(u_1,v_1)+|v_1t|=|su_2|+d(u_2,v_2)+|v_2t|=|su_3|+d(u_3,v_3)+|v_3t|$.
We define the angles $a_i$, $b_i$, for $i=1,2,3$, in the same way as those for Lemma \ref{lem:110}
in Section \ref{sec:three}.  By Lemma \ref{lem:110}, $a_i=b_i$ for $i=1,2,3$.

If we consider the coordinates of $s$ and $t$ as four
variables, the above equation on the lengths of the three shortest
paths provide two constraints and the identities of the three pairs of angles
provide other two (independent) constraints, and thus the total four
(independent) constraints
can determine $s$ and $t$. Correspondingly, we compute
the candidate points for $s$ as follows.

We enumerate all possible combinations of three polygon vertices as
$v_1,v_2,v_3$. We compute the shortest path maps of $v_1$, $v_2$, and
$v_3$ in $O(n\log n)$ time.
Next we compute the overlay of the three shortest path maps. The overlay
is of size $O(n^2)$ and can compute in $O(n^2\log n)$ time \cite{ref:BaeTh13,ref:ChiangTw99}.
Then, for each cell of the overlay, we obtain the three
roots of the cell in the three shortest path maps and consider them as
$u_1,u_2,u_2$. Finally, we use
the above four constraints to determine a constant number of
pairs $(\hats,t)$ (we assume this can be done in constant time
since the angles $a_i$ and $b_i$ can be parameterized by the
coordinates of $\hats$ and $t$),
and we add each $\hats$ to $S_1$ as a candidate point.
In this way, for each combination of $v_1,v_2,v_3$, we can compute $O(n^2)$ candidate
points in $O(n^2\log n)$ time.
Since there are $O(n^3)$ combinations, we can compute $O(n^5)$
candidate points in $O(n^5\log n)$ time.

\subsection{The General Case}
\label{sec:general}

Recall that in the general case $s$ does not have a degenerate farthest point.
Depending on whether $s$ is in $\calI$ or in $E$, there are two main cases. We first
consider the case $s\in \calI$.

\subsubsection{The case $s\in \calI$.}
%We first compute the candidate centers for the case $s\in \calI$.
Depending on whether $|F(s)|$ is $1$, $2$, or at least $3$, there are further three cases.

\paragraph{The case $|F(s)|=1$.}
If $|F(s)|=1$, let $t$ be the only farthest point of $s$. Thus, $R(s)=R(s,t)$. By Lemma \ref{lem:10},
$R(s)=\emptyset$. Hence, $R(s,t)=\emptyset$. Since $s\in \calI$, by our
observations in Section \ref{sec:range} (in particular, Lemmas \ref{lem:60},
\ref{lem:70}, \ref{lem:80}, \ref{lem:90}, \ref{lem:110}, \ref{lem:120}), $t$ is
either in $E$ or $\calI$, and in either case $\piR(s,t)=\emptyset$ since $s\in \calI$ and $R_f(s)$ consists of all directions. Hence, $t$
must be a special farthest point of $s$, which implies that $s$ has
been computed  in the candidate
point set $S_1$.

\paragraph{The case $|F(s)|=2$.}
If $|F(s)|=2$, we assume that $F(s)$ does not have any special farthest points since
otherwise $s$ would have already been computed in $S_1$. Let
$F(s)=\{t_1,t_2\}$. Depending on whether each of $t_1$ and $t_2$ is in $\calV$, $E$, or
$\calI$, there are several cases. In the following, we use $(x,y,z)$ to
refer to the case where $x$, $y$, $z$ points of $F(s)$ are in $\calI$, $E$, and
$\calV$, respectively, with $x+y+z=2$. For example, $(1,1,0)$ refers
to the
case where one point of $F(x)$ is in $\calI$ and the other is in $E$.

\begin{enumerate}
\item
\noindent {\em Case $(2,0,0)$.}
We first consider the most general case where $t_i\in \calI$ for $i=1,2$. Other
cases are very similar. For each $i=1,2$, since $t_i\in \calI$,
$t_i$ has three shortest paths to $s$:
$\pi_{u_{ij},v_{ij}}(s,t_i)$ with $j=1,2,3$. Hence, we have the
following
\begin{equation*}
\begin{split}
& |{t_1v_{11}}|+d(v_{11},u_{11})+|{u_{11}s}|
=|{t_1v_{12}}|+d(v_{12},u_{12})+|{u_{12}s}|
= |{t_1v_{13}}|+d(v_{13},u_{13})+|{u_{13}s}|
\\
= & |{t_2v_{21}}|+d(v_{21},u_{21})+|{u_{21}s}|
=|{t_2v_{22}}|+d(v_{22},u_{22})+|{u_{22}s}|
=|{t_2v_{23}}|+d(v_{23},u_{23})+|{u_{23}s}|.
\end{split}
\end{equation*}

Further, since $s$ has only two farthest points and $s\in \calI$, it must hold
that $R(s)=\piR(s,t_1)\cap \piR(s,t_2)=\emptyset$ by Lemma
\ref{lem:10}. Recall that we have assumed that $F(s)$
does not have any special farthest points. Therefore, neither $\piR(s,t_1)$ nor
$\piR(s,t_2)$ is $\emptyset$. By Lemma \ref{lem:110}, each of $\piR(s,t_1)$ and
$\piR(s,t_2)$ is an open range of size $\pi$. For each $i=1,2$, the directions of
$\piR(s,t_i)$ are delimited by an open half-plane
whose bounding line contains $s$, and we
refer to the bounding line as the {\em bounding line} of the range
$\piR(s,t_i)$. Since $\piR(s,t_1)\cap \piR(s,t_2)=\emptyset$, we can obtain that
the bounding line of $\piR(s,t_1)$ is the same as that of $\piR(s,t_2)$.

If we consider the coordinates of $s$, $t_1$, and $t_2$ as six variables, the
above system of equations on the distances of the six shortest paths provide
five constraints
and that the bounding lines of the two $\pi$-ranges overlap provides another
constraint (note that the bounding lines of
the two $\pi$-range can be parameterized by the coordinates of $s$, $t_1$, and $t_2$).
Hence, the six  (independent) constraints can determine the triple
$(s,t_1,t_2)$. Correspondingly, our algorithm for computing the candidate
points works as follows.

We enumerate all possible combinations of six polygon vertices as
$v_{ij}$ for $i=1,2$ and $j=1,2,3$.
%$v_{11},v_{12},v_{13},v_{21},v_{22},v_{23}$.
We compute overlay of the shortest path maps of the six vertices.
For each cell of the overlay, we obtain the six roots
in the six shortest path maps as
$u_{ij}$ for $i=1,2$ and $j=1,2,3$.
%$u_{11},u_{12},u_{13},u_{21},u_{22},u_{23}$.
Using the six constraints as discussed above, we can obtain a constant number of
triples $(\hats,t_1,t_2)$, and each such $\hats$ is a candidate
point.

In this way, for each combination, we can compute $O(n^2)$ candidate points
in $O(n^2\log n)$ time. Since there are $O(n^6)$ combinations,
we can compute $O(n^8)$ candidate points in overall $O(n^8\log n)$ time.
As analyzed above, $s$ must be one of these candidate points.

\item
\noindent {\em Case $(1,1,0)$.}
In this case, one of $t_1$ and $t_2$ is in $\calI$ and the other is in $E$. Without loss of generality, we assume $t_1\in \calI$ and $t_2\in E$. Hence, $t_1$  has three shortest paths to $s$: $\pi_{u_{1j},v_{1j}}(s,t_1)$ for $j=1,2,3$, and $t_2$ has two shortest paths to $s$: $\pi_{u_{2j},v_{2j}}(s,t_2)$ for $j=1,2$. Hence, we have the following
\begin{equation*}
\begin{split}
& |{t_1v_{11}}|+d(v_{11},u_{11})+|{u_{11}s}|
=|{t_1v_{12}}|+d(v_{12},u_{12})+|{u_{12}s}|
= |{t_1v_{13}}|+d(v_{13},u_{13})+|{u_{13}s}|
\\
= & |{t_2v_{21}}|+d(v_{21},u_{21})+|{u_{21}s}|
=|{t_2v_{22}}|+d(v_{22},u_{22})+|{u_{22}s}|.
\end{split}
\end{equation*}

As in the previous case, $\piR(s,t_1)\cap \piR(s,t_2)=\emptyset$ and
each of the two $\pi$-ranges is nonempty; further,
and the two bounding lines of these two $\pi$-ranges must overlap.

Let $e$ be the polygon edge that contains $t$.

The above equations provide four constraints and the overlap of the two bounding
lines of the two $\pi$-ranges provides another constraint. In
addition, $t_2\in e$ gives the sixth constraint. As in the previous
case, the six
constraints can determine $s$, $t_1$, and $t_2$ if we consider their coordinates
as six variables. Correspondingly, we compute candidate points
as follows.

We enumerate all possible combinations of a polygon edge as $e$ and
five polygon vertices as $v_{11}, v_{12}, v_{13}, v_{21}, v_{22}$. We
compute the overlay of  the shortest path maps of the five vertices.
For each cell of the overlay, we
obtain the five roots in the five shortest path maps as $u_{11},
u_{12}, u_{13}, u_{21}, u_{22}$. Then we obtain the equations as above.
Along with the constraint that $t_2\in e$ and the constraint that the two
bounding lines of the $\pi$-ranges $\piR(\hats,t_1)$ and $\piR(\hats,t_2)$
overlap, we can obtain a constant number of triples
$(\hats,t_1,t_2)$, and each such $\hats$ is a candidate
point. In this way, for each combination, we can
compute $O(n^2)$ candidate points in $O(n^2\log n)$ time.
Since there are $O(n^8)$ combinations, we can compute $O(n^8)$
candidate points for $s$ in $O(n^8\log n)$ time.
%As analyzed above, $s$ must be one of these candidate points.

\item
\noindent {\em Other Cases.}
The other cases are all very similar (i.e., $(1,0,1)$, $(0,2,0)$, $(0,1,1)$, $(0,0,2)$), and we can compute a total of
$O(n^8)$ candidate points in $O(n^8\log n)$ time. We omit the details.
\end{enumerate}

As a summary, for the case $|F(s)|=2$, we can compute at most $O(n^8)$ candidate
centers in $O(n^8\log n)$ time.

\paragraph{The case $|F(s)|\geq 3$.}
In this case, the geodesic center $s$ has at least three farthest points.
We assume $s$ does not have any special farthest points since
otherwise $s$ would have already been computed in $S_1$.
Hence, for each $t\in
F(s)$, the $\pi$-range $\piR(s,t)$ is not empty and thus is an open
range of size $\pi$. Together with Lemma \ref{lem:10},
this further implies that $s$ must have three farthest points $t_1$,
$t_2$, $t_3$ such that $\piR(s,t_1)\cap \piR(s,t_2)\cap \piR(s,t_3)=\emptyset$.
Depending on whether each of $t_i$ for $i=1,2,3$ is in $\calI$, $E$,
$\calV$, there are several cases.
Similarly, we use $(x,y,z)$ to refer to the case where $x$, $y$, and $z$ points of
$t_1,t_2,t_3$ are in $\calI,E$, and $\calV$, respectively, with $x+y+z=3$.

We begin our discussion with the most general case $(3,0,0)$, where
$t_1,t_2,t_3$ are all in $\calI$. This is one of the four dominating cases and
we will need to compute $O(n^{11})$ candidate points.
Further, our algorithm is slightly different from the previous
cases in the following sense: First, in addition to the candidate points, we
will also maintain the corresponding {\em path information}; second, when
computing the candidate points, we will have a ``validation procedure''.
Recall that $S_d$ is the set of candidate points for all four dominating cases.
The path information will be used in the next section to quickly prune most of
the points in $S_d$ and the validation procedure can be considered as a
``preliminary pruning step''. The details are given below.

For each $i=1,2,3$, since $t_i$ is in $\calI$, there are exactly three shortest paths
from $s$ to $t_i$:
$\pi_{u_{ij},v_{ij}}(s,t)$ with $j=1,2,3$. Hence, we have the following
\begin{equation*}
\begin{split}
& |{t_1v_{11}}|+d(v_{11},u_{11})+|{u_{11}s}|
=|{t_1v_{12}}|+d(v_{12},u_{12})+|{u_{12}s}|
= |{t_1v_{13}}|+d(v_{13},u_{13})+|{u_{13}s}|
\\
= & |{t_2v_{21}}|+d(v_{21},u_{21})+|{u_{21}s}|
=|{t_2v_{22}}|+d(v_{22},u_{22})+|{u_{22}s}|
=|{t_2v_{23}}|+d(v_{23},u_{23})+|{u_{23}s}|\\
= &|{t_3v_{31}}|+d(v_{31},u_{31})+|{u_{31}s}|
=|{t_3v_{32}}|+d(v_{32},u_{32})+|{u_{32}s}|=
|{t_3v_{33}}|+d(v_{33},u_{33})+|{u_{33}s}|.\\
\end{split}
\end{equation*}

If we consider the coordinates of $s, t_1, t_2$, and $t_3$ as eight variables,
the above equations on the lengths of nine paths provide eight
constraints, which are sufficient to determine all four points.
%a constant number of quadruples of points $(\hats',t_1',t_2',t_3')$, and one of them is
%$(s,t_1,t_2,t_3)$.
Correspondingly, we compute the candidate points as follows.
%Comparing with the algorithms for the previous cases, there is a validation process and we
%also need to keep certain path information.

We enumerate all possible combinations of nine polygon vertices as
$v_{i1}, v_{i2},v_{i3}$, with $i=1,2,3$.
%such that $|\{v_{i1},v_{i2},v_{i3}\}|= 3$ for each $i=1,2,3$.
For each combination, we compute the overlay of the shortest
path maps of these nine vertices. The size of the overlay is
$O(n^2)$. For each cell $C$ of the overlay, we obtain nine roots of the
shortest path maps and consider them as
$u_{i1},u_{i2},u_{i3}$ for $i=1,2,3$.
%If $|\{u_{ij}: 1\leq i\leq 3, 1\leq j\leq
%3\}|<3$, then we ignore this cell $C$ and proceed on the next cell. Otherwise,
We form the above system of eight equations and
solve it to obtain a constant number of quadruples of points
$(\hats,\hatt_1,\hatt_2,\hatt_3)$.
We also obtain a path length, denoted by $d(\hats)$, which is equal
to the value in the above equations, e.g.,
$d(\hats)=|\hats u_{11}|+d(u_{11},v_{11})+|v_{11}\hatt_1|$.
In addition, we perform a validation procedure on each such quadruple
$(\hats,\hatt_1,\hatt_2,\hatt_3)$ as follows.

First, we check whether $\hats$ is in $C$, which can be done in
$O(\log n)$ time by using a point location data structure
\cite{ref:EdelsbrunnerOp86,ref:KirkpatrickOp83} with $O(n^2)$ time and
space preprocessing on the overlay.
%If yes, for each $1\leq i\leq 3$ and $1\leq j\leq 3$,
%we check whether $\hats$ is visible to $u_{ij}$ and whether $\hatt_i$ is
%visible to $v_{ij}$, which can be done in $O(\log n)$ time by
%a ray-shooting data structure \cite{ref:PocchiolaGr90} with $O(n^2)$ time and space
%preprocessing on $\calP$.
If yes, for each $t_i$ with $i=1,2,3$, we check whether
$d(\hats)$ is equal to $d(\hats,\hatt_i)$, which can be
computed in $O(\log n)$ time by using the two-point shortest path query data
structure given by Chiang and Mitchell \cite{ref:ChiangTw99}
with $O(n^{11})$ time and space preprocessing on $\calP$.
If yes, we check whether $v_{i1},v_{i2},v_{i3}$ satisfy the condition in Observation \ref{obser:10}(1), i.e.,
whether $\hatt_i$ is in the interior of the triangle $\triangle
v_{i1}v_{i2}v_{i3}$ for each $i=1,2,3$.
%If $t_i$ is not in the interior of $\triangle
%v_{i1}v_{i2}v_{i3}$ for any $i=1,2,3$, then we ignore $s$.
%If $d(s)$ is not equal
%to $d(s,t_i)$ for any $1\leq i\leq 3$, then $s$ is ignored.
If yes, for each $\hatt_i$ with $i=1,2,3$, we check whether the order
of the vertices of $v_{i1},v_{i2},v_{i3}$ around $\hatt_i$ are consistent
with the order of the vertices of  $u_{i1},u_{i2},u_{i3}$ (we say that the two
orders are {\em consistent} if after reordering the indices,
$v_{i1},v_{i2},v_{i3}$ are clockwise around $\hatt_i$ while
$u_{i1},u_{i2},u_{i3}$ are counterclockwise around $s$; note that this consistency is needed for determining the $\pi$-range in
Lemma \ref{lem:110}).
%if not, then we ignore the quadruple.
If yes, for each $\hatt_i$ with $i=1,2,3$,
we compute the $\pi$-range $\piR(s,\hatt_i)$ determined by
Lemma \ref{lem:110}, and then check whether $\piR(\hats,\hatt_1)\cap
\piR(\hats,\hatt_2)\cap \piR(\hats,\hatt_3)$ is empty.
%if $R_i$ is empty, then we ignore the quadruple.
%If yes, we check whether $R_1\cap R_2\cap R_3$ is empty.
%If not, then we ignore the quadruple.
If yes, we say that the quadruple $(\hats,\hatt_1,\hatt_2,\hatt_3)$ {\em passes}
the validation procedure and we call $\hats$ a {\em valid} candidate point and
add $\hats$ to the set $S_d$.
In addition, we maintain the following {\em path information}:
$d(\hats)$, $\hatt_i$, $v_{ij}$, and $u_{ij}$, with $1\leq i\leq 3$
and $1\leq j\leq 3$. In fact, only $d(\hats)$ will be used later in
the algorithm and all other information are only for the reference
purpose in the analysis.

In this way, for each combination of nine polygon vertices, we can
compute $O(n^2)$ valid candidate for $S_d$ in $O(n^2\log n)$
time (not including the preprocessing time). Since there are $O(n^9)$
combinations, we can compute a total of $O(n^{11})$ valid candidate
points for $S_d$ in $O(n^{11}\log n)$ time.

Note that the geodesic center $s$ and the
quadruple $(s,t_1,t_2,t_3)$ discussed above must pass the validation procedure,
and thus the quadruple $(s,t_1,t_2,t_3)$ will be computed by our
exhaustive-search algorithm and
$s$ will be computed as a valid candidate point in $S_d$.

Based on our validation procedure, the following observation
summarizes the
properties of the valid candidate points. These properties will be
used to prove the correctness of our pruning algorithm in Section
\ref{sec:center}.

\begin{observation}\label{obser:60}
Suppose $(\hats,\hatt_1,\hatt_2,\hatt_3)$ is a quadruple that passes the
validation procedure, with $u_{ij}$ and $v_{ij}$, $i=1,2,3$ and
$j=1,2,3$ defined as above. Then the following hold.
\begin{enumerate}
\item
For each $i=1,2,3$, $\overline{\hats u_{ij}}\cup \pi(u_{ij},v_{ij})\cup
\overline{v_{ij}\hatt_i}$ is a shortest path from $\hats$ to $\hatt_i$ for
each $j=1,2,3$.
\item
For each $i=1,2,3$, %the vertices of $v_{i1},v_{i2},v_{i3}$ satisfy the
%condition of Observation \ref{obser:10} with respect to $t_i$.
$v_{i1},v_{i2},v_{i3}$ satisfy the condition of Observation
\ref{obser:10}(1), i.e.,
$\hatt_i$ is in the interior of the triangle $\triangle v_{i1}v_{i2}v_{i3}$.
\item
$d(\hats)=d(\hats,\hatt_i)$ for each $i=1,2,3$.
%\item
%The order of $v_{i1},v_{i2},v_{i3}$ around $t_i$ are consistent with
%the order of $u_{i1},u_{i2},u_{i3}$ around $s$, for each $i=1,2,3$.
\item
$\piR(\hats,\hatt_1) \cap \piR(\hats,\hatt_2) \cap
\piR(\hats,\hatt_1) = \emptyset$.
\end{enumerate}
\end{observation}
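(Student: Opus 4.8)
The plan is to observe that Observation~\ref{obser:60} is essentially a verbatim record of the checks carried out by the validation procedure, so the proof reduces to matching each of the four conclusions with the corresponding test (or with an immediate consequence of it). Recall that a quadruple $(\hats,\hatt_1,\hatt_2,\hatt_3)$, together with the value $d(\hats)$, is declared to pass the validation only after all of the following have been confirmed, in order: $\hats\in C$; $d(\hats)=d(\hats,\hatt_i)$ for $i=1,2,3$; $\hatt_i$ lies in the interior of $\triangle v_{i1}v_{i2}v_{i3}$ for $i=1,2,3$; the cyclic order of $v_{i1},v_{i2},v_{i3}$ around $\hatt_i$ is consistent with that of $u_{i1},u_{i2},u_{i3}$ around $\hats$ for $i=1,2,3$; and $\piR(\hats,\hatt_1)\cap\piR(\hats,\hatt_2)\cap\piR(\hats,\hatt_3)=\emptyset$. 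Properties~(2) and~(3) are then literally two of these tests and need no further argument, so the real work sits in~(1) and~(4).

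For property~(1) I would argue as follows. Since the quadruple and $d(\hats)$ were produced by solving the stated system of eight equations, we have $d(\hats)=|\hats u_{ij}|+d(u_{ij},v_{ij})+|v_{ij}\hatt_i|$ for all $i,j$. Because $\hats\in C$, for each $i,j$ the point $\hats$ lies in the cell of $\spm(v_{ij})$ whose root is $u_{ij}$; hence $\overline{\hats u_{ij}}$ is an edge of that shortest path map lying in $\calP$ with no polygon vertex in its relative interior, $\pi(u_{ij},v_{ij})$ is a geodesic in $\calP$, and $\overline{\hats u_{ij}}\cup\pi(u_{ij},v_{ij})$ is a shortest path from $\hats$ to $v_{ij}$, so that $d(\hats,v_{ij})=|\hats u_{ij}|+d(u_{ij},v_{ij})$. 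Combining this with the above equation and with property~(3), the concatenation $P_{ij}:=\overline{\hats u_{ij}}\cup\pi(u_{ij},v_{ij})\cup\overline{v_{ij}\hatt_i}$ has length $d(\hats,v_{ij})+|v_{ij}\hatt_i|=d(\hats)=d(\hats,\hatt_i)$; since $P_{ij}$ lies in $\calP$, it is therefore a shortest $\hats$-to-$\hatt_i$ path, which is exactly~(1). The one ingredient that is not simply read off a check is the inclusion $\overline{v_{ij}\hatt_i}\subseteq\calP$; I would obtain it either by folding an explicit visibility test of $v_{ij}$ from $\hatt_i$ into the validation procedure, or by deducing it from the two-point-query evaluation of $d(\hats,\hatt_i)$ (which already presupposes $\hatt_i\in\calP$) together with the fact that $\hatt_i$ lies in the triangle of its three candidate $t$-pivots.

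For property~(4) I would use properties~(1) and~(2) and the order-consistency test together: by~(1) each $P_{ij}$ is a shortest $\hats$--$\hatt_i$ path, by~(2) $\hatt_i$ is interior to $\triangle v_{i1}v_{i2}v_{i3}$, and by the consistency check the orders of the $v$'s and the $u$'s match; hence for each $i$ these three paths are \emph{canonical} with respect to $\hats$ and $\hatt_i$ in the sense of Section~\ref{sec:three}, so Lemma~\ref{lem:110} applies and $\piR(\hats,\hatt_i)$ is exactly the range the algorithm computes. The emptiness of $\piR(\hats,\hatt_1)\cap\piR(\hats,\hatt_2)\cap\piR(\hats,\hatt_3)$ is then precisely the final test, giving~(4).

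The main obstacle is the small gap in property~(1): everything except $\overline{v_{ij}\hatt_i}\subseteq\calP$ follows by quoting the validation checks, but that inclusion — equivalently, that $v_{ij}$ is a genuine $t$-pivot of $\hatt_i$, rather than merely a polygon vertex for which the equality $d(\hats,v_{ij})+|v_{ij}\hatt_i|=d(\hats,\hatt_i)$ happens to hold along a chord that cuts through a hole — is the single point that must be argued from the interplay of the equation system, the overlay cell $C$, and the two-point distance check, rather than simply asserted. I expect handling this cleanly, most likely by adding an explicit visibility test to the validation procedure, to be the only nontrivial step of the proof.
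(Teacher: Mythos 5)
Your proposal is correct and follows the same route as the paper, which offers no explicit proof and simply asserts the observation as a record of the validation checks: properties (2)--(4) are literal restatements of three of those checks, and property (1) follows, as you argue, from the equation system together with $\hats\in C$ identifying $u_{ij}$ as the root of the cell of $\spm(v_{ij})$ containing $\hats$, so that $\overline{\hats u_{ij}}\cup\pi(u_{ij},v_{ij})$ realizes $d(\hats,v_{ij})$ and the concatenated path has length exactly $d(\hats,\hatt_i)$. The one subtlety you flag --- that the checks do not by themselves guarantee $\overline{v_{ij}\hatt_i}\subseteq\calP$, so the concatenation could in principle fail to be a path in $\calP$ even though its length matches $d(\hats,\hatt_i)$ --- is genuine and is equally left implicit in the paper; folding a constant-time containment/visibility test into the validation procedure, as you suggest, closes it without affecting the asymptotic running time.
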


%The path information of the points of $S_3$, as long as the
% properties in Observation \ref{obser:60}, will be quite useful for
% our algorithm in Section \ref{sec:center}.

%Since $s$ have three farthest points $t_1,t_2,t_3$ in general
%positions with $R_s(t_1)\cap R_s(t_2)\cap R_s(t_3)=\emptyset$, $s$
%must be in $S^*$.

The above computes the candidate points for the case $(3,0,0)$. In
the following, we compute candidate points for other cases. The
algorithms are similar.

\begin{enumerate}
\item
\noindent{\em Case $(2,1,0)$.} In this case, two of $t_1,t_2,t_3$ are
in $\calI$ and the third one is in $E$. Without loss of generality, we
assume $t_1$ and $t_2$ are in $\calI$ and $t_3$ is in $E$.
Let $e$ be the polygon edge containing $t_3$.
This is the second dominating case. We will also perform a
validation procedure and maintain the corresponding path information.
%in addition to computing the candidate
%points, we also need a validation procedure and keep the path information as
%the previous case. We briefly discuss it below.

For each $i=1,2$, since $t_i$ is in $\calI$, there are three shortest
paths from $s$ to $t_i$: $\pi_{u_{ij},v_{ij}}(s,t)$ for $j=1,2,3$. Since $t_3\in E$,
there are two shortest paths from $s$ to $t_3$:
$\pi_{u_{3j},v_{3j}}(s,t)$ for $j=1,2$.
Hence, we have the following
\begin{equation*}
\begin{split}
& |{t_1v_{11}}|+d(v_{11},u_{11})+|{u_{11}s}|
=|{t_1v_{12}}|+d(v_{12},u_{12})+|{u_{12}s}|
= |{t_1v_{13}}|+d(v_{13},u_{13})+|{u_{13}s}|
\\
= & |{t_2v_{21}}|+d(v_{21},u_{21})+|{u_{21}s}|
=|{t_2v_{22}}|+d(v_{22},u_{22})+|{u_{22}s}|
=|{t_2v_{23}}|+d(v_{23},u_{23})+|{u_{23}s}|\\
= &|{t_3v_{31}}|+d(v_{31},u_{31})+|{u_{31}s}|
=|{t_3v_{32}}|+d(v_{32},u_{32})+|{u_{32}s}|.\\
\end{split}
\end{equation*}

If we consider the coordinates of $s,t_1,t_2,t_3$ as eight variables,
the above equations give seven constraints. With the additional
constraint that $t_3$ is on $e$, we can determine $(s,t_1,t_2,t_3)$.
Correspondingly, we compute the candidate points as follows.

%Using the similar approaches,
%we can form a system of equations and the number of equations in this case is
%seven instead of eight because there is no vertex $v_{33}$ for $t_3$. Hence, in
%order to determine the four points $(s,t_1,t_2,t_3)$, we need one more constraint.
%Observe that $t_3$ is on an obstacle edge $e$ of $E$, which
%can be used as a constraint (it removes one degree of freedom for $t_3$).

We enumerate all combinations of eight polygon vertices (as
$v_{11},v_{12},v_{13},v_{21},v_{22},v_{23},v_{31},v_{32}$) and one polygon
edge (as $e$). For each combination, we can compute $O(n^2)$
quadruples $(\hats,\hatt_1,\hatt_2,\hatt_3)$.
For each such quadruple, we also perform the validation procedure
similarly as before with the following differences.
For $\hatt_3$, since there is no vertex $v_{33}$, we ignore all operations that involve $v_{33}$. Further, we check whether
$\{v_{11},v_{12}\}$ satisfy the condition in Observation
\ref{obser:10}(2) (instead of Observation \ref{obser:10}(1)).
We determine the range $\piR(s,t_3)$ by using Lemma \ref{lem:80} (instead of Lemma \ref{lem:110}).

In this way, we can compute at most $O(n^{11})$ valid
candidate points in $O(n^{11}\log n)$ time and add
them to $S_d$, and also, we maintain the corresponding path
information similarly as before. Observation \ref{obser:60} still holds
correspondingly with the following difference: in (1) there are only
two shortest paths for $\hatt_3$; in (2) for $\hatt_3$, $\{v_{11},v_{12}\}$ satisfy the condition of Observation \ref{obser:10}(2).

\item
\noindent{\em Cases $(1,2,0)$ and $(0,3,0)$.}
These are the other two dominating cases. We can compute $O(n^{11})$
valid candidate points in $O(n^{11}\log n)$ time.
The algorithms are very similar to the previous two cases.
For each case, we also need a validation procedure and keep the path information.
We omit the details.

%\item
%\noindent{\em Case $(0,3,0)$.}
%This is the fourth dominating case and we can compute $O(n^{11})$
%valid candidate points in $O(n^{11}\log n)$ time.
%The algorithm is very similar as the previous two cases.
%We also need a validation procedure and keep the path information.
%We omit the details.

We have discussed all four dominating cases.
Note that the candidate points for all four dominating cases are in
$\calI$.
The rest of the cases are not
dominating cases. For each of the remaining cases, the validation
procedure and the path information will not be needed any more.

%\item
%\noindent{\em Case $(2,0,1)$.}
%We can compute a set of $O(n^{10})$ candidate points in $O(n^{10}\log n)$ time.
%The algorithm is also similar as before, and we omit the details.

\item
\noindent{\em Other cases.}
Our algorithms for all other cases (e.g. $(2,0,1)$, $(1,1,1)$, $(1,0,2)$, $(0,2,1)$, $(0,1,2)$, $(0,0,3)$) are
very similar as before (except that the validation
procedure and the path information are not needed). We can compute a total of $O(n^{10})$ candidate points in
$O(n^{10}\log n)$ time.
%The algorithms are very similar as before.
We omit the details.
\end{enumerate}

As a summary for the case $s\in \calI$, for the four dominating cases, we have
computed $O(n^{11})$ valid candidate points in $O(n^{11}\log n)$ time
with the corresponding path information.
For all other cases, we have computed $O(n^{10})$ candidate
points.

\subsubsection{The case $s\in E$.}
Our algorithm for computing candidate points for this case is slightly different.
Consider a geodesic center $s$ in $E$. Let $e_s$ denote the polygon edge that
contains $s$. Depending on whether $|F(s)|$ is $1$ or not, there are two cases.

\paragraph{The case $|F(s)|=1$.}
If $|F(s)|=1$, let $t$ be the only farthest point of $s$. We assume that $t$ is not a special farthest point of $s$ since otherwise $s$ would have already been computed in $S_1$ in Section \ref{sec:special}. Thus, by Lemma \ref{lem:80}, the $\pi$-range $\piR(s,t)$ is an open range of size $\pi$.
Since $t$ is the only farthest point of $s$, by Lemma \ref{lem:10},  $R(s)=R(s,t)=\emptyset$. By Lemma \ref{lem:90}, $R(s,t)=\emptyset$ implies that the bounding line of the range $\piR(s,t)$ contains $e_s$, and this provides a constraint for determining $s$ and $t$.  Depending on whether $t$ is in $\calI$, $E$, or
$\calV$, there are three cases.

\begin{enumerate}
\item
\noindent{\em The case $t\in \calI$.} We first consider the most general case in
which $t\in \calI$. There are three shortest \st\ paths: $\pi_{u_iv_i}(s,t)$ for $i=1,2,3$.
Thus, we have the following
\begin{equation*}
\begin{split}
& |{tv_{1}}|+d(v_{1},u_{1})+|{u_{1}s}|
=|{tv_{2}}|+d(v_{2},u_{2})+|{u_{2}s}|
= |{tv_{3}}|+d(v_{3},u_{3})+|{u_{3}s}|.
\end{split}
\end{equation*}

If we consider the coordinates of $s$ and $t$ as four variables, then
the above equations provide two constraints and $s\in e_s$ gives another
constraint. In addition, that the bounding line of the range
$\piR(s,t)$ contains $e_s$ provides the fourth constraint. Hence, the four constraints can determine $s$ and $t$. Correspondingly, we compute the
candidate points as follows.

We enumerate all possible combinations of three polygon vertices as
$v_1$, $v_2$, $v_3$. For each
combination, we compute the overlay of the shortest path maps of $v_1,v_2$, and
$v_3$. For each cell $C$ of the overlay,
if it contains a portion of a polygon edge, we consider the edge as
$e_s$ and obtain three roots of the shortest path maps as $u_1,u_2$, and
$u_3$. Using the four constraints discussed above,
we can obtain a constant number of tuples $(\hats,t)$ and each
such $\hats$ is a candidate point.
In this way, we can compute $O(n^5)$ candidate points in $O(n^5\log n)$ time.

\item
\noindent{\em The case $t\in E$.}
Let $e_t$ be the polygon edge that contains $t$.
There are two shortest \st\ paths, which gives one
constraint for determining $s$ and $t$.
$s\in e_s$ and $t\in e_t$ provide two
constraints. Finally, that the bounding line of the range
$\piR(s,t)$ contains $e_s$ gives the fourth constraint.
Hence, we still have four constraints to determine $s$ and $t$.

The algorithm is similar to the previous case. We can
compute $O(n^5)$ candidate points in $O(n^5\log n)$ time. We omit the details.

\item
\noindent{\em The case $t\in \calV$.}
In this case, $t$ is a polygon vertex, which provides two constraints (or $t$ is
fixed already). $s\in e_s$ is another constraint.
As before, that the bounding line of the range
$\piR(s,t)$ contains $e_s$ provides the fourth constraint.
We compute the candidate points as follows.

We enumerate all polygon vertices. For each vertex, we consider it as
$t$ and compute the shortest path map $\spm(t)$. For each cell $C$ of $\spm(t)$, if it contains a portion of polygon edge on its boundary, then we consider that edge as $e_s$. Next, we find a candidate point $\hats$ on that portion of $e_s$ in $C$ such that the $\pi$-range $\piR(\hats,t)$ (determined by Lemma \ref{lem:80}) has its bounding line contain $e_s$.
In this way, we can compute $O(n^2)$ candidate points in $O(n^2\log n)$ time.
\end{enumerate}

In summary, for the case $|F(s)|=1$, we compute $O(n^5)$
candidate points in $O(n^5\log n)$ time.

\paragraph{The case $|F(s)|\geq 2$.}
Let $t_1$ and $t_2$ be the two farthest points of $s$. We assume $s$ does not have any special farthest points since otherwise $s$ would have already been computed in $S_1$. Then, neither $\piR(s,t_1)$ nor $\piR(s,t_2)$ is empty. Depending on whether each of $t_1$ and $t_2$ is in $\calI$, $E$, or $\calV$, there are several cases.

%As before, we use $(x,y,z)$ to represent the case in
%which $x,y,z$ points of $t_1$ and $t_2$ are in $\calI$, $E$, and $\calV$,
%respectively, with $x+y+z=2$.

%\begin{enumerate}
%\item
%\noindent{\em Case (2,0,0).}
We first consider the most general case where both $t_1$ and $t_2$ are in
$\calI$. Other cases are very similar.
For each $i=1,2$, since $t_i\in \calI$, there are three
shortest paths from $s$ to $t_i$:
$\pi_{u_{ij},v_{ij}}(s,t)$ with $j=1,2,3$.  Hence, we have
the following
\begin{equation*}
\begin{split}
& |{t_1v_{11}}|+d(v_{11},u_{11})+|{u_{11}s}|
=|{t_1v_{12}}|+d(v_{12},u_{12})+|{u_{12}s}|
= |{t_1v_{13}}|+d(v_{13},u_{13})+|{u_{13}s}|
\\
= & |{t_2v_{21}}|+d(v_{21},u_{21})+|{u_{21}s}|
=|{t_2v_{22}}|+d(v_{22},u_{22})+|{u_{22}s}|
=|{t_2v_{23}}|+d(v_{23},u_{23})+|{u_{23}s}|.\\
\end{split}
\end{equation*}

The above equations provide five constraints for $s$ and $t$. With the
constraint that $s\in e_s$, we can determine $s$, $t_1$, and $t_2$, if we consider their coordinates as six variables. Correspondingly, we compute the candidate points as follows.

We enumerate all possible combinations of six polygon vertices as
$v_{11},v_{12},v_{13},v_{21},v_{22},v_{23}$. For each combination, we compute the overlay of the shortest path
maps of the six vertices. For each cell $C$ of the overlay, if $C$ contains a
portion of a polygon edge, then we consider the edge as $e_s$ and using the
six constraints mentioned above to compute at most a constant number of triples
$(\hats,t_1,t_2)$. Hence, for each combination, we can compute at most $O(n^2)$
candidate points in $O(n^2\log n)$ time.
In this way, we can compute $O(n^8)$ candidate points in $O(n^8\log n)$ time.
%\item
%\noindent{\em Other cases.}

Candidate points for other cases can be computed similarly.
We can compute $O(n^8)$ candidate
points in $O(n^8\log n)$ time. We omit the details.
%\end{enumerate}

As a summary for the case $s\in E$, we can compute a total of $O(n^{8})$
candidate points in $O(n^8\log n)$ time.

\subsection{The Degenerate Case}
\label{sec:degenerate}
In the degenerate case, the geodesic center $s$ has at least one degenerate
farthest point.
%In this section, we compute the candidate points for those geodesic
%centers each of which has at least one farthest point that is not in a
%general position.
%Consider a geodesic center $s$ that has at least one farthest point
%that is not in a general position.
Depending on whether $s$ is in $\calI$ or $E$, there are two main cases.

\subsubsection{The Case $s\in \calI$.}
Depending on whether $|F(s)|$ is $1$ or not, there are further two
cases.

\noindent{\em The case $|F(s)|=1$.}
Let $t$ be the only point of $F(s)$. Then $t$ is a degenerate farthest point of
$s$. Depending on whether $t$ is in $\calV$, $E$, or $\calI$, there are three
cases.

\begin{enumerate}
\item
If $t\in \calV$, then since $t$ is degenerate, there are at least two shortest \st\ paths.
Further, since $t$ is polygon vertex and due to our general position assumption that no two
polygon vertices have more than one shortest path, it holds that $|U_s(t)|\geq 2$.
%is equal to the number of shortest \st\ paths. Hence, $|U_s(t)|\geq 2$.

If $|U_s(t)|\geq 3$, then $s$ is a vertex of $\spm(t)$. Correspondingly, we can
compute candidate points as follows. For each polygon vertex, we consider it as
$t$ and compute its
shortest path map $\spm(t)$. Then, we report each vertex of $\spm(t)$ as a
candidate points. In this way, we can compute $O(n^2)$ candidate points
for $s$ in $O(n^2\log n)$ time.

If $|U_s(t)|=2$, let $U_s(t)=\{u_1,u_2\}$.
We first prove a claim: $s\in \overline{u_1u_2}$.

Indeed, suppose to the contrary that the claim is not true. Then,
$\overline{su_1}$ and $\overline{su_2}$ form an angle $\angle{u_1su_2}\in
(0,\pi)$.  Consider
the direction $r_s$ for moving $s$ along the bisector of $\angle v_1sv_2$ and
towards the interior of $\angle v_1sv_2$.
%Since $s\in \calI$, $r_s$ is a free direction.
If we move $s$ along $r_s$ with unit speed, since $\angle{u_1su_2}$ is strictly less than $\pi$ and $U_s(t)=\{u_1,u_2\}$, each vertex $v\in
U_t(s)$ has a coupled $s$-pivot $u\in U_s(t)$ with $d'_{u,v}(s,t)<0$.
Since $s\in \calI$, $r_s$ is a free direction.
By Lemma \ref{lem:40}, $r_s$ is an admissible direction of $s$ with respect to $t$, i.e., $r_s\in R(s,t)$. Since $R(s)=R(s,t)$, we obtain that $R(s)$ is not empty, contradicting with Lemma \ref{lem:10}.
%that $s$ is a geodesic center with only farthest point $t$.
The claim thus follows.

$s\in \overline{u_1u_2}$ provides a constraint for
determining $s$ (if we consider $t\in \calV$ as fixed at a polygon vertex).
Further, there are two shortest \st\ paths $\pi_{u_iv_i}(s,t)$ with $i=1,2$.
%Hence, we have the following
%\begin{equation*}
%\begin{split}
%& |{tv_{1}}|+d(v_{1},u_{1})+|{u_{1}s}|
%=|{tv_{2}}|+d(v_{2},u_{2})+|{u_{2}s}|.
%\end{split}
%\end{equation*}
%The equation provides another constraint to determine $s$.
Hence, $s$
must be in the bisector edge that is incident to the two cells whose roots are $u_1$ and $u_2$, respectively, in the shortest path map $\spm(t)$.
Correspondingly, we compute the candidate points as follows.

We enumerate all polygon vertices. For each polygon vertex, we consider it as
$t$ and compute $\spm(t)$. For each bisector edge of
$\spm(t)$, we obtain the roots of the two cells incident to the bisector edge as $u_1$ and $u_2$. Then, we compute a candidate
point $\hats$ on the bisector edge such that $\hats\in \overline{u_1u_2}$ (i.e., $\hats$ is the intersection of $\overline{u_1u_2}$ and the bisector edge).
In this way, we can compute $O(n^2)$ candidate points in $O(n^2\log n)$ time.

\item
If $t\in E$, then since $t$ is degenerate,
there are at least three shortest \st\ paths. Let
$e_t$ be the polygon edge that contains $t$.
Since $t$ is a farthest point of $s$, $t$ must have two vertices $v_1$
and $v_2$ that satisfy the condition in Observation \ref{obser:10}.
Hence, there are two shortest \st\ paths $\pi_{u_i,v_i}(s,t)$ for
$i=1,2$.

Since $s\in \calI$, we claim that the $\pi$-range $\piR(s,t)$ with respect to the
two shortest paths $\pi_{u_i,v_i}(s,t)$ for $i=1,2$ must be empty. Assume to the
contrary that this is not true. Then,
 by Lemma \ref{lem:50}(2), $R(s,t)$ is not empty (in fact,
 $\piR(s,t)\subseteq R(s,t)$). Since $t$ is the only
farthest point of $s$, we obtain that $R(s)=R(s,t)\neq \emptyset$, which
contradicts with Lemma \ref{lem:10}.

Since $\piR(s,t)=\emptyset$, one can check that $s$ has already been computed in
$S_1$ in Section \ref{sec:special} (for the case $t\in E$).

\item
If $t\in \calI$, since $t$ is degenerate, there are at most four shortest \st\ paths.
By Observation \ref{obser:10}, $|U_t(s)|\geq 3$.  Depending on whether
$|U_t(s)|=3$, there are two cases.

We first discuss the case where $|U_t(s)|=3$.
Let $v_1,v_2,v_3$ be the three vertices in $U_t(s)$. By Observation
\ref{obser:10}, $t$ is in the interior of the
triangle $\triangle v_1v_2v_3$. Then, there must exist three shortest
\st\ paths $\pi_{u_i,v_i}(s,t)$ such that no two paths cross each
other, %(although they may intersect),
and this implies that the three paths are canonical.
Let $\piR(s,t)$ denote the $\pi$-range of $s$ with respect to
$t$ and the above three shortest paths.

If $\piR(s,t)=\emptyset$, then one can check that $s$ has already been computed in
$S_1$ in Section \ref{sec:special}.  Otherwise, as the above second case,
since $s\in \calI$ and $F(s)=\{t\}$, by Lemma \ref{lem:50}(1)
any direction in $\piR(s,t)$ is in $R(s,t)$, implying that
$R(s)=R(s,t)\neq \emptyset$, which contradicts with
Lemma \ref{lem:10}.

Next we discuss the case where $|U_t(s)|>3$.

By Observation \ref{obser:10}, $t$ is in the interior of the convex hull of all
vertices of $U_t(s)$. If there exist three vertices $v_1,v_2,v_3$ of
$U_t(s)$ such that $t$ is in the interior of $\triangle v_1v_2v_3$, then we can
still use the same approach as the above case for $|U_t(s)|=3$. Otherwise,
there must exist four vertices $v_1,v_2,v_3,v_4\in U_t(s)$ such that $t$ is
the intersection of the two line segments $\overline{v_1v_2}$ and $\overline{v_3v_4}$.

On the other hand, since $s\in \calI$ and $t$ is the only farthest point of $s$,
we claim that $|U_s(t)|\geq 2$. Indeed, suppose to the contrary that
$|U_s(t)|=1$. Then, if we move $s$ towards the only vertex $u$ of $U_s(t)$ with unit
speed, then for each vertex $v\in U_t(s)$, $d'_{u,v}(s,t)<0$ holds.
Hence, by Lemma \ref{lem:20}(1), the above direction for $s$ is in $R(s,t)$,
implying that $R(s)=R(s,t)\neq \emptyset$, which contradicts with
Lemma \ref{lem:10}.
Therefore, the claim follows.

Depending on whether $|U_s(t)|\geq 3$, there are two cases.

\begin{enumerate}
\item
If $|U_s(t)|\geq 3$, then $s$ is a vertex of the shortest path map $\spm(t)$.
Correspondingly, we compute the candidate points as follows.

We enumerate all possible combinations of four polygon
vertices as $v_i$ for $1\leq i\leq 4$.
For each such combination, we compute the intersection of $\overline{v_1v_2}$
and $\overline{v_3v_4}$ and consider the intersection as $t$. Then
we compute $\spm(t)$ and return all vertices of $\spm(t)$ as candidate
points.
In this way, for each combination, we can compute at most $O(n)$ candidate
points in $O(n\log n)$ time. Since there are $O(n^4)$ combinations,
we can compute a total of $O(n^5)$
candidate points in $O(n^5\log n)$ time.

\item
If $|U_s(t)|= 2$, let $U_s(t)=\{u_1,u_2\}$. An easy observation is
that $s$ must be on the bisector edge of $u_1$ and $u_2$ in
$\spm(t)$. Further, by the same analysis as before,
we can show that $s$ must be on $\overline{u_1u_2}$.
Correspondingly, we can compute the candidate points as follows.

We enumerate all possible combinations of four polygon vertices as
$v_i$ for $1\leq i\leq 4$.
For each such combination, we compute the intersection of $\overline{v_1v_2}$
and $\overline{v_3v_4}$ and consider the intersection as $t$. Then, we compute
$\spm(t)$.
For each bisector edge of $\spm(t)$, we obtain the roots of the two
cells incident to the bisector edge as $u_1$ and
$u_2$, respectively. Then, we find the intersection of $\overline{u_1u_2}$ and the
bisector edge as a candidate point. Since there are $O(n)$ bisector
edges, we can find $O(n)$ candidate points.
In this way, we can compute $O(n^5)$ candidate points
in $O(n^5\log n)$ time.
\end{enumerate}
\end{enumerate}

This finishes the case for $|F(s)|=1$.

\noindent{\em The case $|F(s)|\geq 2$.}
Recall that $s$ has at least one degenerate farthest point.
Let $t_1$ and $t_2$ be any two points of $F(s)$ such that $t_1$ is degenerate.
%As before, we use $(a,b,c)$ to represent the case in which $a, b, c$ points of $t_1$ and $t_2$ are in $\calI, E, \calV$, respectively, with $a+b+c=2$.

We first consider the most general case where both $t_1$ and $t_2$ are
in $\calI$. For each $i=1,2$, $t_i$ has at least three shortest paths from $s$
and $|U_{t_i}(s)|\geq 3$. Since $t_1$ is degenerate, $t_1$ has at
least four shortest paths from $s$.
Depending on whether $|U_{t_1}(s)|=3$, there are two cases.

If $|U_{t_1}(s)|>3$, let $v_{1j}$ for $1\leq j\leq 4$ be any four
vertices of $U_{t_1}(s)$. Hence, there are four shortest paths from $s$ to
$t_1$: $\pi_{u_{1j},v_{1j}}(s,t_1)$ with $1\leq j\leq 4$. For
$t_2$, let $v_{2j}$ for $1\leq j\leq 3$ be any three vertices of
$U_{t_2}(s)$, and there are three shortest paths from $s$ to $t_2$:
$\pi_{u_{2j},v_{2j}}(s,t_2)$ with $1\leq j\leq 3$.
We can form a system of equations with the lengths of the above seven shortest
paths, which give six constraints to determine $(s,t_1,t_2)$.
Correspondingly, we can compute $O(n^9)$ candidate points in $O(n^9\log
n)$ time. The algorithm is similar to the previous algorithms and we omit the
details.

If $|U_{t_1}(s)|=3$, let $v_{1j}$ for $1\leq j\leq 3$ be the three vertices of
$U_{t_1}(s)$. Since there are at least four shortest paths from $s$ to $t_1$ and
due to our general position assumption that no two polygon vertices have
more than one shortest path, there exists a vertex, say $v_{11}$, in  $U_{t_1}(s)$
such that there are at least two shortest paths from $s$ to $v_{11}$. Hence, $s$
must be on a bisector edge of the shortest path map $\spm(v_{11})$.

Let $\pi_{u_{1j},v_{1j}}(s,t_1)$ for $j=1,2,3$ be three shortest paths from $s$
to $t_1$. Let $\pi_{u_{2j},v_{2j}}(s,t_2)$ for $j=1,2,3$ be three shortest paths
from $s$ to $t_2$. We can form a system of equations with the lengths of the six
paths. These equations can give five constraints, and along with that $s$ is on
a bisector edge of $\spm(t_1)$, we can determine $(s,t_1,t_2)$ (equivalently,
one may also consider that the seven shortest paths give six
constraints). Correspondingly, we compute the candidate points as follows.

We enumerate all possible combinations of six polygon vertices as $v_{ij}$ for
$i=1,2$ and $j=1,2,3$. For each combination, we compute the overlay of
the shortest path maps of these vertices.
For each cell of the overlay, if its boundary has a portion of a bisector edge
of $\spm(v_{11})$, then based on the above system of equations on the
lengths of the six paths, we determine a constant number of triples
$(\hats,t_1,t_2)$ and each such $\hats$ is a candidate point.
Since the size of the overlay
is $O(n^2)$, we can determine $O(n^2)$ candidate points  for each
combination. Since there are $O(n^6)$ combinations,
we can compute $O(n^8)$ candidate points
in $O(n^{8}\log n)$ time.

The above computes $O(n^9)$ candidate points for the case where both
$t_1$ and $t_2$ are in $\calI$.

Other cases are very similar. For example, consider the case where $t_1\in
\calI$ and $t_2$ is on a polygonal edge $e\in E$. Comparing with the previous case,
since $t_2$ is in $E$, there will be one less constraint on the equations of
shortest path lengths, but $t_2\in e$ gives one more constraint.
All other cases are similar.
We can compute at most $O(n^9)$ candidate points in
$O(n^{9}\log n)$  for all other cases. The details are omitted.

This finishes the case for $|F(s)|\geq 2$ and thus the case for $s\in
\calI$.

\subsubsection{The Case $s\in E$.}
Let $e_s$ be the polygon edge that contains $s$.
Let $t$ be a degenerate farthest point of $s$.
Depending on whether $t$ is in $\calV$, $E$, or $\calI$, there are three cases.

\begin{enumerate}
\item
If $t\in \calV$, then there are at least two shortest \st\ paths. Further, since
$t$ is a polygon vertex, due to our general position assumption that
any two polygon vertices have no more than one shortest path, $U_s(t)$
has at least two vertices.
%$u_1$ and $u_2$ such that there are shortest \st\ paths $\pi_{u_1,v_1}(s,t)$ and
%$\pi_{u_2,v_2}(s,t)$.
This implies that $s$ is in a bisector edge of $\spm(t)$.
Further, as $s$ is on $e_s$, $s$ is at the intersection of a bisector edge of
$\spm(t)$ and $e_s$. Correspondingly, we compute the candidate points as follows.

We consider all polygon vertices. For each vertex, we consider it as $t$ and
compute $\spm(t)$. For each bisector edge, if it
intersects a polygon edge, we compute the intersection as a candidate
point.  In this way, we can compute $O(n^2)$ candidate points in $O(n^2\log n)$ time.

\item
If $t\in E$, let $e_t$ be the polygon edge that contains $t$.
Since $t\in E$ and $t$ is degenerate, there are at least three shortest paths
from $s$ to $t$.
%Intuitively, the three shortest paths can provide two constraints to determine
%$s$ and $t$; in addition, $s\in e_s$ and $t\in e_t$ give two other constraints.
%Hence, the total four constraints can determine $s$ and $t$. The details are given below.
Since $t\in e_t$, $U_t(s)$ must have two vertices $v_1$
and $v_2$ that satisfy the condition of Observation \ref{obser:10}(2).
Hence, $|U_t(s)|\geq 2$. Depending on whether $|U_t(s)|\geq 3$, there are two
cases.

\begin{enumerate}
\item
If $|U_t(s)|\geq 3$, let $v_i$ with $i=1,2,3$ be any three vertices of $U_t(s)$.
Then, there are three shortest \st\ paths $\pi_{u_i,v_i}(s,t)$ for
$i=1,2,3$. We have the following
\begin{equation*}
\begin{split}
& |{tv_{1}}|+d(v_{1},u_{1})+|{u_{1}s}|
=|{tv_{2}}|+d(v_{2},u_{2})+|{u_{2}s}|
=|{tv_{3}}|+d(v_{3},u_{3})+|{u_{3}s}|.
\end{split}
\end{equation*}

%If we consider the coordinates of $s$ and $t$ as four variables,
The above equations provide two constraints,
%for determining $s$ and $t$.
and along with the constraints that $s\in e_s$ and $t\in e_t$, the
four constraints can determine $s$ and $t$.
Correspondingly, we compute the candidate points as follows.

We enumerate all possible combinations of three vertices as $v_i$ for $i=1,2,3$
and a polygon edge as $e_t$.
For each combination, we compute the overlay of the shortest path maps
of $v_1$, $v_2$, and $v_3$. For each cell $C$ of the overlay,
if it has an edge that is a portion of a
polygon edge, then we consider the polygon edge as $e_s$ and obtain the three roots of
the three shortest path maps as $u_i$ for $i=1,2,3$.
Next, using the four constraints mentioned above, we can determine a
constant number of tuples $(\hats,t)$ and each such $\hats$ is a
candidate point. Since the size of the overlay is $O(n^2)$, for each combination we
can obtain $O(n^2)$ candidate points in $O(n^2\log n)$ time.
Since there are $O(n^4)$ combinations, we can
compute $O(n^6)$ candidate points in $O(n^6\log n)$ time.

\item
If $|U_t(s)|=2$, let $v_1$ and $v_2$ be the two vertices of $U_t(s)$. Since
there are at least three shortest \st\ paths and due to our general position
assumption that no two polygon vertices have more than one shortest path, one of
$v_1$ and $v_2$, say $v_1$, must have at least two shortest paths to $s$ such
that $|U_s(v_1)|\geq 2$. Hence, $s$ is on a bisector edge of $\spm(v_1)$. Since
$s\in e_s$, $s$ is the intersection of $e_s$ and a bisector edge of $\spm(v_1)$.
Correspondingly, we can compute the candidate points as follows.

We enumerate all polygon vertices. For each vertex, we consider it as $v_1$ and
compute its shortest path map $\spm(v_1)$. Then, for each
intersection between a bisector edge of $\spm(v_1)$ and a polygon
edge, we consider it as a candidate point.
Since the size of $\spm(v_1)$ is $O(n)$, there are $O(n)$ such candidate
points.
In this way, we can compute at most $O(n^2)$ candidate points in $O(n^2\log n)$
time.
\end{enumerate}

\item
If $t\in \calI$, then there are at least four shortest \st\ paths, and $|U_t(s)|\geq 3$ by Observation \ref{obser:10}. Depending on whether $|U_t(s)|\geq 4$, there two cases.

\begin{enumerate}
\item
If $|U_t(s)|\geq 4$, then let $v_i$ with $1\leq i\leq 4$ be any four vertices of $U_t(s)$. Then, there are four shortest \st\ paths $\pi_{u_i,v_i}(s,t)$ for $1\leq i\leq 4$. We have the following
\begin{equation*}
\begin{split}
& |{tv_{1}}|+d(v_{1},u_{1})+|{u_{1}s}|
=|{tv_{2}}|+d(v_{2},u_{2})+|{u_{2}s}|\\
&=|{tv_{3}}|+d(v_{3},u_{3})+|{u_{3}s}|
=|{tv_{4}}|+d(v_{4},u_{4})+|{u_{4}s}|.
\end{split}
\end{equation*}

The above equations provide three constraints,
and along with the constraints that $s\in e_s$, the
four constraints can determine $s$ and $t$.
Correspondingly, we can compute $O(n^6)$ candidate points in $O(n^6\log n)$ time. The algorithm is similar as before and we omit the details.

\item
If $|U_t(s)|=3$, let $v_i$ with $i=1,2,3$ be the three vertices of $U_t(s)$. Since there are at least four shortest \st\ paths and due to our general position assumption that no two polygon vertices have more than one shortest path, one of $v_i$ with $i=1,2,3$, say $v_1$, must have at least two shortest paths to $s$ such that $|U_s(v_1)|\geq 2$. Hence, $s$ is on a bisector edge of $\spm(v_1)$. Since $s\in e_s$, $s$ is the intersection of $e_s$ and a bisector edge of $\spm(v_1)$.
Correspondingly, by using the same approach as the above case 2(b), we can compute $O(n^2)$ candidate points in $O(n^2\log n)$ time.
\end{enumerate}
\end{enumerate}

This finishes the case for $s\in E$.

This also finishes our algorithms for computing candidate points for the
degenerate case. In summary, we can compute a total of $O(n^9)$ candidate
points in $O(n^9\log n)$ time such that $s$ is one of these candidate
points.

\section{Computing the Geodesic Centers}
\label{sec:center}

In this section, we find all geodesic centers from the candidate point
set $S$. Let $S'$ denote the set of candidate points for all cases
other than the four dominating cases. Recall that $S_d$ is the
set of candidate points for the four dominating cases.
Hence, $S=S'\cup S_d$.
As discussed in Section \ref{sec:candidates}, $|S'|=O(n^{10})$ and we can
find all geodesic centers in $S'$ in $O(n^{11}\log n)$ time by
computing their shortest path maps. In the
following, we focus on finding all geodesic centers in $S_d$.

We first remove all points from $S_d$ that are also in $S'$,
which can be done in $O(n^{11}\log n)$ time (e.g., by first sorting
these points by their coordinates).
Then, for any point $s\in S_d$, if $s$ is a geodesic center,
$s$ does not have any degenerate farthest point since
otherwise $s$ was also in $S'$ and thus would have already been
removed from $S_d$.

Recall that each point $s$ of $S_d$ is a valid candidate point and we have
maintained its path information for $s$ (in particular,
the value $d(s)$).

We first perform the following {\em duplication-cleanup} procedure: for each point
$s\in S_d$, if there are many copies of $s$, we only keep the
one with the largest value $d(s)$ (if more than one copy has the
largest value, we keep an arbitrary one).
This procedure can be done in $O(n^{11}\log n)$ time (e.g., by
first sorting all points of $S_d$ by their coordinates).
%After the clean-up procedure,
According to our algorithm for computing the candidate points of $S_d$,
%if a candidate point $\hats\in S_d$ is a geodesic center, then
%it holds that $d_{\max}(\hats)=d(\hats)$.
we have the following observation.

\begin{observation}\label{obser:70}
After the duplication-cleanup procedure, for any point $s\in
S_d$, if $s$ is a geodesic center, then
$d_{\max}(s)=d(s)$.
\end{observation}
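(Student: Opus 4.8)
The plan is to sandwich $d(s)$ between $d_{\max}(s)$ from both sides. Concretely, I will show (i) every copy of $s$ that was ever added to $S_d$ carries an associated value $d(s)\le d_{\max}(s)$, and (ii) at least one such copy carries the value $d(s)=d_{\max}(s)$. Since the duplication-cleanup procedure keeps, among all copies of a point, the one with the \emph{largest} value $d(s)$, (i) and (ii) together force the retained copy to satisfy $d(s)=d_{\max}(s)$, which is the claim.

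Bound (i) is immediate from Observation~\ref{obser:60} (and its stated analogues for the other three dominating cases): any point in $S_d$ is a valid candidate point, obtained together with points $\hatt_1,\hatt_2,\hatt_3$ for which $d(s)=d(s,\hatt_i)$. As $d_{\max}(s)=\max_{t\in\calP}d(s,t)\ge d(s,\hatt_i)$, we get $d(s)\le d_{\max}(s)$ for that copy, and hence for every copy and in particular for the one retained after duplication-cleanup.

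Bound (ii) is the substantive step. Recall, as noted just before this observation, that a geodesic center $s\in S_d$ cannot have a degenerate farthest point, since otherwise it would lie in $S'$ and have been removed. Tracing the case analysis of Section~\ref{sec:candidates}: a geodesic center in $\calV$ or in $E$, and a geodesic center in $\calI$ with $|F(s)|\le 2$, and any geodesic center with a special farthest point (which lands in $S_1\subseteq S'$), all produce candidate points belonging to $S'$; hence a surviving $s\in S_d$ is a geodesic center in $\calI$ with $|F(s)|\ge 3$ and no special (nor degenerate) farthest point. Then each $\piR(s,t)$, $t\in F(s)$, is an open range of size exactly $\pi$ (Lemmas~\ref{lem:80} and \ref{lem:110}); since $s\in\calI$ makes $R_f(s)$ all directions, Lemma~\ref{lem:10} gives $\bigcap_{t\in F(s)}\piR(s,t)=\emptyset$, so as already observed in the treatment of the case $|F(s)|\ge 3$ there are three farthest points $t_1,t_2,t_3$ with $\piR(s,t_1)\cap\piR(s,t_2)\cap\piR(s,t_3)=\emptyset$; moreover none of the $t_i$ can lie in $\calV$, for then $s$ would be among the $S'$ candidates of the corresponding non-dominating case, so $(t_1,t_2,t_3)$ falls into one of $(3,0,0),(2,1,0),(1,2,0),(0,3,0)$. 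Each $t_i$ is a non-degenerate farthest point in $\calI$ or $E$, so its shortest paths from $s$ are canonical; supplying the corresponding pivots and overlay cell to the exhaustive search of that dominating case reproduces $(s,t_1,t_2,t_3)$ as a solution of the defining system, with common path length $d(s,t_i)=d_{\max}(s)$, and this tuple passes every step of the validation procedure (membership in the overlay cell, the shortest-path-length checks, Observation~\ref{obser:10}(1)/(2), order consistency, and the empty-intersection check). Thus a copy of $s$ with $d(s)=d_{\max}(s)$ is indeed added to $S_d$, establishing (ii).

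The main obstacle is exactly this last step: one must carefully walk through the case distinctions of Section~\ref{sec:candidates} to certify that a geodesic center surviving in $S_d$ genuinely belongs to one of the four dominating cases and that its canonical configuration of farthest points is precisely what the corresponding exhaustive search enumerates, so that the ``good'' copy — the one tagged with value $d_{\max}(s)$ — is actually generated and survives validation and duplication-cleanup. Once that is in place, the remainder is the short two-sided inequality argument above, using Observation~\ref{obser:60} and the definition of the duplication-cleanup procedure.
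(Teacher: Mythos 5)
Your proposal is correct and follows essentially the same route as the paper: upper-bound every copy's $d(\cdot)$ value by $d_{\max}(s)$ via Observation~\ref{obser:60}, note that the exhaustive search for the relevant dominating case generates (and validates) a copy tagged with exactly $d_{\max}(s)$, and conclude via the max-keeping rule of the duplication-cleanup. The paper simply asserts the existence of that ``good'' copy by appeal to the correctness of the candidate-generation algorithm, whereas you spell out the case-tracing justification explicitly; this is a difference in level of detail, not in approach.
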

\begin{proof}
For differentiation, we use $S_d$ to denote the original set of $S_d$ before the
duplication-cleanup procedure and use $S_d'$ to denote the set after
the procedure. Note that $S_d$ and $S_d'$ contain the same physical
points and the difference is that for each point of $S_d'$, $S_d$ may
contain multiple copies of the point, and each copy is associated with different path
information. Recall that none of the point of $S_d$ is in $S'$.

Consider any point $s\in S_d'$ and suppose $s$ is a geodesic
center.  According to our algorithm for computing candidate
points of $S_d$, there must be a copy of $s$ in $S_d$ for which we
have maintained its path information that includes a farthest point $t$ of
$s$ and $d(s)=d(s,t)$ by Observation
\ref{obser:60}. Since $t$ is a farthest point of $s$,
$d(s,t)=d_{\max}(s)$.
Therefore, there exists a copy of $s$ in $S_d$
with $d(s)=d_{\max}(s)$. Let $s'$ denote any other copy of
$s$ in $S_d$. Again, by Observation \ref{obser:60}, $d(s')$ is
equal to the shortest path distance from $s'$ to a point $t'$.
Hence, $d(s')\leq d_{\max}(s')=d_{\max}(s)=d(s)$.

According to our duplication-cleanup procedure,
if $d(s')< d(s)$, then the copy $s'$ will be removed
from $S_d'$. Otherwise, either copy may be removed, but in either
case, the $d(\cdot)$ value of the remaining copy is always equal to $d_{\max}(s)$.
Hence, the observation follows.
\end{proof}

%Note that each candiddate center $s$ of $S^*$ may have duplications and each
%duplication of $s$ is associated with its path information. For each
%$s$ of $S^*$, we remove its duplications

%According to our algorithm for computing $S_d$, each point of $S_d$ is in $\calI$.

Recall that all points of $S_d$ are in $\calI$.

In the following, we give a {\em pruning algorithm} that can eliminate most of the
points from $S_d$ such that none of these eliminated points is a geodesic center
and the number of remaining points of $S_d$ is $O(n^{10})$.
Our pruning algorithm relies on the property that each candidate point $s$ of
$S_d$ is valid. Specifically, if $s$ is computed for the dominating
case $(3,0,0)$, then $s$ is associated with
the following path information $d(s)$, $t_i$, $v_{ij}$, and $u_{ij}$ for $1\leq
i\leq 3$ and $1\leq j\leq 3$, such that Observation \ref{obser:60} holds (i.e., $s$ is $\hats$ and each $t_i$ is $\hatt_i$). For
other three dominating cases (e.g., $(2,1,0)$, $(1,2,0)$, and $(0,3,0)$), there are similar
properties.
By using these properties, we have the following lemma.

\begin{lemma}\label{lem:170}
Let $s$ be any point in $S_d$. If $s$ is in the interior of a
cell or an edge of $\spmed$, then for any other point $s'$ in the interior of the
same cell or edge of $\spmed$, it holds that $d_{\max}(s')>d(s)$.
\end{lemma}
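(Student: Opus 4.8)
The plan is to combine the path information attached to a valid candidate point with the topological equivalence of shortest path maps across a single cell (or edge) of $\spmed$, and then to push the derivative estimates from the $\pi$-range analysis through the convexity of the functions $d_{u,v}(\cdot,\cdot)$ on $\bbR^4$. Fix $s\in S_d$ in the interior of a cell or edge $\sigma$ of $\spmed$, and assume first that $s$ was produced for the dominating case $(3,0,0)$ (the other three dominating cases are identical, with Lemma \ref{lem:80} replacing Lemma \ref{lem:110} for those $t_i$ lying in $E$). By Observation \ref{obser:60}, $s$ carries points $t_1,t_2,t_3$, pivots $u_{ij},v_{ij}$ ($1\le i,j\le 3$), and a value $d(s)$ with: each $\overline{su_{ij}}\cup\pi(u_{ij},v_{ij})\cup\overline{v_{ij}t_i}$ a shortest \st\ path; $t_i$ in the interior of $\triangle v_{i1}v_{i2}v_{i3}$; $d(s)=d(s,t_i)$; and $\piR(s,t_1)\cap\piR(s,t_2)\cap\piR(s,t_3)=\emptyset$. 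Recall every point of $S_d$ is in $\calI$. Let $s'$ be any other point in the interior of $\sigma$. Since $s$ and $s'$ share a cell/edge of $\spmed$, they are mutually visible and $\spm(s)$, $\spm(s')$ are topologically equivalent, so each $t_i$ (a vertex of $\spm(s)$ in $\calI$) corresponds to a unique vertex $t_i'$ of $\spm(s')$, again in $\calI$, realized by the same pivot sequences. Hence $\overline{s'u_{ij}}\cup\pi(u_{ij},v_{ij})\cup\overline{v_{ij}t_i'}$ is a shortest path from $s'$ to $t_i'$, and therefore $d(s',t_i')=d_{u_{ij},v_{ij}}(s',t_i')$ for every $j$.

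Let $r$ be the direction from $s$ to $s'$. Because the triple intersection of the $\pi$-ranges is empty, $r\notin\piR(s,t_{i_0})$ for some index $i_0$. Now move $s$ along $r$ with unit speed while moving $t_{i_0}$ towards $t_{i_0}'$ at speed $\tau=|t_{i_0}t_{i_0}'|/|ss'|$, so the pair traverses the segment from $(s,t_{i_0})$ to $(s',t_{i_0}')$ in $\bbR^4$. By the definition of $\piR(s,t_{i_0})$ and the non-membership of $r$, for this particular $(r_t,\tau)$ there is an index $j_0$ with $d'_{u_{i_0j_0},v_{i_0j_0}}(s,t_{i_0})\ge 0$. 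The function $d_{u_{i_0j_0},v_{i_0j_0}}$ is convex along any line segment in $\bbR^4$ (its second directional derivative is nonnegative everywhere by Equation \eqref{equ:10}), so along this segment its derivative stays $\ge 0$ and its value is nondecreasing; thus $d_{u_{i_0j_0},v_{i_0j_0}}(s',t_{i_0}')\ge d_{u_{i_0j_0},v_{i_0j_0}}(s,t_{i_0})=d(s,t_{i_0})=d(s)$. Combining with the identity from the previous paragraph, $d_{\max}(s')\ge d(s',t_{i_0}')=d_{u_{i_0j_0},v_{i_0j_0}}(s',t_{i_0}')\ge d(s)$.

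To make the inequality strict, I would argue that $r$ in fact lies strictly inside the complement of $\piR(s,t_{i_0})$ rather than merely outside the open range. Each nonempty $\piR(s,t_i)$ has size exactly $\pi$ (Lemma \ref{lem:110}), so if $r$ lay on the bounding line of $\piR(s,t_{i_0})$ while lying in $\piR(s,t_i)$ for the other two indices, then directions infinitesimally on the appropriate side of that line would lie in all three open ranges, contradicting emptiness of the triple intersection; the only remaining possibility, that $r$ lies on the bounding lines of two of the ranges simultaneously, forces those two ranges to share a bounding line and is handled using the fact (from \cite{ref:BaeTh13}) that a point has at most two ``special'' $t$-pivots. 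Once $r$ is strictly in the complement of $\piR(s,t_{i_0})$, the derivative $d'_{u_{i_0j_0},v_{i_0j_0}}(s,t_{i_0})$ can be taken strictly positive, whence $d_{u_{i_0j_0},v_{i_0j_0}}(s',t_{i_0}')>d(s)$ and $d_{\max}(s')>d(s)$. The cases where $\sigma$ is an edge of $\spmed$, or where $s$ comes from one of the other dominating cases, run verbatim with the corresponding parts of Observation \ref{obser:60} and the appropriate $\pi$-range lemma.

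The main obstacle is exactly this last step: converting the non-strict statement ``$r\notin\piR(s,t_{i_0})$'' (the $\pi$-range analysis of Section \ref{sec:range} only yields some derivative $\ge 0$) into a \emph{strict} increase of $d_{\max}$, which requires pinning down the position of $r$ relative to the three bounding lines and disposing of the degenerate ``special-pivot'' configurations of Figure \ref{fig:specialpivots}, where $d'_{u,v}$ and $d''_{u,v}$ vanish together. A secondary technical point is to confirm that $\piR(s,t_i)$ is well defined in the sense of Lemma \ref{lem:110} even though the $t_i$ are the recorded witnesses and not necessarily farthest points of $s$; this is precisely what the interior-of-triangle condition of Observation \ref{obser:60}(2) and the orientation/consistency check built into the construction of $S_d$ are there to supply.
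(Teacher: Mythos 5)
Your setup coincides with the paper's: use the path information of Observation \ref{obser:60}, the topological equivalence of $\spm(s)$ and $\spm(s')$ to get the corresponding $t_i'$ realized by the same pivots, pick $i_0$ with $r\notin\piR(s,t_{i_0})$ from the empty triple intersection, couple the motion $(s,t_{i_0})\to(s',t_{i_0}')$, extract $j_0$ with $d'_{u_{i_0j_0},v_{i_0j_0}}(s,t_{i_0})\ge 0$, and use convexity of $d_{u,v}$ along the segment to get $d_{\max}(s')\ge d(s)$. Up to there you are correct and essentially identical to the paper.

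The strictness step, which you yourself flag as the main obstacle, is where your argument genuinely breaks down, and your proposed repair does not work. First, even if you could show that $r$ lies strictly outside the \emph{closed} range $\overline{\piR(s,t_{i_0})}$, that only tells you that \emph{some} choice of $(r_t,\tau)$ makes all three derivatives strictly positive (equivalently, $-r$ lies in the open range); it does not tell you that the derivative is strictly positive for the \emph{specific} coupled motion in which $t_{i_0}$ travels toward $t_{i_0}'$ at speed $|t_{i_0}t_{i_0}'|/|ss'|$ — and it is only for that particular motion that the endpoint value $d_{u_{i_0j_0},v_{i_0j_0}}(s',t_{i_0}')$ equals the quantity $d(s',t_{i_0}')$ you need to bound. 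Second, your disposal of the boundary case ("$r$ on the bounding lines of two ranges ... handled using the at-most-two-special-pivots fact") is not an argument: that fact constrains the pivots of a single witness point $t_i$, not the relative position of the bounding lines of $\piR(s,t_1)$, $\piR(s,t_2)$, $\piR(s,t_3)$. The paper instead stays with the fixed coupled motion and the fixed index $j_0$ and splits on the second derivative: if $d'_{u_{i_0j_0},v_{i_0j_0}}(s,t_{i_0})>0$, convexity gives strictness; if it is $0$ and $d''>0$, strict convexity gives strictness; and the remaining case $d'=d''=0$ forces $d_{u_{i_0j_0},v_{i_0j_0}}$ to be constant along the segment, hence all three path-length functions for $t_{i_0}$ agree at both endpoints, which forces all three pivots of $t_{i_0}$ to be special — impossible because $t_{i_0}$ lies in the interior of $\triangle v_{i_01}v_{i_02}v_{i_03}$ (Observation \ref{obser:60}(2)) and at most two $t$-pivots can be special. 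You should replace your geometric positioning argument by this second-derivative case analysis.
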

\begin{proof}
The proof uses similar techniques as that for Lemma \ref{lem:30}. We only prove
the case for $s$ being a candidate point for Case $(3,0,0)$, and other cases
are similar.

Recall that $s$ is associated with
the following path information $d(s)$, $t_i$, $v_{ij}$, and
$u_{ij}$ for $1\leq i\leq 3$ and $1\leq j\leq 3$, such that
Observation \ref{obser:60} holds.
%To simplify the notation, we use $s$
%to represent $\hats$ and use $t_i$ to represent $\hatt_i$ for each
%$i=1,2,3$.
%Note that for each $1\leq i\leq 3$, $t_i$ is not degenerate.

If $s'$ be any other point in the same cell or edge of $\spmed$ that contains
$s$. Hence, $\spm(s)$ and $\spm(s')$ are topologically equivalent.
Consider any $t_i$ with $1\leq i\leq 3$.
By Observation \ref{lem:60}, for each $j=1,2,3$, $\pi_{u_{ij},v_{ij}}(s,t)$
is a shortest paths from $s$ to $t_i$. Since $\spm(s)$ and $\spm(s')$
are topologically equivalent, $\spm(s')$ has one and only one vertex $t_i'$
corresponding to $t_i$ and for each $j=1,2,3$,
$\pi_{u_{ij},v_{ij}}(s',t_i')$ is a shortest path from $s'$ to
$t_i'$.

Let $r$ be the direction from $s$ to $s'$. By Observation \ref{obser:60}, $\piR(s,t_1)\cap \piR(s,t_2)\cap \piR(s,t_3) = \emptyset$.
Therefore, $r$ is not in $\piR(s,t_i)$ for some $i$ with $1\leq i\leq 3$. Without loss of generality, we assume $r$ is not in $\piR(s,t_1)$.

Suppose we move $s$ along the direction $r$ with unit
speed and move $t_1$ to $t_1'$ with speed $|t_1t_1'|/|s s'|$. Hence, when $s$ arrives at $s'$, $t_1$ arrives at $t_1'$ simultaneously.
%To simplify notation, let $d_j=\pi_{u_{1j},v_{1j}}(s,t_1)$ for each $j=1,2,3$.
Since $r$ is not in $\piR(s,t_1)$, there must be a path $\pi_{u_{1j},v_{1j}}(s,t_1)$ for some $j$ with $1\leq j\leq 3$ such that
$d'_{u_{1j},v_{1j}}(s,t_1)<0$ does not hold. Without loss of generality, we
assume $j=1$. In other words, either $d'_{u_{11},v_{11}}(s,t_1)>0$ or $d'_{u_{11},v_{11}}(s,t_1)=0$.

\begin{enumerate}
\item
If $d'_{u_{11},v_{11}}(s,t_1)>0$, then since $d''_{u_{11},v_{11}}(s,t_1)\geq 0$ always holds, we have
$d_{u_{11},v_{11}}(s,t_1)<d_{u_{11},v_{11}}(s',t_1')$. Since
$d(s)=d_{u_{11},v_{11}}(s,t_1)$ and $d_{\max}(s')\geq d(s',t_1')=
d_{u_{11},v_{11}}(s',t_1')$, we obtain that $d_{\max}(s')>d(s)$, which proves the lemma.
\item
If $d'_{u_{11},v_{11}}(s,t_1)=0$, recall that $d''_{u_{11},v_{11}}(s,t_1)\geq 0$ always holds. If $d''_{u_{11},v_{11}}(s,t_1)>0$, then we again obtain
$d_{u_{11},v_{11}}(s,t_1)<d_{u_{11},v_{11}}(s',t_1')$, and consequently,  $d_{\max}(s')>d(s)$.

Otherwise, $d''_{u_{11},v_{11}}(s,t_1)=0$. We claim that this case
cannot happen. Indeed, suppose to the contrary that
$d''_{u_{11},v_{11}}(s,t_1)=0$. Then, we have
$d_{u_{11},v_{11}}(s,t_1)=d_{u_{11},v_{11}}(s',t_1')$.
Since $\pi_{u_{1j},v_{1j}}(s',t_1')$ for $j=2,3$ is also a shortest path from
$s$ to $t_1'$, we have
$d_{u_{11},v_{11}}(s',t_1')=d_{u_{12},v_{12}}(s',t_1')=d_{u_{13},v_{13}}(s',t_1')$,
and thus,
$d_{u_{12},v_{12}}(s,t_1)=d_{u_{12},v_{12}}(s',t_1')$ and
$d_{u_{13},v_{13}}(s,t_1)=d_{u_{13},v_{13}}(s',t_1')$.
This is only possible if $d'_{u_{1j},v_{1j}}(s,t_1)=0$  and
$d''_{u_{1j},v_{1j}}(s,t_1)=0$ for both $j=2,3$.
However, by Observation \ref{obser:60}, $t_1$ is in the interior of the triangle
$\triangle v_{11}v_{12}v_{13}$; as discussed earlier (i.e., the discussions for Fig.~\ref{fig:specialpivots}), this implies that
it is not possible to have $d'_{u_{1j},v_{1j}}(s,t_1)=0$  and $d''_{u_{1j},v_{1j}}(s,t_1)=0$ for all $j=1,2,3$,
incurring contradiction.
\end{enumerate}

This completes the proof of the lemma.
\end{proof}

\begin{lemma}\label{lem:180}
For any two points $s_1$ and $s_2$ of $S_d$ that are in the interior of the same cell or the same edge of $\spmed$, if $d(s_1)<d(s_2)$, then $s_1$ cannot be a geodesic
center, and if $d(s_1)=d(s_2)$, then neither $s_1$ nor $s_2$ is a geodesic
center.
\end{lemma}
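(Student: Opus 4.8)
The plan is to read off the statement almost directly from Lemma~\ref{lem:170} and Observation~\ref{obser:70}, so the proof will be very short. First I would record the two facts that make the hypotheses line up: after the duplication-cleanup procedure the points of $S_d$ are pairwise distinct as physical points, so $s_1$ and $s_2$ are genuinely two different points; and by assumption both lie in the interior of the \emph{same} cell or edge of $\spmed$. Consequently Lemma~\ref{lem:170} is applicable with either of $s_1,s_2$ playing the role of $s$ and the other playing the role of $s'$. I would also recall that, by Observation~\ref{obser:70}, if a point $s\in S_d$ happens to be a geodesic center then $d_{\max}(s)=d(s)$.

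For the first part I would argue by contradiction: assume $d(s_1)<d(s_2)$ and that $s_1$ is a geodesic center. Applying Lemma~\ref{lem:170} with $s=s_2\in S_d$ and $s'=s_1$ gives $d_{\max}(s_1)>d(s_2)$, and combining this with the hypothesis $d(s_2)>d(s_1)$ yields $d_{\max}(s_1)>d(s_1)$. But Observation~\ref{obser:70} forces $d_{\max}(s_1)=d(s_1)$ for a geodesic center $s_1\in S_d$, a contradiction; hence $s_1$ is not a geodesic center.

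For the second part, suppose $d(s_1)=d(s_2)$. If $s_1$ were a geodesic center, then on one hand $d_{\max}(s_1)=d(s_1)$ by Observation~\ref{obser:70}, while on the other hand Lemma~\ref{lem:170} (with $s=s_2$, $s'=s_1$) gives $d_{\max}(s_1)>d(s_2)=d(s_1)$, again a contradiction; so $s_1$ is not a geodesic center. The symmetric argument, with the roles of $s_1$ and $s_2$ exchanged, shows $s_2$ is not a geodesic center either, which completes the proof.

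I do not expect any real obstacle here: the entire content has already been extracted in Lemma~\ref{lem:170} (which is the technically heavy step, building on the $\pi$-range analysis and Observation~\ref{obser:60}) and in Observation~\ref{obser:70} (which pins $d_{\max}(s)$ down to the stored value $d(s)$ for candidate centers). The only thing needing care is the bookkeeping about the hypotheses of Lemma~\ref{lem:170}, namely that $s_1\neq s_2$ as physical points (guaranteed after duplication-cleanup, and automatic when $d(s_1)<d(s_2)$) and that they lie in the same cell or edge of $\spmed$ (an explicit assumption of the present lemma).
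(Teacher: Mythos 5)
Your proof is correct and follows essentially the same route as the paper: apply Lemma~\ref{lem:170} with the two points in both roles to get $d_{\max}(s_1)>d(s_2)$ (and symmetrically $d_{\max}(s_2)>d(s_1)$), then combine with the hypothesis on $d(s_1)$ versus $d(s_2)$ and contradict Observation~\ref{obser:70}. The extra bookkeeping about the points being distinct after duplication-cleanup is harmless and matches the paper's implicit assumption.
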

\begin{proof}
%Recall that according to our algorithm for computing the candidate points of $S_d$, for any point $\hats\in S_d$, if $\hats$ is a geodesic center, then $d(\hats)=d_{\max}(\hats)$.
Consider any two points $s_1$ and $s_2$ of $S_d$ that are in the interior of the same cell or the same edge of $\spmed$.
By Lemma \ref{lem:170}, $d_{\max}(s_1)> d(s_2)$ and $d_{\max}(s_2)>
d(s_1)$.

If $d(s_1)<d(s_2)$, then we obtain $d_{\max}(s_1)>d(s_2)>d(s_1)$.
Thus, $s_1$ cannot be a geodesic center since otherwise $d(s_1)$ would be equal
to $d_{\max}(s_1)$ by Observation \ref{obser:70}.
%By Lemma \ref{lem:50}, $s_1$ is not a geodesic
%center (since otherwise it would be $d_{\max}(s_1)=d(s_1)$).

Similarly, if $d(s_1)=d(s_2)$, then we have both $d_{\max}(s_1)> d(s_2) =
d(s_1)$ and $d_{\max}(s_2)> d(s_1)=d(s_2)$.
%By Lemma \ref{lem:50},
Thus, neither $s_1$ nor $s_2$ is a geodesic
center.
\end{proof}

Based on Lemma \ref{lem:180}, our pruning algorithm for
$S_d$ works as follows. For each point $s$ of $S_d$,
we determine the cell, edge, or vertex of $\spmed$ that contains $s$ in its interior, which can be done in $O(\log n)$ time by using a point location data
structure \cite{ref:EdelsbrunnerOp86,ref:KirkpatrickOp83} with $O(n^{10})$ time and space preprocessing on $\spmed$. For each edge or
cell, let $S_d'$ be the set of points of $S_d$ that are contained in its interior.
We find the point $s$ of $S_d'$ with the largest value $d(s)$. If there are more
than one such point in $S_d'$, we remove all points of $S_d'$ from $S_d$; otherwise,
remove all points of $S'_d$ except $s$ from $S_d$. By Lemma \ref{lem:180}, none of the points of $S_d$ that are removed above is a geodesic center. After the above pruning algorithm, $S_d$ contains at most one point in the interior of each cell, edge, or vertex of $\spmed$. Hence, $|S_d|=O(|\spmed|)$. Since
$|\spmed|=O(n^{10})$ \cite{ref:ChiangTw99}, we obtain $|S_d|=O(n^{10})$.

\begin{theorem}\label{theo:20}
All geodesic centers of $\calP$ can be computed in $O(n^{11}\log n)$ time.
\end{theorem}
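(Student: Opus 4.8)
The plan is to assemble the pieces already established in Sections~\ref{sec:candidates} and~\ref{sec:center} into a single running-time accounting. First I would recall the structural facts: by Lemma~\ref{lem:10}, any geodesic center $s$ satisfies $R(s)=\emptyset$, and by combining this with the $\pi$-range characterizations (Lemmas~\ref{lem:70}, \ref{lem:90}, \ref{lem:120}) every geodesic center falls into one of the finitely many cases enumerated in Section~\ref{sec:candidates} according to whether $s\in\calV$, $s\in E$, or $s\in\calI$, the cardinality of $F(s)$, the location of the farthest points, and whether a degenerate or special farthest point occurs. The key claim is that across all these cases the exhaustive-search construction produces a candidate set $S=S'\cup S_d$ with $|S'|=O(n^{10})$ and $|S_d|=O(n^{11})$, computable in $O(n^{11}\log n)$ time, and that every geodesic center lies in $S$.

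Next I would verify the time bounds case by case, though only at the level of counting combinations. For the special case, Section~\ref{sec:special} gives $O(n^5)$ points in $O(n^5\log n)$ time. For the general case with $s\in\calI$: the four dominating cases $(3,0,0)$, $(2,1,0)$, $(1,2,0)$, $(0,3,0)$ each enumerate $O(n^9)$ combinations of polygon vertices (and possibly edges), compute overlays of $O(n)$ shortest path maps in $O(n^2\log n)$ time each, and so yield $O(n^{11})$ candidate points in $O(n^{11}\log n)$ time — here I would stress that the validation procedure uses a point-location structure on each overlay ($O(n^2)$ preprocessing) and the two-point shortest-path-query structure of~\cite{ref:ChiangTw99} ($O(n^{11})$ one-time preprocessing), so these do not increase the asymptotic bound. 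All non-dominating subcases (including $|F(s)|=1,2$, the $s\in E$ cases, and the entire degenerate case) involve at most $O(n^8)$ or $O(n^9)$ or fewer combinations, hence at most $O(n^{10})$ candidate points total, computable within $O(n^{10}\log n)$ time. Thus $S_d$ has $O(n^{11})$ points and $S'$ has $O(n^{10})$ points.

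Then I would handle the extraction of centers. For $S'$, directly compute $\spm(\hats)$ and $d_{\max}(\hats)$ for each of the $O(n^{10})$ points in $O(n\log n)$ time each, for $O(n^{11}\log n)$ total; the minimizers are geodesic centers. For $S_d$, I would invoke the pruning argument: after removing points of $S_d$ lying in $S'$, after the duplication-cleanup, and after applying Lemma~\ref{lem:180} (which rests on Lemma~\ref{lem:170} and Observation~\ref{obser:70}), each cell, edge, or vertex of $\spmed$ retains at most one point of $S_d$ and none of the removed points is a geodesic center. Since $|\spmed|=O(n^{10})$~\cite{ref:ChiangTw99}, the pruned $S_d$ has $O(n^{10})$ points; point location against $\spmed$ (with $O(n^{10})$-time preprocessing) and the sorting/grouping steps all run in $O(n^{11}\log n)$ time. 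Finally, computing $\spm(\hats)$ for each of the $O(n^{10})$ surviving points costs $O(n^{11}\log n)$, and the global minimizers of $d_{\max}$ over all surviving points of $S'\cup S_d$ are exactly the geodesic centers (by Lemma~\ref{lem:10} together with the exhaustiveness of the case analysis, every geodesic center survives, and by the pruning and cleanup, $d(\cdot)=d_{\max}(\cdot)$ on the survivors that are centers). Summing, the total is $O(n^{11}\log n)$.

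The main obstacle is not any single calculation but the bookkeeping: one must be confident that the finite case split in Section~\ref{sec:candidates} is genuinely exhaustive — i.e.\ that every geodesic center, regardless of how degenerate its configuration of farthest points is, is captured by some enumerated tuple of polygon vertices/edges together with the associated equation system, and that the "validation procedure" never discards a true center. This relies delicately on Observation~\ref{obser:10}, Lemmas~\ref{lem:40} and~\ref{lem:50}, and the fact (Corollary~\ref{corr:10}) that moving along a direction outside $R(s,t)$ strictly increases $d_{\max}$. I would therefore devote the bulk of the write-up to cross-checking that each case's constraint count matches the number of coordinate variables and that the dominating cases are precisely the four identified, so that the $O(n^{11})$ bound on $|S_d|$ — and not a larger power — is correct.
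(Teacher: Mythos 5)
Your proposal is correct and follows essentially the same route as the paper: the theorem is obtained exactly by combining the candidate-set construction of Section~\ref{sec:candidates} ($|S'|=O(n^{10})$, $|S_d|=O(n^{11})$, built in $O(n^{11}\log n)$ time) with the duplication-cleanup and pruning of Section~\ref{sec:center} (Observation~\ref{obser:70}, Lemmas~\ref{lem:170} and~\ref{lem:180}) to shrink $S_d$ to $O(n^{10})$ survivors, followed by an $O(n\log n)$ shortest-path-map computation per surviving point. The only nitpick is phrasing: each dominating case overlays a constant number of shortest path maps (each of complexity $O(n)$), not ``$O(n)$ shortest path maps,'' though your $O(n^2\log n)$ per-combination bound is the right one.
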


%\vspace{0.08in}
%\noindent{\bf Acknowledgment:}

\section*{Acknowledgment}
The author wishes to thank Yan Sun for the discussions on proving the $\pi$-range property.

\bibliographystyle{plain}
%\bibliography{refLibENN-original}
\bibliography{reference}

%add appendix below
%\newpage
%\normalsize
%\appendix
%
%\section*{APPENDIX}

%\vspace{0.2in}
%\noindent
%{\bf Lemma \ref{lem:20}.}
%{\em The size of the set $C(s)$ is $O(k)$.
%}
%\vspace{0.08in}

\end{document}